\documentclass[letterpaper,12pt]{article}
\usepackage{amsfonts}
\usepackage{amsmath, amsthm, amssymb, astron, bbm, epsfig, setspace,multirow}
\usepackage{rotating}
\usepackage{xcolor}

\numberwithin{equation}{section}

\newtheorem{definition}{Definition}
\newtheorem{theorem}{Theorem}[section]
\newtheorem{corollary}[theorem]{Corollary}
\newtheorem{lemma}[theorem]{Lemma}

\newtheorem*{example}{Example}

\newtheorem{IDassumption}{Assumption}
           
\newtheorem{SFassumption}{Assumption}
           
\newtheorem{LLassumption}{Assumption}
           
\newtheorem{SNassumption}{Assumption}
           
\newtheorem{EXassumption}{Assumption}
           
\newtheorem{NCassumption}{Assumption}
           
\newtheorem{DXassumption}{Assumption}
           
\newtheorem{DXassumptionTwo}{Assumption}
           
\newtheorem{EVassumption}{Assumption}

\newcommand{\argmin}{\operatorname*{argmin}}
\newcommand{\plim}{\operatorname*{plim}}
\newcommand{\Tr}{{\rm Tr}}
\def\ft#1#2{{\textstyle {\frac{#1}{#2}} }}

\usepackage{tocloft}
\newtheorem{HLOneAssumption}{Assumption}

\newtheorem{HLTwoAssumption}{Assumption}

\begin{document}

\title{\bf Linear Regression for Panel with Unknown Number of Factors as Interactive Fixed Effects\footnote{This is the last working paper version of the paper published in \textit{Econometrica} \textbf{83}(4), 1543--1579, 2015; doi:10.3982/ECTA9382. The copyright to this article is held by the Econometric Society. It may be downloaded, printed and reproduced only for personal or classroom use. We thank the participants of the 2009
Cowles Summer Conference
``Handling Dependence: Temporal, Cross-sectional, and Spatial'' at Yale University,
of the 2012 North American Summer Meeting of the Econometric Society at Northwestern University,
of the 18th International Conference on Panel Data at the Banque de France,
of the 2013 North American Winter Meeting of the Econometric Society in San Diego,
of the 2014 Asia Meeting of Econometric Society in Taipei,
of the 2014 Econometric Study Group Conference in Bristol, and of the econometrics seminars in USC
and Toulouse
for many interesting comments, and we thank Dukpa Kim, Tatsushi Oka, and Alexei Onatski for helpful discussions.
We are also grateful for the comments and suggestions of James Stock,  Elie Tamer, and anonymous referees.
Moon acknowledges financial supports of the NSF via SES 0920903 and the faculty grant award of USC.
Weidner acknowledges support from the Economic and Social Research Council through the ESRC Centre for Microdata Methods and Practice grant RES-589-28-0001.
}}

\author{\setcounter{footnote}{2}
Hyungsik Roger Moon\footnote{
Department of Economics and USC Dornsife INET, University of Southern California,
Los Angeles, CA 90089-0253.
Email: {\tt moonr@usc.edu}.
Department of Economics, Yonsei University, Seoul, Korea.
}
\and Martin Weidner\footnote{Corresponding author.
 Department of Economics,
 University College London,
 Gower Street,
 London WC1E~6BT, U.K.,
 and CeMMaP.
 Email: {\tt m.weidner@ucl.ac.uk}.
}}

\date{December 2014}

\maketitle

\abstract{
\begin{center}
\vspace{3mm}
\begin{minipage}{0.75\textwidth}
\footnotesize
  In this paper we study the least squares (LS) estimator in a linear panel
  regression model with \emph{unknown} number of factors appearing as interactive fixed effects.
   Assuming that the number of factors used in estimation is larger than the true number of factors in the data
  we establish the limiting distribution of the LS estimator for the regression coefficients, as
   the number of time periods and the number of cross-sectional units jointly go to infinity.
  The main result of the paper is that under certain assumptions the
  limiting distribution of the LS estimator is independent of the number of factors
  used in the estimation, as long as this number is not underestimated.
  The important practical implication
  of this result is that for inference on the regression
  coefficients one does not necessarily need to estimate the number of
  interactive fixed effects consistently.

\end{minipage}
\vspace{3mm}
\end{center}
}

\bigskip

\noindent{\bf Keywords:}
Panel data, interactive fixed effects, factor models,
perturbation theory of linear operators, random matrix theory.\\[4pt]
\noindent{\bf JEL-Classification:} C23, C33

\linespread{1.3}

\newpage

\section{Introduction}

Panel data models typically incorporate individual and time effects
to control for heterogeneity in cross-section and over time.
While often these individual and time effects enter
the model additively, they can also be interacted multiplicatively, thus giving
rise to so called interactive effects, which we also refer to as a
factor structure. The multiplicative form
captures the heterogeneity in the data more flexibly, since it allows for common
time-varying shocks (factors) to affect the cross-sectional units
with individual specific
sensitivities (factor loadings).\footnote{
The conventional additive model can be interpreted as a two factor interactive fixed effects model.}
It is this flexibility that motivated
the discussion of interactive effects in the econometrics literature, e.g.
Holtz-Eakin, Newey and Rosen~\cite*{HoltzEakin-Newey-Rosen1988},
Ahn, Lee and Schmidt \cite*{AhnLeeSchmidt2001,AhnLeeSchmidt2013},
Pesaran~\cite*{Pesaran2006}, Bai~\cite*{Bai2009,Bai2013likelihood},  Zaffaroni \cite*{Zaffaroni2009},
Moon and Weidner~\cite*{MoonWeidner2013}, and Lu and Su \cite*{LuSu2013}.

Let $N$ be the number of cross-sectional units, $T$ be the number of time periods,
$K$ be the number of regressors, and $R^0$ be the true number of interactive fixed effects.
We consider a linear regression model with observed outcomes $Y$, regressors $X_k$,
and unobserved error structure $\varepsilon$, namely
\begin{align}
   Y &= \sum_{k=1}^{K} \, \beta_{k}^{0} \, X_k \, + \, \varepsilon
        \, ,
   &
   \varepsilon &= \lambda^0 \, f^{0 \, \prime} + e  \, ,
   \label{model}
\end{align}
where $Y$, $X_k$, $\varepsilon$ and $e$ are $N\times T$ matrices,
$\lambda^0$ is an $N \times R^0$ matrix, $f^0$ is a $T \times R^0$ matrix,
and the regression parameters $\beta^0_k$ are scalars
 --- the superscript zero indicates the true value of the parameters.
We write $\beta$ for the $K$-vector of regression parameters,
and we denote the components of the different matrices
by $Y_{it}$, $X_{k,it}$, $e_{it}$, $\lambda^0_{ir}$ and $f^0_{tr}$,
where $i = 1, \ldots, N$, $t=1, \ldots, T$, and $r=1,\ldots,R^0$.
It is convenient to introduce the notation
$\beta \cdot X := \sum_{k=1}^{K} \, \beta_{k} \, X_k$.
All matrices, vectors and scalars in this paper are real valued.

We consider the interactive fixed effect specification,
i.e. we treat $\lambda^0$ and $f^0$ as nuisance parameters, which are estimated
jointly with the parameters of interest $\beta$.\footnote{%
When we refer to
interactive fixed effects we mean that both factors and factor loadings
are treated as non-random parameters.
Ahn, Lee and Schmidt \cite*{AhnLeeSchmidt2001}
take a hybrid approach in that they treat the factors as non-random,
but the factor loadings as random.
The common correlated effects estimator of
Pesaran~\cite*{Pesaran2006} was introduced in a context, where both the factor loadings
and the factors follow certain probability laws, but it exhibits many  properties
of a fixed effects estimator.}
The advantages of the fixed effects approach are for instance that it is semi-parametric,
since no assumption on the distribution of the interactive effects needs to be made,
and that the regressors can be arbitrarily correlated with the interactive effect
parameters.

We study the least squares (LS) estimator of model \eqref{model}, which
 minimizes the sum of squared residuals
to estimate the unknown parameters $\beta$, $\lambda$ and $f$.\footnote{%
The LS estimator
is sometimes called ``concentrated'' least squares estimator in the literature,
and in an earlier version of the paper we referred to it as the ``Gaussian Quasi Maximum Likelihood Estimator'',
since LS estimation is equivalent to maximizing a conditional Gaussian likelihood function.
Note also that for fixed $\beta$ the LS estimator for $\lambda$
and $f$ is simply the principal components estimator.
}
To our knowledge, this estimator was first discussed in Kiefer~\cite*{Kiefer1980}. Under an asymptotic where $N$ and $T$ grow to infinity, the asymptotic properties of the LS estimator were derived in Bai~\cite*{Bai2009} for strictly exogeneous regressors, and extended in Moon and Weidner~\cite*{MoonWeidner2013} to the case of pre-determined regressors.

An important restriction of these papers is that the number of factors $R^0$ is
assumed to be known.
However, in many empirical applications there is no consensus about the exact number of factors in the data or in the relevant economic model.
If $R^0$ is not known beforehand, then it may be estimated consistently,\footnote{See the discussion
in Bai~\cite*{Bai2009supp}, supplemental material, regarding estimation of $R^0$.}
but difficulties in obtaining reliable estimates for the number of factors
are well-documented in the literature (see, e.g., the simulation results in Onatski~\cite*{Onatski2010}, and also our empirical illustration in Section~\ref{sec:Empirical}). Furthermore, in order to use the existing inference results on  $R^0$ one still needs a good preliminary estimator for $\beta$, so that
working out the asymptotic properties of the LS-estimator for $R \geq R^0$ is still useful when taking that route.

We investigate the asymptotic properties of the LS estimator when the true number of factors $R^0$ is unknown and $R$ ($\geq R^0$) number of factors are used in the estimation.\footnote{
For $R<R^0$ the LS estimator can be inconsistent, since
then there are interactive fixed effects in  the model which can be correlated
with the regressors but are not controlled for in the estimation.
We therefore restrict attention to the case $R \geq R^0$.
}
We denote this estimator by $\widehat{\beta}_{R}$.

The main result of the paper, presented in Section~\ref{sec:main}, is that under certain assumptions
the LS estimator  $\widehat{\beta}_{R}$ has the same limiting distribution as $\widehat{\beta}_{R^0}$
for any $R \geq R^0$ under an asymptotic where both $N$ and $T$ become large, while  $R^0$  and $R$ are constant.
This implies that the LS estimator $\widehat{\beta}_{R}$ is asymptotically robust towards inclusion
of  extra interactive effects in the model, and within the LS estimation framework
there is no asymptotic efficiency loss from choosing $R$ larger than $R^0$.
The important empirical implication of our result is that the number of factors $R^0$ need not be known or estimated accurately to apply the LS estimator.

To derive this robustness result, we impose more restrictive conditions than those typically assumed with known $R^0$. These include that the errors $e_{it}$ are independent and identically (iid) normally distributed and that the regressors are composed of a ``low-rank''
strictly stationary component, a ``high-rank'' strictly stationary component, and a ``high-rank'' pre-determined
component.\footnote{
The pre-determined component of the regressors allows for linear feedback of $e_{it}$ into future realizations of $X_{k,it}$.
}
Notice that while some of these restrictions are necessary for our robustness result,
some of them (e.g.~iid normality of $e_{it}$) are imposed for technical reasons, because
in the proof we use certain results from the
theory of random matrices that are currently only available in that case (see the discussion in Section~\ref{sec:AsymptoticSummary}).
In the Monte Carlo simulations in Section~\ref{sec:MC}, we consider DGPs that violate some technical conditions to demonstrate robustness of the result.

Under less restrictive assumptions we provide intermediate results that sequentially lead to the main result in Section~\ref{sec:AsyTheory} and Appendix~\ref{sec:ConvergenceRate} and~\ref{sec:Equivalence}.
In Section~\ref{sec:consistency} we show $\sqrt{\min(N,T)}$-consistency
of the LS estimator $\widehat{\beta}_{R}$ as $N,T \rightarrow \infty$
under very mild regularity condition on $X_{it}$ and $e_{it}$, and without imposing any assumptions
on $\lambda^0$ and $f^0$ apart from $R \geq R^0$. We thus obtain consistency of the LS estimator
not only for unknown number of factors, but also for weak factors,\footnote{
See  Onatski~\cite*{Onatski2010,Onatski2012} and
Chudik, Pesaran and Tosetti~\cite*{ChudikPesaranTosetti2011}
for a discussion of ``strong'' vs. ``weak'' factors in factor models.} which is an important robustness result.

In Section~\ref{sec:expansion} we derive an asymptotic expansion of the LS profile objective function that concentrates out $f$ and $\lambda$,
for the case $R=R^0$. Given that the profile objective function is a sum of eigenvalues of a covariance matrix, its quadratic approximation is challenging because the derivatives of the eigenvalues with respect to $\beta$ are not generally known. We thus cannot use a conventional Taylor expansion,
but instead apply the perturbation theory of linear operators to derive
the approximation.

In Section~\ref{sec:AsymptoticSummary} we provide an example that satisfies the typical assumptions imposed with known $R^0$, so that $\widehat{\beta}_{R^0}$ is $\sqrt{NT}$ consistent,
but we show that $\widehat{\beta}_{R}$ with $R > R^0$ is only $\sqrt{\min(N,T)}$ consistent in that example. This shows that stronger conditions
are required to derive our main result.

In Appendix~\ref{sec:ConvergenceRate}
we show faster than $\sqrt{\min(N,T)}$-convergence of  $\widehat{\beta}_{R}$ under
assumptions that are less restrictive than those employed for the main result, in particular allowing
for either cross-sectional or time-serial correlation of the errors $e_{it}$. In Appendix~\ref{sec:Equivalence} we provide
an alternative version of our main result of asymptotic equivalence of  $\widehat{\beta}_{R^0}$
and  $\widehat{\beta}_{R}$, $R \geq R^0$, which is derived under high-level assumptions.

In Section~\ref{sec:Empirical} we follow Kim and Oka~\cite*{KimOka2014} in employing the interactive fixed effects
specification to study the effect of US divorce law reforms on divorce rates. This empirical example illustrates that
the estimates for the coefficient $\beta$ indeed become insensitive to the choice of $R$, once $R$ is chosen sufficiently
large, as expected from our theoretical results.

Section~\ref{sec:MC} contains Monte Carlo simulation results for a static panel model.
For the simulations we consider a DGP that violates the iid normality restriction of the error term. The simulation results confirm our main result of the paper even with a relatively small sample size (e.g. $N=100$, $T=10$) and non-iid-normal errors. In the supplementary appendix, we report the Monte Carlo simulation results of an AR(1) panel model. It also confirms the robustness result in large samples, but in finite samples it shows more inefficiency than the static case. In general, one should expect some finite sample inefficiency from overestimating the number of factors when the sample size is small or the number of overfitted factors is large.

A few words on notation.
The transpose of a matrix $A$ is denoted by $A'$.
For a column vectors $v$ its Euclidean norm is
defined by $\| v \| = \sqrt{v^{\prime}v}$ .
For an $m\times n$ matrix $A$
the Frobenius or Hilbert Schmidt norm is $\| A \|_{HS} = \sqrt{{\rm Tr}%
(AA^{\prime})}$, and the operator or spectral norm is $\| A \| = \max_{0 \neq v \in
\mathbb{R}^n} \, \frac{ \| A v \|} {\| v\|}$.
Furthermore, we use $P_A = A
(A^{\prime}A)^{\dagger} A'$ and $M_A = \mathbbm{1} - A (A^{\prime}A)^{\dagger} A'$,
where $\mathbbm{1}$ is the $m\times m$ identity matrix,
and $(A^{\prime}A)^{\dagger}$ denotes some generalized inverse, in case
$A$ is not of full column rank. For
square matrices $B$, $C$, we use $B>C$ (or $B\geq C$) to indicate that $B-C$ is positive (semi) definite. We use ``wpa1'' for ``with probability approaching one''.

\section{Identification of $\beta^0, \lambda^0 f^{0 \prime}$, and $R^0$}
\label{sec:model_id}

In this section we provide a set of conditions under which the regression coefficient $\beta^0$, the interactive fixed effects $\lambda^0 f^{0 \prime}$, and the number of factors $R^0$ are determined uniquely by the data.
Here, and throughout the whole paper, we treat $\lambda$ and $f$ as non-random
parameters, i.e. all stochastics in the following
are implicitly conditional on $\lambda$ and $f$.
Let $x_k={\rm vec}(X_k)$, the $NT$-vectorization of $X_k$,
and let $x=(x_1,\ldots,x_K)$, which is an
$NT \times K$ matrix.

\begin{IDassumption}[\bf Assumptions for Identification]  
   There exists a non-negative integer $R$ such that
   \label{ass:id}
   \begin{itemize}
      \item[(i)] The second moments of $X_{it}$ and
                 $e_{it}$ exist for all $i$, $t$.

      \item[(ii)]  $\mathbbm{E}(e_{it})=0$,
     $\mathbbm{E}(X_{it} e_{it})=0$,
   for all $i$, $t$.

      \item[(iii)] $\mathbbm{E}[x' (M_F \otimes M_{\lambda^0}) x]
     >0 $, for all $F \in \mathbbm{R}^{T \times R}$.

     \item[(iv)]  $R \geq R^0 := {\rm rank}(\lambda^0 f^{0 \prime})$.

   \end{itemize}
\end{IDassumption}

\begin{theorem}[\bf Identification]
   \label{th:id}
     Suppose that  the Assumptions~\ref{ass:id} are satisfied. Then, $\beta^0$, $\lambda^0 f^{0 \prime}$, and $R^0$ are identified.\footnote{%
Here, identification means that
$\beta^0$ and $\lambda^0 f^{0 \prime}$
 can be uniquely recovered from the distribution of
$(Y,X)$ conditional on those parameters.
Identification of the number of factors follows
since $R^0 = {\rm rank}( \lambda^0 f^{0 \prime} )$.
    The factor loadings
and factors $\lambda^0$ and $f^{0}$ are not separately
identified without further normalization restrictions,
but the product $\lambda^0 f^{0 \prime}$ is identified.
}
\end{theorem}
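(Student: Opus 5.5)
Identification will follow from showing that the population least squares criterion has a unique minimizer at the truth. For any $\beta \in \mathbb{R}^K$ and any $\lambda \in \mathbb{R}^{N\times R}$, $f \in \mathbb{R}^{T \times R}$, define
\begin{align*}
 Q(\beta,\lambda,f) := \mathbbm{E}\, \| Y - \beta\cdot X - \lambda f' \|_{HS}^2 ,
\end{align*}
which is a functional of the distribution of $(Y,X)$ and is finite by Assumption~\ref{ass:id}(i). Substituting the model \eqref{model} and expanding the square, the cross term $2\,\mathbbm{E}\,{\rm Tr}[e' ((\beta^0-\beta)\cdot X + \lambda^0 f^{0\prime} - \lambda f')]$ vanishes, because $\mathbbm{E}(e_{it})=0$, $\mathbbm{E}(X_{it}e_{it})=0$, and $\lambda, f, \lambda^0, f^0$ are non-random. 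Hence
\begin{align*}
 Q(\beta,\lambda,f) = \mathbbm{E}\,\|e\|_{HS}^2 + \mathbbm{E}\,\| (\beta^0-\beta)\cdot X + \lambda^0 f^{0\prime} - \lambda f' \|_{HS}^2 .
\end{align*}
By Assumption~\ref{ass:id}(iv), $R \geq R^0$, so the pair $(\lambda, f)$ can reproduce $\lambda^0 f^{0\prime}$ (padding with zero columns). Thus $Q$ attains its minimum $\mathbbm{E}\|e\|^2_{HS}$ at $\beta=\beta^0$, $\lambda f' = \lambda^0 f^{0\prime}$.

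The main step is to show that any minimizer must have $\beta=\beta^0$ and $\lambda f'=\lambda^0 f^{0\prime}$. Write $\delta = \beta - \beta^0$. For any minimizer $(\beta,\lambda,f)$ we need
\begin{align*}
 \mathbbm{E}\,\| \delta\cdot X - \lambda^0 f^{0\prime} + \lambda f' \|_{HS}^2 = 0 .
\end{align*}
The key is to lower bound this expression by annihilating the low-rank parts. For any $N\times T$ matrix $A$, $\|A\|_{HS} \geq \|M_{\lambda^0} A M_f\|_{HS}$. With $A = \delta\cdot X - \lambda^0 f^{0\prime} + \lambda f'$, both $\lambda^0 f^{0\prime}$ and $\lambda f'$ are killed by these projectors. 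This gives
\begin{align*}
 \|A\|_{HS}^2 \geq {\rm Tr}\big[M_{\lambda^0} (\delta\cdot X) M_f (\delta\cdot X)'\big] = \delta' x' (M_f \otimes M_{\lambda^0}) x \, \delta ,
\end{align*}
using ${\rm vec}(ABC) = (C'\otimes A){\rm vec}(B)$. Taking expectations and applying Assumption~\ref{ass:id}(iii) with $F=f$ gives $\mathbbm{E}\|A\|^2_{HS} \geq \delta'\, \mathbbm{E}[x'(M_f\otimes M_{\lambda^0})x]\, \delta > 0$ unless $\delta = 0$. It is important that (iii) holds for \emph{every} $F$, since $f$ here is the arbitrary candidate, not $f^0$.

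Hence $\beta=\beta^0$. Then $\mathbbm{E}\|\lambda f' - \lambda^0 f^{0\prime}\|_{HS}^2=0$ with non-random matrices forces $\lambda f' = \lambda^0 f^{0\prime}$. This identifies $\beta^0$ and $\lambda^0 f^{0\prime}$, and $R^0 = {\rm rank}(\lambda^0 f^{0\prime})$ is then identified as a function of an identified object. I expect no serious obstacle. The only delicate points are justifying that the cross term vanishes, which uses only the moment conditions in (ii) and non-randomness, and choosing the projectors so that both factor terms disappear simultaneously.
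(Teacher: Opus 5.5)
Your proposal is correct and follows essentially the same route as the paper: the paper also uses Assumption~\ref{ass:id}$(ii)$ to split $Q(\beta,\Lambda,F)$ into $\mathbbm{E}\,{\rm Tr}(ee')$ plus a nonnegative term. It likewise lower-bounds that term by projecting with $M_{\lambda^0}$ on the left and $M_F$ on the right to obtain the quadratic form in Assumption~\ref{ass:id}$(iii)$, and then reads off $\Lambda F' = \lambda^0 f^{0\prime}$ once $\beta = \beta^0$. Two small differences: your single HS-norm contraction replaces the paper's two successive projection inequalities, and you explicitly pad $\lambda^0, f^0$ with zero columns, which the paper leaves implicit.
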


Assumption~\ref{ass:id}$(i)$ imposes existence of second moments.
Assumption~\ref{ass:id}$(ii)$ is an exogeneity condition,
which demands that $x_{it}$ and $e_{it}$ are not correlated contemporaneously,
but allows for pre-determined regressors like lagged dependent variables.
Assumption~\ref{ass:id}$(iv)$ imposes that the true number of factors
$R^0 := {\rm rank}(\lambda^0 f^{0 \prime})$ is bounded by a non-negative integer
$R$, which cannot be too large (e.g. the trivial bound $R=\min(N,T)$ is not possible),
since otherwise Assumption~\ref{ass:id}$(iii)$
cannot be satisfied.

Assumption~\ref{ass:id}$(iii)$ is a non-collinearity condition,
which demands that the regressors have significant variation across $i$ and over $t$ after projecting out all variation that
can be explained by the factor loadings $\lambda^0$ and by arbitrary
factors $F \in \mathbbm{R}^{T \times R}$.
This generalizes the within variation assumption in the conventional panel regression with time invariant individual fixed effects, which in our notation reads
$\mathbbm{E}[x' (M_{1_T} \otimes \mathbbm{1}_N) x] >0$.\footnote{%
The conventional panel regression with additive individual fixed effects and time effects requires
a non-collinearity condition of the form $\mathbbm{E}[x' (M_{1_T} \otimes M_{1_N}) x] >0$.}
This conventional fixed effect assumption rules out time-invariant regressors.
Similarly, Assumption~\ref{ass:id}$(iii)$ rules out more general ``low-rank regressors'',\footnote{
We do not consider such ``low-rank regressors'' in this paper.
Note also that Assumption A in Bai~\cite*{Bai2009} is the sample version of our Assumption~\ref{ass:id}$(iii)$.}
see our discussion of Assumption~\ref{ass:NC} below.

\section{Main Result}
\label{sec:main}

The estimator we investigate in this paper is the least squares (LS) estimator,
which for a given choice of $R$ reads\footnote{
The optimal $\widehat \Lambda_R$ and $\widehat F_R$ in \eqref{estimator} are not unique,
since the objective function is invariant under right-multiplication of $\Lambda$
with a non-degenerate $R \times R$ matrix $S$, and simultaneous right-multiplication
of $F$ with $(S^{-1})'$. However, the column spaces of $\widehat \Lambda_R$ and $\widehat F_R$
are uniquely determined.
}
\begin{align}
   \left( \widehat \beta_{R},\,\widehat \Lambda_R,\,\widehat F_R \right) &\in
     \;
  \argmin_{ \left\{ \beta \in \mathbbm{R}^K, \; \Lambda \in \mathbbm{R}^{N\times R}, \;
                F \in \mathbbm{R}^{T\times R} \right\}  } \;
    \left\| Y \, - \, \beta \cdot X \, - \, \Lambda \, F'
    \right\|^2_{HS}    \; ,
   \label{estimator}
\end{align}
where $\|.\|_{HS}$ refers to the Hilbert Schmidt norm, also called Frobenius norm.
The objective function $ \left\| Y \, - \, \beta \cdot X \, - \, \Lambda \, F'
    \right\|^2_{HS}$ is simply the sum of squared residuals.
The estimator for $\beta^0$ can equivalently be defined by minimizing
the profile objective function that concentrates out the $R$ factors and the $R$ factor loadings, namely
\begin{align}
   \widehat \beta_R &= \argmin_{\beta \in \mathbbm{R}^K} \, {\cal L}_{NT}^R(\beta) \; ,
   \label{DefLS estimator}
\end{align}
with\footnote{
The profile objective function ${\cal L}_{NT}^R(\beta)$ need not be convex in $\beta$
and can have multiple local minima.
Depending on the dimension of $\beta$
one should either perform an initial grid search or try multiple starting values
for the optimization when calculating the global minimum $\widehat \beta_R$ numerically. See also
Section~\ref{app:Numerics} of the supplementary material.
 }
\begin{align}
   {\cal L}_{NT}^R(\beta) &= \min_{\left \{ \Lambda \in \mathbbm{R}^{N\times R}, \;
                F \in \mathbbm{R}^{T\times R} \right\}  } \;
                \frac 1 {NT}  \;
    \left\| Y \, - \, \beta \cdot X \, - \, \Lambda \, F'
    \right\|^2_{HS}
  \nonumber \\
      &= \min_{F \in \mathbbm{R}^{T\times R}} \; \frac 1 {NT}  \;
          {\rm Tr}\left[ \left(Y - \beta \cdot X\right)
                         M_F
                        \left(Y - \beta \cdot X\right)' \right]
 \nonumber \\
      &= \; \frac 1 {NT}  \; \sum_{r=R+1}^{T}
            \mu_r\left[ \left(Y - \beta \cdot X \right)'
                     \left(Y - \beta \cdot X \right) \right] \; ,
   \label{LSobjective}
\end{align}
where, $\mu_r(.)$ is the $r$'th largest eigenvalue of the matrix argument.
Here, we first concentrated out $\Lambda$ by use of its own first order condition.
The resulting optimization problem for $F$ is a principal components
problem, so that the the optimal $F$ is given by the $R$ largest principal components
of the $T \times T$ matrix $\left(Y - \beta \cdot X\right)'\left(Y - \beta \cdot X\right)$. At the optimum the projector $M_F$ therefore exactly projects out
the $R$ largest eigenvalues
of this matrix, which gives rise to the final formulation of the
profile objective function as the sum over its $T-R$ smallest eigenvalues.\footnote{
This last formulation of ${\cal L}_{NT}^R(\beta)$ is very convenient
since it does not involve any explicit optimization over nuisance parameters.
Numerical calculation of eigenvalues is very fast,
so that the numerical evaluation of ${\cal L}_{NT}^R(\beta)$ is unproblematic
for moderately large values of $T$.
Since the model is symmetric under $N \leftrightarrow T$,
$\Lambda \leftrightarrow F$, $Y \leftrightarrow Y'$, $X_k \leftrightarrow X_k'$ there also exists
a dual formulation of ${\cal L}_{NT}^R(\beta)$ that involves solving an eigenvalue
problem for an $N \times N$ matrix.}
We write ${\cal L}_{NT}^0(\beta)$ for ${\cal L}_{NT}^{R^0}(\beta)$, the profile objective function obtained for the
true number of factors.
Notice that in \eqref{DefLS estimator} the parameter set for $\beta$ is the whole Euclidean space $\mathbbm{R}^K$ and we do not restrict the parameter set to be compact.

\begin{SFassumption}[\bf Strong Factor Assumption]~
\label{ass:SF}
\begin{itemize}
   \item[(i)] $0 < \plim_{N,T \rightarrow \infty}
                               \frac 1 N \, \lambda^{0\prime} \lambda^0
                             < \infty$. \quad
   \item[(ii)] $0 <  \plim_{N,T \rightarrow \infty}
                               \frac 1 T \, f^{0\prime} f^0  < \infty$.
\end{itemize}
\end{SFassumption}

\begin{NCassumption}[\bf Non-Collinearity of $X_k$]~
  \label{ass:NC}
  Consider linear combinations
  $\alpha \cdot X :=\sum_{k=1}^K \alpha_k X_k$ of the regressors $X_k$ with
  $K$-vector $\alpha$ such that $\|\alpha\|=1$.
  We assume that there exists a constant $b>0$ such that
  \begin{align*}
     \min_{\{\alpha \in \mathbb{R}^{K}, \, \|\alpha\|=1\}}   \,
     \sum_{r=R+R^0+1}^{T} \,
      \mu_{r} \left[ \frac{ (\alpha \cdot X)' (\alpha \cdot X)} {NT}\right] \;  &\geq b \; , \qquad
     \text{wpa1.}
  \end{align*}
\end{NCassumption}

\begin{LLassumption}[\bf Low Level Conditions for Main Result]  $\phantom{a}$
\label{ass:LL}
\begin{itemize}
      \item[(i)] {\bf Decomposition of Regressors:}
      $X_k = \overline X_k  + \widetilde X_k^{\rm str} +  \widetilde X_k^{\rm weak}$,
      for $k=1,\ldots,K$, where
     $\overline X_k$, $\widetilde X_k^{\rm str}$ and $\widetilde X_k^{weak}$ are $N \times T$ matrices, and
     \begin{itemize}

     \item[(i.a)] {\bf Low-Rank (strictly exogenous) Part of Regressors:} ${\rm rank}(\overline X_k)$ is bounded as $N,T \rightarrow \infty$,
     and  $\frac 1 {NT} \sum_{i=1}^N \sum_{t=1}^T \overline X_{k,it}^2  = {\cal O}_P(1)$.

    \item[(i.b)] {\bf High-Rank (strictly exogenous) Part of Regressors:}
       $\| \widetilde X_k^{\rm str} \| = {\cal O}_P(N^{3/4})$, as can be justified e.g. by Lemma~\ref{lemma:SpectralNormBound}
       in the appendix.

     \item[(i.c)] {\bf Weakly Exogenous Part of Regressors:}
     $\widetilde X_{k,it}^{\rm weak} = \sum_{\tau=1}^{t-1} \gamma_\tau e_{i,t-\tau}$,
     where the real valued coefficients $\gamma_\tau$ satisfy
     $\sum_{\tau=1}^\infty | \gamma_\tau | < \infty$.

    \item[(i.d)]  {\bf Bounded Moments:}            We assume that
                   $\mathbbm{E} \left| X_{k,it}    \right|^{2}$,
                  $\mathbbm{E} \left|(M_{\lambda^0} X_k M_{f^0})_{it}
                               \right|^{26}$,
                 $\mathbbm{E} \left|(M_{\lambda^0} X_k)_{it}
                               \right|^{8}$
                 and
                 $\mathbbm{E} \left|(X_k M_{f^0})_{it}
                               \right|^{8}$
                  are bounded uniformly over $k$, $i$, $j$, $N$ and $T$.

    \end{itemize}

      \item[(ii)] {\bf Errors are iid Normal:}
      The error matrix $e$ is independent of
                 $\lambda^0$, $f^0$,
                 $\overline X_k$, and $\widetilde X_k^{\rm str}$, $k=1,\ldots,K$,
                 and its elements $e_{it}$ are
                 independent and identically distributed as ${\cal N}(0,\sigma^2)$ across $i$ and over $t$.

     \item[(iii)] {\bf Number of Factors not Underestimated:}
     $R \geq R^0 := {\rm rank}(\lambda^0 f^{0 \prime})$.
   \end{itemize}
\end{LLassumption}

\paragraph{Remarks}

\begin{itemize}
  \item[(i)] Assumption~\ref{ass:SF} imposes that the factor $f^0$ and the factor loading $\lambda^0$ are strong. The strong factor assumption is regularly imposed in the literature on large $N$ and $T$ factor models, including Bai and Ng~\cite*{BaiNg2002},
Stock and Watson~\cite*{StockWatson2002} and Bai~\cite*{Bai2009}.

  \item[(ii)] Assumption~\ref{ass:NC} demands that there exists significant sampling variation in the regressors after concentrating out $R+R^0$ factors (or factor
  loadings). It is a sample version of the identification Assumption~\ref{ass:id}$(iii)$, and it is essentially equivalent to Assumption~A of Bai~\cite*{Bai2009},
  but avoids mentioning the unobserved loadings $\lambda^0$.\footnote{By dropping the expected value from
  Assumption~\ref{ass:id}$(iii)$ and replacing the zero lower bound by
  a positive constant
   one obtains
  $\inf_{F} \left[ x' (M_F \otimes M_{\lambda^0}) x /NT \right] \geq b >0$, wpa1,
  which is equivalent to Assumption A of Bai~\cite*{Bai2009},
  and can also be rewritten as
 $\min_{\| \alpha \|=1} \inf_{F}
 {\rm Tr}\left[ M_{\lambda^0} (\alpha \cdot X)' M_{F} (\alpha \cdot X)/NT \right] \geq b$.
  A slightly stronger version of the Assumption, which avoids mentioning
  the unobserved factor loading $\lambda^0$,
  reads
  $\min_{\| \alpha \|=1} \inf_{F} \inf_{\lambda}
 {\rm Tr}\left[ M_{\lambda} (\alpha \cdot X)' M_{F} (\alpha \cdot X)/NT \right] \geq b$,
where $F \in \mathbbm{R}^{T \times R}$ and $\lambda \in \mathbbm{R}^{N \times R^0}$,
and this slightly stronger version is equivalent to Assumption~\ref{ass:NC}.
}

  \item[(iii)] Assumption~\ref{ass:NC} is violated if there exists a linear combination $\alpha \cdot X$ of the regressors
with $\alpha \neq 0$ and ${\rm rank}(\alpha \cdot X) \leq R+R^0$, i.e. the assumption
rules out ``low-rank regressors'' like time invariant regressors
or cross-sectionally invariant regressors. These low-rank regressors
require a special treatment in the interactive fixed effect model,
see Bai~\cite*{Bai2009} and Moon and Weidner~\cite*{MoonWeidner2013},
and we do not consider them in the present paper.
If one is not interested explicitly
in their regression coefficients, then one can always
eliminate the low-rank regressors by an appropriate projection of the data, e.g.
subtraction of the time (or cross-sectional) means from the data eliminates all
time-invariant (or cross-sectionally invariant) regressors,
see Section~\ref{sec:Empirical} for an example of this.

 \item[(iv)] The norm restriction in Assumption~\ref{ass:LL}$(i.b)$ is a high level assumption.
 It is satisfied as long as $\widetilde X_{k,it}^{\rm str}$ is mean zero and
  weakly correlated across $i$ and over $t$,  for details see Appendix~\ref{app:SpectralNorm} and
  Lemma~\ref{lemma:SpectralNormBound} there.

  \item[(v)] Assumption~\ref{ass:LL}$(i)$ imposes that each regressor consists of three parts: (a) a strictly exogenous low rank component , (b) a strictly exogenous component satisfying a norm restriction, and (c) a weakly exogenous component
that follows a linear process with innovation given by the lagged error term $e_{it}$.
For example, if $X_{k,it} \, \sim \, iid \, {\cal N}( \mu_k, \sigma_k^2)$, independent of $e$, then we have
$\overline X_{k,it}=\mu_k$, $\widetilde X_{k,it}^{\rm str} \, \sim \, iid \, {\cal N}( 0, \sigma_k^2)$ and
$\widetilde X_k^{\rm weak} = 0$.
Assumption~\ref{ass:LL}$(i)$ is also satisfied for a stationary panel VAR with interactive fixed effects as in Holtz-Eakin, Newey and Rosen~\cite*{HoltzEakin-Newey-Rosen1988}. A special case of this is a dynamic panel regression with fixed effects, where $Y_{it} = \beta Y_{i,t-1} + \lambda^{0 \prime}_i f^0_t + e_{it}$, with $| \beta |<1$ and ``infinite history''. In this case, we have
$X_{it} = Y_{i,t-1} = \overline X_{it}  + \widetilde X_{it}^{\rm str} +  \widetilde X_{it}^{\rm weak}$, where
$\overline X_{it} =  \lambda^{0 \prime}_i  \sum_{\tau=1}^\infty \beta^{\tau-1} f^0_{t-\tau}$,
 $\widetilde X_{it}^{\rm str} = \sum_{\tau=t}^\infty \beta^{\tau-1} e_{i,t-\tau}$,
 and $\widetilde X_{it}^{\rm weak} = \sum_{\tau=0}^{t-1}  \beta^{\tau-1} e_{i,t-\tau}$.

  \item[(vi)] Assumption~\ref{ass:LL}$(i)$ is more restrictive than Assumption 5 in Moon and Weidner~\cite*{MoonWeidner2013}, where $R^0$ is assumed to be known. However, it is more general than the restriction on the regressors in Pesaran~\cite*{Pesaran2006}, where  -- in our notation -- the decomposition
  $X_k = \overline X_k  + \widetilde X_k^{\rm str}$ is  imposed,
  but the lower rank component $\overline X_k$ needs to satisfy further
  assumptions,
   and the weakly exogenous component   $\widetilde X_k^{\rm weak} $ is not considered.
Bai~\cite*{Bai2009} requires no such decomposition, but
imposes strict exogeneity of the regressors.

    \item[(vii)]    Among the conditions in Assumption~\ref{ass:LL},
    the iid normality condition in Assumption~\ref{ass:LL}$(ii)$ may be the most restrictive.
    In Appendix~\ref{sec:Equivalence} we provide an alternative version of Theorem~\ref{th:MAIN}
    that imposes more general high-level conditions.
    Verifying those high-level conditions requires results on the eigenvalues and eigenvectors
    of random covariance matrices, which can be verified for iid normal errors by using
    known results from the random matrix theory literature,
    see Section~\ref{sec:AsymptoticSummary} for more details.
    We believe, however, that those high-level conditions and thus our main result hold more generally,
    and we explore non-normal and serially correlated errors in our Monte Carlo simulations below.
\end{itemize}

\begin{theorem}[\bf Main Result]
\label{th:MAIN}
Let Assumption~\ref{ass:SF}, \ref{ass:NC}
and \ref{ass:LL} hold and
consider a limit $N,T \rightarrow \infty$ with $N/T \rightarrow \kappa^2$, $0<\kappa<\infty$. Then we have
\begin{align*}
    \sqrt{NT}\big(\widehat \beta_{R} - \beta^0\big)
    =\sqrt{NT}\big(\widehat \beta_{R^0} - \beta^0\big) + o_P(1).
\end{align*}
\end{theorem}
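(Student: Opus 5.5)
We plan to reduce Theorem~\ref{th:MAIN} to a statement about the profile objective functions ${\cal L}_{NT}^R$ and ${\cal L}_{NT}^0$ near $\beta^0$. The first step is consistency and a preliminary rate. Using only Assumption~\ref{ass:NC}, $R\geq R^0$, and $\|e\| = {\cal O}_P(\sqrt{\max(N,T)})$, we compare ${\cal L}_{NT}^R(\beta)$ with ${\cal L}_{NT}^R(\beta^0)$. Subtracting $\lambda^0 f^{0\prime}$ together with the $R$ estimated factors projects out at most $R+R^0$ directions. Assumption~\ref{ass:NC} then gives a quadratic lower bound $b\|\beta-\beta^0\|^2$, up to cross terms of order $\|\beta-\beta^0\|\,\|e\|/\sqrt{NT}$. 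This yields $\|\widehat\beta_R-\beta^0\| = {\cal O}_P(\min(N,T)^{-1/2})$. We then improve the rate to $\|\widehat\beta_R-\beta^0\| = o_P(N^{-1/2})$ (or ${\cal O}_P$ of a rate faster than $N^{-1/2}$) by exploiting the decomposition in Assumption~\ref{ass:LL}$(i)$ and the spectral structure of $e$.

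Next we expand ${\cal L}_{NT}^R(\beta)$ for $\beta$ in a shrinking neighbourhood. We write
\[
(Y-\beta\cdot X)'(Y-\beta\cdot X)
\]
as a perturbation of $f^0\lambda^{0\prime}\lambda^0 f^{0\prime}$. Perturbation theory as in Section~\ref{sec:expansion} shows that, uniformly in $\beta$,
\begin{align*}
{\cal L}_{NT}^R(\beta) = {\cal L}_{NT}^0(\beta) - \frac 1{NT}\sum_{r=1}^{R-R^0} \mu_r\!\left[ M_{\lambda^0}(e-(\beta-\beta^0)\cdot X) M_{f^0}\,(\cdots)' \right] + \text{remainder}.
\end{align*}
Because the factors are strong (Assumption~\ref{ass:SF}), the first $R^0$ eigenvalues are absorbed by $\lambda^0 f^{0\prime}$. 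The extra $R-R^0$ factors then pick up the largest eigenvalues of the residual matrix $\widetilde e(\beta) := M_{\lambda^0}(e-(\beta-\beta^0)\cdot X)M_{f^0}$. Moreover, ${\cal L}_{NT}^0$ has the known quadratic expansion with Hessian $W_{NT}$ (positive definite by Assumption~\ref{ass:NC}) and score $C_{NT}$ satisfying $\sqrt{NT}(\widehat\beta_{R^0}-\beta^0)=W_{NT}^{-1}C_{NT}+o_P(1)$.

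The key step, and the main obstacle, is to show that the additional term
\[
\sum_{r\leq R-R^0}\mu_r\big[\widetilde e(\beta)\widetilde e(\beta)'\big]
\]
differs from its value at $\beta^0$ by at most $o_P(1)\,\big(\sqrt{NT}\|\beta-\beta^0\| + NT\|\beta-\beta^0\|^2\big)/NT$ in the relevant neighbourhood. If this holds, the two objectives have the same quadratic approximation up to a $\beta$-independent constant, and the standard argument gives the claimed equivalence of minimizers. By first-order eigenvalue perturbation, the $\beta$-dependence of the largest eigenvalues of $\widetilde e\widetilde e'$ is governed by $v_r' (\alpha\cdot X) u_r$, where $u_r,v_r$ are the leading singular vectors of $M_{\lambda^0}eM_{f^0}$. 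We therefore need the following:
\begin{itemize}
\item[(a)] the gap between the $(R-R^0)$-th and $(R-R^0+1)$-th largest eigenvalues of $ee'$ near the edge, of order $N^{1/3}$ relative to eigenvalues of order $N$, is not too small wpa1;
\item[(b)] the leading singular vectors are delocalized and independent of the strictly exogenous regressors, so that $u_r'\widetilde X_k^{\rm str}v_r = {\cal O}_P(1)$ and the low-rank parts contribute negligibly after projection;
\item[(c)] for the weakly exogenous part $\widetilde X^{\rm weak}_k$, which is a linear filter of lagged $e$, the bilinear form $u_r'\widetilde X^{\rm weak}_k v_r$ is $o_P(\sqrt{N})$.
\end{itemize}

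For iid normal $e$ these facts follow from random matrix theory. We use Tracy--Widom-type edge behaviour and eigenvalue rigidity for (a), and Haar invariance of the singular vectors of Gaussian matrices for (b) and (c); Gaussianity makes $u_r,v_r$ independent of the singular values and uniformly distributed. We also use the moment bounds in Assumption~\ref{ass:LL}$(i.d)$ to control higher-order perturbation terms. With these bounds established, we conclude that
\[
\sqrt{NT}(\widehat\beta_R-\beta^0)=W_{NT}^{-1}C_{NT}+o_P(1)=\sqrt{NT}(\widehat\beta_{R^0}-\beta^0)+o_P(1).
\]
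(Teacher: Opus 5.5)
Your architecture (consistency, the exact identity ${\cal L}^R_{NT}(\beta)={\cal L}^0_{NT}(\beta)-\frac1{NT}\sum_{r\le R-R^0}\mu_r[\widehat e'(\beta)\widehat e(\beta)]$, then flatness in $\beta$ of the extra eigenvalues) is the paper's. However, the two steps that carry all the difficulty are not argued, and the second would fail as planned.

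\textbf{The rate step.} The faster-than-$\sqrt N$ rate is asserted in one sentence, yet this is exactly where the result can break down. The example in Section~\ref{sec:AsymptoticSummary} satisfies everything used for $R=R^0$, and still $\widehat\beta_1$ is only $\sqrt N$-consistent. The obstruction is the low-rank part $\overline X_k$. For $\Delta\beta=\beta-\beta^0$ of order $N^{-1/2}$, $\|\Delta\beta\cdot\overline X\|$ is as large as $\|e\|$, so the redundant factors partly fit $\Delta\beta\cdot X$. The top eigenvalues of $A(\beta)=M_{f^0}(e-\Delta\beta\cdot X)'M_{\lambda^0}(e-\Delta\beta\cdot X)M_{f^0}$ can then move by order $NT\|\Delta\beta\|^2$. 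The flatness you establish later is therefore false on such neighbourhoods, and the rate step needs its own argument. The missing idea is the one-sided bound of Assumption~\ref{ass:HL1},
\begin{align*}
\sum_{r\le R-R^0}\mu_r[A(\beta)] &\le \sum_{r\le R-R^0}\mu_r[A(\beta^0)]+\sum_{r\le R-R^0}\mu_r\big[M_{f^0}(\Delta\beta\cdot X)'M_{\lambda^0}(\Delta\beta\cdot X)M_{f^0}\big] \\
&\quad +{\cal O}_P\big(\sqrt N+N^{5/4}\|\Delta\beta\|+N^2\|\Delta\beta\|^2/\log N\big).
\end{align*}
It is proved via Lemma~\ref{lemma:help} and Lemma~\ref{lemma:LowRankAdd}. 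By rotational invariance of Gaussian $e$, adding to $A(\beta^0)$ a bounded-rank positive semidefinite matrix of norm about $T\sigma^2$ with range independent of $e$ (the footprint of $\overline X_k$) raises its top eigenvalues by only ${\cal O}_P(\sqrt N)$. The subtracted sum is then absorbed because Assumption~\ref{ass:NC} discards $R+R^0$, not $2R^0$, eigenvalues. Hence $(\Delta\beta)'W\Delta\beta-\frac1{NT}\sum_{r\le R-R^0}\mu_r\big[M_{f^0}(\Delta\beta\cdot X)'M_{\lambda^0}(\Delta\beta\cdot X)M_{f^0}\big]\ge b\|\Delta\beta\|^2$, which gives $N^{3/4}$-consistency.

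\textbf{The key step.} This cannot be done by first-order perturbation around the leading singular vectors. Because of $\overline X_k$, $A(\beta)-A(\beta^0)$ has spectral norm of order $\sqrt N\cdot\sqrt{NT}\|\Delta\beta\|$. That is already $\sqrt N$ when $\|\Delta\beta\|\sim(NT)^{-1/2}$ and $N^{3/4}$ on $N^{-3/4}$-neighbourhoods, which dwarfs the $N^{1/3}$ spacing of the top eigenvalues. The first-order formula is then unjustified, and bounds on $v_r'X_kw_r$ for leading $r$ do not suffice.

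The paper instead uses Lemma~\ref{Lemma:EVbound1}. It controls $|\mu_r(A+B)-\mu_r(A)|$ through the maximal matrix element of $B$ over the \emph{whole} eigenbasis of $A$, together with the spectral sum $q^{-1}\sum_{s\ge q}(\rho_{R-R^0}-\rho_s)^{-1}$; here $\rho_s,v_s,w_s$ are the squared singular values and singular vectors of $M_{\lambda^0}eM_{f^0}$. This requires Assumption~\ref{ass:EV}:
\begin{itemize}
\item[(ii)] the spectral-sum condition, verified from the G\"otze--Tikhomirov rate of convergence to the Mar\v{c}enko--Pastur law on top of the Tracy--Widom edge;
\item[(iii)] uniform bounds such as $\max_{r,s,k}|v_r'X_kw_s|=o_P(N^{1/4}q_{NT}^{-1})$ with $q_{NT}=N^{1/6}\log N$, plus bounds on $\|v_r'eP_{f^0}\|$ and $\|v_r'X_kP_{f^0}\|$. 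Uniformity over all pairs is what the 26th-moment condition in Assumption~\ref{ass:LL}$(i.d)$ is for.
\end{itemize}
Since the dominant matrix element is of size $\|e\|\,\|\Delta\beta\|\max|v_r'X_kw_s|$, one needs $q_{NT}\sqrt N\|\Delta\beta\|\max|v_r'X_kw_s|\to0$. This is why the first step must deliver a quantified rate such as $N^{-3/4}$, not a bare $o_P(N^{-1/2})$.

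Your (c), $o_P(\sqrt N)$, is the order the first-order heuristic needs, but it is far too weak here. Also, Haar invariance alone does not cover $\widetilde X^{\rm weak}_k$, which is built from $e$. The paper splits it into an $eP_{f^0}$ part, independent of $(v_r,w_r)$, and an $eM_{f^0}$ part. The latter is rewritten through $\sqrt{\rho_r}\,v_r$ and lagged $w_r$ and controlled by the autocorrelation bound of Lemma~\ref{lemma:iidEV}$(ii)$.

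Finally, passing from $\mu_r[\widehat e'(\beta)\widehat e(\beta)]$ to $\mu_r[A(\beta)]$ by Weyl's inequality only gives an error ${\cal O}_P(\sqrt N+\sqrt{NT}\|\Delta\beta\|)$ (Lemma~\ref{lemma:expansion2a}), which is too large at the $\sqrt{NT}$ scale. The paper keeps the extra $e$-terms in $B(\beta)$ (Lemma~\ref{th:expansion2b}). It then eliminates the $M_{f^0}$--$P_{f^0}$ cross blocks via the completed squares $C^\pm(\beta)$ before applying the eigenbasis bound.
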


Theorem~\ref{th:MAIN} follows from Theorem~\ref{th:LimitingDistribution} and  Lemma~\ref{lemma:JustifyEV} in the appendix, whose proof is given in the supplementary material.
The theorem guarantees that the asymptotic distribution of $\widehat \beta_{R}$, $R \geq R^0$,
is identical to that of $\widehat \beta_{R^0}$ in \eqref{AsyDistribution} below.

The limiting distribution of $\sqrt{NT}\big(\widehat \beta_{R^0} - \beta^0\big)$ with known $R^0$ is available in the existing literature. According to Bai~\cite*{Bai2009} and Moon and Weidner~\cite*{MoonWeidner2013},
\begin{align}
    \sqrt{NT}\big(\widehat \beta_{R^0} - \beta^0\big) \;  \Rightarrow \;
    {\cal N}\left( -  \kappa \, \plim  W^{-1} B ,  \;
    \sigma^2  \plim  W^{-1} \right) ,
    \label{AsyDistribution}
\end{align}
where $W$ is the $K \times K$ matrix with elements
   $W_{k_1 k_2} =  \frac 1 {NT} {\rm Tr}( M_{\lambda^0} X_{k_1} M_{f^0} X_{k_2}' )$ and
 $B$ is the $K$-vector with elements
$B_k = \frac 1 N {\rm Tr}[ P_{f^0} \mathbbm{E}(e'  X_k)]$.\footnote{%
The asymptotic distribution in \eqref{AsyDistribution} can also be derived from Corollary~\ref{cor:LimitR0} below
under more general conditions than in Assumption~\ref{ass:LL} (see
Moon and Weidner~\cite*{MoonWeidner2013} for details).
Here we have used the homoscedasticity of $e_{it}$ to simplify the structure of the asymptotic variance
and bias. Bai~\cite*{Bai2009} finds further asymptotic bias
in $\widehat \beta_{R^0}$ due to heteroscedasticity and correlation in $e_{it}$,
which in our asymptotic result is ruled out by Assumption~\ref{ass:LL}$(ii)$,
but is studied in our Monte Carlo simulations below.
Moon and Weidner~\cite*{MoonWeidner2013} work out the additional asymptotic
bias in $\widehat \beta_{R^0}$ due to pre-determined regressors, which
is allowed for in Theorem~\ref{th:MAIN}.
}

The result \eqref{AsyDistribution} holds under the assumptions of Theorem~\ref{th:MAIN}
and also assuming that $ \plim  W^{-1} B$ and   $\plim  W^{-1}$ exist, where
$\plim$ refers to the probability limit as $N,T \rightarrow \infty$.
Note that Assumption~\ref{ass:NC} guarantees that $W$ is invertible asymptotically.
The asymptotic bias in \eqref{AsyDistribution} is an incidental parameter bias due to pre-determined regressors and is equal to zero for
strictly exogenous regressors (for which $\mathbbm{E}(e'  X_k)=0$); it generalizes the well-known
Nickell~\cite*{Nickell1981} bias of the within-group estimator for dynamic panel models.

Estimators for $\sigma^2$, $W$ and $B$ are given by\footnote{The first factor in $ \widehat \sigma^2$ reflects the degree of freedom correction from
estimating $\Lambda$, $F$ and $\beta$, but could simply be chosen as $1/NT$ for the purpose of consistency.
Note also that  $P_{\widehat F_R,t \tau} = {\cal O}_P(1/T)$, which explains why no $1/T$  factor is required in
the definition of $\widehat B_{R,k}$.}
\begin{align*}
     \widehat \sigma_R^2 &= \frac 1 {(N-R)(T-R)-K} \sum_{i=1}^N \sum_{t=1}^T
       \left( \widehat e_{R,it}  \right)^2 ,
    &
     \widehat W_{R,k_1 k_2} &= \frac 1 {NT}
      {\rm Tr}\left( M_{\widehat \Lambda_R} X_{k_1} M_{\widehat F_R} X_{k_2}'  \right) ,
    \\
     \widehat B_{R,k} &=
     \sum_{t=1}^T \sum_{\tau=t+1}^{t+M} P_{\widehat F_R,t \tau}  \left[ \frac 1 N  \sum_{i=1}^N  \widehat e_{R,it}  X_{k,i \tau} \right] ,
\end{align*}
where $\widehat e_{R,it}$ denotes the $(i,t)^{th}$ element of
$\widehat e_R = Y - \widehat \beta_R \cdot X- \widehat \Lambda_R \widehat F_R'$,
and $P_{\widehat F_R,t \tau}$ denotes the $(t,\tau)^{th}$ element of
$P_{\widehat F_R} = \mathbbm{1}_T - M_{\widehat F_R} =
\widehat F_R (\widehat F_R' \widehat F_R)^\dagger \widehat F_R' $,
and $M \in \{1,2,3,\ldots\}$ is a bandwidth parameter that also depends on the sample size $N,T$.
Let $\widehat W_R$ and $\widehat B_R$ be the matrix and vector with elements
$ \widehat W_{R,k_1 k_2}$ and $\widehat B_{R,k}$, respectively.

The next theorem establishes the consistency of these estimators.
Let $\lambda^{\rm red} \in \mathbbm{R}^{N \times (R-R^0)}$ and $f^{\rm red}  \in \mathbbm{R}^{T \times (R-R^0)}$ be the leading $R-R^0$ principal components obtained from the $N \times T$ matrix
$M_{\lambda^0} e M_{f^0}$, i.e. $\lambda^{\rm red}$ and $f^{\rm red}$ minimize the
objective function $\left\| M_{\lambda^0} e M_{f^0} -  \lambda^{\rm red} \, f^{\rm red \prime}  \right\|^2_{HS}$,
analogous to $\widehat \Lambda_R$ and $\widehat F_R$ defined in \eqref{estimator}.\footnote{The superscript
``red'' stands for redundant, because it turns out that $\lambda^{\rm red}$ and $f^{\rm red}$ are
asymptotically close to the $R-R^0$ redundant principal components that are estimated in~\eqref{estimator}.
}
\begin{theorem}[\bf Consistency of Bias and Variance Estimators]~
\label{th:Estimators}
\begin{itemize}
\item[(i)]
Let the conditions of Theorem~\ref{th:MAIN} hold.
Then we have $\left\| P_{\widehat F_R} - P_{[f^0,f^{\rm red}]} \right\| = o_p(1)$,
$\left\| P_{\widehat \Lambda_R} - P_{[\lambda^0,\lambda^{\rm red}]} \right\| = o_p(1)$,
$\widehat \sigma_R^2 = \sigma^2 + o_P(1)$,
and
$\widehat  W_R =  W + o_P(1)$.

\item[(ii)]
In addition, let $X_{k,\cdot t} = (X_{k,1t},...,X_{k,Nt})^{\prime}$, and assume that (1) $\gamma_\tau$ in Assumption~\ref{ass:LL}(i.c) satisfies $|\gamma_\tau| < c \tau^{-d}$
for some $c>0$ and $d>1$,
(2) $\| \lambda^0_i \|$ and $\| f^0_t \|$
are uniformly bounded over $i,t$ and $N,T$,
(3) $\max_t \| X_{k,\cdot t} \| = {\cal O}_P(\sqrt{N} \log N)$,\footnote{
The high-level assumption $\max_t \| X_{k,\cdot t} \| = {\cal O}_P(\sqrt{N} \log N)$
can be shown to be satisfied for the regressor component $\widetilde X^{\rm weak}_{k,it}$ above, and can be justified for
the other regressor components e.g. by assuming that
$\overline X_{k}$ and $\widetilde X_k^{\rm str}$ are uniformly bounded.
}
and (4) the bandwidth $M \rightarrow \infty$ such that
$M  (\log T)^2 T^{-1/6} \rightarrow 0$. Then, we have $\widehat B_R =  B + o_P(1)$.

\end{itemize}

\end{theorem}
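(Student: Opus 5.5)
Part (i) comes first because everything else builds on it. Theorem~\ref{th:MAIN} (more precisely, its proof via Theorem~\ref{th:LimitingDistribution}) gives $\sqrt{NT}(\widehat\beta_R-\beta^0)={\cal O}_P(1)$, so $\|(\widehat\beta_R-\beta^0)\cdot X\|={\cal O}_P(\sqrt{NT}/\sqrt{NT})\cdot{\cal O}_P(\sqrt{NT})={\cal O}_P(1)$. This is negligible compared with the eigenvalue gaps relevant below.

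To control $\widehat F_R$, I will note that it spans the leading $R$ eigenvectors of
\[
(Y-\widehat\beta_R\cdot X)'(Y-\widehat\beta_R\cdot X)=(\lambda^0f^{0\prime}+e+\Delta)'(\lambda^0f^{0\prime}+e+\Delta),
\qquad \Delta:=-(\widehat\beta_R-\beta^0)\cdot X .
\]
The first $R^0$ eigenvalues are of order $NT$ by Assumption~\ref{ass:SF}. The next $R-R^0$ eigenvalues are those of $M_{f^0}e'M_{\lambda^0}eM_{f^0}$ up to perturbation terms. To show the projector converges to $P_{[f^0,f^{\rm red}]}$, I will invoke the perturbation expansion of Section~\ref{sec:expansion} together with Lemma~\ref{lemma:JustifyEV}. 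That lemma supplies, for iid normal $e$, the random matrix facts that the eigenvalues $\mu_r(e'e)$ for $r\le R$ are separated from $\mu_{R+1}$ by a gap of order $\gtrsim N^{1/3}$ wpa1 (Tracy--Widom spacing), and that the eigenvectors are delocalized relative to $X$. A Davis--Kahan $\sin\Theta$ argument then applies: the perturbation $\Delta$ and the cross terms $\lambda^0f^{0\prime}$--$e$ contribute ${\cal O}_P(\sqrt N)$ to the relevant block after projecting out $f^0$, and dividing by the gap gives $o_P(1)$. I expect this gap-versus-perturbation comparison to be the main obstacle. The naive bound $\|e'\Delta\|\le\|e\|\,\|\Delta\|={\cal O}_P(\sqrt N)$ only matches the gap $N^{1/3}\cdot\sqrt N$ when $\|e\|\asymp\sqrt N$, so I will use the refined bounds from Lemma~\ref{lemma:JustifyEV} on $f^{\rm red\prime}X_k'$ rather than crude norms. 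The same argument applied to $Y\,Y'$ gives the result for $P_{\widehat\Lambda_R}$.

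With the projector convergence in hand, consistency of $\widehat W_R$ follows by writing
\[
\widehat W_{R,k_1k_2}-W_{k_1k_2}=\tfrac1{NT}{\rm Tr}\bigl[(M_{\widehat\Lambda_R}-M_{\lambda^0})X_{k_1}M_{\widehat F_R}X_{k_2}'\bigr]+\ldots .
\]
Here the difference $M_{\widehat\Lambda_R}-M_{\lambda^0}$ equals $-P_{\lambda^{\rm red}}$ plus an $o_P(1)$ term, and it has bounded rank. Using ${\rm Tr}(AB)\le {\rm rank}(A)\|A\|\|B\|$, the $o_P(1)$ part contributes $o_P(1)\cdot\|X\|^2/NT=o_P(1)$. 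For the redundant part the $\|\cdot\|\cdot\|\cdot\|$ bound is insufficient, so I will instead use $\frac1{NT}\|\lambda^{\rm red\prime}X_k\|^2=o_P(1)$, which follows from the delocalization of the eigenvectors of $e$ (independent of $\overline X,\widetilde X^{\rm str}$, and controlled explicitly for $\widetilde X^{\rm weak}$).

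For $\widehat\sigma^2_R$, I will expand $\widehat e_R=M_{\widehat\Lambda_R}(e+\Delta+\lambda^0f^{0\prime})M_{\widehat F_R}$. The factor term vanishes up to $o_P$ because $P_{\widehat\Lambda_R}$ contains $\lambda^0$ asymptotically, $\Delta$ contributes ${\cal O}_P(1)$ to the squared norm, and the projected-out part of $e$ has squared norm ${\cal O}_P(N+T)$. Hence $\widehat\sigma_R^2=\frac1{NT}{\rm Tr}(ee')+o_P(1)=\sigma^2+o_P(1)$.

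For part (ii), I will decompose $\widehat B_R-B$ into three pieces: (a) replacing $P_{\widehat F_R}$ by $P_{[f^0,f^{\rm red}]}$; (b) replacing $\widehat e_R$ by $e$; and (c) the truncation and sampling error of $\sum_t\sum_{\tau=t+1}^{t+M}P_{f^0,t\tau}\frac1N\sum_i e_{it}X_{k,i\tau}$ relative to $B_k$. For (c), the boundedness of $f^0_t$ gives $P_{f^0,t\tau}={\cal O}(1/T)$, and $\mathbbm E(e_{it}X_{k,i\tau})=\sigma^2\gamma_{\tau-t}$ for $\tau>t$. The tail beyond $M$ is therefore $\sum_{s>M}|\gamma_s|\to0$, and the variance is ${\cal O}(M/N)$. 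The $f^{\rm red}$ contribution is handled by showing $\frac1N\sum_i e_{it}X_{k,i\tau}$ against $P_{f^{\rm red},t\tau}$ is small, using delocalization $\max_t\|f^{\rm red}_t\|={\cal O}_P(T^{-1/2}\log T)$ together with condition (3). Terms (a) and (b) are bounded by $M$ times the row-wise estimation errors, which are ${\cal O}_P(T^{-1/6}(\log T)^2)$ in sup norm; the rate condition $M(\log T)^2T^{-1/6}\to0$ is exactly what makes them vanish.
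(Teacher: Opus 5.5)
Your treatment of $\widehat W_R$ matches the paper's. Your direct argument for $\widehat\sigma^2_R$ is also fine, since it only uses the strong-factor part of the projector result. The gap is in the step everything else rests on: convergence of the redundant factors.

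\textbf{Part (i), redundant factors.} Write $A=M_{f^0}e'M_{\lambda^0}eM_{f^0}$, with eigenvalues $\rho_r$ and eigenvectors $w_r$. Write $A+B=\widehat e'(\widehat\beta_R)\widehat e(\widehat\beta_R)$ for the residuals after removing $R^0$ components; its leading $R-R^0$ eigenvectors span $\widehat F_R^{\rm red}$.
\begin{itemize}
\item Even at $\beta=\beta^0$ one has $\|B\|\asymp\sqrt N$. The dominant part is not $\Delta$ but $\beta$-independent interactions of $e$ with the true factors, e.g.\ the term $M_{f^0}e'M_{\lambda^0}e\,(\lambda^0f^{0\prime})^\dagger e\,M_{f^0}$ in $B(\beta)$ of Lemma~\ref{th:expansion2b}.
\item The spacing is $\rho_r-\rho_{r+1}\asymp N^{1/3}$, not $N^{1/3}\sqrt N$. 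So your ``$\mathcal{O}_P(\sqrt N)$ divided by the gap'' is of order $N^{1/6}\to\infty$, not $o_P(1)$. Sharpening only the $X_k$-terms cannot fix this.
\end{itemize}
Two ingredients are missing.
\begin{itemize}
\item First, $B$ must be measured in the eigenbasis of $A$: $b=\max_{r,s}|w_r'Bw_s|={\cal O}_P(1)$. This needs all of Assumption~\ref{ass:EV}(iii), in particular the bounds on $\|v_r'eP_{f^0}\|$ and $\|w_r'e'P_{\lambda^0}\|$, not only bounds involving $X_k$.
\item Second, any $\sin\Theta$ bound needs the distance from $\rho_{R-R^0}$ to the \emph{perturbed} eigenvalue $\mu_{R-R^0+1}(A+B)$. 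Weyl only bounds the shift by ${\cal O}_P(\sqrt N)$, which swamps the $N^{1/3}$ gap.
\end{itemize}
The paper gets $|\mu_r(A+B)-\rho_r|={\cal O}_P(1)$ for $r\le R-R^0+1$ from Lemma~\ref{lemma:JustifyHL2}. That lemma combines Lemma~\ref{Lemma:EVbound1} with Assumption~\ref{ass:EV}(ii), which controls the accumulated interaction $\sum_{i\ge q_{NT}}b/(\rho_r-\rho_i)$ with the bulk of the spectrum. Only then does Lemma~\ref{lemma:EVEC} apply; its bound involves $b$ and $c_1$ rather than $\|B\|$, and it yields $\|P_{\widehat F^{\rm red}_R}-P_{f^{\rm red}}\|={\cal O}_P(N^{-1/6}\log N)$. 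You need this rate later in (ii).

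\textbf{Part (ii), the $f^{\rm red}$ bias term.} The tool you propose does not control it. Its mean is
\[
\tfrac1N{\rm Tr}\bigl[P_{f^{\rm red}}\mathbbm E(e'X_k)\bigr]=\sigma^2\sum_{r=1}^{R-R^0}\sum_{\tau\ge1}\gamma_\tau\sum_t w_{r,t}w_{r,t+\tau}.
\]
\begin{itemize}
\item Since $\mathbbm E(e_{it}X_{k,i\tau})=\sigma^2\gamma_{\tau-t}$ is not small, any absolute-value bound through $\max_t\|f^{\rm red}_t\|$ and condition (3) gives at best ${\cal O}_P(1)$; Cauchy--Schwarz gives $\sum_t|w_{r,t}w_{r,t+\tau}|\le 1$.
\item Delocalization alone is not enough. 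A flat unit vector such as $1_T/\sqrt T$ is delocalized, yet makes this term tend to $\sigma^2\sum_\tau\gamma_\tau$.
\item The term vanishes only by cancellation. Rotational invariance makes $w_r$ uniform on the unit sphere of the column space of $M_{f^0}$. Hence $\max_\tau|\sum_t w_{r,t}w_{r,t+\tau}|={\cal O}_P(T^{-1/2+\varepsilon})$ by Lemma~\ref{lemma:iidEV}(ii), which with $\sum_\tau|\gamma_\tau|<\infty$ gives $o_P(1)$.
\end{itemize}

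\textbf{Part (ii), replacing $\widehat e_R$ by $e$.} This is not purely an estimation-error step. $\widehat e_R$ lacks the component $\lambda^{\rm red}f^{\rm red\prime}=\sum_{r\le R-R^0}\sqrt{\rho_r}\,v_rw_r'$, whose norm is of order $\sqrt N$.
\begin{itemize}
\item This is the piece where delocalization does the work, via $\frac{M\sqrt{\rho_r}}{N}\max_t|w_{r,t}|\max_t\|X_{k,\cdot t}\|\to0$.
\item Separately, the rate from (i), $\|\widehat\Lambda_R^{\rm red}\widehat F_R^{\rm red\prime}-\lambda^{\rm red}f^{\rm red\prime}\|={\cal O}_P(N^{1/3}\log N)$, is what produces the bandwidth condition $M(\log T)^2T^{-1/6}\to0$.
\end{itemize}
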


Combining Theorems~\ref{th:MAIN} and~\ref{th:Estimators} and the asymptotic distribution in \eqref{AsyDistribution}
allows inference on $\beta$, for $R \geq R^0$. In particular, the bias corrected estimator
$\widehat \beta^{\rm BC}_{R} = \widehat \beta_{R} + \frac 1 T \widehat W_R^{-1} \widehat B_R$
satisfies\footnote%
{Instead of estimating the bias analytically one can use the result that the bias
is of order $T^{-1}$ and perform split panel bias correction as in Dhaene and Jochmans~\cite*{DhaeneJochmans2010},
which instead of the conditions of Theorem~\ref{th:Estimators}(ii) only
requires some stationary condition over time.}
\[
\sqrt{NT}\big(\widehat \beta^{\rm BC}_{R} - \beta^0\big)  \Rightarrow
    {\cal N}( 0 , \sigma^2 W^{-1} ).
\]

\paragraph{Heuristic Discussion of the Main Result}~
\\
Intuitively, the inclusion of
unnecessary factors in the LS estimation is similar to the
inclusion of irrelevant regressors in an OLS regression.
In the OLS case it is well known that if those irrelevant extra regressors are uncorrelated with the regressors
of interest, then they have no effect on the asymptotic distribution of the regression coefficients of interest.
It is therefore natural to expect that if the extra estimated factors in $\widehat F_R$ are asymptotically uncorrelated
with the regressors, then the result of Theorem~\ref{th:MAIN} should hold.
To explore this,
remember that  $\widehat F_R$ is given by the
first $R$ principal components of the matrix
$(Y-\widehat \beta_R \cdot X)' (Y-\widehat \beta_R \cdot X)$, and write
\begin{align*}
  Y-\widehat \beta_R \cdot X
   &= \lambda^0 f^{0 \prime} + e - (\widehat \beta_R - \beta^0) \cdot X.
\end{align*}
The strong factor assumption and the consistency of $\widehat \beta_R$ guarantee
 that the first
$R^0$ principal components of $(Y-\widehat \beta_R \cdot X)' (Y-\widehat \beta_R \cdot X)$
are close to $f^0$ asymptotically, i.e. the true factors are correctly picked up by
the principal component estimator.
 The additional $R-R^0$ principal  components that are
estimated for $R>R^0$ cannot pick up anymore true factors and are thus
mostly determined by the remaining term
$e - (\widehat \beta_R - \beta^0) \cdot X$. The key question for the properties of
the extra estimated factors, and thus of $\widehat \beta_R$, is therefore
whether the principal components obtained from $e - (\widehat \beta_R - \beta^0) \cdot X$
are dominated by $e$ or by $(\widehat \beta_R - \beta^0) \cdot X$.
Only if they are dominated by $e$ can we expect the extra factors in $\widehat F_R$ to be uncorrelated with $X$
and thus the result in Theorem~\ref{th:MAIN} to hold.  The result on $P_{\widehat F_R}$ in
Theorem~\ref{th:Estimators} shows that the additional estimated factors are indeed close to $f^{\rm red}$,
i.e. are mostly determined by $e$, but this result is far from obvious a priori, as the following discussion shows.

Under our assumptions we have $\| e\| = {\cal O}_P(\sqrt{N})$ and $\| X_k \| = {\cal O}_P( \sqrt{NT} )$
as $N$ and $T$ grow at the same rate.
Thus, if the convergence rate of $\widehat \beta_R$ is faster than $\sqrt{N}$, i.e.
$\| \widehat \beta_R - \beta^0 \| = o_P(\sqrt{N})$, then we have
$\| e \|  \gg \left\|  (\widehat \beta_R - \beta^0) \cdot X \right\|$ asymptotically, and we expect
the extra $\widehat F_R$ to be dominated by $e$. A crucial step in the derivation
of Theorem~\ref{th:MAIN} is therefore to show faster than $\sqrt{N}$ convergence of $\widehat \beta_R$.
Conversely, we expect counter examples to the main result to be such that the convergence rate
of the estimator $\widehat \beta_R$ is not faster than $\sqrt{N}$, and we provide such a counter example
-- which, however, violates Assumptions~\ref{ass:LL} -- in Section~\ref{sec:AsymptoticSummary} below.
Whether the intuition about ``inclusion of irrelevant regressors'' carries over to the
``inclusion of irrelevant factors'' thus crucially depends on the convergence rate of $\widehat \beta_R$.

\section{Asymptotic Theory and Discussion}
\label{sec:AsyTheory}

Here we introduce key intermediate results for the proof of the main
Theorem~\ref{th:MAIN} stated above. These intermediate results may be useful independently of the main result, e.g. Moon and Weidner~\cite*{MoonWeidner2013} and Moon, Shum, and Weidner~\cite*{MoonShumWeidner2014}
crucially use the results established in Section~\ref{sec:expansion} for the case of known $R=R^0$.
The assumptions introduced below are all implied by the low-level Assumptions~\ref{ass:LL}
above, see to Lemma~\ref{lemma:JustifyEV} in the appendix.

\subsection{Consistency of $\widehat \beta_R$}
\label{sec:consistency}

Here we present a consistency result for $\widehat \beta_R$ under an arbitrary
asymptotic $N,T \rightarrow \infty$, i.e. without the assumption that $N$ and $T$ grow at the same rate,
which is imposed everywhere else in the paper. In addition to Assumption~\ref{ass:NC} we require
the following high level assumptions to obtain the result.

\begin{SNassumption}[\bf Spectral Norm of $X_k$ and $e$]~
\label{ass:SN}
\begin{itemize}
  \item [(i)] $\|X_k\| = {\cal O}_P(\sqrt{NT})$, \quad $k=1,\ldots,K$.
  \item [(ii)] $\| e \| = {\cal O}_P(\sqrt{\max(N,T)})$.
\end{itemize}
\end{SNassumption}

\begin{EXassumption}[\bf Weak Exogeneity of $X_k$]~
  \label{ass:EX}
  $\frac 1 {\sqrt{NT}} {\rm Tr}(X_k e^{\prime}) = {\cal O}_P(1)$,
              \quad $k=1,\ldots,K$.
\end{EXassumption}

\begin{theorem}
   \label{th:consistency}
      Let Assumptions \ref{ass:SN}, \ref{ass:EX} and \ref{ass:NC} be satisfied
      and let $R \geq R^0$. For $N,T \rightarrow \infty$ we then have
   $\sqrt{\min(N,T)} \left( \widehat \beta_R - \beta^0 \right) = {\cal O}_P(1)$.
\end{theorem}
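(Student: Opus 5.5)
The plan is to compare the profile objective at $\widehat\beta_R$ with its value at $\beta^0$, and to bound the difference from below by the non-collinearity of Assumption~\ref{ass:NC} and from above by the spectral norm bounds. Write $\delta = \widehat\beta_R - \beta^0$ and use the first formulation of ${\cal L}^R_{NT}$ as a minimum over $\Lambda,F$ (equivalently over $F$ with $M_F$). Since $\widehat\beta_R$ minimizes ${\cal L}^R_{NT}$, we have ${\cal L}^R_{NT}(\widehat\beta_R) \le {\cal L}^R_{NT}(\beta^0)$. Because $R\ge R^0$, choosing $F$ to span $f^0$ at $\beta^0$ gives the upper bound
\begin{align*}
{\cal L}^R_{NT}(\beta^0) \le \frac{1}{NT}{\rm Tr}(e M_{f^0} e') \le \frac{1}{NT}{\rm Tr}(e e').
\end{align*}

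For the lower bound, write $Y - \widehat\beta_R\cdot X = \lambda^0 f^{0\prime} + e - \delta\cdot X$. For any $F\in\mathbbm{R}^{T\times R}$, first drop the factor term by projecting with $M_{\lambda^0}$, using ${\rm Tr}(A M_F A') \ge {\rm Tr}(M_{\lambda^0} A M_F A')$:
\begin{align*}
{\cal L}^R_{NT}(\widehat\beta_R) \ge \min_F \frac{1}{NT}{\rm Tr}\left[M_{\lambda^0}(e-\delta\cdot X) M_F (e-\delta\cdot X)'\right].
\end{align*}
Expanding the quadratic gives three pieces:
\begin{itemize}
\item[(a)] the regressor term $\frac{1}{NT}{\rm Tr}[M_{\lambda^0}(\delta\cdot X)M_F(\delta\cdot X)']$;
\item[(b)] the cross term $-\frac{2}{NT}{\rm Tr}[M_{\lambda^0}e M_F(\delta\cdot X)']$;
\item[(c)] the error term $\frac{1}{NT}{\rm Tr}[M_{\lambda^0}eM_Fe']$.
\end{itemize}

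Term (a) is at least $\|\delta\|^2 \sum_{r=R+R^0+1}^T \mu_r[(\alpha\cdot X)'(\alpha\cdot X)/NT]$ with $\alpha=\delta/\|\delta\|$. To see this, note that ${\rm Tr}[M_{\lambda^0} A M_F A']=\|A\|_{HS}^2-\|P_{\lambda^0}A\|^2_{HS}-\|M_{\lambda^0}AP_F\|^2_{HS}$, and that the subtracted part is a projection onto rank at most $R+R^0$. By Ky Fan, it is bounded by the sum of the top $R+R^0$ eigenvalues. Hence (a) $\ge b\|\delta\|^2$ wpa1 by Assumption~\ref{ass:NC}.

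Term (c) is at least $\frac{1}{NT}{\rm Tr}(ee') - \frac{R+R^0}{NT}\|e\|^2$. Comparing with the upper bound, the leading ${\rm Tr}(ee')$ cancels, and the remainder is ${\cal O}_P(\max(N,T)/NT) = {\cal O}_P(1/\min(N,T))$ by Assumption~\ref{ass:SN}(ii).

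Term (b) is the main obstacle. I split it as $\frac{2}{NT}{\rm Tr}[e(\delta\cdot X)'] - \frac{2}{NT}{\rm Tr}[(P_{\lambda^0}e + M_{\lambda^0}eP_F)(\delta\cdot X)']$. The first piece is ${\cal O}_P(\|\delta\|/\sqrt{NT})$ by Assumption~\ref{ass:EX}. The second involves a matrix of rank at most $R+R^0$, so its trace is bounded by $(R+R^0)\|e\|\,\|\delta\cdot X\|$. Using Assumption~\ref{ass:SN}, this gives ${\cal O}_P(\|\delta\|\sqrt{\max(N,T)}\sqrt{NT}/NT) = {\cal O}_P(\|\delta\|/\sqrt{\min(N,T)})$. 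The key point is that these bounds hold uniformly in $F$, because they depend only on the rank of $P_F$.

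Combining the three pieces yields, wpa1,
\begin{align*}
b\|\delta\|^2 - {\cal O}_P\big(\|\delta\|/\sqrt{\min(N,T)}\big) - {\cal O}_P\big(1/\min(N,T)\big) \le 0.
\end{align*}
Solving this quadratic inequality in $\|\delta\|$ gives $\|\delta\| = {\cal O}_P(1/\sqrt{\min(N,T)})$. No compactness of the parameter space is needed, since the bound is global in $\delta$.
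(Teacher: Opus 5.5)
Your argument is correct and is essentially the paper's proof. It uses the same comparison ${\cal L}^R_{NT}(\widehat\beta_R)\le{\cal L}^R_{NT}(\beta^0)\le \frac{1}{NT}{\rm Tr}(ee')$, the same three-way expansion, and the same rank-times-spectral-norm and Assumption~\ref{ass:EX} bounds, ending in the same quadratic inequality in $\|\delta\|$. The only difference is how you remove $\lambda^0 f^{0\prime}$. You project it out on the left with $M_{\lambda^0}$, as in the paper's identification proof, so your regressor term needs Eckart--Young. The paper instead absorbs it into an unrestricted set of $R+R^0$ factors, so its regressor term follows directly from the principal-components identity. The flipped sign in your split of term (b) is harmless, since you only bound its absolute value.
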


\paragraph{Remarks}

\begin{itemize}
  \item[(i)] One can justify Assumption~\ref{ass:SN}$(i)$ by use of the norm inequality
$\|X_k\| \leq \|X_k\|_{HS}$ and the fact that
$\|X_k\|^2_{HS} = \sum_{i,t} X_{k,it}^2 = {\cal O}_P(NT)$, where the last step
follows e.g. if $X_{k,it}$ has a uniformly bounded second moment.
  \item[(ii)] Assumption \ref{ass:SN}$(ii)$ is a condition on the largest
eigenvalue of the random covariance matrix $e'e$, which is
often studied in the literature on random matrix theory, e.g.
Geman \cite*{Geman1980}, Bai, Silverstein, Yin \cite*{BaiSilvYin1988},
Yin, Bai, and Krishnaiah \cite*{BaiKrishYin1988}, Silverstein \cite*{Silverstein1989}.
The results in Latala \cite*{Latala2005} show that
$\| e \| = {\cal O}_P(\sqrt{\max(N,T)})$ if $e$ has independent entries with
mean zero and uniformly bounded fourth moment. Weak dependence of the
entries $e_{it}$ across $i$ and over $t$ is also permissible, see Appendix~\ref{app:SpectralNorm}

    \item[(iii)]
    Assumption \ref{ass:EX} requires exogeneity of the regressors
$X_k$, allowing for pre-determined regressors, and some weak dependence of $X_{k,it}e_{it}$ across $i$ and over $t$.\footnote{
Note that $\frac 1 {\sqrt{NT}} {\rm Tr}(X_k e^{\prime})= \frac 1 {\sqrt{NT}} \sum_i \sum_t X_{k,it}e_{it}$.
}

  \item [(iv)] The theorem
imposes no restriction at all on $f^0$ and $\lambda^0$, apart from the
condition $R \geq {\rm rank}(\lambda^0 f^{0 \prime})$.\footnote{This is the main reason why we use a slightly different
non-collinearity Assumption~\ref{ass:NC}, which avoids mentioning $\lambda^0$,
compared to Bai~\cite*{Bai2009}.}
  In particular, the strong factor Assumption~\ref{ass:SF} is not imposed here, i.e. consistency
  of $\widehat \beta_R$ holds independently of whether the factors
  are strong, weak, or not present at all. This is an  important robustness result, which
  is new in the literature.

  \item[(v)] Under an asymptotic where $N$ and $T$ grow at the same rate, which is imposed everywhere
  else in the paper, Theorem~\ref{th:consistency} shows $\sqrt{N}$ (or equivalently $\sqrt{T}$)
consistency of the estimator $\widehat \beta_R$.  To prove the consistency, we do not use the argument of the standard consistency proof for an extremum estimator which is to apply a uniform law of large numbers to the sample objective function to find the limit function that is uniquely minimized at the true parameter. Deriving the uniform limit of the objective function ${\cal L}_{NT}^{R^0}(\beta)$ is difficult. In the proof that is available in the supplementary appendix, we find a lower bound of the objective function ${\cal L}_{NT}^{R^0}(\beta)$ that is quadratic in $\beta - \beta^0$ asymptotically and establish the desired consistency, extending
the consistency proof in Bai~\cite*{Bai2009}.

\item[(vi)] $\sqrt{N}$ consistency of $\widehat \beta_R$
 implies that the residuals $Y - \widehat \beta_R \cdot X$ will be asymptotically close to
$\lambda^0 f^{0 \prime} + e$.\footnote{In the sense that
$ \| (Y - \widehat \beta_R \cdot X) - (\lambda^0 f^{0 \prime} + e) \| = \| (\widehat \beta_R - \beta) \cdot X \| = {\cal O}_P(\sqrt{N})$.}
 This allows consistent estimation
of $R^0$ under a strong factor Assumption~\ref{ass:SF} by employing the known techniques  on factor
models without regressors (by applying, e.g.~,
Bai and Ng~\cite*{BaiNg2002} to  $Y - \widehat \beta_R \cdot X$), as also
discussed in Bai~\cite*{Bai2009supp}.\footnote{%
Bai~\cite*{Bai2009supp} does not prove the required
consistency and convergence rate of $\widehat \beta_R$, for $R > R^0$.}

\item[(vii)]
Having  a consistent estimator for $R^0$, say $\widehat{R}$, one can calculate $\widehat \beta_{\widehat{R}}$, which will be
asymptotically equal to $\widehat \beta_{R^0}$.
In practice, however, the finite sample properties of the estimator $\widehat \beta_{\widehat{R}}$ crucially depend on the finite sample properties of $\widehat{R}$. Many recent papers have documented difficulties in obtaining reliable estimates for $R^0$ at finite sample (see, e.g., the simulation results of Onatski~\cite*{Onatski2010} and Ahn and Horenstein~\cite*{AhnHorenstein2013}), and those difficulties
are also illustrated by our empirical example in Section~\ref{sec:Empirical}.

\end{itemize}

\subsection{Quadratic Approximation of ${\cal L}_{NT}^0(\beta) (:= {\cal L}_{NT}^{R^0}(\beta))$}
\label{sec:expansion}

To derive the limiting distribution of $\widehat \beta_R$,
we study the asymptotic properties of the
profile objective function ${\cal L}_{NT}^R(\beta)$ around $\beta^0$.
The expression in \eqref{LSobjective} cannot easily be discussed by
analytic means, since  no explicit formula for the
eigenvalues of a matrix is available.
In particular, a standard Taylor expansion
of ${\cal L}^R_{NT}(\beta)$ around $\beta^0$ cannot easily be derived.
Here, we consider the case of known $R=R^0$
and we perform a joint expansion
of  the corresponding profile objective function ${\cal L}^0_{NT}(\beta)$
 in the regression parameters $\beta$ and in the idiosyncratic error terms $e$. To perform this joint
expansion we apply the perturbation theory of linear operators (e.g., Kato~\cite*{Kato}).
We thereby obtain an approximate quadratic expansion of ${\cal L}_{NT}^0(\beta)$ in $\beta$,
which can be used to derive the first order asymptotic theory of
the LS estimator $\widehat \beta_{R^0}$, see Appendix~\ref{app:ExpansionDiscussion} for details.
In addition to the $K \times K$ matrix $W$ already defined in Section~\ref{sec:main} we now also define
\begin{align}
        C^{(1)}_k &=
         \frac 1 {\sqrt{NT}} \, {\rm Tr}( M_{\lambda^0} \, X_k \,
                  M_{f^0} \, e^{\prime} ) \; ,  \notag \\
        C^{(2)}_k  &=
        - \, \frac 1 {\sqrt{NT}} \, \bigg[
       {\rm Tr}\left(e M_{f^0} \, e' \, M_{\lambda^0} \, X_k \,
              f^0 \, (f^{0\prime}f^0)^{-1} \, (\lambda^{0\prime}\lambda^0)^{-1} \, \lambda^{0\prime} \right)
    \nonumber \\ & \qquad \qquad \quad
       +{\rm Tr}\left(e^{\prime}M_{\lambda^0} \, e \, M_{f^0} \, X^{\prime}_k \,
              \lambda^0 \, (\lambda^{0\prime}\lambda^0)^{-1} \, (f^{0\prime}f^0)^{-1} \, f^{0\prime} \right)
    \nonumber \\ & \qquad \qquad \quad
       +{\rm Tr}\left(e^{\prime}M_{\lambda^0} \, X_k \, M_{f^0} \, e^{\prime}
                \, \lambda^0 \, (\lambda^{0\prime}\lambda^0)^{-1} \, (f^{0\prime}f^0)^{-1} \, f^{0\prime} \right)
                        \bigg]  \; .
\end{align}
Let $C^{(1)}$ and $C^{(2)}$ be the $K$-vectors with elements $C^{(1)}_k$ and $C^{(2)}_k$, respectively.

\begin{theorem}
   \label{th:expansion}
   Let Assumptions  \ref{ass:SF} and \ref{ass:SN} be satisfied.
   Suppose that $ N,T \rightarrow \infty$ with $ N/T \rightarrow \nolinebreak[4] \kappa^2$,
   $0<\kappa<\infty$.
   Then we have
   \begin{align*}
      {\cal L}_{NT}^{0}(\beta) &= {\cal L}_{NT}^{0}(\beta^0)
                           -  \ft 2 {\sqrt{NT}}   \left(\beta-\beta^0 \right)'
                           \left( C^{(1)} + C^{(2)} \right)
                    + \left(\beta-\beta^0 \right)'  W  \left(\beta-\beta^0 \right)
                          + {\cal L}_{NT}^{0,{\rm rem}}(\beta)  ,
   \end{align*}
   where the remainder term ${\cal L}_{NT}^{0,{\rm rem}}(\beta)$
    satisfies for any sequence $c_{NT}\rightarrow 0$
   \begin{align*}
     \sup_{\{\beta :\left\| \beta -\beta^{0} \right\| \leq c_{NT}\}} \frac{
        \left| {\cal L}_{NT}^{0,{\rm rem}}(\beta) \right|  } { \left( 1 + \sqrt{NT} \, \left\| \beta -\beta^{0} \right\| \right)^2 } = o_{p}\left( \frac 1 {NT} \right) .
    \end{align*}
\end{theorem}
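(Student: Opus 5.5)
The plan is to write the profile objective as a perturbed eigenvalue problem and expand it with Kato's perturbation theory of linear operators. Set $\Delta\beta=\beta-\beta^0$. By \eqref{LSobjective} with $R=R^0$,
\[
{\cal L}^0_{NT}(\beta)=\frac1{NT}\sum_{r=R^0+1}^T \mu_r\big[(\lambda^0 f^{0\prime}+E)'(\lambda^0 f^{0\prime}+E)\big],
\qquad E:=e-\Delta\beta\cdot X .
\]
The unperturbed $T\times T$ matrix $T^{(0)}=f^0\lambda^{0\prime}\lambda^0 f^{0\prime}$ has rank $R^0$. By Assumption~\ref{ass:SF} its nonzero eigenvalues are of order $NT$, and its null space is the range of $M_{f^0}$. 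The perturbation is $T^{(1)}=f^0\lambda^{0\prime}E+E'\lambda^0 f^{0\prime}$ together with $T^{(2)}=E'E$. Since the sum of the $T-R^0$ smallest eigenvalues equals ${\rm Tr}(T^{(0)}+T^{(1)}+T^{(2)})$ minus the sum of the $R^0$ largest ones, I will use the Kato resolvent expansion of the total projector onto the perturbed eigenspace of the small eigenvalues. This gives ${\cal L}^0_{NT}=\frac1{NT}\sum_{g\ge2}L^{(g)}$, where each $L^{(g)}$ is a sum of traces of products of $\lambda^0,f^0,E$, $M_{\lambda^0},M_{f^0}$, and the ``pseudo-inverse'' $S=f^0(f^{0\prime}f^0)^{-1}(\lambda^{0\prime}\lambda^0)^{-1}(f^{0\prime}f^0)^{-1}f^{0\prime}$. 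The series converges whenever $\|E\|/\sqrt{NT}$ is below a constant times the smallest eigenvalue of $(\lambda^{0\prime}\lambda^0 f^{0\prime}f^0)^{1/2}/\sqrt{NT}$. That condition holds wpa1 uniformly over $\|\Delta\beta\|\le c_{NT}\to0$, because Assumption~\ref{ass:SN} gives $\|E\|\le\|e\|+\|\Delta\beta\|\sum_k\|X_k\|=o_P(\sqrt{NT})$ (using $N/T\to\kappa^2$).

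Explicitly, $L^{(2)}={\rm Tr}(M_{\lambda^0}EM_{f^0}E')$ and
\[
L^{(3)}=-2\,{\rm Tr}\big(EM_{f^0}E'M_{\lambda^0}E\,f^0(f^{0\prime}f^0)^{-1}(\lambda^{0\prime}\lambda^0)^{-1}\lambda^{0\prime}\big).
\]
For $g\ge3$ there is a generic bound $|L^{(g)}|\le c\,(NT)^{1-g/2}\,a^{g}\,\|E\|^{g}$, where $c$ counts the terms (at most exponentially many in $g$) and $a$ is an ${\cal O}_P(1)$ constant. I then substitute $E=e-\Delta\beta\cdot X$ and sort the terms by their degree in $\Delta\beta$ and in $e$.

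The terms of degree zero in $\Delta\beta$ together make up ${\cal L}^0_{NT}(\beta^0)$. The part of $L^{(2)}$ that is quadratic in $\Delta\beta$ is $NT\,\Delta\beta' W\Delta\beta$. The part of $L^{(2)}$ that is linear in $\Delta\beta$ gives $-2\sqrt{NT}\,\Delta\beta' C^{(1)}$. The part of $L^{(3)}$ that is linear in $\Delta\beta$ and quadratic in $e$ gives the three terms of $-2\sqrt{NT}\,\Delta\beta' C^{(2)}$; this has to be checked by cyclic permutation of traces and by using the $\lambda\leftrightarrow f$ symmetry. Dividing by $NT$ then yields the expansion stated in the theorem.

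Everything else goes into the remainder, which I bound term by term. The pieces of $L^{(3)}$ of degree $\ge2$ in $\Delta\beta$ are bounded by $(NT)^{-1/2}\|\Delta\beta\|^2\|X\|^2\|e\|$ and similar expressions. Relative to $NT\|\Delta\beta\|^2$ this is smaller by a factor ${\cal O}_P(\|e\|/\sqrt{NT}+\|\Delta\beta\|)=o_P(1)$. For $g\ge4$, the terms linear in $\Delta\beta$ are at most $(NT)^{1-g/2}\|\Delta\beta\|\sqrt{NT}\,\|e\|^{g-1}$. For $g=4$ this is ${\cal O}_P(\|\Delta\beta\|\sqrt{NT}\,N^{3/2}/(NT))=o_P(\sqrt{NT}\|\Delta\beta\|)$, since $N^{3/2}/(NT)\to0$, and higher $g$ are smaller still. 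The terms of degree $\ge2$ in $\Delta\beta$ with $g\ge4$ are bounded by $NT\|\Delta\beta\|^2$ times a geometric factor in $(\|e\|+\sqrt{NT}\|\Delta\beta\|)/\sqrt{NT}=o_P(1)$. After normalizing by $NT\,(1+\sqrt{NT}\|\Delta\beta\|)^2$, every remainder piece is $o_P(1)$ uniformly over the shrinking ball. Summing over $g$ uses the geometric convergence of the series.

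The main obstacle is the linear-in-$\Delta\beta$ part of $L^{(3)}$ that is only first order in $e$. A naive bound gives $(NT)^{-1/2}\|\Delta\beta\|\,\|X\|\,\|e\|\,\|\lambda^0 f^{0\prime}\|\sim\|\Delta\beta\|\sqrt{NT}\sqrt N$, which is too large. I will handle it by showing that the projector structure kills such terms exactly. Because $M_{\lambda^0}$ and $M_{f^0}$ flank $E$, any term in which $E$ sits next to $\lambda^0 f^{0\prime}$ without a projector vanishes. I will therefore verify carefully from Kato's formula that in every $L^{(g)}$ each factor $E$ is sandwiched as $M_{\lambda^0}EM_{f^0}$ or as $M_{\lambda^0}E\,S\ldots$, so that the only surviving $\Delta\beta$-linear terms of order $\sqrt{NT}$ are those in $C^{(1)}$ and $C^{(2)}$. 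I would first carry out this bookkeeping for $g=2,3,4$ explicitly, and then establish the general pattern by induction.
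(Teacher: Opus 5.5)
Your plan follows the paper's proof: Kato's expansion of the sum of the perturbed zero eigenvalues of the rank-$R^0$ matrix, the exact $L^{(2)}={\rm Tr}(M_{\lambda^0}EM_{f^0}E')$, the explicit $L^{(3)}$, a geometric bound on $L^{(g)}$ for $g\ge3$, and sorting by degree in $\Delta\beta$. Your $L^{(3)}$ formula and your identification of the three $C^{(2)}$ traces are correct. The only difference is cosmetic. You expand in the single perturbation $E=e-\Delta\beta\cdot X$ and then polarize. The paper instead expands multilinearly in $\epsilon_0=\|e\|/\sqrt{NT}$ and $\epsilon_k=\beta^0_k-\beta_k$ (with $X_0=\sqrt{NT}\,e/\|e\|$), truncates at $G=3$, and removes from the ${\cal O}_P(\alpha^4)$ tail its pure-$e$ part, which belongs to ${\cal L}^0_{NT}(\beta^0)$. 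Your remainder accounting is correct.

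Your last paragraph, however, is a misdiagnosis and should be dropped.
\begin{itemize}
\item[(i)] Every Kato term has $\sum_j\nu_j=g$: each ${\cal T}^{(1)}$ carries one $E$ and each ${\cal T}^{(2)}$ carries two. So $L^{(g)}$ is homogeneous of degree $g$ in $E$. The part of $L^{(3)}$ linear in $\Delta\beta$ is therefore exactly quadratic in $e$. It is the $-2\sqrt{NT}\,\Delta\beta'C^{(2)}$ piece you already keep, not remainder.
\item[(ii)] A term bilinear in $(X,e)$ has degree $2$ in $E$, so it would sit in $L^{(2)}$, which contains no such extra piece. Your ``naive bound'' of order $\sqrt{N}\sqrt{NT}\|\Delta\beta\|$ also contradicts your own bound $(NT)^{-1/2}a^3\|E\|^3$, whose $\Delta\beta$-linear part is ${\cal O}_P(N\|\Delta\beta\|)$.
\item[(iii)] The structural lemma you want to prove by induction is false. $C^{(2)}$ already contains $\lambda^{0\prime}eM_{f^0}$. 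At fifth order, factors $\lambda^{0\prime}Ef^0$ with no $M$ on either side appear. For example, take $N=T=2$, $R^0=1$, $\lambda^0f^{0\prime}={\rm diag}(s,0)$ and $E=\left(\begin{array}{cc}\delta_1&\delta_2\\ \delta_3&0\end{array}\right)$. Then $NT\,{\cal L}^0_{NT}=s^{-2}\delta_2^2\delta_3^2(1-2\delta_1/s)+{\cal O}(\delta^6)$, and $\delta_1$ is killed by both $M_{\lambda^0}E$ and $EM_{f^0}$.
\end{itemize}

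The step that genuinely needs an argument is the $g\ge3$ bound itself. You must control traces by ${\rm Tr}(A)\le{\rm rank}(A)\|A\|$. This works because for $g\ge3$ every term contains some $S^{(m)}$ with $m\ge1$, which has rank $R^0$. Kato's ready-made remainder estimate carries a factor equal to the number of eigenvalues summed ($N-R^0$ in the paper's formulation), and that factor would destroy the result. This is why the paper derives its own bound, valid under the smaller radius $d_{\min}^2/(16\,d_{\max})$.
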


The bound on remainder\footnote{
The expansion in Theorem~\ref{th:expansion} contains a term that is linear  in $\beta$
and linear in $e$ ($C^{(1)}$ term), a term that is linear in $\beta$  and quadratic in $e$ ($C^{(2)}$ term),
and a term that is quadratic in $\beta$ ($W$ term). All higher order terms of the expansion
are contained in the remainder term ${\cal L}_{NT}^{0,{\rm rem}}(\beta)$.}
in  Theorem~\ref{th:expansion} is such that it has no effect on the first order asymptotic theory of
$\widehat \beta_{R^0}$, as stated in the following corollary (see also  Andrews~\cite*{Andrews1999}).

\begin{corollary}
   \label{cor:LimitR0}
   Let Assumptions \ref{ass:SF}, \ref{ass:SN}, \ref{ass:EX} and \ref{ass:NC}
   be satisfied.  In the limit $N,T \rightarrow \infty$ with
   $N/T \rightarrow \kappa^2$, $0< \kappa < \infty$, we then have
                  $\sqrt{NT}\left(\widehat \beta_{R^0} - \beta^0\right)
                   = W^{-1} \left( C^{(1)} + C^{(2)} \right) + o_P \left(1 + \| C^{(1)} \| \right)$.
   If we furthermore assume that $C^{(1)} = {\cal O}_P(1)$, then we obtain
    \[\sqrt{NT}\left(\widehat \beta_{R^0} - \beta^0\right)
                   = W^{-1} \left( C^{(1)} + C^{(2)} \right) + o_P(1)
                   = {\cal O}_P(1).\]
\end{corollary}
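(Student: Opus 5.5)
The plan is to combine Theorem~\ref{th:consistency}, which gives a preliminary rate, with Theorem~\ref{th:expansion}, which gives a local quadratic approximation of ${\cal L}^0_{NT}$. The argument is then the standard one for approximately quadratic objectives in the style of Andrews~\cite*{Andrews1999}. Throughout write $\Delta = \beta - \beta^0$ and $\widehat\Delta = \widehat\beta_{R^0}-\beta^0$.

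\emph{Step 1.} Under Assumptions~\ref{ass:SN}, \ref{ass:EX} and~\ref{ass:NC}, Theorem~\ref{th:consistency} with $R=R^0$ yields $\|\widehat\Delta\| = {\cal O}_P(1/\sqrt{\min(N,T)}) = o_P(1)$. Hence there is a deterministic sequence $c_{NT}\to 0$ such that $\|\widehat\Delta\|\le c_{NT}$ wpa1. The remainder bound of Theorem~\ref{th:expansion} therefore applies at $\beta=\widehat\beta_{R^0}$. The same bound also holds at the comparison point $\beta^* = \beta^0 + (NT)^{-1/2} W^{-1}(C^{(1)}+C^{(2)})$, provided $\|\beta^*-\beta^0\|\le c_{NT}$; this requires $C^{(1)}+C^{(2)} = o_P(\sqrt{NT})$. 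For $C^{(1)}$ this follows from Assumption~\ref{ass:SN} (note $\|M_{\lambda^0} X_k M_{f^0}\|_{HS}\le \|X_k\|_{HS}$) together with Assumption~\ref{ass:EX}, after handling the projection terms, which have rank at most $2R^0$ and hence norm bounded by $\|X_k\|\,\|e\|$ times a constant. For $C^{(2)}$ it follows from $\|e\|^2 = {\cal O}_P(N)$ and Assumption~\ref{ass:SF}, which give $C^{(2)}={\cal O}_P(1)$.

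\emph{Step 2.} Define $C = C^{(1)}+C^{(2)}$ and $Q(\Delta) = -\frac{2}{\sqrt{NT}}\Delta' C + \Delta' W\Delta$. Assumption~\ref{ass:NC} (with $R=R^0$) implies that $W\ge b'\,\mathbbm{1}$ wpa1 for some $b'>0$, because $\Delta'W\Delta\,\|\Delta\|^{-2}$ is bounded below by the sum of the smallest $T-2R^0$ eigenvalues appearing in that assumption. Completing the square gives
\[ Q(\Delta) - Q(\Delta^*) = (\Delta-\Delta^*)'W(\Delta-\Delta^*), \qquad \Delta^* = (NT)^{-1/2}W^{-1}C. \]
Using optimality, ${\cal L}^0_{NT}(\widehat\beta_{R^0}) \le {\cal L}^0_{NT}(\beta^*)$. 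Subtracting the two expansions then gives
\[ b'\|\widehat\Delta - \Delta^*\|^2 \le |{\cal L}^{0,\rm rem}(\widehat\beta)| + |{\cal L}^{0,\rm rem}(\beta^*)| = o_P\!\left(\frac{(1+\sqrt{NT}\|\widehat\Delta\|)^2 + (1+\sqrt{NT}\|\Delta^*\|)^2}{NT}\right). \]

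\emph{Step 3 (main obstacle).} The remainder bound depends on the unknown $\|\widehat\Delta\|$, so a self-referential bound has to be closed. Multiply the inequality by $NT$ and set $u = \sqrt{NT}\|\widehat\Delta\|$ and $v = \sqrt{NT}\|\Delta^*\| = {\cal O}_P(1+\|C^{(1)}\|)$. This gives $(u-v)^2 \le o_P(1)\,[(1+u)^2 + (1+v)^2]$. From this, $u \le v + o_P(1+u+v)$, so $u = {\cal O}_P(1+v)$. Substituting back yields $\sqrt{NT}\|\widehat\Delta-\Delta^*\| = o_P(1+v) = o_P(1+\|C^{(1)}\|)$, which is the first claim. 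The care needed is to keep the $o_P$ terms uniform, using the supremum form of the remainder bound on the shrinking ball, and to check that $\|W^{-1}\| = {\cal O}_P(1)$ so that $v$ is controlled by $\|C\|$.

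\emph{Step 4.} If in addition $C^{(1)}={\cal O}_P(1)$, then $C={\cal O}_P(1)$ and $\|W^{-1}\|={\cal O}_P(1)$. Step 3 then gives $\sqrt{NT}\,\widehat\Delta = W^{-1}C + o_P(1) = {\cal O}_P(1)$, which is the second claim.
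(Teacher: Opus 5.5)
Your proposal is correct and follows essentially the same route as the paper's proof. You compare $\widehat\beta_{R^0}$ with $\beta^0+W^{-1}(C^{(1)}+C^{(2)})/\sqrt{NT}$ and complete the square, bound $\mu_K(W)\ge b$ via Assumption~\ref{ass:NC} with $R=R^0$, use $C^{(1)}={\cal O}_P(\sqrt{\max(N,T)})$ and $C^{(2)}={\cal O}_P(1)$ so that the remainder bound of Theorem~\ref{th:expansion} applies at both points, and close the self-referential inequality exactly as the paper does. (The Hilbert--Schmidt aside in your Step~1 is not what does the work; the decomposition of $C^{(1)}_k$ into $\Tr(X_k e')/\sqrt{NT}$ plus rank-$\le 2R^0$ terms bounded by $\|X_k\|\,\|e\|/\sqrt{NT}$ is.)
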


Note that our assumptions already guarantee
$C^{(2)}={\cal O}_P(1)$ and that $W$ is invertible with $W^{-1}={\cal O}_P(1)$, so this need not be
explicitly assumed in Corollary~\ref{cor:LimitR0}.

\paragraph{Remarks}

\begin{itemize}
  \item[(i)] More details on the expansion of  ${\cal L}_{NT}^{0}(\beta)$ are provided in
Appendix~\ref{app:ExpansionDiscussion} and the formal proofs can be found in
in Section~\ref{app:expansion1} of the supplementary appendix.

  \item[(ii)] Corollary~\ref{cor:LimitR0} allows to replicate the results in
Bai~\cite*{Bai2009}
and Moon and Weidner~\cite*{MoonWeidner2013}
on the asymptotic distribution of $\widehat \beta_{R^0}$,
including the result in formula~\eqref{AsyDistribution} above.\footnote{
Let $\rho$, $D(.)$, $D_0$, $D_Z$, $B_0$ and $C_0$ be the notation used in Assumption~A
and Theorem~3 of Bai~\cite*{Bai2009}, and let Bai's assumptions be satisfied.
Then, our $\kappa$, $W$, $C^{(1)}$ and $C^{(2)}$ satisfy
$\kappa=\rho^{-1/2}$, $W= D(f^0) \rightarrow_p D>0$, $C^{(1)} \rightarrow_d  {\cal N}(0,D_Z)$
and $W^{-1} C^{(2)} \rightarrow_p \rho^{1/2} B_0 + \rho^{-1/2} C_0$.
Corollary~\ref{cor:LimitR0} can therefore be used to replicate Theorem~3 in Bai~\cite*{Bai2009}.
For more details and extensions of this we refer to Moon and Weidner~\cite*{MoonWeidner2013}.
}
The assumptions of the corollary do not restrict the regressors
to be strictly exogenous and do not impose Assumption~\ref{ass:LL}.

  \item[(iii)] If one weakens
Assumption \ref{ass:SN}$(ii)$ to $\|e\|=o_P(N^{2/3})$,
then Theorem \ref{th:expansion} still continues to hold.
If $C^{(2)}={\cal O}_P(1)$,
then Corollary~\ref{cor:LimitR0} also holds under this weaker condition on $\|e\|$.
\end{itemize}

\subsection{Remarks on Deriving the Convergence Rate and Asymptotic Distribution of  $\widehat \beta_R$ for $R>R^0$.}
\label{sec:AsymptoticSummary}

\subsubsection*{An example that motivates stronger restrictions}

The results in Bai~\cite*{Bai2009} and
Corollary~\ref{cor:LimitR0} above show that under appropriate assumptions
the estimator
$\widehat \beta_R$ is $\sqrt{NT}$-consistent for $R=R^0$.
For $R>R^0$
we know from Theorem~\ref{th:consistency} that $\widehat \beta_R$ is $\sqrt{N}$ consistent
as $N$ and $T$ grow at the same rate, but we have not shown faster than $\sqrt{N}$ converge of
$\widehat \beta_R$ for $R>R^0$, yet, which according to the heuristic discussion at the end of
Section~\ref{sec:main} is a very important intermediate step to obtain our main result.\footnote{
One reason why  $\widehat \beta_R$ might only converge at $\sqrt{N}$ rate, but not faster, are
weak factors (both for $R>R^0$ and for $R=R^0$). A weak factor
(see e.g. Onatski~\cite*{Onatski2010,Onatski2012} and Chudik, Pesaran and Tosetti~\cite*{ChudikPesaranTosetti2011})
might not be picked up at all
or might only be estimated very inaccurately by the principal components estimator $\widehat F_R$,
in which case that factor is not properly accounted for in the LS estimation procedure. If this happens
and the weak factor is correlated with the regressors, then there is some uncorrected weak
endogeneity problem, and $\widehat \beta_R$ will only converge at $\sqrt{N}$ rate.
We do not consider the issue of weak factors any further in this paper.
}
However, one might not obtain a faster than $\sqrt{N}$ convergence rate of $\widehat \beta_R$
for $R>R^0$ without imposing further restrictions, as the following example shows.

\begin{example}
     \label{example:negative}
      Let $R^0=0$ (no true factors) and $K=1$ (one regressor).
      The true model reads $Y_{it} = \beta^0 X_{it} + e_{it}$, and we consider the following data generating process (DGP)
  \begin{align*}
   X_{it} &= a \widetilde X_{it} + \lambda_{x,i} f_{x,t}, &
   e &= \left( \mathbbm{1}_{N} + c \, \frac{\lambda _{x}\lambda _{x}^{\prime }}{N} \right)
   u \left( \mathbbm{1}_{T}+c \, \frac{f_{x}f_{x}^{\prime }}{T}\right) ,
\end{align*}
where $e$ and $u$ are $N \times T$ matrices with entries $e_{it}$ and $u_{it}$, respectively,
and $\lambda _{x}$ is an $N$-vector with entries $\lambda_{x,i}$, and $f_x$ is a $T$-vector with entries $f_{x,t}$.
Let $\widetilde X_{it}$ and $u_{it}$ be mutually independent
iid standard normally distributed random variables.
Let $\lambda_{x,i} \in {\cal B}$ and $ f_{x,t} \in {\cal B}$ be
non-random sequences with bounded range ${\cal B} \subset \mathbbm{R}$ such that
$\frac 1 N \sum_{i=1}^N  \lambda^2_{x,i} \rightarrow 1$
and $\frac 1 T \sum_{t=1}^T  f^2_{x,t}  \rightarrow 1$ asymptotically.\footnote{
    We could also allow $\lambda_x$ and $f_x$ to be random (but independent of $e$
    and $\widetilde X$) and we could let the range of ${\cal B}$ be unbounded.
    We only assume non-random $\lambda_x$ and $f_x$ to guarantee that the DGP
    satisfies Assumption~D of  Bai~\cite*{Bai2009}, namely that
    $X$ and $e$ are independent (otherwise we only have mean-independence, i.e. $\mathbbm{E}(e|X)=0$).
    Similarly, we only assume bounded ${\cal B}$ to satisfy
    the restrictions on $e_{it}$ imposed in
    Assumption~C of Bai~\cite*{Bai2009}.
}
Consider $N,T \rightarrow \infty$ such that
$N/T \rightarrow \kappa^2$, $0< \kappa < \infty$,
and let $0<a <(1/2)^{2/3} \min(\kappa^2,\kappa^{-2})$ and $c \geq    \frac{ (2+\sqrt{2})  \left( 1+\kappa \right)
 (1+ \sqrt{3} a^{-1/4} )}   { \min(1,\kappa) [1/2 - a^{3/2} \max(\kappa,\kappa^{-1})]} $.\footnote{
 The bounds on the constants $a$ and $c$ imposed here are sufficient, but not necessary
 for the result of no faster than $\sqrt{N}$ convergence of $\widehat \beta_1$.
Simulation evidence suggests that this result holds for a much larger range of $a$, $c$ values.
}
For this DGP one can show that
$\widehat \beta_1$, the LS-estimator with $R=1>R^0$, only converges at a rate of $\sqrt{N}$ to $\beta^0$, but not faster.
\end{example}

The proof of the last statement is provided in the supplementary material.
The DGP in this example satisfies
      all the assumptions imposed in Corollary~\ref{cor:LimitR0} to derive the
  limiting distribution of the LS-estimator for $R=R^0$, including $\sqrt{NT}$-consistency of
  $\widehat \beta_R$ for $R=R^0$ (=0 in this example).
    It also satisfies all the regularity conditions imposed in
   Bai~\cite*{Bai2009}.\footnote{See Section~\ref{app:CheckBai} in the supplementary material  for details.}
The aspect that is special about this DGP is
  that $\lambda_x$ and $f_x$ feature both in $X_{it}$
  and in the second moment structure
   of $e_{it}$.
   The heuristic discussion at the end of Section~\ref{sec:main} provides some intuition why this
   can be problematic, because the leading principal components obtained from only the error matrix $e$
   will have a strong sample correlation with $X_{it}$ for this DGP.

\subsubsection*{Faster than $\sqrt{N}$ convergence of $\widehat \beta_R$}

 In Appendix~\ref{sec:ConvergenceRate}, we summarize our results on faster than $\sqrt{N}$ convergence of $\widehat \beta_R$ for $R \geq R^0$.  The above example shows that this requires more restrictive
 assumptions than those imposed for the analysis of the case $R=R^0$ above,
 but the assumptions that we impose for this intermediate results are
 still significantly weaker than the Assumption~\ref{ass:LL} required for our main result above,
   in particular either cross-sectional correlation or time-serial correlation of
   $e_{it}$ are still allowed.

In that appendix we also provide one set of assumptions (Assumption~\ref{ass:DX-2})
   for faster than $\sqrt{N}$ convergence such that
     no additional conditions on $e$ are required, but where the regressors are restricted
   to essentially be  lagged dependent variables in an AR(p) model with factors.

\subsubsection*{On the role of the iid normality of $e_{it}$}

We establish the asymptotic equivalence of $\widehat \beta_R$ and $\widehat \beta_{R^0}$ in Theorem \ref{th:MAIN}
by showing that the LS objective function $\mathcal{L}_{NT}^{R}(\beta)$ can, up to a constant,
 be uniformly well approximated
 by $\mathcal{L}_{NT}^{0}(\beta)$ in shrinking neighborhoods around the true parameter. For this, we need not only the faster than $\sqrt{N}$ convergence rate of $\widehat \beta_R$, but also require the Assumption~\ref{ass:EV} in Appendix~\ref{sec:Equivalence}. This is a high-level assumption on the eigenvalues and eigenvectors of the random covariance matrices
$E E'$ and $E' E$, where $E=M_{\lambda^0} e M_{f^0}$. The assumption essentially requires
the eigenvalues of those matrices to be sufficiently separated from each other,
as well as the eigenvectors of those matrices to be
sufficiently uncorrelated with the regressors $X_k$, and with $e P_{f^0}$ and $P_{\lambda^0} e$.

We use the iid normality of $e_{it}$ to verify those high-level conditions in
Section~\ref{ass:SufficiencyLL} of the supplementary appendix.
There are three reasons why we can currently only
verify those conditions for iid normal errors:

\begin{itemize}
    \item[(i)] The random matrix theory literature studies the eigenvalues and eigenvectors of random
    covariance matrices of the form $e e'$ and $e' e$, while we have to deal with the additional projectors $M_{\lambda^0}$
    and $M_{f^0}$ in the random covariance matrices.
    These additional projections stem from integrating out the true factors and factor loadings of the model.
If the error distribution is $iid$ normal, and independent from $\lambda^0$ and $f^0$, then these projections
are unproblematic, since the distribution of $e$ is rotationally invariant from the left and right in that case,
so that the projections are mathematically equivalent to a reduction of the sample size
by $R^0$ in both panel dimensions.

    \item[(ii)]
    In the iid normal case one can furthermore use the invariance of the distribution of $e$ under orthonormal
    rotations from the left and from the right to also fully
    characterize the distribution of the eigenvectors of $E E'$ and $E E'$.\footnote{Rotational invariance implies
    that the distribution of the normalized eigenvectors
    is given by the Haar measure of a rotation group manifold.}
      The conjecture in the random matrix theory literature is that
the limiting distribution of the eigenvectors of a random covariance matrix
is ``distribution free'',
i.e. is independent of the particular distribution of $e_{it}$, see, e.g., Silverstein~\cite*{Silverstein1990} and
Bai~\cite*{bai1999review}. However, we are not currently aware of a formulation
and corresponding proof of this conjecture that is sufficient for our purposes, i.e. that would allow us to
verify our high-level Assumption~\ref{ass:EV} more generally.

    \item[(iii)]  We also require certain properties of the eigenvalues of $E E'$ and $E E'$.
    Eigenvalues are studied more intensely than eigenvectors
in the random matrix theory literature,
and it is well-known that the properly normalized
empirical distribution of the eigenvalues
(the so called empirical spectral distribution)
of an $iid$ sample covariance matrix converges
to the Mar{\v{c}}enko-Pastur-law (Mar{\v{c}}enko and Pastur~\cite*{MarcenkoPastur1967})
for asymptotics where $N$ and $T$ grow at the same rate.
This result does not require normality, and results on the limiting spectral distribution are also
known for non-iid matrices. However,
to check our high-level Assumption~\ref{ass:EV}
we also need results on the
convergence rate of the empirical spectral
distribution to its limit law, which is an ongoing research subject in the
literature, e.g. Bai~\cite*{Bai1993}, Bai, Miao and Yao \cite*{BaiMiaoYao2004},
G{\"o}tze and Tikhomirov \cite*{GotzeTikhomirov2010}, and we are currently only aware of
results on this convergence rate for the case of either iid or iid normal errors.
To verify the high-level assumption we furthermore
use a result from
Johnstone~\cite*{Johnstone2001} and Soshnikov~\cite*{Soshnikov2002}
that shows that the properly normalized few
largest eigenvalues of  $E E'$ and $E E'$ converge to the Tracy-Widom law, and to our knowledge this result
is not established for error distributions that are not iid normal.
\end{itemize}

In spite of these severe mathematical challenges,
we believe that in principle our high-level Assumption~\ref{ass:EV} could be verified for more
general error distributions, implying that our main result of asymptotic equivalence of $\widehat \beta_R$
and $\widehat \beta_{R^0}$ holds more generally. This is also supported by our Monte Carlo
simulations, where we explore non-independent and non-normal error distributions.

\section{Empirical Illustration}
\label{sec:Empirical}

As an illustrative empirical example, we estimate the dynamic effects of
unilateral divorce law reforms on the state-wise divorce rates in the US. The impact of the
divorce law reform has been studied by many researches (e.g., Allen~\cite*{Allen1992},
Peters~\cite*{Peters1986,Peters1992}, Gray~\cite*{Gray1998}, Friedberg~\cite*{Friedberg1998},
Wolfers~\cite*{Wolfers2006}, and Kim and Oka~\cite*{KimOka2014}). In this section we revisit this topic,
extending Wolfers~\cite*{Wolfers2006} and Kim and Oka~\cite*{KimOka2014} by controlling
for interactive fixed effects and also a lagged dependent variable.

Let $Y_{it}$ denote the number of divorces per 1000 people in state $i$
at time $t$, and let $D_{i}$ denote the year in which state $i$ introduced the unilateral
divorce law, i.e. before year $D_i$ state $i$ had a consent divorce law, while from $D_i$ onwards state $i$
had  a unilateral ``no-fault'' divorce law, which loweres the barrier for divorce.
The goal is to estimate the dynamic effects of this law change on the divorce rate. The empirical model we estimate is
\begin{equation}
Y_{it}=\beta_{0} \, Y_{i,t-1} + \sum_{k=1}^{8} \beta _{k}  X_{k,it}
+\alpha_{i}+ \gamma_i \, t + \delta_i \, t^{2} + \mu_t +\lambda _{i}^{\prime }f_{t}+e_{it},
\label{m:empirical.ex}
\end{equation}%
where we follow Wolfers~\cite*{Wolfers2006} in defining the regressors as bi-annual dummies:
\begin{eqnarray*}
X_{k,it} &=&\mathbbm{1}\{D_{i}+2(k-1)\leq t\leq D_{i}+2k-1\},
\quad \text{for} \; \; k=1,...,7, \\
X_{8,it} &=&\mathbbm{1}\{D_{i}+2(k-1)\leq t\}.
\end{eqnarray*}%
The dummy variable and quadratic trend specification $\alpha_{i}+ \gamma_i \, t + \delta_i \, t^{2} + \mu_t$
is also used in Friedberg~\cite*{Friedberg1998} and Wolfers~\cite*{Wolfers2006}. The additional
interactive fixed effects $\lambda _{i}^{\prime }f_{t}$ were added in Kim and Oka~\cite*{KimOka2014}
to control for additional unobserved heterogeneity in the divorce rate, e.g. due to social, cultural
or demographic factors. We extend the specification further by
adding a lagged dependent variable $Y_{i,t-1}$ to control for state dependence of the divorce rate,
but we also report results without $Y_{i,t-1}$ below.
We use the dataset of Kim and Oka~\cite*{KimOka2014},\footnote{%
The data is available from {\tt http://qed.econ.queensu.ca/jae/2014-v29.2/kim-oka/}}
which is a balanced panel of $N=48$ states over $T=33$ years,
leaving $T=32$ time periods if the lagged dependent variable is included.

For estimation we first eliminate $\alpha_{i}$, $\gamma_i$, $\delta_i$ and $\mu_t$ from the model by projecting the
outcome variable and all regressors accordingly, e.g. $\widetilde Y = M_{1_N} Y M_{(1_T, {\bf t}, {\bf t}^2)}$,
where $1_N$ and $1_T$ are $N$- and $T$-vectors, respectively, with all entries equal to one,
and ${\bf t}$ and ${\bf t}^2$ are $T$-vectors with entries $t$ and $t^2$, respectively.
The model after projection reads
$\widetilde Y_{it}=\beta_{0} \, \widetilde Y_{i,t-1} + \sum_{k=1}^{8} \beta _{k}  \widetilde X_{k,it}
  +\widetilde \lambda _{i}^{\prime } \widetilde f_{t} + \widetilde e_{it}$, which is exactly the model we have studied
  so far in this paper.\footnote{To construct $\widetilde Y_{i,t-1}$ we first apply the lag-operator and then apply the
  projections $M_{1_N}$ and $M_{(1_T, {\bf t}, {\bf t}^2)}$.}
   We use the LS estimator described above to estimate this model.
The projection reduces the effective sample size to $N=48-1=47$ and $T=32-3=29$, which
should be accounted for when calculating standard errors, e.g. in the formula for
$\widehat \sigma_R^2$ above (degree of freedom correction). Our theoretical results are still applicable.\footnote{
If $e_{it}$ is iid normal, then $\widetilde e_{it}$ is not, but one can apply appropriate orthogonal rotations in $N$- and $T$-space such that $\widetilde e_{it}$ becomes iid normal again, although with sample size reduced to $N=47$ and $T=29$.
The rotation has no effect on the LS estimator, i.e. it does not matter whether we work in the original or the rotated
frame.
}

We need to decide on a number of factors $R$ when implementing the LS estimator. As already mentioned in
the last remark  in Section~\ref{sec:consistency} above, we can can apply known techniques from the literature
on factor models without regressors to obtain a consistent estimator of $R^0$. To do so we choose a maximum
number of factors of $R_{\max}=9$ to obtain the preliminary estimate $\widehat \beta_{R_{\max}}$
and then calculate the residuals
$\widehat u_{it} = \widetilde Y_{it} - \widehat \beta_{R_{\max},0} \, \widetilde Y_{i,t-1}
  - \sum_{k=1}^{8} \widehat \beta _{R_{\max},k}  \widetilde X_{k,it}$.
We then apply the IC, PC and BIC3 criteria of Bai and Ng~\cite*{BaiNg2002},\footnote{%
Following Onatski~\cite*{Onatski2010} and Ahn and Horenstein~\cite*{AhnHorenstein2013}
we report only BIC3 among the AIC and BIC criteria of Bai and Ng~\cite*{BaiNg2002}.
}
the criterion described in Onatski~\cite*{Onatski2010}, and the ER and GR criteria of
Ahn and Horenstein~\cite*{AhnHorenstein2013} to $\widehat u$.\footnote{%
To include $R=0$ as a possible outcome for the Ahn and Horenstein (2013) criterion, we use the mock
eigenvalue used in their simulations.}
Most of these criteria also require specification
of $R_{\max}$, and we continue to use $R_{\max} = 9$.
The corresponding estimation results for $R$ are presented in Table~\ref{table:facor.number.est}.
In addition, we also report the log scree plot, i.e. the sorted eigenvalues of $\widehat u' \widehat u$
in Figure 1.

The log scree plot already shows that it is not obvious how to decompose the eigenvalue spectrum
into a few larger eigenvalues stemming from factors and the remaining smaller eigenvalues stemming from
the idiosyncratic error term.\footnote{The first largest eigenvalue is 2.2 times larger than the second eigenvalue,
 the second is 1.6 times larger than the third, the third is 1.9 times larger than fourth. So the largest view eigenvalues
 are larger than the remaining ones, and the strong factor assumption might not be completely inappropriate here.
 However, deciding on a cutoff between factor and non-factor eigenvalues is difficult.}
 This problem is also reflected in the very different estimates for $R$ that one obtains from the various criteria.
 It might appear that IC1, IC3, PC1, PC2 and PC3 all agree on $\widehat R=9$, but this is simply $\widehat R=R_{\max}$,
 and if we choose $R_{\max}=10$, then all these criteria deliver $\widehat R=10$, so this should not be considered
 a reliable estimate.

\begin{table}[tb!]
\begin{minipage}{\textwidth}

\begin{minipage}[b]{0.47\textwidth}
\begin{center}
\begin{tabular}{c@{\;}c|c@{\;}c|c@{\;}c}
Criterion: & $\widehat{R}$ & Criterion: & $\widehat{R}$ & Criterion: & $\widehat{R}$
\\
\hline \hline
IC1: & $9$ & PC1: & $9$ & Onatski: & 1 \\
IC2: & $7$ & PC2: & $9$ & ER: & 1 \\
IC3: & $9$ & PC3: & $9$ & GR: & 3 \\
BIC3: & 6 &  &  &  &
\end{tabular}
\end{center}
\caption{\label{table:facor.number.est} \footnotesize
Estimated number of factors in the residuals $\widehat u$,
using different criteria for estimation and $R_{\max}=9$. The  IC, PC and BIC criteria
are described in Bai and Ng~(2002), the ER and GR criteria are from Ahn and Horenstein (2013),
and we also use the criterion of Onatski~(2010).
}

\end{minipage}
\hfill
\begin{minipage}[b]{0.47\textwidth}
\begin{center}
\includegraphics[width=0.9\textwidth]{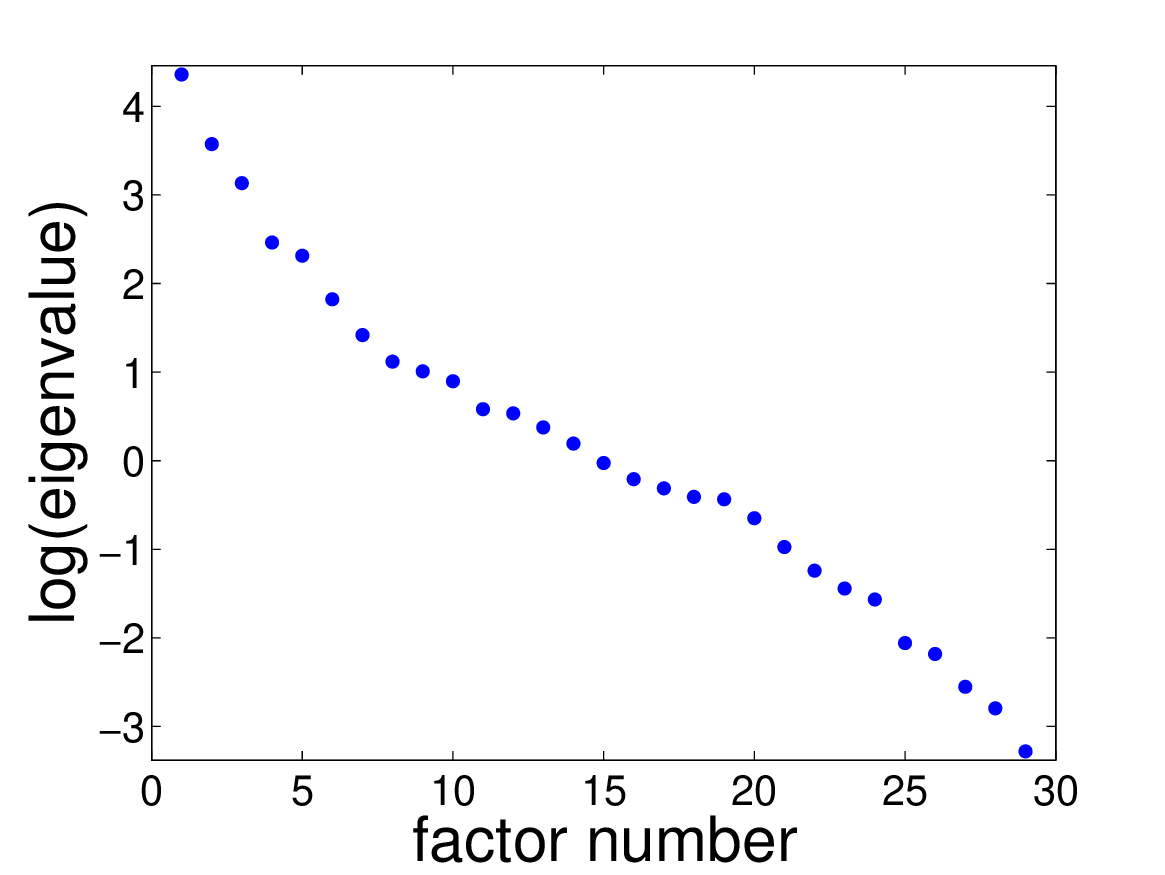}
\renewcommand{\tablename}{Figure}
\end{center}
\vspace{-0.3cm}
Figure~1: {\footnotesize
Log scree plot.
The natural logarithm of the sorted eigenvalues (corresponding to the principal components, or factors)
of $\widehat u' \widehat u$ are plotted.
}
\renewcommand{\tablename}{Table}

\end{minipage}
\end{minipage}
\end{table}

\begin{table}[tb!]
\centering
\includegraphics[width=\textwidth]{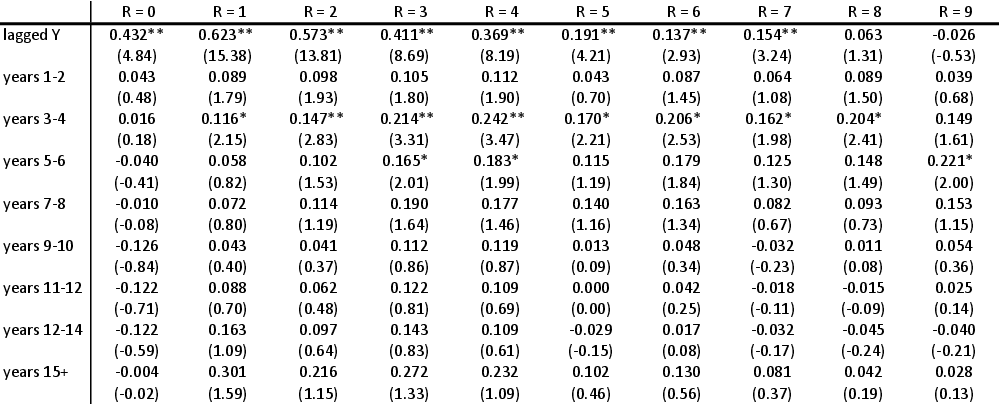}
\caption{\label{tab:EstimateBetaLag}\footnotesize
Dynamic effects of divorce law reform. We report
bias corrected LS-estimates for the regression coefficients in model~\eqref{m:empirical.ex}. Each column corresponds to
a different number of factors $R \in \{0,1,\ldots,9\}$ used in the estimation. t-values are reported in parenthesis.}
\end{table}

\begin{table}[htb!]
\centering
\includegraphics[width=\textwidth]{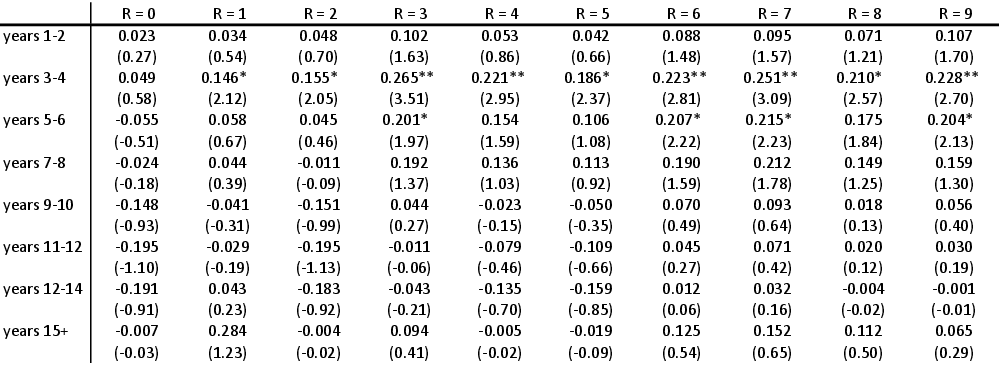}
\caption{\label{tab:EstimateBetaNoLag}\footnotesize
Same as Table~\ref{tab:EstimateBetaLag}, but without including the lagged dependent variable into the model.}
\end{table}

On the other hand, our asymptotic theory suggests, that the exact choice of $R$ in the estimation of
$\widehat \beta_R$ should not matter too much, as long as $R$ is chosen large enough to cover all relevant factors.
Table~\ref{tab:EstimateBetaLag} contains the estimation results for the bias corrected $\widehat \beta_R$ for $R \in \{0,1,\ldots,9\}$.
Table~\ref{tab:EstimateBetaNoLag} contains estimates if the lagged dependent variable is not included into the model.\footnote{
The result for $R=7$  in Table~\ref{tab:EstimateBetaNoLag} should be equal to column (6) in Table III
of Kim and Oka~\cite*{KimOka2014}. The discrepancy is explained
by a coding error in their bias computation.
Note also that the result for $R=0$ in Table~\ref{tab:EstimateBetaNoLag} does not match the one in Wolfers~\cite*{Wolfers2006},
because he uses WLS with state population weights, while we use OLS for simplicity.
Kim and Oka~\cite*{KimOka2014} estimate both WLS and OLS and find that the difference between the resulting
estimates becomes insignificant, once a sufficient number of interactive fixed effects is controlled for.}
For all reported estimates we perform bias correction and standard error estimation as described in
Bai~\cite*{Bai2009} and Moon and Weidner~\cite*{MoonWeidner2013}.\footnote{We correct for the biases
due to heterscedasticity in both panel dimensions worked out in Bai~\cite*{Bai2009},
 as well as for the dynamic bias worked out in Moon and Weidner~\cite*{MoonWeidner2013}.
 For the latter we use the formula for $ \widehat B_{R,k}$ above, with bandwidth $M=2$.
For the standard error estimation we allow for heterscedasticity in both panel dimensions, also
following Bai~\cite*{Bai2009} and Moon and Weidner~\cite*{MoonWeidner2013}.
The bias and standard error formulas in those paper assume $R=R^0$ known, but we strongly expect that those formulas
are robust towards $R>R^0$, as partly justified by Theorem~\ref{th:Estimators} above.
For the model without lagged dependent variable we also allow for serial correlation in $e_{it}$ when estimating the
bias and standard deviation of $\widehat \beta_R$.
}

When ignoring the lagged dependent variable coefficient,
one finds that in both Table~\ref{tab:EstimateBetaLag} and Table~\ref{tab:EstimateBetaNoLag}
the estimation results for $\widehat \beta_R$ and the corresponding t-values
are quite sensitive to changes in $R$ for very small values of $R$, but become
much more stable as $R$ increases, and actually do not change too much anymore from roughly $R=2$ onwards.
These findings are very well in line with our asymptotic theory, and the dynamic effect of divorce law reform
that we find are also similar to the findings in Wolfers~\cite*{Wolfers2006}
and Kim and Oka~\cite*{KimOka2014}. The effect of the law reform on the divorce rates initially increases over time,
is certainly significant in year 3-4 after the reform, and declines and becomes insignificant afterwards.\footnote{
The magnitude of the estimates is smaller than those in Wolfers~\cite*{Wolfers2006}, i.e. controlling
for unobserved factors reduced the effect size, as already pointed out by Kim and Oka~\cite*{KimOka2014}.}

In contrast,
the estimated coefficient on the lagged dependent variable in Table~\ref{tab:EstimateBetaLag} is quite large and
highly significant for small values of $R$, but decreases steadily with $R$, until it gets close to zero and
insignificant for $R \geq 8$. A plausible interpretation of this finding is that the model that includes the lagged dependent
variable is misspecified, and that the estimated value of $\beta_0$ for small values of $R$ does not correspond
to a true state dependence of $Y_{it}$, but simply reflects the time-serial correlation of the error process
being picked up by the autoregressive model. 
According to this interpretation,
once we include more
and more factors into the model we control for more and more serial dependence of the unobserved error term,
thus uncovering the true insignificance of $\beta_0$ in the estimates for $R \geq 8$.

This empirical example shows that instead of relying on a single estimate $\widehat R$ for the number of factors
and reporting the corresponding $\widehat \beta_{\widehat R}$ it can be very informative to calculate
$\widehat \beta_{R}$ for multiple values of $R$. Whether the estimated coefficients become stable for sufficiently
large $R$ values, as our asymptotic theory suggests, is a useful robustness check for the model. When reporting the final results, then, it is better,
within a reasonable range, to choose
an $R$ that is too large than one that is too small.

We also perform a Monte Carlo
simulation that is tailored towards the empirical application. 
For this we use the static model without lagged dependent variable.
To generate $Y_{it}$ in equation \eqref{m:empirical.ex} with $\beta_0=0$
we use the observed regressors $X_{k,it}$, as described above, and 
as true parameters we use the
$\beta$ (bias corrected), $\alpha_i$, $\gamma_i$, $\delta_i$, $\mu_t$, $\lambda_i$ and $f_t$ obtained from the estimation with $R=4$ 
(i.e. $\beta_k$ as reported in the $R=4$ column of Table~\ref{tab:EstimateBetaNoLag}). We generate $e_{it}$ 
from an MA(1) model with $t(5)$ distributed
innovations. Note that this error distribution violates the 
assumption~\ref{ass:LL}$(ii)$.

In this ``empirical Monte Carlo'' we have 
$N=48$, $T=33$ and true number of factors $R^0=4$. 
We find that the
bias corrected estimates for $\beta_k$, $k=1,\ldots,8$, 
are essentially unbiased when $R \geq R^0$ factors are used in the estimation, but for $R<R^0$ the coefficient estimates are often biased.
For $\beta_k$, $k \geq 3$, there are only small changes in the
 standard deviation of the estimator between $R=4$ and $R=9$,
  but for $\beta_k$, $k=1,2$, we observe standard deviation inflation
of up to $25 \%$ between $R=4$ and $R=9$.
Given the relatively small sample size 
the difference between $R=9$ and $R^0=4$ is relatively large, and some
finite sample inefficiency is not too surprising. 
The detailed results are available in the supplementary appendix.

\section{Monte Carlo Simulations}
\label{sec:MC}

In addition to the ``empirical Monte Carlo'' discussed above
we now investigate the finite sample properties of $\widehat \beta_{R}$ 
and $\widehat \beta_{R}^{\rm BC}$  further. In the simulations in this section we use 
a generated regressor $X_{it}$ that is correlated with the interactive fixed effects.
The serial correlation of the error term $e_{it}$ together with the 
data generating process (DGP) for $X_{it}$, $\lambda_i$ and $f_t$ are such that
the naive LS estimator has an asymptotic bias. This allows to verify
whether the bias is essentially unchanged for $R>R^0$ and whether
bias correction works well for $R>R^0$ in finite sample. We also study
various combinations of $N$ and $T$.

The model is a static panel model with one regressor ($K=1$), two factors ($R^0=2$),
and the following DGP:
\begin{align}
   Y_{it} &= \beta^0 X_{it} + \sum_{r=1}^2 \lambda_{ir} f_{tr} + e_{it} ,
   \nonumber \\
   X_{it} &= 1 + \widetilde X_{it}
               + \sum_{r=1}^2 (\lambda_{ir}+\chi_{ir}) (f_{tr} + f_{t-1,r} ) ,
   \nonumber \\
   e_{it} &= \frac 1 {\sqrt{2}} (v_{it} +v_{i,t-1} )  .
   \label{DGP-Static}
\end{align}
The random variables
$\widetilde X_{it}$, $\lambda_{ir}$, $f_{tr}$, $\chi_{ir}$ and $v_{it}$
are mutually independent; with $\widetilde X_{it}$ and $f_{tr} \, \sim \, iid  \, {\cal N}(0,1)$;
$\lambda_{ir}$ and $\chi_{ir} \, \sim iid \, {\cal N}(1,1)$; and
$v_{it} \, \sim \, iid \, t(5)$, i.e.~$v_{it}$ has a Student's t-distribution  with 5 degrees of freedom.

Note that this model satisfies Assumptions~\ref{ass:SF},~\ref{ass:NC}, and ~\ref{ass:LL}(i), but not ~\ref{ass:LL}(ii).
The error term $e_{it}$ is \emph{not} distributed as $iid$ normal. The time series of $e_{it}$ follows an MA(1)
process with innovations distributed as $t(5)$.

We choose $\beta^0=1$, and
use $10,000$ repetitions in our simulation.
The true number of factors is chosen to be $R^0=2$.
For each draw of $Y$ and $X$
we compute the LS estimator $\widehat \beta_R$ according to equation \eqref{estimator}
for different values of $R$, namely $R \in \{0,1,2,3,4,5\}$.

\begin{table}[tb!]
\centering
\includegraphics[width=14cm]{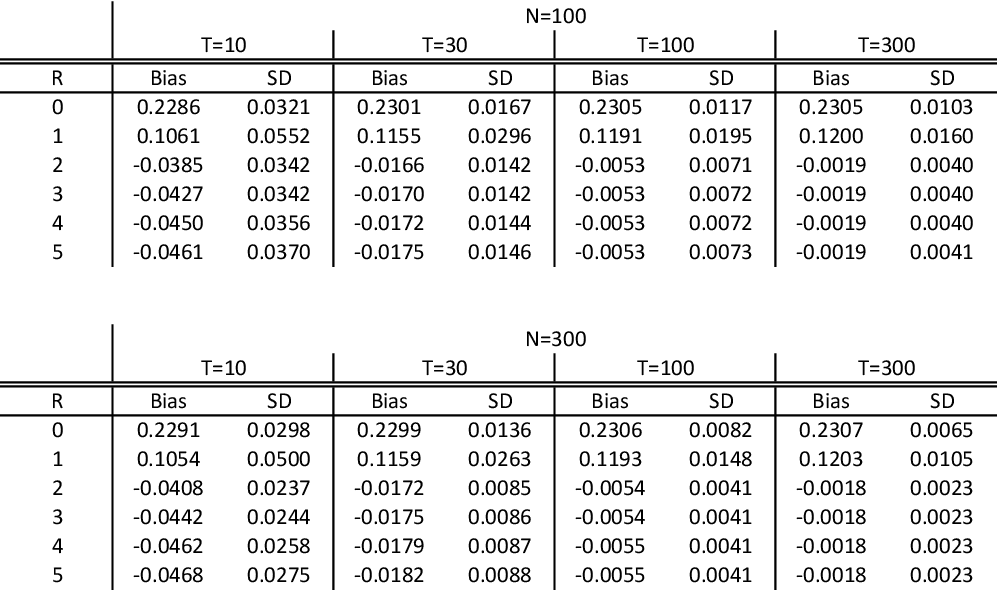}
\caption{\label{tab:MC-Static-1}\footnotesize
For different combinations of sample sizes $N$ and $T$ we report
the bias and standard deviation of the estimator $\widehat \beta_R$, for $R=0,1,\ldots,5$,  based on
simulations with $10,000$ repetition of design \eqref{DGP-Static}, where the true number of
factors is $R^0=2$.}
\end{table}

Table~\ref{tab:MC-Static-1} reports bias and standard deviation of the estimator $\widehat \beta_R$
for different combinations of $R$, $N$ and $T$. For $R<R^0=2$ the model is misspecified and $\widehat \beta_R$
turns out to be severely biased.
There is also bias in $\widehat \beta_R$ for $R \geq R^0$, due to time-serial correlation of $e_{it}$.
This bias was worked out in Bai~\cite*{Bai2009}, and bias correction is also discussed there.

\begin{table}[tb!]
\centering
\includegraphics[width=14cm]{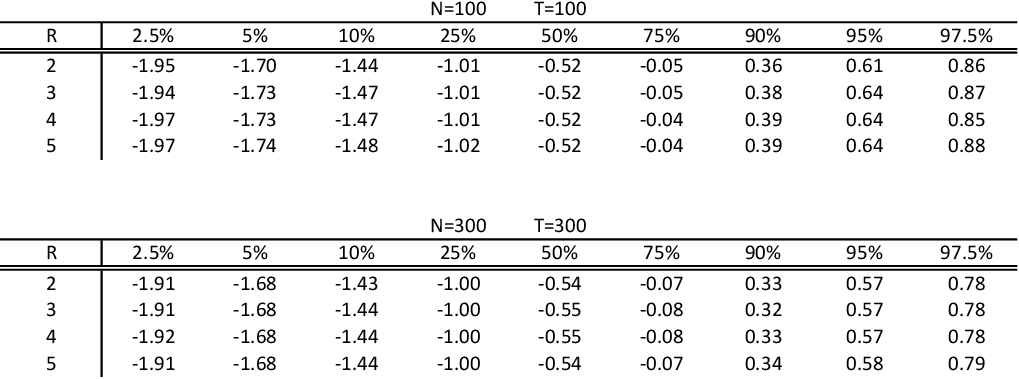}
\caption{\label{tab:MC-Static-2}\footnotesize
Quantiles of the distribution of $\sqrt{NT}( \widehat \beta_R - \beta^0 )$
are reported for
$N=T=100$ and $N=T=300$, with $R=2,3,4,5$,
 based on
simulations with $10,000$ repetition of design \eqref{DGP-Static}, where the true number of
factors is $R^0=2$.
 }
\end{table}

Table~\ref{tab:MC-Static-2} reports various quantiles of
the distribution of $\sqrt{NT}( \widehat \beta_R - \beta^0 )$
for $N=T=100$ and $N=T=300$, and different values of $R \geq R^0$.
From these tables, we see that as $N,T$ increases the distribution of $\widehat \beta_R$ gets closer to that of $\widehat\beta_{R^0}$.

\begin{table}[tb!]
\centering
\includegraphics[width=14cm]{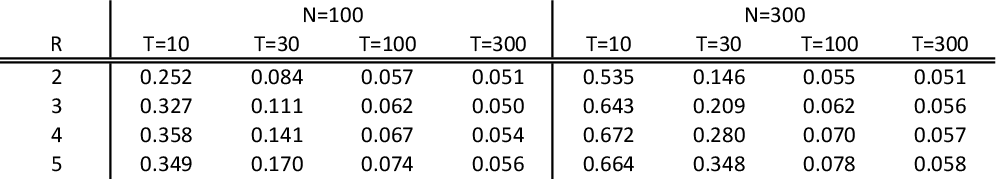}
\caption{\label{tab:MC-Static-3}\footnotesize
The empirical size of a t-test with $5 \%$ nominal size is reported for different combinations of $N$, $T$ and $R$,
based on $10,000$ repetition of design \eqref{DGP-Static}. A bias corrected estimator $\widehat \beta^{\rm BC}_R$ is used to calculate the
test statistics, and we allow for heteroscedasticity and time-serial correlation when estimating bias and standard
deviation. Results for $R=0,1$ are not reported since those have
size=1 due to misspecification.  }
\end{table}

Table~\ref{tab:MC-Static-3} reports the size of a t-test with nominal size equal to $5 \%$ for $R \geq R^0$.
We use the results in Bai~\cite*{Bai2009}
to correct for the leading $1/N$ (not actually present in our DGP)
and $1/T$ (present in our DGP) biases in $\widehat \beta_R$ before calculating the t-test statistics, allowing for heteroscedsticity in both panel dimensions and for time-serial correlation when estimating the bias and standard
deviation of $\widehat \beta_R$. The finite sample size distortions are mostly due to residual bias after bias correction,
but also partly due to some finite sample downward bias in the standard error estimates. The size distortions increase
with $R$, but for all values of $R \geq R^0$ in Table~\ref{tab:MC-Static-3} the size distortions decrease rapidly as $T$ increases.

Monte Carlo Simulation results for an AR(1) model with factors can be found in Section~\ref{app:MC}
of the supplementary material. Those additional simulations show that the finite sample
properties (e.g. for $T=30$) of $\widehat \beta_{R^0}$ and $\widehat \beta_{R}$, $R>R^0$,
can be quite different, but those differences vanish as $T$ becomes large, as predicted
by our asymptotic theory.
In general, we always expect some finite sample inefficiency from overestimating the number
of factors.

\section{Conclusions}
\label{sec:conclusions}

We show that under certain assumptions
the limiting distribution of the LS estimator of a linear panel regression with interactive fixed effects
does not change when we include redundant factors in the estimation. The implication of this is
that one can use an upper bound of the number of factors $R$ in the estimation without
asymptotic efficiency loss. However, some finite sample efficiency loss from overestimating $R$
is likely, so that $R$ should not be chosen too large in actual applications.
We  impose $iid$ normality of the regression errors to derive the asymptotic result, because we require certain
results on the eigenvalues and eigenvectors of random covariance matrices that are only known in that case.
We expect that progress in the literature on large dimensional
random covariance matrices will allow verification of our high-level assumptions
under more general error distributions,
and our Monte Carlo simulations suggest that the result also holds for non-normal
and correlated errors.
We also provide multiple intermediate asymptotic results under more general conditions.

\begin{appendix}
\section{Appendix}

\footnotesize

\subsection{Spectral Norm of Random Matrices}
\label{app:SpectralNorm}

Consider an $N \times T$ matrix $u$ whose entries $u_{it}$ have uniformly bounded second moments.
Then we have $\| u \| \leq \| u \|_{HS} = \sqrt{\sum_{i,t} u_{it}^2} = {\cal O}_P(\sqrt{NT})$.
However, in Assumption~\ref{ass:LL}$(i.b)$ and Assumption~\ref{ass:DX-1}$(i)$
and  Assumption~\ref{ass:DX-2}$(i)$
we impose $\| \widetilde X_k^{\rm str} \| = {\cal O}_P(N^{3/4})$
and $\| \widetilde X_k \| = {\cal O}_P(N^{3/4})$, respectively,
as $N$ and $T$ grow at the same rate,
and in Assumption~\ref{ass:SN}$(ii)$ we impose
$\| e \| = {\cal O}_P(\sqrt{\max(N,T)})$ under an arbitrary asymptotic $N,T \rightarrow \infty$.
Those smaller asymptotic rates for the spectral norms of $\widetilde X_k^{\rm str}$, $\widetilde X_k$
and $e$ can be justified by firstly assuming that the entries of these matrices are mean zero and
have certain bounded moments,
and secondly imposing weak cross-sectional and time-serial correlation.
The purpose of this appendix section is to provide some examples of matrix distributions that
make the last statement more precise. We consider the
$N \times T$ matrix $u$, which can represent either $e$,  $\widetilde X_k^{\rm str}$ or $\widetilde X_k$.

{\bf Example 1:} If we assume that $\mathbbm{E} u_{it} =0$, that $\mathbbm{E} u^4_{it} $ is uniformly bounded,
and that the $u_{it}$ are independently distributed across $i$ and over $t$,
then the results in Latala \cite*{Latala2005} show that  $\| u \| = {\cal O}_P(\sqrt{\max(N,T)})$.

{\bf Example 2:} Onatski~\cite*{Onatski2013} provides the following example, which allows for both cross-sectional
and time-serial dependence: Let $\epsilon$ be an $N \times T$ matrix with
mean zero, independent entries that have uniformly bounded fourth moment,
 let $\epsilon_t$ denote the columns of $\epsilon$, and also define past $\epsilon_t$, $t \leq 0$,
satisfying the same distributional assumptions.
Let $u_{t} = \sum_{j=0}^m \Psi_{N,j} \epsilon_{t-j}$, where $m$ is a fixed integer, and
$\Psi_{N,j}$ are $N \times N$ matrices such that $\max_j \| \Psi_{N,j} \|$ is uniformly bounded.
Then, the $N \times T$ matrix $u$ with columns $u_t$ satisfies $\| u \| = {\cal O}_P(\sqrt{\max(N,T)})$.

More examples of matrix distributions that satisfy $\| u \| = {\cal O}_P(\sqrt{\max(N,T)})$
are discussed in Onatski~\cite*{Onatski2013} and Moon and Weidner~\cite*{MoonWeidner2013}.
Theorem 5.48 and Remark 5.49 in Vershynin~\cite*{Vershynin2010} can also be used to
obtain a slightly weaker bound on $\| u \|$ under very general correlation of $u$ in one of its dimensions.

Note that the random matrix theory literature often only discusses limits where $N$ and $T$
grow at the same rate and shows $\| u \| = {\cal O}_P(\sqrt{N})$ under that asymptotic. Those results
can easily be extended to more general asymptotics with $N,T \rightarrow \infty$ by considering
$u$ as a submatrix of a $\max(N,T) \times \max(N,T)$ matrix $u^{\rm big}$, and
using that $\|u\| \leq \|u^{\rm big}\|$.

{\bf Example 3:} The following Lemma provides a justification for the bounds on $\| \widetilde X_k^{\rm str} \|$
and $\| \widetilde X_k \|$, allowing for a quite general type of correlation in both panel dimensions.

\begin{lemma}
    \label{lemma:SpectralNormBound}
    Let $u$ be an $N \times T$ matrix with entries $u_{it}$.
    Let
    $\Sigma_{ij} = \frac 1 T \sum_{t=1}^T \mathbbm{E}( u_{it} u_{jt} )$,
    and let $\Sigma$ be the $N \times N$ matrix with entries $\Sigma_{ij}$.
      Let
     $\eta_{ij}
        = \frac 1 {\sqrt{T}} \sum_{t=1}^T\left[ u_{it} u_{jt} - \mathbbm{E}(u_{it} u_{jt}) \right]$,
     $\Psi_{ij} = \frac 1 N \sum_{k=1}^N \mathbbm{E}( \eta_{ik} \eta_{jk} )$,
     and
     $\chi_{ij} = \frac 1 {\sqrt{N}} \sum_{k=1}^N
            \left[ \eta_{ik} \eta_{jk} - \mathbbm{E}( \eta_{ik} \eta_{jk} ) \right]$.
    Consider $N,T \rightarrow \infty$
    such that $N/T$ converges to a finite positive constant, and
     assume that
    \begin{itemize}
         \item[(i)]  $\| \Sigma  \| = {\cal O}(1)$.

          \item[(ii)]
             $\frac 1 {N^2} \sum_{i,j=1}^N \mathbbm{E}( \eta^2_{ij} ) = {\cal O}(1)$.

         \item[(iii)]
            $\frac 1 N \sum_{i,j=1}^N \Psi^2_{ij} = {\cal O}(1)$.

         \item[(iv)]
            $\frac 1 {N^2} \sum_{i,j=1}^N  \mathbbm{E}( \chi^2_{ij} ) = {\cal O}(1)$.
    \end{itemize}
    Then we have $\|u\| = {\cal O}_P(N^{5/8}) $.
\end{lemma}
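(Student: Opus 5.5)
The plan is a fourth-moment (trace) argument on $u u'$: bound $\|u\|^8$ by traces that the assumptions control. We write
\[
u u' = T\,\Sigma + \sqrt{T}\,\eta ,
\]
where $\eta$ is the $N\times N$ matrix with entries $\eta_{ij}$. This identity holds by definition of $\Sigma$ and $\eta$. Hence
\[
\|u\|^2 = \|uu'\| \le T\|\Sigma\| + \sqrt{T}\,\|\eta\| = {\cal O}(N) + \sqrt{T}\,\|\eta\| ,
\]
using assumption (i) and $N/T$ converging. It therefore suffices to show $\|\eta\| = {\cal O}_P(N^{3/4})$. Then $\|u\|^2 = {\cal O}_P(N^{5/4})$, i.e. $\|u\| = {\cal O}_P(N^{5/8})$.

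To bound $\eta$, which is symmetric, we iterate the same decomposition one level up:
\[
\eta\eta' = \eta^2 = N\,\Psi + \sqrt{N}\,\chi ,
\]
again by definition of $\Psi_{ij}=\frac1N\sum_k \mathbbm{E}(\eta_{ik}\eta_{jk})$ and $\chi_{ij}$. Then
\[
\|\eta\|^2 \le N\|\Psi\| + \sqrt{N}\,\|\chi\| \le N\|\Psi\|_{HS} + \sqrt{N}\,\|\chi\|_{HS} .
\]
For the first term, assumption (iii) gives $\|\Psi\|_{HS}^2 = \sum_{ij}\Psi_{ij}^2 = {\cal O}(N)$, so $N\|\Psi\| = {\cal O}(N^{3/2})$. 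For the second, assumption (iv) together with Markov's inequality gives $\mathbbm{E}\|\chi\|_{HS}^2 = {\cal O}(N^2)$, so $\|\chi\|_{HS} = {\cal O}_P(N)$ and $\sqrt{N}\|\chi\| = {\cal O}_P(N^{3/2})$. Combining, $\|\eta\|^2 = {\cal O}_P(N^{3/2})$, i.e. $\|\eta\| = {\cal O}_P(N^{3/4})$, which is exactly what was needed.

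Assumption (ii) gives $\mathbbm{E}\|\eta\|_{HS}^2={\cal O}(N^2)$, i.e. the cruder bound $\|\eta\|={\cal O}_P(N)$. That would only yield $\|u\|={\cal O}_P(N^{3/4})$. So (ii) mainly ensures the objects are well defined with finite second moments, and it can serve as a fallback.

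The main point to check is the bookkeeping of the two-level decomposition: verifying that $\eta^2 = N\Psi + \sqrt N\chi$ holds entrywise (it does, since $(\eta^2)_{ij}=\sum_k\eta_{ik}\eta_{jk}$ and $\eta$ is symmetric), and that replacing spectral norms by Hilbert–Schmidt norms at the second level loses nothing essential. No step beyond the triangle inequality, $\|A\|\le\|A\|_{HS}$, and Markov's inequality should be required.
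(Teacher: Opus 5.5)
Your proof is correct. It uses the same two identities as the paper, $uu' = T\Sigma + \sqrt{T}\,\eta$ and $\eta^2 = N\Psi + \sqrt{N}\,\chi$, but combines them differently.

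\textbf{The paper's route.} It goes to fourth order in one step: it bounds $\|u\|^4 = \|(uu')^2\| \le \|(uu')^2\|_{HS}$, expands $(T\Sigma + \sqrt{T}\eta)^2$, and controls three pieces separately:
\begin{itemize}
\item $T^2\|\Sigma^2\|_{HS} \le T^2\sqrt{N}\|\Sigma\|^2$;
\item the cross term $2T^{3/2}\|\Sigma\eta\|_{HS} \le 2T^{3/2}\|\Sigma\|\,\|\eta\|_{HS}$, which is where assumption (ii) enters;
\item $T\|\eta^2\|_{HS} \le T(N\|\Psi\|_{HS} + \sqrt{N}\|\chi\|_{HS})$.
\end{itemize}
Each piece is of order $N^{5/2}$.

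\textbf{Your route.} You apply the operator-norm triangle inequality at each level and take square roots: $\|u\|^2 \le T\|\Sigma\| + \sqrt{T}\|\eta\|$ and $\|\eta\|^2 = \|\eta^2\| \le N\|\Psi\| + \sqrt{N}\|\chi\|$. No cross term arises, so (ii) is never needed.

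\textbf{What each buys.} Your argument is shorter. It also shows that the $\Sigma$ part contributes only $O(N)$ to $\|u\|^2$, so the rate comes entirely from the Hilbert--Schmidt relaxations $\|\Psi\|\le\|\Psi\|_{HS}$ and $\|\chi\|\le\|\chi\|_{HS}$. Sharper operator-norm control of those two matrices would therefore improve the rate directly, up to the optimal $\sqrt{N}$. The paper's first step is intrinsically lossy: for iid entries, $\|(uu')^2\|_{HS}$ itself is of order $N^{5/2}$, so that route cannot beat $N^{5/8}$.

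Your remark on (ii) is right, and in fact (ii) is implied by (iii), since $\frac{1}{N}\sum_{i,k}\mathbbm{E}(\eta_{ik}^2) = \mathrm{Tr}\,\Psi \le \sqrt{N}\,\|\Psi\|_{HS} = O(N)$.
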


The Lemma does not impose $\mathbbm{E} u_{it} = 0$ explicitly, but justification of assumption $(i)$ in the lemma
usually requires $\mathbbm{E} u_{it} = 0$.
The assumptions $(ii)$, $(iii)$ and $(iv)$ in the lemma can e.g. be justified by assuming
appropriate mixing conditions in both panel dimensions, see e.g. Cox and Kim~\cite*{CoxKim1995}
for the time-series case.

As pointed out above, our results in Section~\ref{sec:expansion} can be obtained
under the weaker condition $\|e\|=o_P(N^{2/3})$, and Lemma~\ref{lemma:SpectralNormBound}
can also be applied with $u=e$ then. In that case, the assumptions in Lemma~\ref{lemma:SpectralNormBound}
are not the same, but are similar to those imposed  in Bai~\cite*{Bai2009}.

\subsection{Expansion of Objective Function when $R=R^0$}
\label{app:ExpansionDiscussion}

Here we provide a heuristic derivation of the expansion of ${\cal L}^0_{NT}(\beta)$
in Theorem~\ref{th:expansion}.
We expand the profile objective function ${\cal L}^0_{NT}(\beta)$ simultaneously
in $\beta$ and in the spectral norm of $e$.
Let the $K+1$ expansion parameters be defined by $\epsilon_0 = \|e\|/\sqrt{NT}$ and
$\epsilon_k = \beta^0_{k} - \beta_{k}$, $k=1,\ldots,K$,
and define the $N\times T$ matrix $X_0=(\sqrt{NT}/\|e\|) e$. With these definitions we obtain
\begin{align}
  \frac 1 {\sqrt{NT}} \left( Y - \beta \cdot X \right)
    &\; =\;  \frac 1 {\sqrt{NT}}
        \left[ \lambda^0 f^{0\prime}
      + (\beta^0-\beta) \cdot X + e \right]
\; = \;  \frac{\lambda^0 f^{0\prime}} {\sqrt{NT}} + \sum_{k=0}^K \,
    \epsilon_k \, \frac{X_k}{\sqrt{NT}} \; .
  \label{ExpIdea}
\end{align}
According to equation \eqref{LSobjective} the profile objective function
${\cal L}^0_{NT}(\beta)$ can be written as the sum over the $T-R^0$ smallest eigenvalues of the matrix in \eqref{ExpIdea} multiplied by its transposed.
We consider
$\sum_{k=0}^K \, \epsilon_k \, X_k / \sqrt{NT}$ as a small perturbation
of the unperturbed matrix $\lambda^0 f^{0\prime}/ {\sqrt{NT}}$,
and thus expand ${\cal L}^0_{NT}(\beta)$ in the perturbation parameters
$\epsilon=(\epsilon_0,\ldots,\epsilon_K)$ around $\epsilon=0$, namely
\begin{align}
  {\cal L}^0_{NT} \left( \beta \right) &= \frac{1} {NT} \, \sum_{g=0}^\infty \,
  \sum_{k_1,\ldots,k_g=0}^K \,  \epsilon_{k_1} \, \epsilon_{k_2} \, \ldots \, \epsilon_{k_g}
  \, L^{(g)}\left(\lambda^0,\, f^0,\, X_{k_1},\, X_{k_2}, \ldots ,X_{k_g}\right) \; ,
  \label{ExpIdea2}
\end{align}
where $L^{(g)}=L^{(g)}\left(\lambda^0,\, f^0,\, X_{k_1},\, X_{k_2}, \ldots ,X_{k_g}\right)$
are the expansion coefficients.

The unperturbed matrix $\lambda^0 f^{0\prime}/ {\sqrt{NT}}$ has rank $R^0$,
so that the $T-R^0$ smallest
eigenvalues of the unperturbed $T\times T$ matrix $f^0 \lambda^{0\prime} \lambda^0 f^{0\prime}/NT$ are all zero, i.e. ${\cal L}^0_{NT} \left( \beta \right)=0$
for $\epsilon=0$ and thus $L^{(0)}\left(\lambda^0,\, f^0 \right) = 0$.
Due to Assumption~\ref{ass:SF} the $R^0$ non-zero eigenvalues of the
unperturbed
$T\times T$ matrix $f^0 \lambda^{0\prime} \lambda^0 f^{0\prime}/NT$ converge to positive constants as $N,T \rightarrow \infty$.
This means that the ``separating distance'' of the $T-R^0$ zero-eigenvalues of the unperturbed $T\times T$ matrix $f^0 \lambda^{0\prime} \lambda^0 f^{0\prime}/NT$ converges to
a positive constant, i.e. the next largest eigenvalue is well separated.
This is exactly the technical condition under which
the perturbation theory of linear operators guarantees that the above
expansion of ${\cal L}^0_{NT}$ in $\epsilon$ exists and is convergent as long as the spectral norm of the perturbation
$\sum_{k=0}^K \, \epsilon_k \, X_k / \sqrt{NT}$
is smaller than a particular convergence radius $r_0(\lambda^0,f^0)$,
which is closely related to the separating distance of the zero-eigenvalues.
For details on that see Kato~\cite*{Kato}
and Section~\ref{app:expansion1} of the supplementary appendix, where we define $r_0(\lambda^0,f^0)$ and show that
it converges to a positive constant as $N,T \rightarrow \infty$.
Note that for the expansion \eqref{ExpIdea2} it is crucial that we have
$R=R^0$, since the perturbation theory of linear operators
describes the perturbation of the sum of {\it all} zero-eigenvalues of the unperturbed
matrix $f^0 \lambda^{0\prime} \lambda^0 f^{0\prime}/NT$. For $R>R^0$ the sum in
${\cal L}^R_{NT} \left( \beta \right)$ leaves out the $R-R^0$ largest of
these perturbed zero-eigenvalues, which results in a much more complicated
mathematical problem, since the structure and ranking among these perturbed zero-eigenvalues needs to be discussed.

The above expansion of ${\cal L}^0_{NT} \left( \beta \right)$ is applicable whenever the operator norm
of the perturbation matrix $\sum_{k=0}^K \, \epsilon_k \, X_k / \sqrt{NT}$ is smaller than
$r_0(\lambda^0,f^0)$. Since our assumptions guarantee that
$\| X_k  / \sqrt{NT} \| = {\cal O}_P(1)$, for $k=0,\ldots,K$, and
$\epsilon_0 = {\cal O}_P(\min(N,T)^{-1/2}) = o_P(1)$, we have
$\left\| \sum_{k=0}^K \, \epsilon_k \, X_k / \sqrt{NT} \right\|=
{\cal O}_P(\|\beta-\beta^0\|) + o_P(1)$, i.e.
the above expansion is always applicable asymptotically within a shrinking
neighborhood of $\beta^0$ --- which is sufficient since we already know that
$\widehat \beta_R$ is consistent for $R \geq R^0$.

In addition, to guaranteeing converge of the series expansion, the
perturbation theory of linear operators
also provides explicit formulas for the expansion coefficients $L^{(g)}$,
namely for $g=1,2,3$ we have
$L^{(1)}\left(\lambda^0, f^0, X_{k}\right)=0$,
$L^{(2)}\left(\lambda^0, f^0, X_{k_1}, X_{k_2}\right)
 ={\rm Tr}(M_{\lambda^0} X_{k_1} M_{f^0} X_{k_2}')$,
$L^{(3)}\left(\lambda^0, f^0, X_{k_1},X_{k_2},
X_{k_3} \right) = -  \frac{1}{3} [{\rm Tr}\left(
M_{\lambda^0} X_{k_1} M_f X^{\prime}_{k_2} \lambda^0
(\lambda^{0\prime}\lambda^0)^{-1} (f^{0\prime}f^0)^{-1}  f^{0\prime}
X^{\prime}_{k_3}\right) + \ldots]$, where the
dots refer to 5 additional terms obtained from the first one
by permutation of $k_1$, $k_2$ and $k_3$, so that the expression
becomes totally symmetric in these indices.
A general expression for the coefficients for all orders in $g$ is given in Lemma~\ref{lemma:expansion} in the appendix.
One can show that for $g \geq 3$ the coefficients $L^{(g)}$ are bounded as follows
\begin{align}
  \frac {1} {NT} \left| L^{(g)}\left(\lambda^0,\, f^0,\, X_{k_1},\,
                   X_{k_2}, \ldots ,X_{k_g}\right) \right|
    \; \leq \; a_{NT} \; (b_{NT})^{g} \;
    \frac{\|X_{k_1}\|} {\sqrt{NT}} \, \frac{\|X_{k_2}\|} {%
\sqrt{NT}} \, \ldots \, \frac{\|X_{k_g}\|} {\sqrt{NT}}  \; ,
   \label{ExpBound}
\end{align}
where $a_{NT}$ and $b_{NT}$ are functions of $\lambda^0$ and $f^0$ that converge to finite positive constants
in probability.
This bound on the coefficients $L^{(g)}$ allows us to derive a bound on the remainder term, when the profile
objective expansion is truncated at a particular order.
The expansion can be applied under more general asymptotics, but here
we only consider
the limit $N,T \rightarrow \infty$ with $N/T \rightarrow \kappa^2$, $0<\kappa<\infty$,
i.e. $N$ and $T$ grow at the same rate.
Then, apart from the constant ${\cal L}_{NT}^{0}(\beta^0)$, the relevant
coefficients of the expansion, which are not treated as part of the remainder term
turn out to be
$W_{k_1 k_2} = \frac 1 {NT}
             L^{(2)}\left(\lambda^0,\, f^0,\, X_{k_1},\, X_{k_2}\right)$,
$C^{(1)}_k = \frac 1 {\sqrt{NT}}
          L^{(2)}\left(\lambda^0,\, f^0,\, X_{k},\, e\right) =
         \frac 1 {\sqrt{NT}} \, {\rm Tr}( M_{\lambda^0} \, X_k \,
                  M_{f^0} \, e^{\prime} )$,
and
$C^{(2)}_k = \frac 3 {2 \, \sqrt{NT}}
          L^{(3)}\left(\lambda^0,\, f^0,\, X_{k},\,  e, \, e\right)$,
which corresponds exactly to the definitions  in the main text.
From the expansion \eqref{ExpIdea2} and the bound \eqref{ExpBound} we obtain
Theorem~\ref{th:expansion}. For a more rigorous derivation we refer to
Section~\ref{app:expansion1} in the supplementary appendix.

\subsection{$N^{3/4}$-Convergence Rate of $\widehat \beta_R$ for $R>R^0$}
\label{sec:ConvergenceRate}

The discussion at the end of Section~\ref{sec:main}
reveals that showing faster than $\sqrt{N}$ convergence of $\widehat \beta_R$ is
a very important step on the way to the main result. For purely technical reasons
we show $N^{3/4}$-convergence first, but it will often be the case that if
$\widehat \beta_R$ is $N^{3/4}$-consistent, then it is also $\sqrt{NT}$-consistent
as $N$ and $T$ grow at the same rate. We require one of the following two alternative assumptions.

\begin{DXassumption}[\bf Decomposition of $X_k$ and Distribution of $e$, Version 1]
\label{ass:DX-1} $\phantom{a}$
\begin{itemize}
    \item[(i)]  For $k=1,\ldots,K$ we have
      $X_k = \overline X_k + \widetilde X_k$, where
       ${\rm rank}(\overline X_k)$ is bounded as $N,T \rightarrow \infty$,
       and  $\| \overline X_k\| = {\cal O}_P(\sqrt{NT})$, and $\| \widetilde X_k \| = {\cal O}_P(N^{3/4})$.

    \item[(ii)]
    Let  $u$ be an $N \times T$ matrix whose
   elements are distributed as i.i.d. ${\cal N}(0,1)$,
   independent of $\lambda^0$, $f^0$ and $\overline X_k$, $k=1,\ldots,K$,
   and let one of the following hold
   \begin{itemize}
       \item[(a)] either: $e= \Sigma^{1/2} \, u$, where $\Sigma$ is an $N \times N$ covariance matrix, independent of $u$,
       which satisfies $\| \Sigma \| = {\cal O}_P(1)$. In that case, define
        $g$ to be an $N \times Q$ matrix, independent of $u$,
        for some $Q \leq \sum_{k=1}^K {\rm rank}(\overline X_k)$,
        such that $g' g = \mathbbm{1}_Q$
        and ${\rm span}(M_{\lambda^0} \overline X_k )  \subset {\rm span}(g)$ for all $k=1,\ldots,K$.\footnote{
       The column space of $g$ thus contains the column space of all $M_{\lambda^0} \overline X_k$.
       $g' g = \mathbbm{1}_Q$ is just a normalization.}

       \item[(b)] or: $e = u \, \Sigma^{1/2}$, where $\Sigma$ is a $T \times T$ covariance  matrix, independent of $u$,
       which satisfies $\| \Sigma \| = {\cal O}_P(1)$. In that case, define
        $g$ to be a $T \times Q$ matrix, independent of $u$,
        for some $Q \leq \sum_{k=1}^K {\rm rank}(\overline X_k)$,
        such that $g' g = \mathbbm{1}_Q$
        and ${\rm span}(M_{f^0} \overline X'_k )  \subset {\rm span}(g)$ for all $k=1,\ldots,K$.
   \end{itemize}
   In addition, we
   assume that there exists a (potentially random) integer sequence $n=n_{NT}>0$ with
        $1/n = {\cal O}_P(1/N)$ such that
        $\mu_n(\Sigma) \geq \| g' \Sigma g \|$.
    Finally, assume that either $R \geq Q$ or that
        $g' \Sigma g = \|g' \Sigma g \| \mathbbm{1}_Q + {\cal O}_P(N^{-1/2})$.
\end{itemize}
\end{DXassumption}

\begin{DXassumptionTwo}[\bf Decomposition of $X_k$ and Distribution of $e$, Version 2]
\label{ass:DX-2} $\phantom{a}$
\begin{itemize}
    \item[(i)]  For $k=1,\ldots,K$ we have
      $X_k = \overline X_k + \widetilde X_k$, such that $M_{\lambda^0} \overline X_k  M_{f^0} = 0$,
       and  $\| \overline X_k\| = {\cal O}_P(\sqrt{NT})$, and $\| \widetilde X_k \| = {\cal O}_P(N^{3/4})$.

    \item[(ii)] $\| e \| = {\cal O}_P(\sqrt{\max(N,T)})$. (same as Assumption~\ref{ass:SN}(ii))
\end{itemize}

\end{DXassumptionTwo}

\begin{theorem}
    \label{th:ConvergenceRate2}
    Let $R>R^0$. Let Assumptions~\ref{ass:SF}, \ref{ass:NC} and
  \ref{ass:EX} hold, and let either Assumption~\ref{ass:DX-1} or~\ref{ass:DX-2}
  be satisfied. Consider
     $N,T \rightarrow \infty$ with
   $N/T \rightarrow \kappa^2$, $0<\kappa<\infty$.
   Then we have  $N^{3/4} \left(\widehat \beta_{R} - \beta^0\right)
                   = {\cal O}_P(1)$.
\end{theorem}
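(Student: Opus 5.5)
We compare the profile objective ${\cal L}^R_{NT}$ at $\widehat\beta_R$ with its value at $\beta^0$ and extract a lower bound that is quadratic in $\beta-\beta^0$ up to terms small enough to force $\|\widehat\beta_R-\beta^0\| = {\cal O}_P(N^{-3/4})$. We know from Theorem~\ref{th:consistency} that $\|\widehat\beta_R-\beta^0\| = {\cal O}_P(N^{-1/2})$, so we only need to work in a shrinking neighborhood of radius ${\cal O}(N^{-1/2})$ around $\beta^0$.

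\textbf{Step 1: separate the true factors from the $R-R^0$ extra ones.} For any $\beta$ write $Y-\beta\cdot X = \lambda^0 f^{0\prime} + e - (\beta-\beta^0)\cdot X$. We bound ${\cal L}^R_{NT}(\beta)$ from below by first projecting out $\lambda^0$ and $f^0$. Using Assumption~\ref{ass:SF} and the perturbation argument underlying Theorem~\ref{th:expansion}, we show that
\[
{\cal L}^R_{NT}(\beta) \ge \frac{1}{NT}\sum_{r=R-R^0+1}^{T}\mu_r\big[A(\beta)'A(\beta)\big] - \text{(small)},
\]
uniformly for $\|\beta-\beta^0\|\le cN^{-1/2}$, where $A(\beta)=M_{\lambda^0}\big(e-(\beta-\beta^0)\cdot X\big)M_{f^0}$. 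Conversely, at $\beta^0$ we have
\[
{\cal L}^R_{NT}(\beta^0) \le \frac{1}{NT}\sum_{r>R-R^0}\mu_r\big[E'E\big] + \text{(terms involving $eP_{f^0}$ and $P_{\lambda^0}e$)},
\]
with $E=M_{\lambda^0}eM_{f^0}$. The cross terms arising from $P_{\lambda^0}$ and $P_{f^0}$ are controlled by $\|e\|={\cal O}_P(\sqrt N)$ together with the strong-factor normalizations.

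\textbf{Step 2: expand the remaining eigenvalue sum.} We write $\Delta=(\beta-\beta^0)\cdot X$ and decompose $\Delta=\overline\Delta+\widetilde\Delta$ according to the decomposition $X_k=\overline X_k+\widetilde X_k$.

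\emph{Low-rank part.} The contribution of $\widetilde\Delta$ is harmless at the required rate: cross terms with $E$ are bounded by $\|\widetilde\Delta\|\,\|E\|\,R = {\cal O}(\|\beta-\beta^0\| N^{3/4}\sqrt N)$. After dividing by $NT$, this gives a linear term of order $\|\beta-\beta^0\|N^{-3/4}$.

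\emph{High-rank part.} The main quadratic term comes from Assumption~\ref{ass:NC}: for $\Delta$ alone, discarding at most $R+R^0$ eigenvalues still leaves at least $b\|\beta-\beta^0\|^2$. To combine this with $E$, we use the inequality $\sum_{r>m}\mu_r[(E-\Delta)'(E-\Delta)] \ge \sum_{r>m}\mu_r[E'E] + \operatorname{Tr}$-type corrections. Concretely, we write $\Delta$ as a projection part onto a low-dimensional space plus an orthogonal part, and we use $\operatorname{Tr}(M_{\overline F}\cdot)$ variational characterizations of eigenvalue sums.

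\textbf{Step 3 (the key obstacle): the cross term between $E$ and the low-rank $\overline\Delta$.} The naive bound is $\|E\|\,\|\overline\Delta\|/NT\sim N^{-1/2}\|\beta-\beta^0\|$, which only yields $\sqrt N$-consistency. This is exactly the mechanism behind the counterexample in Section~\ref{sec:AsymptoticSummary}. The cross term is small only if the leading $R-R^0$ eigenvectors of $E$ are nearly orthogonal to the span of $M_{\lambda^0}\overline X_k$ (resp.\ $M_{f^0}\overline X_k'$), i.e.\ to ${\rm span}(g)$.

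\emph{Under Assumption~\ref{ass:DX-2}.} We have $M_{\lambda^0}\overline X_kM_{f^0}=0$, so $\overline\Delta$ disappears from $A(\beta)$ and the cross term vanishes directly.

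\emph{Under Assumption~\ref{ass:DX-1}(a).} We condition on $\Sigma$ and $g$ and exploit Gaussian rotational invariance of $u$. The quantity $\|g'E\|$ behaves like $\|(g'\Sigma g)^{1/2}\tilde u\|$ with a $Q\times T$ Gaussian $\tilde u$, which is ${\cal O}_P(\sqrt N)$. The condition $\mu_n(\Sigma)\ge\|g'\Sigma g\|$ with $n\asymp N$ then guarantees that removing the direction $g$ costs at most the order of eigenvalues that the bulk already supplies. Alternatively, when $R<Q$, the isotropy condition $g'\Sigma g\approx c\mathbbm{1}_Q$ ensures that which $g$-directions are absorbed by the extra factors does not matter to first order. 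This gives a cross-term bound of order $N^{-3/4}\|\beta-\beta^0\|$ in place of $N^{-1/2}\|\beta-\beta^0\|$. I expect making this eigenvector/eigenvalue comparison rigorous to be the main difficulty. Case (b) is symmetric, with the roles of $N$ and $T$ exchanged.

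\textbf{Step 4: combine the bounds.} Combining Steps 1–3, we obtain
\[
0\ \ge\ {\cal L}^R_{NT}(\widehat\beta_R)-{\cal L}^R_{NT}(\beta^0)\ \ge\ b\|\widehat\beta_R-\beta^0\|^2 - {\cal O}_P(N^{-3/4})\|\widehat\beta_R-\beta^0\| - {\cal O}_P(N^{-3/2}),
\]
where the linear term in $C^{(1)}$ is handled by Assumption~\ref{ass:EX}. Solving this quadratic inequality yields $N^{3/4}(\widehat\beta_R-\beta^0)={\cal O}_P(1)$.
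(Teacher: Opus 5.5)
Your Steps 1, 2 and 4 follow the paper's proof of Theorem~\ref{th:ConvergenceRate}. You reduce ${\cal L}^R_{NT}$ via the perturbation expansion to ${\cal L}^0_{NT}$ minus the top $R-R^0$ eigenvalues of $M_{f^0}(e-\Delta)'M_{\lambda^0}(e-\Delta)M_{f^0}$ (Lemma~\ref{lemma:expansion2a}), take curvature from Assumption~\ref{ass:NC}, and solve a quadratic inequality. Your treatment of Assumption~\ref{ass:DX-2} also matches the paper. The gap is your Step~3 under Assumption~\ref{ass:DX-1}: this is the step that carries the theorem, and you give only a heuristic that does not produce the needed improvement.

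What must be shown (the content of Assumption~\ref{ass:HL1}, once the $\widetilde X$ part is removed as in your Step~2) is the following. Uniformly over $\sqrt N\|\Delta\beta\|\le c$, the top-$(R-R^0)$ eigenvalue sum of $(E-\overline\Delta_M)'(E-\overline\Delta_M)$, with $\overline\Delta_M=M_{\lambda^0}(\Delta\beta\cdot\overline X)M_{f^0}$, may exceed the corresponding sums for $E'E$ and $\overline\Delta_M'\overline\Delta_M$ combined by at most ${\cal O}_P(\sqrt N+N^{5/4}\|\Delta\beta\|)+o_P(N^2\|\Delta\beta\|^2)$. Your observation $\|g'E\|={\cal O}_P(\sqrt N)$ does not help. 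The product $\|g'E\|\,\|\overline\Delta\|\sim N^{3/2}\|\Delta\beta\|$ is exactly the naive bound, because $\sqrt N$ is also the size of the top singular values of $E$.

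The paper needs two specific ingredients.
\begin{itemize}
\item[(1)] The deterministic Lemma~\ref{lemma:help}. Write $\overline\Delta_M=-gCh'$ with orthonormal $g,h$, and set $U=E$. Then the top-$R$ eigenvalue sum of $(U+gCh')'(U+gCh')$ is bounded by two pieces: the corresponding sum for the $\beta$-free matrix $U'U+\|g'UU'g\|P_{(M_{U'g}h)}+\Delta_{\max}P_{(U'g)}$, plus the top eigenvalues of the $Q\times Q$ matrix $CC'+g'UhC'+Ch'U'g$. So the only $\beta$-dependent cross term runs through $g'Eh$, a $Q\times Q$ Gaussian matrix of size ${\cal O}_P(1)$. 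It therefore costs only ${\cal O}_P(\sqrt{NT}\|\Delta\beta\|)$.
\item[(2)] The probabilistic Lemma~\ref{lemma:LowRankAdd}. After an ${\cal O}_P(\sqrt N)$ replacement, the $\beta$-free piece amounts to adding the rank-$Q$ bump $T\|g'\Sigma g\|P_h$ to $E'E$. By rotational invariance of the Gaussian noise, this is like appending $Q$ rows of noise whose variance does not exceed that of $n$ existing directions; this is where $\mu_n(\Sigma)\ge\|g'\Sigma g\|$ with $1/n={\cal O}_P(1/N)$ enters. The top eigenvectors then put weight only ${\cal O}_P(n^{-1/2})$ on the new rows, so the eigenvalue sum moves by only ${\cal O}_P(\sqrt{(N+T)T/n})={\cal O}_P(\sqrt N)$.
\end{itemize}
The condition $R\ge Q$ or isotropy of $g'\Sigma g$ is needed only to make $\Delta_{\max}={\cal O}_P(\sqrt N)$. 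Without an argument of this kind, your Step~4 inequality is unsupported and you are back at the $\sqrt N$ rate.

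There is a second, smaller gap: Assumption~\ref{ass:EX} does not handle the linear term. It controls ${\rm Tr}(X_ke')/\sqrt{NT}$, but the score that appears is $C^{(1)}_k={\rm Tr}(M_{\lambda^0}X_kM_{f^0}e')/\sqrt{NT}$. The difference consists of low-rank traces such as ${\rm Tr}(P_{\lambda^0}X_ke')$, which norm bounds only put at ${\cal O}_P(N^{3/2})$. That gives $C^{(1)}={\cal O}_P(\sqrt N)$, and again only $\sqrt N$-consistency. You need $C^{(1)}={\cal O}_P(N^{1/4})$:
\begin{itemize}
\item[(a)] The $\widetilde X_k$ part of these traces is at most a constant times $\|\widetilde X_k\|\,\|e\|/\sqrt{NT}={\cal O}_P(N^{1/4})$.
\item[(b)] Under Assumption~\ref{ass:DX-1}, the $\overline X_k$ part is ${\cal O}_P(1)$, by a conditional second-moment computation using independence of $u$ from $(\overline X_k,\lambda^0,\Sigma)$.
\end{itemize}
Relatedly, your bound $R\|\widetilde\Delta\|\,\|E\|$ is valid for the top-$(R-R^0)$ eigenvalue sum, not for the trace; the trace's $\widetilde X$ cross term belongs to $C^{(1)}$. (Your ``low-rank''/``high-rank'' labels in Step~2 are also swapped.)
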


\paragraph{Remarks}

\begin{itemize}
\item[(i)] Assumption~\ref{ass:SN} is not explicitly imposed in
Theorem~\ref{th:ConvergenceRate2}, because it is already implied by both
 Assumption~\ref{ass:DX-1} and~\ref{ass:DX-2}, see also Lemma~\ref{lemma:JustifyEV} below.

\item[(ii)] The restrictions that Assumption~\ref{ass:DX-1} imposes on $X_k$ are
   weaker than those imposed in Assumption~\ref{ass:LL} above. The regressors
   are decomposed into a low-rank strictly exogenous part $\overline X_k$
   and a term $\widetilde X_k$, which can be both strictly or weakly exogenous.
     The spectral norm bound $\| \widetilde X_k \| = {\cal O}_P(N^{3/4})$ is satisfied
     as long as $\widetilde X_{k,it}$ is mean zero and weakly correlated across $i$ and over $t$,
     see Appendix~\ref{app:SpectralNorm}.
     We can always write $\overline X_k = \ell h'$ for some appropriate
     $\ell \in \mathbbm{R}^{N \times {\rm rank}(\overline X_k) }$
  and $h \in \mathbbm{R}^{T \times {\rm rank}(\overline X_k)}$.
     Thus, the decomposition $X_k = \overline X_k + \widetilde X_k = \ell h' + \widetilde X_k $
     essentially imposes an approximate factor structure on $X_k$, with factor part  $\overline X_k$
     and idiosyncratic part $\widetilde X_k$.
     In addition to those conditions we need sufficient variation in $X_k$,
     as formalized by the non-collinarity Assumption~\ref{ass:NC}.

\item[(iii)] The restrictions that Assumption~\ref{ass:DX-1} imposes on $e$ are also
   weaker than those imposed in Assumption~\ref{ass:LL} above.  Normality is imposed, but either
   cross-sectional correlation and heteroscedasticity (case (a)) or time-serial correlation
   and heteroscedasticity (case (b)), described by $\Sigma$, are still allowed.
   The condition $\| \Sigma \| = {\cal O}_P(1)$ requires the correlation of $e_{it}$ to be weak.\footnote{
   A sufficient condition for $\| \Sigma \| = {\cal O}_P(1)$ is, for example, $\max_i \sum_j |\Sigma_{ij}| = {\cal O}_P(1)$,
    formulated here for case (a). Note that $\Sigma$ is symmetric.
}

\item[(iv)]
  The additional restrictions on $\Sigma$ in Assumption~\ref{ass:DX-1}
  rule out the type of correlation of the low-rank regressor
   part $\overline X_k$ with the second moment structure of $e_{it}$ that was
   the key feature of the counter example in Section~\ref{sec:AsymptoticSummary} above.\footnote{
    However, in the
   example in Section~\ref{sec:AsymptoticSummary}
   we have both time-serial and cross-sectional correlation in $e_{it}$, one of which is already ruled out by
    Assumption~\ref{ass:DX-1}.
}
    Firstly,
    the condition $\mu_n(\Sigma) \geq \| g' \Sigma g \|$ guarantees that the eigenvectors corresponding
    to the largest few eigenvectors of $\Sigma$ (the eigenvectors $\nu_r$ of $\Sigma$ when normalized satisfy
    $\mu_r(\Sigma) = \nu_r' \Sigma \nu_r$) are not strongly correlated with $g$ (and thus with $\overline X_k$).
    Secondly, the condition $g' \Sigma g = \|g' \Sigma g \| \mathbbm{1}_Q + {\cal O}_P(N^{-1/2})$
    guarantees  that $\Sigma$ behaves almost as an identity matrix when projected with $g$, thus not possessing
    special structure in the ``direction  of  $\overline X_k$''.
        Both of these assumption are obviously satisfied when
    $\Sigma$ is proportional to the identity matrix.

\item[(v)]
    Instead of Assumption~\ref{ass:DX-1} we can also impose Assumption~\ref{ass:DX-2}
    to obtain $N^{3/4}$-consistency in Theorem~\ref{th:ConvergenceRate2}.
    The Assumption on $e$ imposed in Assumption~\ref{ass:DX-2} is the same as in
    Assumption~\ref{ass:SN}, and as already discussed above, this assumption is quite weak
    (see also Appendix~\ref{app:SpectralNorm}). However,
    Assumption~\ref{ass:DX-2} imposes a much stronger assumption on the regressors
    by requiring that $M_{\lambda^0} \overline X_k M_{f^0} = 0$.
    This condition
  implies that $\overline X_k = \lambda^0 h' + \ell f^{0 \prime}$ for some $\ell \in \mathbbm{R}^{N \times R^0}$
  and $h \in \mathbbm{R}^{T \times R^0}$, i.e. the factor structure of the regressors is severely restricted.
  The AR(1) model discussed in Remark (v) of Section~\ref{sec:main} does satisfy
  $M_{\lambda^0} \overline X_k = 0$,
  and the same is true for a stationary AR(p) model
  without additional regressors, i.e. for such AR(p) models with factors
  we obtain $N^{3/4}$-consistency of $\widehat \beta_{R}$ without imposing
  strong assumptions (like normality) of $e_{it}$.
  Assumption~\ref{ass:DX-2}$(i)$ is furthermore satisfied if $ \overline X_k = 0$, i.e. if the regressors
  $X_k = \widetilde X_k$ satisfy $\| X_k \| = {\cal O}_P(N^{3/4})$, which is true for  zero mean
  weakly correlated  processes (see Appendix~\ref{app:SpectralNorm}).

   \item[(vi)]
   Theorem~\ref{th:ConvergenceRate} in the supplementary material
   provides an alternative $N^{3/4}$-consistency result, in which
   Assumptions~\ref{ass:DX-1} and ~\ref{ass:DX-2} are replaced by a high-level
   condition, which is more general, but not easy
   to verify in terms of low-level assumptions.
\end{itemize}

\subsection{Asymptotic Equivalence of $\widehat \beta_{R^0}$ and $\widehat \beta_R$ for $R>R^0$}
\label{sec:Equivalence}

Here, we provide high level conditions on the singular values and
singular vectors of the error matrix (or equivalently on the eigenvalues
and eigenvectors of the corresponding random covariance matrix).
Under those assumptions we then establish the main result of the paper
that $\widehat \beta_{R^0}$ and $\widehat \beta_R$ with $R>R^0$ are asymptotically
equivalent, that is, $\sqrt{NT}(\widehat \beta_R - \widehat \beta_{R^0}) = o_P(1)$.

\begin{EVassumption}{\bf (Eigenvalues and Eigenvectors of Random Cov.~Matrix)}
   \label{ass:EV}
   Let the singular value decomposition of $M_{\lambda^0} e M_{f^0}$
   be given by
   $M_{\lambda^0} e M_{f^0} = \sum_{r=1}^{Q} \sqrt{\rho_r} \, v_r \, w'_r$,
   where $Q = \min(N,T)-R^0$, and
   $\sqrt{\rho_r}$ are the singular values, and $v_r$ and $w_r$ are
   normalized $N$- and $T$-vectors, respectively.\footnote{%
   Thus, $w_r$ is the normalized eigenvector corresponding to the
    eigenvalue $\rho_r$ of $M_{f^0} e' M_{\lambda^0} e M_{f^0}$, while
   $v_r$ is the normalized eigenvector corresponding to the
    eigenvalue $\rho_r$ of $M_{\lambda^0} e M_{f^0} e' M_{\lambda^0}$.
    We use a convention were
    eigenvalues with non-trivial multiplicity appear multiple times in the list of eigenvalues $\rho_r$,
    but under standard distributional assumptions on $e$ all eigenvalues are simple with probability one anyways.
    }
  Let $\rho_1 \geq \rho_2 \geq \ldots \geq \rho_{Q} \geq 0$. We assume that
  there exists a constant $c>0$ and a series of integers $q_{NT}>R-R^0$
  with $q_{NT} = o(N^{1/4})$ such that as $N,T \rightarrow \infty$ we have
  \begin{itemize}
       \item[(i)] $\displaystyle \frac{ \rho_{R-R^0} } N > c$, wpa1.

       \item[(ii)]   $\displaystyle \frac 1 {q_{NT}} \, \sum_{r=q_{NT}}^{Q} \frac 1 {\rho_{R-R^0} - \rho_r}
              = {\cal O}_P(1)$.

       \item[(iii)] $\displaystyle \max_{r} \| v_r'  e  P_{f^0} \| = o_P\left(  N^{1/4}  \, q_{NT}^{-1} \right)$, \; \;
                      $\displaystyle \max_{r} \| w_r'  e'  P_{\lambda^0} \| = o_P\left(  N^{1/4}  \, q_{NT}^{-1} \right)$, \\[5pt]
                      $\displaystyle   \max_{r} \| v_r'  X_k  P_{f^0} \| = o_P\left(  N  \, q_{NT}^{-1} \right)$, \; \; \;
                      $\displaystyle  \max_{r} \| w_r'  X'_k  P_{\lambda^0} \| = o_P\left(  N  \, q_{NT}^{-1} \right)$, \\[5pt]
                      $\displaystyle \max_{r,s,k} |v_r'  X_k  w_s| = o_P\left(  N^{1/4}  \, q_{NT}^{-1} \right)$,  \; \;
       where $r,s = 1,\ldots, Q$, and $k=1,\ldots,K$.
  \end{itemize}
\end{EVassumption}

\begin{theorem}
     \label{th:LimitingDistribution}
   Let $R>R^0$. Let Assumptions \ref{ass:SF}, \ref{ass:NC}, \ref{ass:EX}, and \ref{ass:EV} hold,
   and let either Assumption~\ref{ass:DX-1} or~\ref{ass:DX-2} hold,
   and assume
   that $C^{(1)} = {\cal O}_P(1)$.
   In the limit $N,T \rightarrow \infty$ with
   $N/T \rightarrow \kappa^2$, $0<\kappa<\infty$, we then have
   \begin{align*}
       \sqrt{NT}\left(\widehat \beta_{R} - \beta^0\right)
                   =  \sqrt{NT}\left(\widehat \beta_{R^0} - \beta^0\right) + o_P(1)
                   = {\cal O}_P(1).
   \end{align*}
\end{theorem}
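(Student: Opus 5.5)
The proof shows that, on a shrinking neighbourhood of $\beta^0$ of radius $o_P(N^{-1/2})$ that contains $\widehat\beta_R$, the profile objective ${\cal L}^R_{NT}(\beta)$ equals ${\cal L}^0_{NT}(\beta)$ minus a $\beta$-independent term, up to an error that is uniformly $o_P(1/NT)\,(1+\sqrt{NT}\|\beta-\beta^0\|)^2$. The argument then concludes by the standard argmin argument based on Theorem~\ref{th:expansion} and Corollary~\ref{cor:LimitR0}.

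\emph{Step 1: localisation and splitting.} By Theorem~\ref{th:ConvergenceRate2}, $\|\widehat\beta_R-\beta^0\| = {\cal O}_P(N^{-3/4})$, so it suffices to consider $\beta$ in the ball ${\cal B}_{NT}=\{\|\beta-\beta^0\|\le c N^{-3/4}\}$ for a large constant $c$. For such $\beta$ the residual matrix $Y-\beta\cdot X=\lambda^0f^{0\prime}+e-(\beta-\beta^0)\cdot X$ has perturbation part of spectral norm ${\cal O}_P(\sqrt N)$. Here $e$ contributes ${\cal O}_P(\sqrt N)$ by Assumption~\ref{ass:SN}. The regressors contribute $\|(\beta-\beta^0)\cdot X\|={\cal O}_P(N^{-3/4}\cdot N)=o_P(\sqrt N)$. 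The first $R^0$ principal components therefore still track $f^0$, and
\[
{\cal L}^R_{NT}(\beta)={\cal L}^0_{NT}(\beta)-\frac1{NT}\sum_{r=R^0+1}^{R}\mu_r\big[(Y-\beta\cdot X)'(Y-\beta\cdot X)\big].
\]
It remains to show that the subtracted sum equals $\sum_{r=1}^{R-R^0}\rho_r$, a constant independent of $\beta$, plus a uniformly negligible error.

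\emph{Step 2: reduction to a projected perturbation problem.} I would follow the perturbation machinery of Appendix~\ref{app:ExpansionDiscussion}. That machinery gives a projector $M_{\widehat F}(\beta)$ onto the complement of the leading $R^0$ eigenspace. The eigenvalues $\mu_{R^0+r}$ then coincide with the eigenvalues of a $(T-R^0)$-dimensional effective matrix
\[
A(\beta)=\big(E-(\beta-\beta^0)\cdot \widetilde X+\text{corrections}\big)'\big(\cdots\big),
\qquad E=M_{\lambda^0}eM_{f^0},\quad \widetilde X_k=M_{\lambda^0}X_kM_{f^0}.
\]
The correction terms involve $eP_{f^0}$, $P_{\lambda^0}e$, $X_kP_{f^0}$ and $P_{\lambda^0}X_k$ multiplied by factor inverses. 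They can be expanded as in Theorem~\ref{th:expansion} and are of order ${\cal O}_P(1)$ up to ${\cal O}_P(\sqrt N\|\beta-\beta^0\|)$ in the relevant directions. Next I would write $A(\beta)=E'E+\Delta(\beta)$ with $\|\Delta\|=o_P(N)$.

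\emph{Step 3: eigenvalue perturbation of the top $R-R^0$ eigenvalues of $E'E$.} This is the main obstacle. The top eigenvalues $\rho_1,\dots,\rho_{R-R^0}$ of $E'E$ are not separated from the bulk by a gap of order $N$, because the Marchenko--Pastur edge is reached with gaps of order $N^{1/3}$. A plain Weyl bound therefore only gives an error of order $\|\Delta\|$, which is far too crude. Instead I would use a second-order (resolvent/Schur complement) expansion of the sum of the top $R-R^0$ eigenvalues in the eigenbasis $(v_r,w_r)$. The first-order term is $\sum_{r\le R-R^0}w_r'\Delta w_r$. It is controlled by Assumption~\ref{ass:EV}(iii): the bounds on $v_r'X_kw_s$, $v_r'eP_{f^0}$ and $v_r'X_kP_{f^0}$ make it $o_P(1)$ uniformly over ${\cal B}_{NT}$ once combined with $\sqrt{NT}\|\beta-\beta^0\|$ weights. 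The second-order term is $\sum_{r\le R-R^0}\sum_{s>R-R^0}|w_s'\Delta w_r|^2/(\rho_r-\rho_s)$. I would split it at $q_{NT}$. The terms with $s<q_{NT}$ are few, and their matrix elements are small by (iii) together with gap bounds that follow from (i) and (ii). The terms with $s\ge q_{NT}$ are handled by Assumption~\ref{ass:EV}(ii), which bounds $\sum_s(\rho_{R-R^0}-\rho_s)^{-1}$ by ${\cal O}_P(q_{NT})$ and compensates the $q_{NT}^{-1}$ factors in (iii). Higher-order terms are bounded geometrically because $\|\Delta\|/N$ is small. Together these give $\sum_{r=R^0+1}^R\mu_r=\sum_{r\le R-R^0}\rho_r+o_P(1)\,(1+\sqrt{NT}\|\beta-\beta^0\|)^2$, uniformly on ${\cal B}_{NT}$.

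\emph{Step 4: conclusion.} Combining this with Theorem~\ref{th:expansion} shows that ${\cal L}^R_{NT}$ has the same quadratic expansion as ${\cal L}^0_{NT}$, up to a constant and an $o_P(1/NT)(1+\sqrt{NT}\|\beta-\beta^0\|)^2$ remainder. The standard argument of Corollary~\ref{cor:LimitR0}, with $C^{(1)}={\cal O}_P(1)$ and $W^{-1}={\cal O}_P(1)$ from Assumption~\ref{ass:NC}, first gives $\sqrt{NT}$-consistency of $\widehat\beta_R$. Comparing the minimisers of the two expansions then gives $\sqrt{NT}(\widehat\beta_R-\widehat\beta_{R^0})=o_P(1)$, with both equal to $W^{-1}(C^{(1)}+C^{(2)})+o_P(1)$.
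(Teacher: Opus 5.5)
Your overall architecture matches the paper's. You localise to $\{N^{3/4}\|\beta-\beta^0\|\le c\}$ via Theorem~\ref{th:ConvergenceRate2}. You write ${\cal L}^R_{NT}$ as ${\cal L}^0_{NT}$ minus the top $R-R^0$ eigenvalues of the residual matrix, as in \eqref{LRL0} and Lemma~\ref{th:expansion2b}. You show these eigenvalues equal $\rho_r+o_P(1)$ uniformly, which is Lemma~\ref{lemma:JustifyHL2}. Finally you conclude by the argmin argument of Corollary~\ref{cor:LimitRgen2}.

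The gap is in Step~3, which carries all the content, and it has two parts.

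\emph{First, the expansion does not converge.} You expand the top eigenvalues in a perturbation series in $\Delta$ and say the higher-order terms are geometrically small ``because $\|\Delta\|/N$ is small''. That is the wrong criterion. Such a series, whether for one eigenvalue or for the sum over the top group, is controlled by $\|\Delta\|$ relative to the isolation distance $\rho_{R-R^0}-\rho_{R-R^0+1}$. That distance is at best of order $N^{1/3}$. By contrast, $\|\Delta\|$ is of order $\sqrt N$ already at $\beta=\beta^0$; an example is the term $E'\,ef^0(f^{0\prime}f^0)^{-1}(\lambda^{0\prime}\lambda^0)^{-1}\lambda^{0\prime}eM_{f^0}$ contained in $B(\beta)$. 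On your ball it can reach $N^{3/4}$, through $E'[(\beta-\beta^0)\cdot M_{\lambda^0}\overline X M_{f^0}]$ under Assumption~\ref{ass:DX-1}. So the series has no reason to converge, and any spectral-norm bound on the remainder fails for the same reason Weyl does.

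\emph{Second, the needed gap is not assumed.} Your second-order terms with $R-R^0<s<q_{NT}$ require a lower bound on $\rho_{R-R^0}-\rho_{R-R^0+1}$. This does not follow from Assumption~\ref{ass:EV}. Part (i) is only a level condition. Part (ii) averages inverse gaps only from $s=q_{NT}$ onward, and $q_{NT}\to\infty$ is allowed. Under the theorem's hypotheses the top $q_{NT}$ eigenvalues may be arbitrarily close or even degenerate.

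The missing ingredient is a non-perturbative bound that uses only matrix elements in the eigenbasis of $A$ and needs no gaps among the top $q$ eigenvalues. This is Lemma~\ref{Lemma:EVbound1}. Let $b=\max_{i,j}|\nu_i'B\nu_j|$ and suppose $\sum_{i\ge q}b/(\mu_r(A)-\mu_i(A))\le 1$. Then
\[
|\mu_r(A+B)-\mu_r(A)|\le \frac{(q-1)b}{1-\sum_{i\ge q}b/(\mu_r(A)-\mu_i(A))}.
\]
The lower bound is min--max restricted to $\mathrm{span}(\nu_1,\dots,\nu_r)$. For the upper bound, the compression to $\mathrm{span}(\nu_r,\dots,\nu_n)$ is dominated by $\widetilde A+\bar B$, where $\bar B$ is the rank-one matrix with every entry equal to $b$. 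Its top eigenvalue solves the secular equation $1=\sum_i b/(\widetilde\mu-\mu_i(A))$. The top $q-1$ directions are thus paid for crudely at cost $(q-1)b$, and the bulk is handled through (ii).

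Now take $A=M_{f^0}e'M_{\lambda^0}eM_{f^0}$. Part (iii), summarised as $d_{NT}q_{NT}=o_P(N^{1/4})$, gives $b=O_P(d_{NT}N^{-1/4})$ on the $N^{3/4}$-ball. Hence $\sum_{s\ge q_{NT}}b/(\rho_{R-R^0}-\rho_s)=o_P(1)$ and $|\mu_r-\rho_r|\le q_{NT}\,b\,(1+o_P(1))=o_P(1)$ uniformly. This is exactly why (iii) carries the factor $q_{NT}^{-1}$.

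A secondary point concerns the mixed blocks. The paper keeps the full $T$-dimensional matrix $B(\beta)+B'(\beta)$, whose mixed $M_{f^0}(\cdot)P_{f^0}$ blocks have large matrix elements against the null directions of $A$. It removes them by sandwiching between block-diagonal $C^\pm(\beta)$, completing squares against $\sqrt{aN/4}\,P_{f^0}$. It then uses (i) to show the top eigenvalues come from the $M_{f^0}$ block. Your rotation to an effective $(T-R^0)$-dimensional matrix could replace this. You would then have to show that the rotation corrections have small matrix elements $w_r'(\cdot)w_s$, not merely small norm.
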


\paragraph{Remarks}

\begin{itemize}
  \item[(i)]
  Theorem~\ref{th:LimitingDistribution} also holds if we replace
  the Assumptions~\ref{ass:EX}, \ref{ass:DX-1}, \ref{ass:DX-2}
  by any other condition that guarantees that
  Assumption~\ref{ass:SN} holds
  and that  $N^{3/4} \left(\widehat \beta_{R} - \beta^0\right)
                   = {\cal O}_P(1)$.

  \item[(ii)] Consider Assumption~\ref{ass:EV}$(iii)$.
Since $v_r$ and $w_r$ are the normalized singular vectors of
$M_{\lambda^0} e M_{f^0}$ we expect them
to be essentially uncorrelated with $X_k$ and $e  P_{f^0}$,
and therefore we expect $v_r'  X_k  w_s = {\cal O}_P(1)$,
$\| v_r'  e  P_{f^0} \| = {\cal O}_P(1)$,
$\| w_r'  e'  P_{\lambda^0} \| = {\cal O}_P(1)$.
We also expect
$\| v_r'  X_k  P_{f^0} \| = {\cal O}_P(\sqrt{T})$
and $\| w_r'  X'_k  P_{\lambda^0} \|  = {\cal O}_P(\sqrt{N})$,
which is different to the analogous expressions with $e$, since $X_k$
may be strongly correlated with $f^0$ and $\lambda^0$.
The key to making this discussion rigorous
is a good knowledge
of the properties of the eigenvectors $v_r$ and $w_r$.
If the entries $e_{it}$ are $iid$ normal, then the distribution of $v_r$ and $w_r$
can be characterized as follows:
      Let $\widetilde v$ be an $N$-vector with $iid {\cal N}(0,1)$
    entries and let $\widetilde w$ be an $T$-vector with $iid {\cal N}(0,1)$ entries. Then we have
    $v_r  \operatorname*{=}_d \| M_{\lambda^0} \widetilde v \|^{-1} M_{\lambda^0} \widetilde v$
    and $w_r \operatorname*{=}_d \| M_{f^0} \widetilde w \|^{-1}  M_{f^0} \widetilde w$,
    see also Lemma~\ref{lemma:iidEV} in the supplementary material.
    Here,  $\operatorname*{=}_d$ refers to ``equal in distribution''.
    Thus, if $R^0=0$, then $v_r$ and $w_r$ are distributed as $iid {\cal N}(0,1)$ vectors, normalized
    to satisfy $\| v_r \| = \| w_r \| =1$. This follows from the rotational invariance of the distribution of
    $e$ when $e_{it}$ is  $iid$ normally distributed. Using this characterization of $v_r$ and $w_r$
    one can formally show that Assumption~\ref{ass:EV}$(iii)$ holds, see Lemma~\ref{lemma:JustifyEV} below.
   The conjecture in the random matrix theory literature is that
the limiting distribution of the eigenvectors of a random covariance matrix
is ``distribution free'',
i.e. is independent of the particular distribution of $e_{it}$
(see, e.g., Silverstein~\cite*{Silverstein1990},
Bai~\cite*{bai1999review}). However, we are not aware of a formulation
and corresponding proof of this conjecture that is sufficient for our purposes, which is one reason why
we have to impose $iid$ normality of $e_{it}$.

  \item[(iii)]    Assumption~\ref{ass:EV}$(ii)$ imposes a condition on the eigenvalues
$\rho_r$ of the random covariance matrix $M_{f^0} e' M_{\lambda^0} e M_{f^0}$.
Eigenvalues are studied more intensely than eigenvectors
in the random matrix theory literature,
and it is well-known that the properly normalized
empirical distribution of the eigenvalues
(the so called empirical spectral distribution)
of an $iid$ sample covariance matrix converges
to the Mar{\v{c}}enko-Pastur-law (Mar{\v{c}}enko and Pastur~\cite*{MarcenkoPastur1967})
for asymptotics where $N$ and $T$ grow at the same rate.
This means that the sum over the function of the eigenvalues $\rho_s$ in Assumption~\ref{ass:EV}$(ii)$
can be approximated by an integral over the Mar{\v{c}}enko-Pastur
limiting spectral distribution.
To bound the asymptotic error of this approximation one needs to know the
convergence rate of the empirical spectral
distribution to its limit law, which is an ongoing research subject in the
literature, e.g. Bai~\cite*{Bai1993}, Bai, Miao and Yao \cite*{BaiMiaoYao2004},
G{\"o}tze and Tikhomirov \cite*{GotzeTikhomirov2010}. This literature usually considers
either $iid$ or $iid$ normal distributions of $e_{it}$.

\item[(iv)]
For random covariance
matrices from $iid$ normal errors, it is known from
Johnstone~\cite*{Johnstone2001} and Soshnikov~\cite*{Soshnikov2002}
that the properly normalized few
largest eigenvalues converge to the Tracy-Widom law.\footnote{To our knowledge this result
is not established for error distributions that are not normal.
 Soshnikov~\cite*{Soshnikov2002} has a result under non-normality
 but only for asymptotics with $N/T \rightarrow 1$.}
 This result can be used to verify Assumption~\ref{ass:EV}$(i)$ in the case of $iid$ normal $e_{it}$.

  \item[(v)] Details on how to derive
Theorem~\ref{th:LimitingDistribution}
are given in Section~\ref{app:MoreEquivalence} of the supplementary material.

\end{itemize}

The following Lemma provides the connection between Theorem~\ref{th:LimitingDistribution}
and our main result Theorem~\ref{th:MAIN}. The proof is given in the supplementary material.

\begin{lemma}
    \label{lemma:JustifyEV}
   Let Assumption~\ref{ass:LL} hold, let $R^0 = {\rm rank}(\lambda^0) = {\rm rank}(f^0)$, and
consider a limit $N,T \rightarrow \infty$ with $N/T \rightarrow \kappa^2$, $0<\kappa<\infty$.
Then Assumptions~\ref{ass:SN}, \ref{ass:EX}, \ref{ass:DX-1}
 and \ref{ass:EV} are satisfied, and we have $C^{(1)} = {\cal O}_P(1)$.
\end{lemma}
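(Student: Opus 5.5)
The plan is to verify each listed high-level assumption directly from Assumption~\ref{ass:LL}. Throughout we use the rotational invariance of iid normal $e$: since $e$ is independent of $\lambda^0,f^0,\overline X_k,\widetilde X_k^{\rm str}$, we may rotate the $N$- and $T$-coordinates by orthogonal matrices depending only on $\lambda^0,f^0$. After this rotation, $M_{\lambda^0} e M_{f^0}$ is equal in distribution to an $(N-R^0)\times(T-R^0)$ iid ${\cal N}(0,\sigma^2)$ matrix embedded in the larger matrix. Its singular vectors $v_r,w_r$ are then Haar distributed, i.e. $v_r \operatorname*{=}_d \|M_{\lambda^0}\widetilde v\|^{-1}M_{\lambda^0}\widetilde v$, and they are independent of the singular values.

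\textbf{Step 1 (Assumptions~\ref{ass:SN}, \ref{ass:EX} and $C^{(1)}$).} We write $\widetilde X^{\rm weak}_k = e\,\Gamma'$, where $\Gamma$ is the $T\times T$ strictly lower-triangular Toeplitz matrix with entries $\gamma_\tau$. Then $\|\Gamma\|\le\sum_\tau|\gamma_\tau|<\infty$, and Latala's bound gives $\|e\|={\cal O}_P(\sqrt N)$. Consequently $\|\widetilde X_k^{\rm weak}\|={\cal O}_P(\sqrt N)$, and Assumption~\ref{ass:SN}(i) follows from the Frobenius bound with bounded second moments. For Assumption~\ref{ass:EX} we split ${\rm Tr}(X_k e')$ into its three parts:
\begin{itemize}
\item the strictly exogenous parts are handled by conditioning, which gives conditional variance $\sigma^2\|\overline X_k+\widetilde X^{\rm str}_k\|_{HS}^2={\cal O}_P(NT)$;
\item the weak part equals ${\rm Tr}(e\Gamma' e')$, which has mean zero because $\Gamma$ is strictly lower triangular, and its variance is of order $N\,\|\Gamma\|_{HS}^2={\cal O}(NT)$.
\end{itemize}
$C^{(1)}$ is treated in the same way with $M_{\lambda^0}X_kM_{f^0}$ in place of $X_k$. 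Here ${\rm Tr}(M_{\lambda^0}e\Gamma'M_{f^0}e')$ has mean ${\cal O}(R^0\sum_\tau|\gamma_\tau|\cdot N)$ after using the projector corrections, which is ${\cal O}(\sqrt{NT})$ with bounded variance. The moment bounds in Assumption~\ref{ass:LL}(i.d) control the rest.

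\textbf{Step 2 (Assumption~\ref{ass:DX-1}).} We take $\overline X_k$ as the low-rank part and $\widetilde X_k=\widetilde X^{\rm str}_k+\widetilde X^{\rm weak}_k$, so that $\|\widetilde X_k\|={\cal O}_P(N^{3/4})$. We use case (a) with $\Sigma=\sigma^2\mathbbm 1_N$. Then $g'\Sigma g=\sigma^2\mathbbm 1_Q$ holds exactly, and $\mu_n(\Sigma)=\sigma^2$ for $n=N$. Note that the weak part is not independent of $u$, but Assumption~\ref{ass:DX-1} only demands independence for $\overline X_k$.

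\textbf{Step 3 (Assumption~\ref{ass:EV}), the main obstacle.} Part (i) follows from Johnstone/Soshnikov: $\rho_1,\dots,\rho_{R-R^0}$ are within ${\cal O}_P(N^{1/3})$ of $\sigma^2(\sqrt N+\sqrt T)^2$. Part (ii) is the hardest point. We choose $q_{NT}=N^{1/5}$, say. For $r\ge q_{NT}$ we bound $\rho_{R-R^0}-\rho_r$ from below using the Marčenko–Pastur edge behaviour: the density vanishes like a square root near the edge, so the $r$-th eigenvalue from the top sits about $N^{1/3}r^{2/3}$ below the edge. Summing $1/(N^{1/3}r^{2/3})$ over the top $N^{\delta}$ indices gives ${\cal O}(q)$-order, and the bulk contributes ${\cal O}(N\cdot N^{-1})$. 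Making this rigorous requires a rate of convergence of the empirical spectral distribution near the edge (Götze–Tikhomirov type $N^{-1/2}$ rates, or rigidity estimates) combined with Tracy–Widom tightness of $\rho_{R-R^0}$. I would first try Kolmogorov-distance rates, integrating by parts against $1/(\rho_{R-R^0}-x)$ away from a window of width $N^{1/3}q^{2/3}$ at the edge.

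For part (iii) we use the Haar representation. Conditional on $(X,\lambda^0,f^0)$ and the remaining strictly exogenous parts, $v_r'Aw_s$ for a fixed matrix $A$ is a bilinear form in near-Gaussian unit vectors. Hence $|v_r'Aw_s|={\cal O}_P(\|A\|_{HS}/\sqrt{NT}\cdot\sqrt{\log N})$ uniformly in $r,s$, by Gaussian concentration plus a union bound over $N^2$ pairs. The same argument gives $\|v_r'eP_{f^0}\|={\cal O}_P(\sqrt{\log N})$, since $eP_{f^0}$ is independent of $M_{\lambda^0}eM_{f^0}$ for Gaussian $e$; the bound for $\|v_r'X_kP_{f^0}\|$ follows similarly, giving ${\cal O}_P(\sqrt{T\log N})$. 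The difficult term is $\widetilde X^{\rm weak}_k=e\Gamma'$, which is correlated with $v_r,w_s$. For it we write $v_r'e\Gamma'w_s$ via the SVD and decompose $e=M_{\lambda^0}eM_{f^0}+(\text{projector parts})$. This gives $\sqrt{\rho_r}\,w_r'\Gamma'w_s$ plus projector parts. Since $w_r,w_s$ are Haar distributed and $\Gamma$ is fixed, $w_r'\Gamma' w_s={\cal O}_P(\sqrt{\log N}/\sqrt T)$ for $r\ne s$, and for $r=s$ it is ${\rm Tr}(\Gamma)/T+\ldots={\cal O}_P(T^{-1/2}\sqrt{\log N})$ because ${\rm Tr}\,\Gamma=0$. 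The resulting bound ${\cal O}_P(\sqrt{\log N})$ is $o(N^{1/4}q^{-1})$ for $q=N^{1/5}$.
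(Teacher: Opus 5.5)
Your proposal is correct in substance. It matches the paper for Assumptions~\ref{ass:SN}, \ref{ass:DX-1} (case (a) with $\Sigma=\sigma^2\mathbbm{1}_N$) and \ref{ass:EV}$(i)$ (Tracy--Widom for the top eigenvalues). It departs from the paper in three places.

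\emph{Assumption~\ref{ass:EX} and $C^{(1)}$.} You write $\widetilde X^{\rm weak}_k=e\Gamma'$ and compute the Gaussian quadratic forms explicitly. The weak part of ${\rm Tr}(X_ke')$ has mean zero because ${\rm Tr}\,\Gamma=0$. In $C^{(1)}$ the weak part has an ${\cal O}(N)$ mean from $-\sigma^2(N-R^0){\rm Tr}(\Gamma'P_{f^0})$, which is the Nickell-type bias. This is more explicit than the paper's argument of conditioning on $X_k$; that argument only becomes valid for the predetermined part once one invokes its martingale-difference structure.

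\emph{Assumption~\ref{ass:EV}$(iii)$.} The paper uses H\"older/Markov with 25th moments plus a union bound and obtains only $d^{(A,1)}_{NT}={\cal O}_P(N^{2/25})$. Since $2/25<1/12$, this just barely fits $q_{NT}=N^{1/6}\log N$, which is why the paper needs that particular $q_{NT}$ and the 26th-moment condition in Assumption~\ref{ass:LL}$(i.d)$. You instead condition on the part of $X_k$ that is independent of $M_{\lambda^0}eM_{f^0}$ and use Gaussian (Hanson--Wright type) tails of the Haar vectors. This gives polylogarithmic bounds from second moments of $X_k$ alone. That is exactly what makes your $q_{NT}=N^{1/5}$ admissible, since it requires $d_{NT}=o_P(N^{1/20})$, which the paper's moment bound would not deliver.

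Two harmless corrections to this step:
\begin{itemize}
\item The bilinear-form tail also carries an $\|A\|\log N/\sqrt{NT}$ term. For low-rank $\overline X_k$ you therefore get ${\cal O}_P(\log N)$, not ${\cal O}_P(\sqrt{\log N})$; this still suffices.
\item For $r\neq s$ the vectors $w_r,w_s$ are orthogonal rather than independent. You need a Gram--Schmidt representation of the pair, as in Lemma~\ref{lemma:iidEV}, before applying the bilinear-form bound to $w_r'\Gamma'w_s$.
\end{itemize}

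\emph{Assumption~\ref{ass:EV}$(ii)$: this is where your sketch is incomplete.} Tightness of $\rho_{R-R^0}$ plus a Kolmogorov rate ${\cal O}_P(N^{-1/2})$ does not control the terms with $r\geq q_{NT}$ whose eigenvalues lie inside your edge window of width $N^{1/3}q_{NT}^{2/3}$. The Kolmogorov rate only tells you the window contains at most $q_{NT}+{\cal O}_P(N^{1/2})$ eigenvalues. For those you still need a lower bound on $\rho_{R-R^0}-\rho_r$, and tightness gives none.

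The paper supplies this bound from the joint Tracy--Widom limit of the top eigenvalues. Because that joint law is continuous, $(\rho_{R-R^0}-\rho_{R-R^0+1})^{-1}={\cal O}_P(N^{-1/3})$, and each in-window term is bounded by this top gap. With your $q_{NT}=N^{1/5}$ the in-window contribution is then
\[
\bigl(q_{NT}+{\cal O}_P(N^{1/2})\bigr)\,{\cal O}_P(N^{-1/3})/q_{NT}=o_P(1).
\]
Your integration by parts away from the window handles the remaining terms. Alternatively, the eigenvalue rigidity estimates you mention would give the required separation directly.
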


\end{appendix}

\clearpage
\setcounter{section}{0}
\setcounter{equation}{0}
\setcounter{theorem}{0}
\setcounter{assumption}{0}
\setcounter{definition}{0}
\setcounter{table}{0}
\setcounter{figure}{0}
\renewcommand{\thesection}{S.\arabic{section}}
\renewcommand{\theequation}{S.\arabic{equation}}
\renewcommand{\thetheorem}{S.\arabic{theorem}}
\renewcommand{\thetable}{S.\arabic{table}}
\renewcommand{\thefigure}{S.\arabic{figure}}
\begin{center}
{\Large\bfseries Supplementary Material}
\end{center}
\bigskip

\vspace{-1cm}

\pagebreak
\section{Proofs for Main Text Results}
\label{app:Proofs-Main}

\begin{proof}[\bf \underline{Proof of Theorem~\ref{th:id} (Identifictaion)}]
     Let
     $Q(\beta,\Lambda,F) =
     \mathbbm{E}\left(\left\| Y \, - \, \beta \cdot X \, - \, \Lambda \, F'
    \right\|^2_{HS}\right)$.
    Existence of $Q(\beta,\Lambda,F)$ is guaranteed by Assumption~\ref{ass:id}$(i)$.
    The statement of the theorem follows if we can show that  $Q(\beta,\Lambda,F)$ is uniquely minimized at
    $\beta=\beta^0$ and $\Lambda F' = \lambda^0 f^{0 \prime}$. 
   We have
   \begin{align}
        Q(\beta,\Lambda,F)
         &= \mathbbm{E} \; \Tr \left[ \left( Y \, - \, \beta \cdot X \, - \, \Lambda \, F' \right)
                      \left( Y \, - \, \beta \cdot X \, - \, \Lambda \, F' \right)' \right]
      \nonumber \\
         &=  \mathbbm{E} \; \Tr \left[
          \left(  \lambda^0 f^{0 \prime} -  \Lambda F' -  (\beta-\beta^0) \cdot X  + e \right)
           \left(  \lambda^0 f^{0 \prime} -  \Lambda F' -  (\beta-\beta^0) \cdot X  + e \right)'  \right]
       \nonumber \\
          &=
          \underbrace{ \mathbbm{E} \; \Tr \left[
          \left(  \lambda^0 f^{0 \prime} -  \Lambda F' -  (\beta-\beta^0) \cdot X  \right)
           \left(  \lambda^0 f^{0 \prime} -  \Lambda F' -  (\beta-\beta^0) \cdot X  \right)' \right]
          }_{\displaystyle \equiv Q^*(\beta,\Lambda,F) }
              +  \mathbbm{E} \; \Tr \left( e e' \right).
    \end{align}
    Here, we used the model, and
    in the last step we employed Assumption~\ref{ass:id}$(ii)$.
    Next, we derive a lower bound
    on $Q^*(\beta,\Lambda,F)$. We have
    \begin{align}
        Q^*(\beta,\Lambda,F)
        &\geq
        \mathbbm{E} \; \Tr \left[ 
          \left(  \lambda^0 f^{0 \prime} -  \Lambda F' -  (\beta-\beta^0) \cdot X  \right)
          M_{F}
           \left(  \lambda^0 f^{0 \prime} -  \Lambda F' -  (\beta-\beta^0) \cdot X  \right)'   \right]
      \nonumber \\
        &=
        \mathbbm{E} \; \Tr \left[ 
          M_{F}
           \left(  \lambda^0 f^{0 \prime} -  \Lambda F' -  (\beta-\beta^0) \cdot X  \right)' 
    \left(  \lambda^0 f^{0 \prime} -  \Lambda F' -  (\beta-\beta^0) \cdot X  \right)       
          M_{F} 
             \right]
      \nonumber \\
        &\geq
        \mathbbm{E} \; \Tr \left[ 
          M_{F}
           \left(  \lambda^0 f^{0 \prime} -  \Lambda F' -  (\beta-\beta^0) \cdot X  \right)'  M_{\lambda^0}
    \left(  \lambda^0 f^{0 \prime} -  \Lambda F' -  (\beta-\beta^0) \cdot X  \right)       
          M_{F} 
             \right]
      \nonumber \\
            &=
        \mathbbm{E} \; \Tr \left[  M_F
          \left(   (\beta-\beta^0) \cdot X  \right)'
          M_{\lambda^0}
           \left(    (\beta-\beta^0) \cdot X  \right) \right]
      \nonumber \\
            &= (\beta-\beta^0)'
            \left\{ \mathbbm{E}[x' (M_{F} \otimes M_{\lambda^0}) x] \right\}
               (\beta-\beta^0).
    \end{align}
    From this and
    Assumption~\ref{ass:id}$(iii)$
    we conclude that $Q^*(\beta,\Lambda,F)>0$ for all $\beta \neq \beta^0$.
    On the other hand, we have $Q^*(\beta^0,\lambda^0,f^0)=0$.
    Thus, every minimum of $Q^*(\beta,\Lambda,F)$ satisfies $\beta=\beta^0$.
    Furthermore,
    at $\beta=\beta^0$ we have
   $Q^*(\beta^0,\Lambda,F) = \|  \lambda^0 f^{0 \prime} -  \Lambda F'   \|_{HS}^2$,
   which is zero if and only if $\Lambda F' = \lambda^0 f^{0 \prime}$. The minima
   of $Q^*(\beta,\Lambda,F)$ therefore satisfy $\beta=\beta^0$ and
   $\Lambda F' = \lambda^0 f^{0 \prime}$. Since $Q^*(\beta,\Lambda,F)$
   and $Q(\beta,\Lambda,F)$ only differ by a constant the same result holds for
   $Q(\beta,\Lambda,F)$.
   Notice that the result that the optimal
$\Lambda$ and $F$ satisfy $\Lambda F' = \lambda^0 f^{0 \prime}$ implies that ${\rm rank}( \Lambda F')=R^0$,
i.e. the true number of factors $R^0$ is also identified.
\end{proof}

\begin{proof}[\bf \underline{Proof of Theorem~\ref{th:MAIN} (Main Result)}]
     Follows from Theorem~\ref{th:LimitingDistribution}
     and  Lemma~\ref{lemma:JustifyEV}.
\end{proof}

\begin{proof}[\bf  \underline{Proof of Theorem~\ref{th:Estimators}
(Consistency of Bias and Variance Estimators)}]~\\
   See Section~\ref{app:Estimators} below.
\end{proof}

\begin{proof}[\bf \underline{Proof of Theorem \ref{th:consistency} (Consistency)}]
   We first establish a lower bound on ${\cal L}^R_{NT}(\beta)$.
   Let $\Delta \beta = \beta - \beta^0$.
   Consider the definition of
   ${\cal L}^R_{NT}(\beta)$ in equation \eqref{LSobjective} and plug in the model
    $Y=  \beta \cdot X + \lambda^{0} f^{0\prime} + e$.
    We have
    \begin{align}
       {\cal L}^R_{NT}(\beta) &=  \min_{\left \{ \Lambda \in \mathbbm{R}^{N\times R}, \;
                F \in \mathbbm{R}^{T\times R} \right\}  }   \frac 1 {NT}  {\rm Tr}
            \left[ \left( \Delta \beta \cdot X + e + \lambda^{0} f^{0\prime} - \Lambda F'  \right)  
                   \left( \Delta \beta \cdot X + e + \lambda^{0} f^{0\prime} - \Lambda F' \right)' \right] 
    \nonumber \\
        &\geq            \min_{\left \{ \tilde \Lambda \in \mathbbm{R}^{N\times (R+R^0)}, \;
                \tilde F \in \mathbbm{R}^{T\times (R+R^0)} \right\}  }  \frac 1 {NT}  {\rm Tr}
            \left[ \left( \Delta \beta \cdot X + e   - \tilde \Lambda \tilde F'  \right)  
                   \left( \Delta \beta \cdot X + e   - \tilde \Lambda \tilde F' \right)' \right]    
    \nonumber \\
         &=
        \frac 1 {NT}   \min_{ \tilde F \in \mathbbm{R}^{T\times (R+R^0)} }
         {\rm Tr}
            \left[ \left( \Delta \beta \cdot X + e \right) M_{\widetilde F}
                   \left( \Delta \beta \cdot X + e \right)' \right]  
    \nonumber \\
         &=
   \frac 1 {NT} \min_{ \tilde F \in \mathbbm{R}^{T\times (R+R^0)} }  \Bigg\{ 
         {\rm Tr}
            \left[ \left( \Delta \beta \cdot X  \right) M_{\widetilde F}
                   \left(\Delta  \beta \cdot X   \right)' \right]   +  {\rm Tr} \left( e  e' \right)
                   -  {\rm Tr} \left( e P_{\widetilde F}  e' \right)
        \nonumber \\ & \qquad \qquad        \qquad   \qquad \qquad        \qquad      \qquad        \qquad    
           + 2       {\rm Tr}
            \left[ \left( \Delta \beta \cdot X  \right)  e' \right]      
           -  2   {\rm Tr}
            \left[ \left( \Delta \beta \cdot X  \right) P_{\tilde F}  e' \right]   
           \Bigg\}
    \nonumber \\
         &\geq
   \frac 1 {NT}  \Bigg\{  
     \sum_{r=R+R^0+1}^T  \mu_r\left[  (\Delta \beta \cdot X)' (\Delta \beta \cdot X)  \right]   
     +  {\rm Tr} \left( e  e' \right) -  2 (R+R^0)   \|e\|^2 
        \nonumber \\ & \qquad \qquad        \qquad  \qquad        \qquad        \qquad        \qquad  
           + 2       {\rm Tr}
            \left[ \left( \Delta \beta \cdot X  \right)  e' \right]      
           -  2 (R+R^0)   \|e\|   \|    \Delta  \beta \cdot X \| 
           \Bigg\}
         \nonumber \\
           &\geq b \, \| \Delta \beta \|^2 + \frac 1 {NT} \, {\rm Tr} \left( e  e' \right)
           +  {\cal O}_P \left( \frac 1 {\min(N,T)} \right)
                + {\cal O}_P\left( \frac {\| \Delta \beta \|}
                             {\sqrt{\min(N,T)}} \right)     .
       \label{ConBnd1}
    \end{align}
   Here,  we applied the inequality $\left| {\rm Tr}(A) \right| \leq {\rm rank}(A) \|A\|$ with $A 
   = \left( \Delta \beta \cdot X \right)  P_{\widetilde F} e'$
   and also with $A = e P_{\widetilde F}  e' $.
   We also used that 
   $ \min_{ \tilde F }  
    {\rm Tr}
            \left[ \left( \Delta \beta \cdot X  \right) M_{\widetilde F}
                   \left(\Delta  \beta \cdot X   \right)' \right]
           =  \sum_{r=R+R^0+1}^T  \mu_r\left[  (\Delta \beta \cdot X)' (\Delta \beta \cdot X)  \right]   $,
    which follows by the same logic as equation \eqref{LSobjective}  in the main text.     
     In the last step of \eqref{ConBnd1} we applied Assumptions~\ref{ass:SN}, \ref{ass:EX} and \ref{ass:NC}.
     
    Next, we establish an upper bound on ${\cal L}^R_{NT}(\beta^0)$. Since $R \geq R^0$ 
    we can choose $\Lambda \, F' = \lambda^{0} f^{0\prime}$ in the minimization problem
    in the first line of equation \eqref{LSobjective}, and therefore
    \begin{align}
       {\cal L}^R_{NT}(\beta^0)
        \, &=  \min_{\left \{ \Lambda \in \mathbbm{R}^{N\times R}, \;
                F \in \mathbbm{R}^{T\times R} \right\}  } \;
                \frac 1 {NT}  \; \left\|  e + \lambda^{0} f^{0\prime}  - \, \Lambda \, F'
    \right\|^2_{HS}
        \nonumber \\  
          &\leq  \frac 1 {NT} \left\| e
    \right\|^2_{HS} =
         \frac 1 {NT} \, {\rm Tr} \left( e  e' \right) .
       \label{ConBnd2}
    \end{align}
    Since we could choose $\beta=\beta^0$ in the minimization of $\beta$, the optimal
    $\widehat \beta_R$ needs to satisfy ${\cal L}^R_{NT}(\widehat \beta_R) \leq {\cal L}^R_{NT}(\beta^0)$. 
    Together with \eqref{ConBnd1} and \eqref{ConBnd2} this gives
    \begin{align}
         b \, \| \widehat \beta_R - \beta^0 \|^2
                          +  {\cal O}_P\left( \frac {\| \widehat \beta_R - \beta^0 \|} {\sqrt{\min(N,T)}} \right)
                          +  {\cal O}_P \left( \frac 1 {\min(N,T)} \right)
                          \, \leq  \, 0 \; .
    \end{align}
    From this it follows that
    $\| \widehat \beta_R - \beta^0 \|  =
         {\cal O}_P \left( \min(N,T)^{-1/2} \right)$,
    which is what we wanted to show.
\end{proof}

\begin{proof}[\bf \underline{Proof of Theorem \ref{th:expansion} (Quadratic Approximation of ${\cal L}_{NT}^0(\beta)$)}]   See Section~\ref{app:expansion1} below.
\end{proof}

\begin{proof}[\bf \underline{Proof of Corollary~\ref{cor:LimitR0} (Asymptotic Characterization of $\widehat \beta_{R^0}$)}] $\phantom{a}$ \\
    Define $\gamma \equiv W^{-1} \left( C^{(1)} + C^{(2)} \right)/\sqrt{NT}$.
    Applying Theorem~\ref{th:expansion} we obtain
    \begin{align}
     {\cal L}_{NT}^{0}\left( \widehat \beta_{R^0} \right) &=
      {\cal L}_{NT}^{0}(\beta^0)
      +   \left( \widehat \beta_{R^0}-\beta^0-\gamma \right)' W
      \left( \widehat \beta_{R^0}-\beta^0 - \gamma \right)
      - \gamma' W \gamma
       + {\cal L}_{NT}^{0,{\rm rem}}( \widehat \beta_{R^0} ),
     \nonumber \\
     {\cal L}_{NT}^{0}\left(\beta^0+\gamma\right) &=
      {\cal L}_{NT}^{0}(\beta^0) - \gamma' W \gamma + {\cal L}_{NT}^{0,{\rm rem}}(\beta^0+\gamma) .
    \end{align}
    The first equation above is obtained by completing the square and using the definition of $\gamma$,
    while the second equation is just a special case of the first.
    Applying the above to
      the inequality
   ${\cal L}_{NT}^{0}(\widehat \beta_{R^0}) \leq
{\cal L}_{NT}^{0}\left(\beta^0+\gamma\right)$ gives
   \begin{align}
      \left( \widehat \beta_{R^0}-\beta^0-\gamma \right)' W
      \left( \widehat \beta_{R^0}-\beta^0 - \gamma \right)
        &\leq {\cal L}_{NT}^{0,{\rm rem}}(\beta^0+\gamma)
             -{\cal L}_{NT}^{0,{\rm rem}}(\widehat \beta_{R^0}).
        \label{BoundSquare}
   \end{align}
   We have
   $W \geq \mu_K(W) \mathbbm{1}_K$,
   and using Assumption~\ref{ass:NC} we find for $R=R^0$ that    
    \begin{align}
      \mu_K(W)  &= 
      \min_{\{\alpha \in \mathbbm{R}^K, \|\alpha\|=1\}}
      \alpha' W \alpha
  \nonumber \\
      &=  \min_{\{\alpha \in \mathbbm{R}^K, \|\alpha\|=1\}}
       \frac 1 {NT} {\rm Tr}\left(  M_{\lambda^0} (\alpha \cdot X)
                                     M_{f^0} (\alpha \cdot X)'  \right)
  \nonumber \\
      &=   \min_{\{\alpha \in \mathbbm{R}^K, \|\alpha\|=1\}}
       \frac 1 {NT} {\rm Tr}\left(  M_{f^0} (\alpha \cdot X)' M_{\lambda^0} (\alpha \cdot X)
                                     M_{f^0} \right)
     \nonumber \\
       & \geq   \frac 1 {NT} \sum_{r=2R^0+1}^T
                \mu_r \left[ (\alpha \cdot X)'  (\alpha \cdot X) \right]
         \geq b \; , \qquad \text{wpa1,}
   \end{align}
   and therefore $W^{-1} \leq \mathbbm{1}_K/b$ wpa1. Using Assumption~\ref{ass:SN}
   we find
   \begin{align}
      |C^{(1)}_k| &\leq \left| \frac 1 {\sqrt{NT}} {\rm Tr}\left( X_k e' \right)
                        \right|
                    + \frac {2 R^0} {\sqrt{NT}} \|X_k\| \|e\|
                = {\cal O}_P \left(\sqrt{\max(N,T)} \right)   ,
      \nonumber \\
      |C^{(2)}_k| &\leq  \frac {9 R^0} {2 \sqrt{NT}}
                     \|e\|^2 \|X_k\|
                     \left\|  \lambda^0 \, (\lambda^{0\prime}\lambda^0)^{-1} \, (f^{0\prime}f^0)^{-1} \, f^{0\prime}  \right\|
                      = {\cal O}_P \left(1 \right) ,
   \end{align}
   and therefore we have
  $\gamma = {\cal O}_P[ (1+ \|C^{(1)}\| )/\sqrt{NT} ]
  = o_P(1)$.  We also know $\| \widehat \beta_{R^0} - \beta^0 \| = o_P(1)$
  from Theorem~\ref{th:consistency}. Thus, the bound
     on the remainder in Theorem~\ref{th:expansion} becomes applicable and we have
     \begin{align}
         & {\cal L}_{NT}^{0,{\rm rem}}(\beta^0+\gamma)
             -{\cal L}_{NT}^{0,{\rm rem}}(\widehat \beta_{R^0}) \leq
             o_P\left( \frac 1 {NT} \right)
             \left[  \left( 1 + \sqrt{NT}  \gamma \right)^2
               + \left( 1 + \sqrt{NT} \| \widehat \beta_{R^0} - \beta^0 \| \right)^2
             \right]
          \nonumber \\    &  \qquad \qquad
             = o_P\left( \frac 1 {NT} \right)
                \left\{   {\cal O}_P \left[ (1+ \|C^{(1)}\| )^2 \right]
               + \left( 1 + \sqrt{NT} \| \widehat \beta_{R^0} - \beta^0 \| \right)^2
             \right\} .
     \end{align}
     Applying this, and \eqref{BoundSquare}, and $W^{-1} \leq 1/b$, and the inequality $\sqrt{(x+y)} \leq \sqrt{x} + \sqrt{y}$,
     which holds for all non-negative real number $x$, $y$, we find that
     \begin{align}
         \sqrt{NT} \left\|  \widehat \beta_{R^0}-\beta^0-\gamma \right\|
         &\leq
          o_P \left( 1+ \|C^{(1)}\|  \right)
               + o_P \left( 1 + \sqrt{NT} \| \widehat \beta_{R^0} - \beta^0 \| \right) .
     \end{align}
     Since $\gamma = {\cal O}_P[ (1+ \|C^{(1)}\| )/\sqrt{NT} ] $ it follows
     from this that $\sqrt{NT} \| \widehat \beta_{R^0}-\beta^0 \| = {\cal O}_P \left( 1+ \|C^{(1)}\|  \right)$,
     and therefore
     \begin{align}
         \sqrt{NT} \left\|  \widehat \beta_{R^0}-\beta^0-\gamma \right\|
         &\leq
          o_P \left( 1+ \|C^{(1)}\|  \right) ,
     \end{align}
     which is what we wanted to show.
\end{proof}

\begin{proof}[\bf \underline{Proof of Example in Section~\ref{sec:AsymptoticSummary} (Counter Example for $\sqrt{NT}$ Convergence Rate)}]~

Consider the DGP and asymptotic as described in the example in Section~\ref{sec:AsymptoticSummary}.
Let ${\cal L}_{NT}^1(\beta)$ be the profile objective function for $R=1$, defined in \eqref{LSobjective}.
We want to show that for any sequence $\Delta_{NT}>0$ with $\Delta_{NT} = o(N^{-1/2})$ we have
\begin{align}
     \min_{\beta \in \mathbbm{R}}  {\cal L}_{NT}^1(\beta)
     &< \min_{\beta \in [\beta^0- \Delta_{NT} , \beta^0 + \Delta_{NT}] }  {\cal L}_{NT}^1(\beta) ,
     \qquad \text{wpa1.}
   \label{ResultExample}  
\end{align}
This implies that $\left\| \widehat \beta_1 - \beta^0 \right\|$ cannot converge to zero at a faster than
$\sqrt{N}$ rate.

What is left to do is to proof \eqref{ResultExample}.
We decompose $Y - \beta \cdot X = e - (\beta-\beta^0)  X = e_1(\beta) + e_2(\beta)$, where
\begin{align}
     e_1(\beta) &= \frac c N \lambda _{x} (M_{f_{x}} u' \lambda _{x} )'
          + \frac c T
          (M_{\lambda_{x}} u f_{x}) f_x'
        -  (\beta-\beta^0)   \lambda_x f_x' ,
   \nonumber \\
     e_2(\beta) &=   \widetilde u + \frac{ c^2 (\lambda _{x}^{\prime } u  f_{x})}{NT} \lambda _{x} f_{x}^{\prime }
         + \frac c N  \lambda _{x} \lambda _{x}' u  P_{f_{x}}
         + \frac c T  P_{\lambda_{x}} u f_{x}f_{x}^{\prime }
      ,
\end{align}
with $\widetilde u = u - a (\beta-\beta^0) \widetilde X$.
Since $\|  \lambda _{x} \| = {\cal O}(\sqrt{N})$,
$\|  f_x \|  = {\cal O}(\sqrt{T})$,
and $\lambda _{x}^{\prime } u  f_{x} =  {\cal O}_P(\sqrt{NT})$
we have $\| e_2(\beta) - \widetilde u \| = o_P(N)$.
The matrix $\widetilde u$ has iid normal entries with
mean zero and variance $1 + a^2 (\beta-\beta^0)^2$.
According to Geman~\cite*{Geman1980} we thus have
$\| \widetilde u \|^2 = (1 + a^2 (\beta-\beta^0)^2) (\sqrt{N} + \sqrt{T})^2 + o_P(N)$.
Thus, as $N,T \rightarrow \infty$ at the same rate we have
\begin{align}
     \|  e_2 (\beta)  \|^2 \leq
      (1 + a^2 (\beta-\beta^0)^2) ( \sqrt{N} + \sqrt{T} )^2
      + o_P(N).
   \label{ResE2neg}
\end{align}
Note that ${\rm rank}(e_1(\beta))=2$, which implies that $e_1$ can be written as
$e_1 = A \widetilde e_1 B'$, where $A$ is an $N \times 2$ matrix satisfying $A'A = \mathbbm{1}_2$,
$B$ is a $T \times 2$ matrix satisfying $B'B = \mathbbm{1}_2$, and $\widetilde e_1$ is a $2 \times 2$
matrix, namely
\begin{align}
    \widetilde e_1 &= \left( \begin{array}{cc} 
                         (\beta-\beta^0)     \left\|      \lambda_x f_x'  \right\| &  \left\|   \frac c N \lambda _{x} (M_{f_{x}} u' \lambda _{x} )'  \right\| \\
                              \left\|   \frac c T
          (M_{\lambda_{x}} u f_{x}) f_x'  \right\| &  0
                       \end{array}  \right) .
\end{align}
Using this characterization of $e_1$ as well as
$\|  \lambda _{x} \|^2 = N + o(N)$,
$\|  f_x \|^2 = T + o(T)$,
$\| M_{f_{x}} u' \lambda _{x} \|^2 = NT + o_P(NT)$, and
$\| M_{\lambda_{x}} u f_{x} \|^2 = NT + o_P(NT)$,
we  find
\begin{align}
   &  \| e_1(\beta) \|^2
  \nonumber \\
   &= \mu_1\left[ e_1(\beta)' e_1(\beta) \right] = \mu_1\left[ \widetilde e_1(\beta)' \widetilde e_1(\beta) \right]
  \nonumber \\
     &=   \mu_1 \left[ \left( \begin{array}{c@{\qquad}c}
    \| f_x \|^2
    \left(   \frac{c^2 \| M_{\lambda_{x}} u f_{x} \|^2 } {T^2}
         + \| \lambda_x \|^2 (\beta - \beta^0)^2
    \right)  &  \frac{ c \| \lambda_x \|^2 \| f_x \| \| M_{f_{x}} u' \lambda _{x} \| (\beta-\beta^0)} N  \\
      \frac{ c \| \lambda_x \|^2 \| f_x \| \| M_{f_{x}} u' \lambda _{x} \| (\beta-\beta^0) } N
      & \frac{c^2 \| \lambda_x \|^2  \| M_{f_{x}} u' \lambda _{x} \|^2 } {N^2}
        \end{array} \right) \right]
  \nonumber \\
     &=   \mu_1 \left[ \left( \begin{array}{cc}
     c^2 N + N T (\beta-\beta^0)^2  &  c T \sqrt{N} (\beta-\beta^0) \\
      c T \sqrt{N} (\beta-\beta^0) & c^2 T
        \end{array} \right) \right]  + o_P\left[ \left(\sqrt{N} + \sqrt{NT} \| \beta-\beta^0)\| \right)^2  \right]
  \nonumber \\
      &= \frac 1 2 \left( c^2 N + c^2 T + N T (\beta - \beta^0)^2
         + \sqrt{ \left[ c^2 N + c^2 T + N T (\beta - \beta^0)^2 \right]^2 - 4 c^4 N T
            }
          \right) 
           \nonumber \\
       & \qquad + o_P\left[ \left(\sqrt{N} + \sqrt{NT} \| \beta-\beta^0)\| \right)^2  \right] .
   \label{ResE1neg}
\end{align}
The objective function for $R=1$ reads
\begin{align}
   {\cal L}_{NT}^1(\beta)
      &= {\cal L}_{NT}^0(\beta)
         -  \mu_1\left[ \left(Y - \beta \cdot X \right)'
                     \left(Y - \beta \cdot X \right) \right]
 \nonumber \\
     &= {\rm Tr}\left[  \left(Y - \beta \cdot X \right)'
                     \left(Y - \beta \cdot X \right) \right]
        -  \mu_1\left[ \left(Y - \beta \cdot X \right)'
                     \left(Y - \beta \cdot X \right) \right]
 \nonumber \\
    &= {\rm Tr}( e' e) + 2 (\beta- \beta^0)  {\rm Tr}( X' e) + (\beta- \beta^0)^2 {\rm Tr}( X' X)
 \nonumber \\   & \qquad
       -  \mu_1\left[ \left(e_1(\beta) + e_2(\beta) \right)'
                    \left(e_1(\beta) + e_2(\beta) \right) \right]
 \nonumber \\
    &= {\rm Tr}( e' e)  + (\beta- \beta^0)^2 ( NT + a^2 NT  )
       + {\cal O}_P( \sqrt{NT} \| \beta- \beta^0 \|  )
       + o_P( NT \| \beta- \beta^0 \|^2  )
 \nonumber \\   & \qquad
       -  \mu_1\left[ \left(e_1(\beta) + e_2(\beta) \right)'
                    \left(e_1(\beta) + e_2(\beta) \right) \right] .
\end{align}
We have
\begin{align}
   & \left|  \mu_1\left[ \left(e_1(\beta) + e_2(\beta) \right)'
                    \left(e_1(\beta) + e_2(\beta) \right) \right]
                - \mu_1\left[ e_1(\beta)' e_1(\beta) \right]
   \right|
 \nonumber \\
    &\leq \left\|    \left(e_1(\beta) + e_2(\beta) \right)'
                    \left(e_1(\beta) + e_2(\beta) \right)
                 -   e_1(\beta)' e_1(\beta) \right\|
 \nonumber \\
    &\leq  2 \| e_1(\beta)  \| \, \| e_2(\beta)  \|
       +  \| e_2(\beta) \|^2 ,
\end{align}
and therefore
\begin{align}
    &  \left|  {\cal L}_{NT}^1(\beta) -
     {\rm Tr}( e' e)  - (\beta- \beta^0)^2 ( NT + a^2 NT  )
       + \| e_1(\beta) \|^2  \right|
  \nonumber \\
     & \qquad \leq
     2 \| e_1(\beta)  \| \, \| e_2(\beta)  \|
       +  \| e_2(\beta) \|^2
       +
       {\cal O}_P( \sqrt{NT} \| \beta- \beta^0 \|  )
       + o_P( NT \| \beta- \beta^0 \|^2  ) .
   \label{LobjBoundsNeg}
\end{align}
Using this inequality together with the results on $\| e_1(\beta)  \|$
and $\| e_2(\beta) \|$ above one can show that (for details see below)
\begin{align}
   \min_{\beta \in [\beta^0- \Delta_{NT} , \beta^0 + \Delta_{NT}] }  {\cal L}_{NT}^1(\beta)
      &\geq    {\rm Tr}( e' e) -  T \,
      \underbrace{\left[ c \max(1,\kappa) + 1 + \kappa  \right]^2}_{\equiv f_1(\kappa,a,c)}   + o_P(N) ,
   \label{BoundL1neg}
\end{align}
and for $\widetilde \beta_{NT} = \beta^0 +  c {(a NT)^{-1/4}} $ we have (again, for details see below)
\begin{align}
   &  {\cal L}_{NT}^1(\widetilde \beta_{NT})
    \nonumber \\
      & \leq   {\rm Tr}( e' e)  - \underbrace{ \left[
       c^2 g(a,\kappa) - c^2 a^{-1/2}  ( 1 + a^2 ) \kappa
       - 2  c  ( 1+ \kappa )  \sqrt{g(a,\kappa)}
       - ( 1+ \kappa )^2 \right]}_{\equiv f_2(\kappa,a,c)} T + o_P(N) ,
   \label{BoundL2neg}
\end{align}
where
\begin{align}
    g(a,\kappa) &= \frac 1 2 \left(1 + \kappa^2 + \frac{\kappa} { \sqrt{a}}
         + \sqrt{ \left(1 + \kappa^2 + \frac{\kappa} { \sqrt{a}} \right)^2 -  4 \kappa^2
            }
          \right).
\end{align}
For $0<a <(1/2)^{2/3} \min(\kappa^2,\kappa^{-2})$ and $c \geq    \frac{ (2+\sqrt{2})  \left( 1+\kappa \right)
 (1+ \sqrt{3} a^{-1/4} )}   { \min(1,\kappa) [1/2 - a^{3/2} \max(\kappa,\kappa^{-1})]} $
  one can show that
$f_1(\kappa,a,c) < f_2(\kappa,a,c)$ (for details on this below).
Thus, for these values of $a$ and $c$ we can conclude that wpa1
\begin{align}
    \min_{\beta \in [\beta^0- \Delta_{NT} , \beta^0 + \Delta_{NT}] }  {\cal L}_{NT}^1(\beta) >  {\cal L}_{NT}^1(\widetilde \beta_{NT})
    \geq  \min_{\beta \in \mathbbm{R}}  {\cal L}_{NT}^1(\beta)  .
\end{align}
This is what we wanted to show. In the following we provide more details regarding how to obtain
\eqref{BoundL1neg} and \eqref{BoundL2neg} and $f_1(\kappa,a,c) < f_2(\kappa,a,c)$.

 \# Derivation of \eqref{BoundL1neg}:
Remember $\Delta_{NT} = o(N^{-1/2})$.
Thus, for any $\beta \in [\beta^0- \Delta_{NT} , \beta^0 + \Delta_{NT}]$
we find from \eqref{ResE2neg}, \eqref{ResE1neg}, and \eqref{LobjBoundsNeg} that
\begin{align}
    \| e_1(\beta) \| ^2 &= c^2 \max( N, T ) + o_P(N) = c^2 \max( 1, \kappa^2 ) T  + o_P(N) ,
    \nonumber \\
    \| e_2(\beta) \|^2 &=  ( \sqrt{N} + \sqrt{T} )^2
      + o_P(N) =  ( 1+ \kappa )^2 T
      + o_P(N)  ,
    \nonumber \\
   {\cal L}_{NT}^1(\beta)
   &\geq
     {\rm Tr}( e' e)
       - \| e_1(\beta) \|^2      -   2 \| e_1(\beta)  \| \, \| e_2(\beta)  \|
    -  \| e_2(\beta) \|^2
       + o_P( N   )
  \nonumber \\
     &=    {\rm Tr}( e' e)
      -   \left( \| e_1(\beta) \| + \| e_2(\beta)  \| \right)^2 + o_P( N   )
   \nonumber \\
      &=  {\rm Tr}( e' e) -  T \, \left[ c \max(1,\kappa) + 1 + \kappa  \right]^2   + o_P(N)
      .
\end{align}

 \# Derivation of \eqref{BoundL2neg}:
We defined $\widetilde \beta_{NT} = \beta^0 +  c {(a NT)^{-1/4}} $.
From \eqref{ResE2neg} we find
$\| e_2(\beta) \|^2  =  ( 1+ \kappa )^2 T
      + o_P(N)$ as before.
Furthermore,  we find from \eqref{ResE1neg} that
$\left\| e_1(\widetilde \beta_{NT}) \right\|^2
     = c^2 \, T \, g(a,\kappa) + o_P(N)$.
Equation \eqref{LobjBoundsNeg} thus gives
\begin{align}
   &  {\cal L}_{NT}^1(\widetilde \beta_{NT})
 \nonumber \\
        &\leq {\rm Tr}( e' e)  + c^2 a^{-1/2}  ( 1 + a^2 ) \kappa T
       -  \left\| e_1(\widetilde \beta_{NT}) \right\|^2
       + 2  \left\| e_1(\widetilde \beta_{NT}) \right\|  \left\| e_2(\widetilde \beta_{NT}) \right\|
       +   \left\| e_2(\widetilde \beta_{NT}) \right\|^2 + o_P(N)
 \nonumber \\
        &= {\rm Tr}( e' e)  + \left[ c^2 a^{-1/2}  ( 1 + a^2 ) \kappa
       -  c^2 g(a,\kappa)
       + 2  c  ( 1+ \kappa )  \sqrt{g(a,\kappa)}
       + ( 1+ \kappa )^2 \right] T + o_P(N) .
\end{align}

 \# Show that $f_1(\kappa,a,c) < f_2(\kappa,a,c)$:
Recall
\begin{eqnarray*}
f_{1}\left( \kappa ,a,c\right) &=&\left( \max \left\{ 1,\kappa \right\}
c+1+\kappa \right) ^{2}=\max \left\{ 1,\kappa ^{2}\right\} c^{2}+2\max
\left\{ 1,\kappa \right\} \left( 1+\kappa \right) c+\left( 1+\kappa \right)
^{2}, \\
f_{2}\left( \kappa ,a,c\right) &=&\left( g\left( a,\kappa \right) -\frac{%
1+a^{2}}{\sqrt{a}}\kappa \right) c^{2}-2\left( 1+\kappa \right) \sqrt{%
g\left( a,\kappa \right) }c-\left( 1+\kappa \right) ^{2}.
\end{eqnarray*}
Note that $f_{2}\left( \kappa
,a,c\right) -f_{1}\left( \kappa ,a,c\right) $ is a quadratic polynomial in $c$, namely
\begin{equation}
f_{2}\left( \kappa ,a,c\right) -f_{1}\left( \kappa ,a,c\right) =h_{1}\left(
a,\kappa \right) c^{2}-2h_{2}\left( a,\kappa \right) c-h_{3}\left( \kappa
\right) ,  \label{quadratic.equation}
\end{equation}%
where
\begin{eqnarray*}
h_{1}\left( a,\kappa \right)  &=&g\left( a,\kappa \right) -\frac{1+a^{2}}{%
\sqrt{a}}\kappa -\max \left\{ 1,\kappa ^{2}\right\} , \\
h_{2}\left( a,\kappa \right)  &=&\left( 1+\kappa \right) \sqrt{g\left(
a,\kappa \right) }+\max \left\{ 1,\kappa \right\} \left( 1+\kappa \right) >0 ,
\\
h_{3}\left( \kappa \right)  &=&2\left( 1+\kappa \right) ^{2}>0.
\end{eqnarray*}%
We first want to show that $h_{1}\left( a,\kappa \right) >0$.
By assumption we have $a=\epsilon ^{2}\min \left\{ \kappa ^{2},\kappa ^{-2}\right\}$ with
$0<\epsilon \leq \left( 1/2\right) ^{1/3}$.
Suppose that $\kappa \geq 1$, i.e. $a=\frac{\epsilon ^{2}}{\kappa
^{2}}.$ Then, we have%
\begin{align*}
&h_{1}\left( a,\kappa \right)   \\
&=g\left( a,\kappa \right) -\frac{1+a^{2}}{\sqrt{a}}\kappa -\kappa ^{2} \\
&=\frac{1}{2}\left( 1+\left( 1+\frac{1}{\epsilon }\right) \kappa ^{2}+\sqrt{%
\left( 1+\left( 1+\frac{1}{\epsilon }\right) \kappa ^{2}\right) ^{2}-4\kappa
^{2}}\right) -\frac{1}{\epsilon }\kappa ^{2}-\frac{\epsilon ^{3}}{\kappa ^{2}%
}-\kappa ^{2} \\
&=\frac{1}{2}-\frac{\epsilon ^{3}}{\kappa ^{2}}+\left\{ \frac{1}{2}\left( 1+%
\frac{1}{\epsilon }\right) +\frac{1}{2}\sqrt{\left( \frac{1}{\kappa ^{2}}%
+\left( 1+\frac{1}{\epsilon }\right) \right) ^{2}-\frac{4}{\kappa ^{2}}}%
-\left( 1+\frac{1}{\epsilon }\right) \right\} \kappa ^{2} \\
&=\frac{1}{2}-\frac{\epsilon ^{3}}{\kappa ^{2}}+\left\{ \frac{1}{2}\left( 1+%
\frac{1}{\epsilon }\right) +\frac{1}{2}\sqrt{\left( 1+\frac{1}{\epsilon }%
\right) ^{2}+\left( \frac{2}{\epsilon }-2\right) \frac{1}{\kappa ^{2}}+\frac{%
1}{\kappa ^{4}}}-\left( 1+\frac{1}{\epsilon }\right) \right\} \kappa ^{2} \\
&>\frac{1}{2}-\frac{\epsilon ^{3}}{\kappa ^{2}}\geq \frac{1}{2}-\epsilon
^{3}\geq 0,
\end{align*}%
where the first strict inequality holds since
\[
\sqrt{\left( 1+\frac{1}{\epsilon }\right) ^{2}+\left( \frac{2}{\epsilon }%
-2\right) \frac{1}{\kappa ^{2}}+\frac{1}{\kappa ^{4}}}>\sqrt{\left( 1+\frac{1%
}{\epsilon }\right) ^{2}}=1+\frac{1}{\epsilon }.
\]
Analogously
one can show that $h_{1}\left( a,\kappa \right) > \kappa^2(1/2 - \epsilon^3 \kappa^2)>0$
for $\kappa <1$.
Since $h_{1}\left( a,\kappa \right)>0$  and $h_{3}\left( \kappa \right) >0$,
the quadratic equation $h_{1}\left(
a,\kappa \right) c^{2}-2h_{2}\left( a,\kappa \right) c-h_{3}\left( \kappa
\right) =0$ has two
real roots, the larger of which reads
\begin{eqnarray*}
c_{\rm bnd}\left( a,\kappa \right)  &=&\frac{h_{2}\left( a,\kappa \right) +\sqrt{%
h_{2}\left( a,\kappa \right) ^{2}+h_{1}\left( a,\kappa \right) h_{3}\left(
\kappa \right) }}{h_{1}\left( a,\kappa \right) },
\end{eqnarray*}
and we have $f_2(\kappa,a,c) - f_1(\kappa,a,c) > 0 $
if $c>c_{\rm bnd}\left( a,\kappa \right)$. Since $\sqrt{x+y} \leq \sqrt{x}+\sqrt{y}$ for all
positive numbers $x,y$, and
  $ h_{1}\left( a,\kappa \right) h_{3}\left(   \kappa \right) \leq 2 h_{2}\left( a,\kappa \right)^2$ we have
\begin{align*}
   c_{\rm bnd}\left( a,\kappa \right)  & \leq
      \frac{2 h_{2}\left( a,\kappa \right) +
      \sqrt{ h_{1}\left( a,\kappa \right) h_{3}\left(
   \kappa \right) }}{h_{1}\left( a,\kappa \right) }
   \leq  (2+\sqrt{2})  \frac{ h_{2}\left( a,\kappa \right)  } {h_{1}\left( a,\kappa \right)}
   .
\end{align*}
Above we have already shown the lower bound
$h_{1}\left( a,\kappa \right) >   \min(1,\kappa^2) [1/2 - \epsilon^3 \min(\kappa^2,\kappa^{-2})]
 =   \min(1,\kappa^2) [1/2 - a^{3/2} \max(\kappa,\kappa^{-1})]$.
 In addition, we have $g(a,\kappa) < 3 \max(1,\kappa^2)/\sqrt{a}$
 and therefore
 $h_{2}\left( a,\kappa \right)  <
 \max (1,\kappa) \left( 1+\kappa \right)
 (1+ \sqrt{3} a^{-1/4} )$. Thus,
 \begin{align}
       c_{\rm bnd}\left( a,\kappa \right)  & <
         \frac{ (2+\sqrt{2})  \left( 1+\kappa \right)
 (1+ \sqrt{3} a^{-1/4} )}   { \min(1,\kappa) [1/2 - a^{3/2} \max(\kappa,\kappa^{-1})]} .
 \end{align}
Our assumptions guarantee that $c$ is larger or equal to the rhs of the last inequality,
i.e. also $c>c_{\rm bnd}\left( a,\kappa \right)$ and $f_2(\kappa,a,c) - f_1(\kappa,a,c) > 0 $.
\end{proof}

\begin{proof}[\bf \underline{Proof of Theorem~\ref{th:ConvergenceRate2} ($N^{3/4}$ Convergence Rate of $\widehat \beta_{R^0}$)}]
    The result follows from Theorem~\ref{th:ConvergenceRate}
     and   Lemma~\ref{lemma:JustifyHL1} below.
\end{proof}

\begin{proof}[\bf \underline{Proof of Theorem~\ref{th:LimitingDistribution} (Asymptotic Equivalence of $\widehat \beta_{R^0}$ and $\widehat \beta_R$, $R>R^0$)}]
   The result follows  from Corollary~\ref{cor:LimitRgen2}
     and   Lemmas~\ref{lemma:JustifyHL1} and \ref{lemma:JustifyHL2} below.
\end{proof}

\begin{proof}[\bf \underline{Proof of Lemma~\ref{lemma:JustifyEV} (Justification of Main Text High-Level Assumptions)}]
    See Section~\ref{ass:SufficiencyLL} below.
\end{proof}

\begin{proof}[\bf \underline{Proof of Lemma~\ref{lemma:SpectralNormBound} (Spectral Norm Bound for Random Matrices)}]
   Let $\Sigma$, $\eta$, $\Psi$, $\chi$ be the $N \times N$ matrices with entries $\Sigma_{ij}$,
     $\eta_{ij}$, $\Psi_{ij}$ and $\chi_{ij}$, respectively.
     Assumption $(ii)$ of the Lemma guarantees that
    \begin{align}
          \mathbbm{E} \| \eta \|^2_{HS}  &=  \sum_{i,j=1}^n \mathbbm{E}( \eta^2_{ij})  = {\cal O}(N^2),
     \end{align}
      from which we conclude that $\|\eta\|_{HS} = {\cal O}_P(N)$.
      Analogously, we find that assumption $(iv)$ of the Lemma implies $\|\chi\|_{HS} = {\cal O}_P(N)$.
     Furthermore, assumption $(iii)$ of the Lemma guarantees that
     $ \| \Psi \|_{HS}  = \sqrt{ \sum_{i,j=1}^n \Psi^2_{ij} } = {\cal O}_P(N^{1/2})$.
     Since $\eta^2 =  N \Psi + N^{1/2} \chi$ we thus have
     \begin{align}
          \| \eta^2 \|_{HS} &=  \| N \Psi + N^{1/2} \chi \|_{HS}
              \leq  N \| \Psi \|_{HS} + N ^{1/2} \| \chi \|_{HS}
              = {\cal O}_P(N^{3/2}).
     \end{align}  
     Since $\Sigma$ is a symmetric positive definite matrix we have
     $ \| \Sigma \| =  \mu_1(\Sigma)$,
     i.e. by assumption $(i)$ of the Lemma we have $\|\Sigma\| = {\cal O}(1)$.

     Using the above results on $\|\eta\|_{HS}$, $\| \eta^2 \|_{HS}$
     and $\| \Sigma \|$, and the fact that $e e' = T \Sigma+ T^{1/2} \eta$, we obtain
     \begin{align}
          \| e \|^4 &= \| (e e')^2 \| \leq \| (e e')^2  \|_{HS} = \| (T \Sigma+ T^{1/2} \eta)^2 \|_{HS}
          \nonumber \\
          &\leq T^2 \| \Sigma^2 \|_{HS} + 2 T^{3/2} \| \Sigma \eta \|_{HS} + T \| \eta^2 \|_{HS}
          \nonumber \\
          &\leq T^{2} N^{1/2} \| \Sigma \|^2 + 2 T^{3/2} \| \Sigma \| \| \eta \|_{HS} 
             +  T \| \eta^2 \|_{HS} 
          \nonumber \\
           &= {\cal O}_P\left( T^{2} N^{1/2} +  T^{3/2} N  +  T N^{3/2} \right) = {\cal O}_P(N^{5/2}), 
     \end{align}
     where in the
     second to last line we applied the general matrix norm inequalities $\| A\|_{HS} \leq {\rm rank}(A) \| A\|$
     and $\| C D\|_{HS} \leq \|C\| \|D\|_{HS}$ with $A=\Sigma^2$, $C=\Sigma$ and $D=\eta$.  
     We thus conclude that $\|e\| = {\cal O}_P(N^{5/8}) $.   
\end{proof}

\section{Details for Quadratic Approximation of ${\cal L}_{NT}^0(\beta)$}
\label{app:expansion1}

The following extends the discussion in Section~\ref{sec:expansion}
and Appendix~\ref{app:ExpansionDiscussion} of the main paper.
Using the perturbation theory of linear operators
we provide an asymptotic expansion of the least squares objective function
${\cal L}_{NT}^0(\beta)$ when $R=R^0$. 
Lemma~\ref{lemma:expansion} is the key result of this section, which is afterwards used
to show Theorem~\ref{th:expansion}.
The proofs for the intermediate results of this section
are provided in Section~\ref{app:Proofs-Supp} below.

This section is only concerned with $R=R^0$, in which case we write $\widehat \lambda := \widehat \Lambda_{R^0}$
and $\widehat f = \widehat F_{R^0}$. It is also convenient to define $\widehat \lambda(\beta)$ and
$\widehat f(\beta)$ as the minimizers of the LS objective for different values of $\beta$.
We have $\widehat \lambda = \widehat \lambda(\widehat \beta_{R^0})$
and $\widehat f = \widehat f(\widehat \beta_{R^0})$.
Finally, we define $M_{\widehat \lambda}(\beta) := M_{\widehat \lambda(\beta)}$
and  $M_{\widehat f}(\beta) := M_{\widehat f(\beta)}$,
and the residuals $\widehat e(\beta) := Y - \beta \cdot X - \widehat \lambda(\beta) \widehat f'(\beta)$.

\subsection{General Expansion Result and Proof of Theorem \ref{th:expansion} }

\begin{definition}
For the $N\times R^0$ matrix $\lambda^0$ and the $T\times R^0$ matrix $f^0$ we
define
\begin{align}
d_{\max}(\lambda^0,f^0) &= \frac{1} {\sqrt{NT}} \, \left\| \lambda^0
f^{0\prime} \right\| =  \sqrt{ \frac{1} {NT} \, \mu%
_1(\lambda^{0\prime} f^0 f^{0\prime} \lambda^0) } \; ,  \notag \\
d_{\min}(\lambda^0,f^0) &=  \, \sqrt{ \frac{1} {NT} \, \mu%
_{R^0} (\lambda^{0\prime} f^0 f^{0\prime} \lambda^0) } \; ,
\end{align}
\textit{i.e.} $d_{\max}(\lambda^0,f^0)$ and $d_{\min}(\lambda^0,f^0)$ are the square
roots of the maximal and the minimal eigenvalue of $\lambda^{0\prime} f^0
f^{0\prime} \lambda^0/NT$. Furthermore, the convergence radius $%
r_0(\lambda^0, f^0)$ is defined by
\begin{align}
r_0(\lambda^0, f^0) &= \left( \frac{4 d_{\max}(\lambda^0,f^0)} {%
d_{\min}^2(\lambda^0,f^0)} + \frac{1} {2 d_{\max}(\lambda^0,f^0)}
\right)^{-1} \; .  \label{Defr0}
\end{align}
\end{definition}

\begin{lemma}
\label{lemma:expansion} If the following condition is satisfies
\begin{align}
\sum_{k=1}^{K}\left| \beta_{k} - \beta^0_{k} \right| \frac{\| X_{k} \|} {%
\sqrt{NT}} + \frac{\|e\|} {\sqrt{NT}} \, &< \, r_0(\lambda^0,f^0) \; ,
\label{exp_bound}
\end{align}
then

\begin{itemize}
\item[(i)] the profile least squares objective function can be written as a power
series in the $K+1$ parameters $\epsilon_0 = \|e\|/\sqrt{NT}$ and $%
\epsilon_k = \beta^0_{k} - \beta_{k}$, namely
\begin{align*}
{\cal L}^0_{NT}\left( \beta \right) &= \frac{1} {NT} \, \sum_{g=2}^\infty \,
\sum_{k_1=0}^K \, \sum_{k_2=0}^K \, \ldots \sum_{k_g=0}^K \,
\epsilon_{k_1} \, \epsilon_{k_2} \, \ldots \, \epsilon_{k_g}
\, L^{(g)}\left(\lambda^0,\, f^0,\, X_{k_1},\, X_{k_2}, \ldots
,X_{k_g}\right) ,
\end{align*}
where the expansion coefficients are given by\footnote{
Here we use the round bracket notation $(k_1,k_2,\ldots,k_g)$ for total symmetrization
of these indices, e.g.
$\widetilde L^{(2)}\left(\lambda^0,\, f^0,\, X_{(k_1},\, X_{k_2)}\right)
 = \frac 1 2 
 \left[ \widetilde L^{(2)}\left(\lambda^0,\, f^0,\, X_{k_1},\, X_{k_2}\right)
   + \widetilde L^{(2)}\left(\lambda^0,\, f^0,\, X_{k_2},\, X_{k_1}\right) \right]$.
}
\begin{align*}
L^{(g)}&\left(\lambda^0,\, f^0,\, X_{k_1},\, X_{k_2}, \ldots
,X_{k_g}\right) = \widetilde L^{(g)}\left(\lambda^0,\, f^0,\,
X_{(k_1},\, X_{k_2}, \ldots ,X_{k_g)}\right)
 \nonumber \\
& \qquad  = \frac 1 {g!} \left[ \widetilde L^{(g)}\left(\lambda^0,\, f^0,\,X_{k_1},\,
X_{k_2}, \ldots ,X_{k_g}\right) + \text{all permutations of $%
k_1,\ldots,k_g$} \right] \; ,
\end{align*}
\textit{i.e.} $L^{(g)}$ is obtained by total symmetrization of the last $g$ arguments
of~\footnote{
  One finds $\widetilde L^{(1)} \left(\lambda^0,\, f^0,\, X_{k_1},\, X_{k_2},
  \ldots ,X_{k_g}\right) = 0$, which is why the sum in the power series of ${\cal L}^0_{NT}$
  starts from $g=2$ instead of $g=1$. For $g=2$ and $g=3$ we have
  \begin{align*}
L^{(2)}\left(\lambda^0,\, f^0,\, X_{k_1},\, X_{k_2} \right) &=
{\rm Tr}\left(M_{\lambda^0} \, X_{k_1} \, M_{f^0} \,
X_{k_2}\right) \; ,  \notag \\
L^{(3)}\left(\lambda^0,\, f^0,\, X_{k_1},\, X_{k_2},\,
X_{k_3} \right) &= - \, \frac{1}{3} \, \bigg[ {\rm Tr}\left(
M_{\lambda^0} \, X_{k_1} \, M_f \, X^{\prime}_{k_2} \, \lambda^0
\, (\lambda^{0\prime}\lambda^0)^{-1} \, (f^{0\prime}f^0)^{-1} \, f^{0\prime}
\, X^{\prime}_{k_3}\right)  \notag \\
& \qquad \qquad \qquad + \text{5 permutations of $k_1 \ldots k_3$} %
\bigg]  .
\end{align*}

}
\begin{align*}
\widetilde L^{(g)} & \left(\lambda^0,\, f^0,\, X_{k_1},\, X_{k_2},
\ldots ,X_{k_g}\right)  \notag \\
&   = \, \sum_{p=1}^g \, (-1)^{p+1} \, \sum_{%
\begin{minipage}{3cm}\center\scriptsize $\nu_1+\ldots+\nu_P=g$ \\
$m_1+\ldots+m_{p+1}=p-1$\\$2 \geq \nu_j \geq 1$ ,\, $m_j \geq
0$\end{minipage}} {\rm Tr}\left( S^{(m_1)} \, {\cal T}^{(\nu_1)}_{k_1
\ldots} \, S^{(m_2)} \, \ldots \, S^{(m_P)} \, {\cal T}^{(\nu_P)}_{\ldots k_g}
\, S^{(m_{p+1})} \right) \; ,
\end{align*}
with
\begin{align*}
S^{(0)} &= - M_{\lambda^0} \; , & S^{(m)} &= \left[ \lambda^0
(\lambda^{0\prime}\lambda^0)^{-1} (f^{0\prime}f^0)^{-1}
(\lambda^{0\prime}\lambda^0)^{-1} \lambda^{0\prime} \right]^{m} \; , \;
\text{for $m\geq 1$},  \notag \\
{\cal T}^{(1)}_{k} &= \lambda^0 \, f^{0\prime} \, X_k^{\prime}+ X_k
\, f^0 \, \lambda^{0\prime} \; , & {\cal T}^{(2)}_{k_1k_2} &=
X_{k_1} \, X_{k_2}^{\prime}\; , \qquad \text{for $%
k,k_1,k_2 = 0\ldots K$} \; , \notag \\
X_0 &= \frac{\sqrt{NT}} {\|e\|} \; e \; , & X_k &= X_k \; , \qquad
\text{for $k =  1\ldots K$} \; .
\end{align*}

\item[(ii)] the projector $M_{\widehat\lambda}(\beta)$ can be written as a power
series in the same parameters $\epsilon_k$ ($k=0,\ldots,K$),
namely
\begin{align*}
M_{\widehat\lambda} \left( \beta \right) &= \sum_{g=0}^\infty \,
\sum_{k_1=0}^K \, \sum_{k_2=0}^K \, \ldots \sum_{k_g=0}^K \,
\epsilon_{k_1} \, \epsilon_{k_2} \, \ldots \, \epsilon_{k_g}
\, M^{(g)}\left(\lambda^0,\, f^0,\, X_{k_1},\, X_{k_2}, \ldots
,X_{k_g}\right) \; ,
\end{align*}
where the expansion coefficients are given by $M^{(0)}(\lambda^0,\,
f^0)=M_{\lambda^0}$, and for $g\geq1$
\begin{align*}
M^{(g)}&\left(\lambda^0,\, f^0,\, X_{k_1},\, X_{k_2}, \ldots
,X_{k_g}\right) = \widetilde M^{(g)}\left(\lambda^0,\, f^0,\,
X_{(k_1},\, X_{k_2}, \ldots ,X_{k_g)}\right)  \notag \\
& \qquad = \frac 1 {g!} \left[ \widetilde M^{(g)}\left(X_{k_1},\,
X_{k_2}, \ldots ,X_{k_g}\right) + \text{all permutations of $%
k_1,\ldots,k_g$} \right] \; ,
\end{align*}
\textit{i.e.} $M^{(g)}$ is obtained by total symmetrization of the last $g$ arguments
of
\begin{align*}
\widetilde M^{(g)} & \left(\lambda^0,\, f^0,\, X_{k_1},\, X_{k_2},
\ldots ,X_{k_g}\right)  \\
&  = \, \sum_{p=1}^g \, (-1)^{p+1} \, \sum_{%
\begin{minipage}{3cm}\center\scriptsize $\nu_1+\ldots+\nu_P=g$ \\
$m_1+\ldots+m_{p+1}=p$\\$2 \geq \nu_j \geq 1$ ,\, $m_j \geq 0$\end{minipage}%
} S^{(m_1)} \, {\cal T}^{(\nu_1)}_{k_1 \ldots} \, S^{(m_2)} \, \ldots \,
S^{(m_P)} \, {\cal T}^{(\nu_P)}_{\ldots k_g} \, S^{(m_{p+1})} \; ,
\end{align*}
where $S^{(m)}$, ${\cal T}^{(1)}_{k}$, ${\cal T}^{(2)}_{k_1k_2}$, and
$X_k$ are given above.

\item[(iii)] For $g \geq 3$ the coefficients $L^{(g)}$ in the series expansion of $%
{\cal L}^0_{NT}(\beta)$ are bounded as follows
\begin{align*}
& \frac {1} {NT} \left| L^{(g)}\left(\lambda^0,\, f^0,\, X_{k_1},\,
X_{k_2}, \ldots ,X_{k_g}\right) \right|  \notag \\
& \qquad \qquad \qquad \leq \frac{R \, g \, d_{\min}^2(\lambda^0,f^0)} {2}
\, \left(\frac{16 \; d_{\max}(\lambda^0,f^0)} {d_{\min}^2(\lambda^0,f^0)}
\right)^{g} \frac{\|X_{k_1}\|} {\sqrt{NT}} \, \frac{\|X_{k_2}\|} {%
\sqrt{NT}} \, \ldots \, \frac{\|X_{k_g}\|} {\sqrt{NT}}  \, .
\end{align*}
Under the stronger condition
\begin{align}
\sum_{k=1}^{K}\left| \beta^0_{k} - \beta_{k} \right| \frac{\| X_{k} \|} {%
\sqrt{NT}} + \frac{\|e\|} {\sqrt{NT}} \, &< \, \frac{d_{\min}^2(%
\lambda^0,f^0)} {16 \; d_{\max}(\lambda^0,f^0)} \; ,
\label{StrongConRad}
\end{align}
we therefore have the following bound on the remainder when the series
expansion for ${\cal L}^0_{NT}(\beta)$ is truncated at order $G \geq 2$:
\begin{align*}
\bigg| {\cal L}^0_{NT}\left( \beta \right) - \frac 1 {NT} \, \sum_{g=2}^G \,
\sum_{k_1=0}^K \, \ldots \, \sum_{k_g=0}^K \, \epsilon_{k_1}
\, \ldots \, \epsilon_{k_g} \, L^{(g)} & \left(\lambda^0,\, f^0,\,
X_{k_1},\, X_{k_2}, \ldots ,X_{k_g}\right) \bigg|  \\
& \quad \leq \frac{R \, (G+1)\, \alpha^{G+1} \, d_{\min}^2(\lambda^0,f^0)} {%
2 (1-\alpha)^2} \; ,
\end{align*}
where
\begin{align*}
\alpha \, &= \frac{16 \; d_{\max}(\lambda^0,f^0)} {d_{\min}^2(\lambda^0,f^0)}
\, \left( \sum_{k=1}^{K}\left| \beta^0_{k} - \beta_{k} \right| \frac{\|
X_{k} \|} {\sqrt{NT}} + \frac{\|e\|} {\sqrt{NT}} \right) \; < 1 \; .
\end{align*}

\item[(iv)]

The operator norm of the coefficient $M^{(g)}$ in the series expansion of
$M_{\widehat\lambda} \left( \beta \right)$ is bounded as follows, for $g\geq 1$
\begin{align*}
   \left\| M^{(g)} \left(\lambda^0, f^0, X_{k_1}, X_{k_2}, \ldots,X_{k_g}\right) \right\|
        \leq \frac{g} {2}
             \left(\frac{16 \; d_{\max}(\lambda^0,f^0)} {d_{\min}^2(\lambda^0,f^0)} \right)^{g}
               \frac{\|X_{k_1}\|} {\sqrt{NT}} \,
               \frac{\|X_{k_2}\|} {\sqrt{NT}} \, \ldots \,
               \frac{\|X_{k_g}\|} {\sqrt{NT}} \; .
\end{align*}
Under the condition \eqref{StrongConRad} we therefore have the following bound on operator norm of
the remainder of the series expansion of $M_{\widehat\lambda} \left( \beta \right)$, for $G\geq 0$
\begin{align*}
        \bigg\| M_{\widehat\lambda} \left( \beta \right) \, - \, \sum_{g=0}^G \, \sum_{k_1=0}^K \,
                                       \ldots \, \sum_{k_g=0}^K \,
                                     \epsilon_{k_1} \, \ldots \, \epsilon_{k_g} \,
         & M^{(g)}\left(\lambda^0,\, f^0,\, X_{k_1},\, X_{k_2}, \ldots ,X_{k_g}\right)  \bigg\|
      \nonumber \\
         & \qquad \qquad\qquad\qquad\qquad \leq  \frac{(G+1)\, \alpha^{G+1}} {2 (1-\alpha)^2} \; .
\end{align*}

\end{itemize}
\end{lemma}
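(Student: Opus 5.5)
Lemma~\ref{lemma:expansion} is an instance of Kato's analytic perturbation theory for the total eigenprojection of a group of eigenvalues. We apply it to the $N\times N$ matrix $A(\epsilon) = \frac{1}{NT}(Y-\beta\cdot X)(Y-\beta\cdot X)'$.

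Write $A(\epsilon)=T^{(0)}+\sum_k \epsilon_k T^{(1)}_k/\!\sqrt{NT}\cdot\ldots$. More precisely, set $\frac{1}{\sqrt{NT}}(Y-\beta\cdot X)=\frac{\lambda^0f^{0\prime}}{\sqrt{NT}}+\sum_{k=0}^K\epsilon_k\frac{X_k}{\sqrt{NT}}$ as in \eqref{ExpIdea}. Then $A = T^{(0)} + \sum_k\epsilon_k {\cal T}^{(1)}_k/NT + \sum_{k_1k_2}\epsilon_{k_1}\epsilon_{k_2}{\cal T}^{(2)}_{k_1k_2}/NT$. The unperturbed matrix $T^{(0)}=\lambda^0f^{0\prime}f^0\lambda^{0\prime}/NT$ has eigenvalue $0$ with eigenprojector $M_{\lambda^0}$, and its other eigenvalues lie in $[d_{\min}^2,d_{\max}^2]$. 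Its reduced resolvent at $0$ is $-S^{(1)}$ up to sign convention, namely the pseudo-inverse $\lambda^0(\lambda^{0\prime}\lambda^0)^{-1}(f^{0\prime}f^0)^{-1}(\lambda^{0\prime}\lambda^0)^{-1}\lambda^{0\prime}$ times $NT$, absorbed into the $1/NT$ normalisation.

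First, take a circle $\Gamma$ around $0$ of radius $d_{\min}^2/2$. We show that under \eqref{exp_bound} we have $\|A(\epsilon)-T^{(0)}\|<$ the distance from $\Gamma$ to the spectrum of $T^{(0)}$. This uses $\|{\cal T}^{(1)}_k\|/NT\le 2d_{\max}\|X_k\|/\sqrt{NT}$ and $\|{\cal T}^{(2)}\|/NT\le\prod\|X_k\|/\sqrt{NT}$. It is exactly where the form of $r_0$ in \eqref{Defr0} comes from: the quadratic inequality $2d_{\max}x+x^2<d_{\min}^2/2$ is implied by $x<r_0$. Hence exactly $N-R^0$ eigenvalues stay inside $\Gamma$ and the rest stay outside. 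Since $R=R^0$, the $N-R^0$ eigenvalues inside $\Gamma$ are precisely the ones whose sum is ${\cal L}^0_{NT}(\beta)$ (using the $N\times N$ dual of \eqref{LSobjective}). Their eigenprojector is $M_{\widehat\lambda}(\beta)$, since $\widehat\lambda(\beta)$ spans the top $R^0$ eigenvectors.

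Second, use the Cauchy integrals $P(\epsilon)=-\frac1{2\pi i}\oint_\Gamma (A(\epsilon)-z)^{-1}dz$ and ${\cal L}^0_{NT}=\frac{-1}{2\pi i}\oint_\Gamma z\,{\rm Tr}(A(\epsilon)-z)^{-1}dz$. We expand the resolvent in a Neumann series $\sum_p (-1)^p R_0(z)[(A-T^{(0)})R_0(z)]^p$ with $R_0(z)=(T^{(0)}-z)^{-1}=-M_{\lambda^0}/z+\sum_{m\ge0}z^m (\text{pseudo-inverse})^{m+1}$. We then evaluate the residues at $z=0$. Collecting powers of $z$ gives exactly the products $S^{(m_1)}{\cal T}^{(\nu_1)}S^{(m_2)}\cdots$ with the constraints $\sum m_j=p$ for the projector and $\sum m_j=p-1$ for the trace (the extra $z$). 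The ${\cal T}^{(\nu)}$ with $\nu\in\{1,2\}$ come from splitting $A-T^{(0)}$ into first and second order parts. Symmetrising over the $\epsilon$-indices gives (i) and (ii). The vanishing of the $g=0,1$ terms follows from $M_{\lambda^0}\lambda^0=0$ and cyclicity of the trace. Absolute convergence of the double series is guaranteed by the resolvent bound on $\Gamma$.

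Third, for (iii) and (iv) we bound each term: $\|S^{(0)}\|=1$ and $\|S^{(m)}\|\le d_{\min}^{-2m}$ after normalisation. We count the compositions $(\nu_j)$ and $(m_j)$ via binomial coefficients, which are bounded by $4^g$-type factors and produce the constant $16d_{\max}/d_{\min}^2$. For traces, $|{\rm Tr}\,B|\le{\rm rank}(B)\|B\|$ with rank $\le R$, because each term with $p\ge1$ contains some $S^{(m)}$, $m\ge1$, of rank $R^0$. The remainder bounds then follow by summing the geometric series $\sum_{g>G} g\alpha^g\le (G+1)\alpha^{G+1}/(1-\alpha)^2$.

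The main obstacle is the careful combinatorial bookkeeping in the third step, namely getting a clean $g(16 d_{\max}/d_{\min}^2)^g$ bound. The counting of index compositions together with the mix of ${\cal T}^{(1)}$ (which carries a $d_{\max}$) and ${\cal T}^{(2)}$ terms must be organised so that everything is dominated by a single geometric factor. If needed, we would instead bound the coefficients directly from the contour integral: taking the supremum of the resolvent norm on a slightly smaller circle and applying Cauchy's estimate for multivariate power series yields the geometric bound with some constant.
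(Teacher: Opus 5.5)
Your proposal is correct and is essentially the paper's proof. For (i)--(ii) the paper quotes Kato's total-projection and eigenvalue-sum expansions around the $(N-R^0)$-fold zero eigenvalue of $\lambda^0 f^{0\prime}f^0\lambda^{0\prime}$, together with his convergence criterion (3.51), which reproduces $r_0$; you instead re-derive these from the contour integral and Neumann series. For (iii)--(iv) it uses exactly your term-by-term bounds: at most $2^p$ choices of $(\nu_j)$, $\binom{2p}{p}\le 4^p$ choices of $(m_j)$, $2p-g$ factors ${\cal T}^{(1)}$, and $|{\rm Tr}(B)|\le{\rm rank}(B)\|B\|$.

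Two small corrections:
\begin{itemize}
\item[(a)] A factor $S^{(m)}$ with $m\ge1$ is guaranteed only when $p\ge2$; for $p=1$ the trace terms have $\sum_j m_j=0$. This is why the rank argument needs $g\ge3$, since $\nu_j\le2$ forces $p\ge\lceil g/2\rceil\ge2$.
\item[(b)] Do not fall back on a Cauchy estimate for (iii). Bounding the trace through the resolvent norm alone brings back a factor of order $N$, which is precisely the defect of Kato's own remainder bound that the paper avoids via the rank argument.
\end{itemize}
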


\begin{proof}[\bf \underline{Proof of Theorem \ref{th:expansion} (Quadratic Approximation of ${\cal L}_{NT}^0(\beta)$)}]
The $R^0$ non-zero eigenvalues of the matrix
$\lambda^{0\prime} f^0 f^{0\prime} \lambda^0/NT$
are identical to the eigenvalues of the $R^0 \times R^0$ matrix
$(f^0 f^{0\prime}/T)^{-1/2}
(\lambda^0 \lambda^{0\prime}/N) (f^0 f^{0\prime}/T)^{-1/2}$,
and Assumption~\ref{ass:SF} guarantees that these eigenvalues,
including $d_{\max}(\lambda^0,f^0)$ and $d_{\min}(\lambda^0,f^0)$
converge to positive constants in probability. Therefore,
also $r_0(\lambda^0, f^0)$ converges to a positive constant in probability.

Assumptions \ref{ass:SF} and \ref{ass:SN}
furthermore imply that in the limit
$N,T \rightarrow \infty$ with $N/T \rightarrow \kappa^2$, $0<\kappa<\infty$, we have
\begin{align}
\frac{\| \lambda^0 \|} {\sqrt{N}} &= \mathcal{O}_P(1) \; , & \frac{\| f^0 \|%
} {\sqrt{T}} &= \mathcal{O}_P(1) \; , &
\left\| \left(\frac{ \lambda^{0\prime} \lambda^0 } {N} \right)^{-1} \right\|
&= \mathcal{O}_P(1) \; , & \left\| \left(\frac{ f^{0\prime} f^0} {T}
\right)^{-1} \right\| &= \mathcal{O}_P(1) \; , &   \notag \\
\frac{\| X_k \|} {\sqrt{NT}} &= \mathcal{O}_P(1) \; , & \frac{\| e \|} {%
\sqrt{NT}} &= {\cal O}_P\left( N^{-1/2} \right) \; . &
\end{align}
Thus, for $\left\| \beta -\beta ^{0}\right\| \leq c_{NT}$,  $c_{NT}=o(1)$,
we have $\alpha \rightarrow 0$ as $N,T \rightarrow \infty$, i.e. the
condition \eqref{StrongConRad} in part (iii) of Lemma~\ref{lemma:expansion} is asymptotically satisfied, and by applying the Lemma we find
\begin{align}
\frac 1 {NT} (\epsilon_0)^{g-r}  L^{(g)} \left(\lambda^0, f^0,
X_{k_1},\ldots, X_{k_r}, X_0, \ldots, X_{0}\right)
 &= \mathcal{O}_P\left( \left( \frac{\|e\|}{\sqrt{NT}} \right)^{g-r} \right)
  = \mathcal{O}_P\left( N^{-\, \frac{g-r} 2} \right) ,
  \label{boundExpTerms}
\end{align}
where we used $\epsilon_0 X_0 = e$ and the linearity of
$L^{(g)}
\left(\lambda^0,\, f^0,\, X_{k_1},\, X_{k_2}, \ldots
,X_{k_g}\right)$
in the arguments $X_k$. Truncating the expansion of
${\cal L}^0_{NT}(\beta)$ at order $G=3$ and applying the corresponding
result in Lemma~\ref{lemma:expansion}(iii) we obtain
\begin{align}
 {\cal L}^0_{NT}(\beta) =& \frac 1 {NT}
\sum_{k_1,k_2=0}^K \epsilon_{k_1}
 \epsilon_{k_2}  L^{(2)} \left(\lambda^0, f^0,
X_{k_1}, X_{k_2} \right)
 \nonumber \\
  & \qquad + \frac 1 {NT}
\sum_{k_1,k_2,k_3=0}^K \epsilon_{k_1}
 \epsilon_{k_2}  \epsilon_{k_3}  L^{(3)} \left(\lambda^0, f^0,
X_{k_1}, X_{k_2}, X_{k_3}\right)
   + \mathcal{O}_P\left( \alpha^{4} \right)
   \nonumber \\
      =& {\cal L}_{NT}^{0}(\beta^0)
                          \, - \, \frac 2 {\sqrt{NT}}  \, \left(\beta-\beta^0 \right)'
                          \, \left( C^{(1)} + C^{(2)} \right)
                \nonumber \\ & \qquad\qquad\quad
                    + \left(\beta-\beta^0 \right)' \, W \, \left(\beta-\beta^0 \right)
                          + {\cal L}_{NT}^{0,{\rm rem}}(\beta) \; ,
\end{align}
where, using \eqref{boundExpTerms} we find
\begin{align}
   {\cal L}_{NT}^{0,{\rm rem}}(\beta)
   =&  \frac 3 {NT}
\sum_{k_1,k_2=1}^K \epsilon_{k_1}
 \epsilon_{k_2}  \epsilon_{0}  L^{(3)} \left(\lambda^0, f^0,
X_{k_1}, X_{k_2}, X_{0}\right)
 \nonumber \\ & \quad
  +  \frac 1 {NT}
\sum_{k_1,k_2,k_3=1}^K \epsilon_{k_1}
 \epsilon_{k_2}  \epsilon_{k_3}  L^{(3)} \left(\lambda^0, f^0,
X_{k_1}, X_{k_2}, X_{k_3}\right)
 \nonumber \\ & \quad \quad
   + \mathcal{O}_P\left[ \left( \sum_{k=1}^{K}\left| \beta^0_{k} - \beta_{k}
\right| \frac{\| X_{k} \|} {\sqrt{NT}} + \frac{\|e\|} {\sqrt{NT}}
\right)^{4} \right]
   - \mathcal{O}_P\left[ \left(  \frac{\|e\|} {\sqrt{NT}}
\right)^{4} \right]
  \nonumber \\
    =& {\cal O}_P\left( \|\beta-\beta^0\|^2 N^{-1/2} \right)
      +{\cal O}_P\left( \|\beta-\beta^0\|^3 \right)
      +{\cal O}_P\left( \|\beta-\beta^0\| N^{-3/2} \right)
    \nonumber \\ & \;
      +{\cal O}_P\left( \|\beta-\beta^0\|^2 N^{-1} \right)
      +{\cal O}_P\left( \|\beta-\beta^0\|^3 N^{-1/2} \right)
      +{\cal O}_P\left( \|\beta-\beta^0\|^4 \right) \; .
  \label{bound_remainder}
\end{align}
Here
$\mathcal{O}_P\left[ \left(  \frac{\|e\|} {\sqrt{NT}}
\right)^{4} \right]$ is not just some term of that order, but exactly
the term of that order contained in
${\cal O}_P(\alpha^4) =
\mathcal{O}_P\left[ \left( \sum_{k=1}^{K}\left| \beta^0_{k} - \beta_{k}
\right| \frac{\| X_{k} \|} {\sqrt{NT}} + \frac{\|e\|} {\sqrt{NT}}
\right)^{4} \right]$. This term is not present in ${\cal L}_{NT}^{0,{\rm rem}}(\beta)$
since it is already contained in ${\cal L}_{NT}^{0}(\beta^0)$.\footnote{
  Alternatively, we could have truncated the expansion at order $G=4$.
  Then, the term $\mathcal{O}_P\left[ \left(  \frac{\|e\|} {\sqrt{NT}}
\right)^{4} \right]$ would be more explicit, namely it would equal
  $\frac 1 {NT}
  \epsilon_{0}^4 L^{(4)} \left(\lambda^0,f^0,X_{0},X_{0},X_{0},X_{0}\right)$,
  which is clearly contained in ${\cal L}_{NT}^{0}(\beta^0)$.
}
Equation \eqref{bound_remainder} shows that the remainder satisfies the bound
stated in the theorem, which concludes the proof.
\end{proof}

\subsection{Expansion of Other Quantities}

\begin{lemma}
   \label{lemma:ProExp}
   Define the pseudo-inverses
    $(\lambda^0 f^{0 \prime} )^\dagger \equiv
   f^0 \, (f^{0\prime}f^0)^{-1} \, (\lambda^{0\prime}\lambda^0)^{-1} \lambda^{0\prime}$
   and $(f^0 \lambda^{0 \prime} )^\dagger \equiv
   \lambda^0 \, (\lambda^{0\prime}\lambda^0)^{-1} \, (f^{0\prime}f^0)^{-1} \, f^{0\prime}$.
   Under the assumptions of Theorem \ref{th:expansion} we have
   \begin{align*}
     M_{\widehat \lambda}(\beta) &= M_{\lambda^0} + M_{\widehat \lambda,e}^{(1)}
                                     + M_{\widehat \lambda,e}^{(2)}
                                     - \sum_{k=1}^K \left( \beta_k - \beta^0_k \right)  M_{\widehat \lambda,X,k}^{(1)}
                                     + M_{\widehat \lambda}^{({\rm rem})}(\beta) \; ,
    \nonumber \\
     M_{\widehat f}(\beta) &= M_{f^0} + M_{\widehat f,e}^{(1)}
                                     + M_{\widehat f,e}^{(2)}
                                     - \sum_{k=1}^K \left( \beta_k - \beta^0_k \right)  M_{\widehat f,X,k}^{(1)}
                                     + M_{\widehat f}^{({\rm rem})}(\beta) \; ,
   \end{align*}
  where the expansion coefficients in the expansion
  of $M_{\widehat \lambda}(\beta)$ are $N\times N$ matrices given by
  \begin{align*}
     M^{(1)}_{\widehat \lambda,e} &= - \, M_{\lambda^0} \, e \,
      (\lambda^0 f^{0 \prime} )^\dagger
       \, - \, (f^0 \lambda^{0 \prime} )^\dagger
              \, e' \, M_{\lambda^0}  \; ,
    \nonumber \\
   M^{(1)}_{\widehat \lambda,X,k} &= - \, M_{\lambda^0} \, X_k  \,
      (\lambda^0 f^{0 \prime} )^\dagger
       \, - \, (f^0 \lambda^{0 \prime} )^\dagger
              \,  X'_k  \, M_{\lambda^0}  \; ,
    \nonumber \\
   M^{(2)}_{\widehat \lambda,e} &=
        M_{\lambda^0} \, e \, (\lambda^0 f^{0 \prime} )^\dagger
             \, e \, (\lambda^0 f^{0 \prime} )^\dagger
       +(f^0 \lambda^{0 \prime} )^\dagger
             \, e' \, (f^0 \lambda^{0 \prime} )^\dagger
             \, e' \, M_{\lambda^0}
       \nonumber \\ & \qquad
          - M_{\lambda^0} \, e \, M_{f^0} \, e' \,
(f^0 \lambda^{0 \prime} )^\dagger \, (\lambda^0 f^{0 \prime} )^\dagger
-(f^0 \lambda^{0 \prime} )^\dagger \, (\lambda^0 f^{0 \prime} )^\dagger
   \, e \, M_{f^0} \, e' \, M_{\lambda^0}
       \nonumber \\ & \qquad
          - M_{\lambda^0} \, e \,
      (\lambda^0 f^{0 \prime} )^\dagger \, (f^0 \lambda^{0 \prime} )^\dagger
         \, e' \, M_{\lambda^0}
        + (f^0 \lambda^{0 \prime} )^\dagger
            \, e' \, M_{\lambda^0} \, e \,
              (\lambda^0 f^{0 \prime} )^\dagger \, ,
  \end{align*}
  and analogously we have $T\times T$ matrices
  \begin{align*}
   M^{(1)}_{\widehat f,e} &= \, - \, M_{f^0} \, e' \,
      (f^0 \lambda^{0 \prime} )^\dagger
       \, - \, (\lambda^0 f^{0 \prime} )^\dagger
              \, e \, M_{f^0}   \; ,
    \nonumber \\
  M^{(1)}_{\widehat f,X,k} &=
       \, - \, M_{f^0} \,  X'_k  \,
      (f^0 \lambda^{0 \prime} )^\dagger
       \, - \, (\lambda^0 f^{0 \prime} )^\dagger
              \,  X_k \,M_{f^0}  \; ,
    \nonumber \\
  M^{(2)}_{\widehat f,e} &=
       M_{f^0} \, e' \, (f^0 \lambda^{0 \prime} )^\dagger
             \, e' \, (f^0 \lambda^{0 \prime} )^\dagger
       +(\lambda^0 f^{0 \prime} )^\dagger
             \, e \, (\lambda^0 f^{0 \prime} )^\dagger
             \, e \, M_{f^0}
       \nonumber \\ & \qquad
          - M_{f^0} \, e' \, M_{\lambda^0} \, e \,
(\lambda^0 f^{0 \prime} )^\dagger \, (f^0 \lambda^{0 \prime} )^\dagger
-(\lambda^0 f^{0 \prime} )^\dagger \, (f^0 \lambda^{0 \prime} )^\dagger
   \, e' \, M_{\lambda^0} \, e \, M_{f^0}
       \nonumber \\ & \qquad
          - M_{f^0} \, e' \,
      (f^0 \lambda^{0 \prime} )^\dagger \, (\lambda^0 f^{0 \prime} )^\dagger
         \, e \, M_{f^0}
        + (\lambda^0 f^{0 \prime} )^\dagger
            \, e \, M_{f^0} \, e' \,
              (f^0 \lambda^{0 \prime} )^\dagger \, .
  \end{align*}
  Finally, the remainder terms of the expansions satisfy
  for any sequence
$c_{NT}\rightarrow 0$
\begin{align*}
     \sup_{\{\beta :\left\| \beta -\beta^{0} \right\| \leq c_{NT}\}}
        \frac{\left\| M_{\widehat \lambda}^{({\rm rem})}(\beta) \right\|}
        { \|\beta - \beta^0\|^2 + N^{-1/2}  \|\beta - \beta^0\| \,  \,  + N^{-3/2} }
            &= {\cal O}_P\left(1\right) \, ,
     \nonumber \\
     \sup_{\{\beta :\left\| \beta -\beta^{0} \right\| \leq c_{NT}\}}
        \frac{\left\| M_{\widehat f}^{({\rm rem})}(\beta) \right\|}
        { \|\beta - \beta^0\|^2 + N^{-1/2} \, \|\beta - \beta^0\| \,  \,  + N^{-3/2}}
            &= {\cal O}_P\left(1\right) \, .
\end{align*}
\end{lemma}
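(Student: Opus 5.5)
The plan is to obtain this lemma as a consequence of Lemma~\ref{lemma:expansion}(ii) and (iv), exactly as Theorem~\ref{th:expansion} was obtained from parts (i) and (iii). The argument for $M_{\widehat \lambda}(\beta)$ is given in detail. The argument for $M_{\widehat f}(\beta)$ then follows from the symmetry $N \leftrightarrow T$, $\lambda \leftrightarrow f$, $e \leftrightarrow e'$, $X_k \leftrightarrow X_k'$ noted in the main text. That symmetry swaps the two pseudo-inverses $(\lambda^0 f^{0\prime})^\dagger$ and $(f^0\lambda^{0\prime})^\dagger$.

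\textbf{Step 1 (applicability of the expansion).} As in the proof of Theorem~\ref{th:expansion}, Assumptions~\ref{ass:SF} and \ref{ass:SN} imply that $d_{\max}$ and $d_{\min}$ converge to positive constants. They also give $\|X_k\|/\sqrt{NT}={\cal O}_P(1)$ and $\|e\|/\sqrt{NT}={\cal O}_P(N^{-1/2})$. Hence, uniformly over $\|\beta-\beta^0\|\le c_{NT}\to 0$, we have $\alpha = {\cal O}_P(\|\beta-\beta^0\| + N^{-1/2}) = o_P(1)$. So condition \eqref{StrongConRad} holds wpa1 uniformly in this neighborhood, and the power series of Lemma~\ref{lemma:expansion}(ii) converges there.

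\textbf{Step 2 (explicit low-order coefficients).} Next I would evaluate $M^{(1)}$ and $M^{(2)}$ from the formula for $\widetilde M^{(g)}$.

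For $g=1$ only $p=1$ contributes, with $\nu_1=1$ and $m_1+m_2=1$. This gives $S^{(1)}{\cal T}^{(1)}_k S^{(0)} + S^{(0)}{\cal T}^{(1)}_k S^{(1)}$. Expanding ${\cal T}^{(1)}_k=\lambda^0 f^{0\prime}X_k'+X_k f^0\lambda^{0\prime}$, the terms containing $M_{\lambda^0}\lambda^0$ vanish. Since $S^{(1)}\lambda^0 f^{0\prime} = (f^0\lambda^{0\prime})^\dagger$ and its transpose is $(\lambda^0 f^{0\prime})^\dagger$, the result is $-M_{\lambda^0}X_k(\lambda^0f^{0\prime})^\dagger - (f^0\lambda^{0\prime})^\dagger X_k' M_{\lambda^0}$. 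Multiplying by $\epsilon_k=\beta^0_k-\beta_k$ yields the $-\sum_k(\beta_k-\beta_k^0)M^{(1)}_{\widehat\lambda,X,k}$ term. Setting $k=0$ and using $\epsilon_0 X_0=e$ yields $M^{(1)}_{\widehat\lambda,e}$.

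For $g=2$ with $k_1=k_2=0$ there are two contributions:
\begin{itemize}
\item $p=1$, with ${\cal T}^{(2)}=ee'$ and $m_1+m_2=1$;
\item $p=2$, with two ${\cal T}^{(1)}$ factors and $m_1+m_2+m_3=2$.
\end{itemize}
Collecting the surviving products gives the six terms of $M^{(2)}_{\widehat\lambda,e}$. In this bookkeeping I use that $\lambda^{0\prime}M_{\lambda^0}=0$, and that $P_{f^0}$ arises as $f^0(f^{0\prime}f^0)^{-1}f^{0\prime}$, so $e e' - e P_{f^0} e' = e M_{f^0} e'$. This sign and term bookkeeping is the most tedious step, but it is purely mechanical.

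\textbf{Step 3 (remainder).} Everything not kept is placed in $M^{(\rm rem)}_{\widehat\lambda}(\beta)$. This consists of three pieces:
\begin{itemize}
\item the cross terms $\epsilon_0\epsilon_k M^{(2)}(X_0,X_k)$;
\item the pure-$X$ terms $\epsilon_{k_1}\epsilon_{k_2}M^{(2)}(X_{k_1},X_{k_2})$;
\item the tail $g\ge 3$.
\end{itemize}
By Lemma~\ref{lemma:expansion}(iv), the cross terms have norm ${\cal O}_P(N^{-1/2}\|\beta-\beta^0\|)$. The pure-$X$ terms are ${\cal O}_P(\|\beta-\beta^0\|^2)$. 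The tail is bounded by $\frac{3\alpha^3}{2(1-\alpha)^2} = {\cal O}_P\bigl((\|\beta-\beta^0\|+N^{-1/2})^3\bigr)$, which is at most ${\cal O}_P(\|\beta-\beta^0\|^2 + N^{-1/2}\|\beta-\beta^0\| + N^{-3/2})$ on the shrinking neighborhood. All of these bounds are uniform in $\beta$, because the constants depend only on $d_{\max}$, $d_{\min}$, $\|X_k\|$ and $\|e\|$. This gives the stated sup bound.
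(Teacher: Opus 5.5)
Your proposal is correct and follows the same route as the paper. The paper's proof is only a short sketch: it reads off the $g\le 2$ coefficients from Lemma~\ref{lemma:expansion}(ii), bounds the remainder from the coefficient bound in part (iv) exactly as in the proof of Theorem~\ref{th:expansion}, and obtains $M_{\widehat f}(\beta)$ by the $N\leftrightarrow T$ symmetry. Your explicit second-order bookkeeping also checks out (using $S^{(1)}=(f^0\lambda^{0\prime})^\dagger(\lambda^0 f^{0\prime})^\dagger$ and the cancellation $ee'-eP_{f^0}e'=eM_{f^0}e'$), and it supplies the computation the paper leaves implicit.
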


\begin{lemma}
   \label{lemma:ResExp}
   Let 
    $(\lambda^0 f^{0 \prime} )^\dagger$
    and  $(f^0 \lambda^{0 \prime} )^\dagger$
    as defined in Lemma~\ref{lemma:ProExp} above.
   Under the assumptions of Theorem \ref{th:expansion} we have
   \begin{align*}
      \widehat e(\beta) &=  M_{\lambda^0} \, e \, M_{f^0}
            + \widehat e^{(1)}_e
             + \widehat e^{(2)}_e
             - \sum_{k=1}^K \left( \beta_k - \beta^0_k \right)
             \left( \widehat e^{(1)}_{X,k} + \widehat e^{(2)}_{X,k} \right)
            + \widehat e^{({\rm rem})}(\beta) \; ,
   \end{align*}
  where the $N \times T$ matrix valued expansion coefficients read
  \begin{align*}
   \widehat e^{(1)}_{X,k} &= M_{\lambda^0} \, X_k \, M_{f^0} \; ,
             \nonumber \\
\widehat e^{(2)}_{X,k} &= - M_{\lambda^0}  X_k  M_{f^0}  e'
   (f^0 \lambda^{0 \prime} )^\dagger
           - M_{\lambda^0}  e  M_{f^0}  X_k'
   (f^0 \lambda^{0 \prime} )^\dagger
  - (f^0 \lambda^{0 \prime} )^\dagger
    X_k'  M_{\lambda^0}  e  M_{f^0}
             \nonumber \\ & \quad
  - (f^0 \lambda^{0 \prime} )^\dagger
    e'  M_{\lambda^0}  X_k  M_{f^0}
  - M_{\lambda^0}  X_k
      (\lambda^0 f^{0 \prime} )^\dagger
               e  M_{f^0}
  - M_{\lambda^0}  e
      (\lambda^0 f^{0 \prime} )^\dagger
              \, X_k \, M_{f^0} \; ,
             \nonumber \\
\widehat e^{(1)}_e &= - M_{\lambda^0} \, e \, M_{f^0} \, e' \,
   (f^0 \lambda^{0 \prime} )^\dagger
  - (f^0 \lambda^{0 \prime} )^\dagger
   \, e' \, M_{\lambda^0} \, e \, M_{f^0}
  - M_{\lambda^0} \, e \,
      (\lambda^0 f^{0 \prime} )^\dagger
              \, e \, M_{f^0} \; ,
             \nonumber \\
\widehat e^{(2)}_e &=
         M_{\lambda^0} e  M_{f^0} \, e' \, (f^0 \lambda^{0 \prime} )^\dagger
             \, e' \, (f^0 \lambda^{0 \prime} )^\dagger
          - M_{\lambda^0} e M_{f^0} \, e' \, M_{\lambda^0} \, e \,
(\lambda^0 f^{0 \prime} )^\dagger \, (f^0 \lambda^{0 \prime} )^\dagger
       \nonumber \\ & \quad
          - M_{\lambda^0} e M_{f^0} \, e' \,
      (f^0 \lambda^{0 \prime} )^\dagger \, (\lambda^0 f^{0 \prime} )^\dagger
         \, e \, M_{f^0}
       + \, M_{\lambda^0} \, e \,
      (\lambda^0 f^{0 \prime} )^\dagger e
          M_{f^0} \, e' \,
      (f^0 \lambda^{0 \prime} )^\dagger
       \nonumber \\ & \quad
         + \, (f^0 \lambda^{0 \prime} )^\dagger
              \, e' \, M_{\lambda^0}  e M_{f^0} \, e' \,
      (f^0 \lambda^{0 \prime} )^\dagger
       + \, M_{\lambda^0} \, e \,
      (\lambda^0 f^{0 \prime} )^\dagger e
         (\lambda^0 f^{0 \prime} )^\dagger
              \, e \, M_{f^0}
       \nonumber \\ & \quad
         + \, (f^0 \lambda^{0 \prime} )^\dagger
              \, e' \, M_{\lambda^0}  e (\lambda^0 f^{0 \prime} )^\dagger
              \, e \, M_{f^0}
       +(f^0 \lambda^{0 \prime} )^\dagger
             \, e' \, (f^0 \lambda^{0 \prime} )^\dagger
             \, e' \, M_{\lambda^0} e M_{f^0}
       \nonumber \\ & \quad
-(f^0 \lambda^{0 \prime} )^\dagger \, (\lambda^0 f^{0 \prime} )^\dagger
   \, e \, M_{f^0} \, e' \, M_{\lambda^0} e M_{f^0}
          - M_{\lambda^0} \, e \,
      (\lambda^0 f^{0 \prime} )^\dagger \, (f^0 \lambda^{0 \prime} )^\dagger
         \, e' \, M_{\lambda^0} e M_{f^0} \; ,
  \end{align*}
  and the remainder term satisfies
  for any sequence
$c_{NT}\rightarrow 0$
\begin{align*}
     \sup_{\{\beta :\left\| \beta -\beta^{0} \right\| \leq c_{NT}\}}
       \frac{ \left\| \widehat e^{({\rm rem})}(\beta) \right\| }
         { N \|\beta - \beta^0\|^2 + \|\beta - \beta^0\| \,  \,  + N^{-1}}
           &= {\cal O}_P\left(1\right)  \; .
\end{align*}
\end{lemma}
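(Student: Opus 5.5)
We prove Lemma~\ref{lemma:ResExp} by writing the residual matrix through the projector expansions of Lemma~\ref{lemma:ProExp}. For fixed $\beta$, the principal components characterization gives
\begin{align*}
  \widehat e(\beta) = M_{\widehat\lambda}(\beta)\,\big(Y-\beta\cdot X\big)\,M_{\widehat f}(\beta)
  = M_{\widehat\lambda}(\beta)\,\big(\lambda^0 f^{0\prime} + e - (\beta-\beta^0)\cdot X\big)\,M_{\widehat f}(\beta).
\end{align*}
This holds because $\widehat\lambda(\beta)\widehat f'(\beta)$ is the best rank-$R^0$ approximation of $Y-\beta\cdot X$, so the residual is annihilated by both $P_{\widehat\lambda}(\beta)$ from the left and $P_{\widehat f}(\beta)$ from the right. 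We then insert the expansions of $M_{\widehat\lambda}(\beta)$ and $M_{\widehat f}(\beta)$ from Lemma~\ref{lemma:ProExp}, or the full power series of Lemma~\ref{lemma:expansion}(ii) together with its analogue for $M_{\widehat f}$, which follows by the $N\leftrightarrow T$ symmetry. Finally we multiply out and collect terms by their order in $e$ and in $\beta-\beta^0$.

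The algebra uses the identities $M_{\lambda^0}\lambda^0=0$, $f^{0\prime}M_{f^0}=0$, $(\lambda^0 f^{0\prime})^\dagger\lambda^0 = f^0(f^{0\prime}f^0)^{-1}$, and $\lambda^0 f^{0\prime}(\lambda^0f^{0\prime})^\dagger = P_{\lambda^0}$.

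\textbf{Zeroth order.} The zeroth order term is $M_{\lambda^0}(\lambda^0f^{0\prime}+e)M_{f^0} = M_{\lambda^0}eM_{f^0}$.

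\textbf{Terms linear in $\beta-\beta^0$ and of order $e^0$.} These come from $-M_{\lambda^0}X_kM_{f^0}$, giving $\widehat e^{(1)}_{X,k}$, plus cross terms in which a first-order projector coefficient hits $\lambda^0f^{0\prime}$. For example,
\begin{align*}
  M_{\lambda^0}\,\lambda^0f^{0\prime}\,M^{(1)}_{\widehat f,X,k} = 0,
\end{align*}
while $M^{(1)}_{\widehat\lambda,X,k}\,\lambda^0f^{0\prime}\,M_{f^0}=0$ as well. One must check that all such $\lambda^0f^{0\prime}$ cross terms cancel. This cancellation is exactly why no $O(1)$ coefficient other than $M_{\lambda^0}X_kM_{f^0}$ survives.

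\textbf{Terms first order in $e$.} The terms with $M_{\lambda^0}$ on the left vanish. The surviving pieces come from the second-order projector coefficients acting on $\lambda^0f^{0\prime}$, and from first-order coefficients acting on $e$. In particular, $M^{(2)}$ terms sandwiching $\lambda^0f^{0\prime}$ produce pieces such as $-(f^0\lambda^{0\prime})^\dagger e' M_{\lambda^0} e M_{f^0}$. These are quadratic in $e$, so $\widehat e^{(1)}_e$ is really the $e^2$ term, and $\widehat e^{(2)}_e$ is the $e^3$ term.

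\textbf{Bookkeeping and the main obstacle.} For clean bookkeeping I would read the coefficients off the joint power series of Lemma~\ref{lemma:expansion} in $\epsilon_0,\ldots,\epsilon_K$. The coefficient of $\epsilon_{k_1}\cdots\epsilon_{k_g}$ in $\widehat e$ is a finite sum of products $M^{(a)}\,Z\,M_f^{(b)}$, where $Z\in\{\lambda^0f^{0\prime},X_0,X_k\}$. Only orders $(e^0,\beta^1)$, $(e^1,\beta^1)$, $(e^2,\beta^0)$ and $(e^3,\beta^0)$ are kept explicitly. Verifying that the long list of cancellations produces exactly the stated $\widehat e^{(2)}_{X,k}$ and $\widehat e^{(2)}_e$ is the tedious part. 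I expect the explicit algebra of $\widehat e^{(2)}_e$ (ten terms) to be the main obstacle, and would organize it by the location of $\lambda^0f^{0\prime}$ in each product.

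\textbf{Remainder bound.} Lemma~\ref{lemma:expansion}(iv) gives operator-norm bounds of the form $(g/2)\,b^g\prod\|X_{k_j}\|/\sqrt{NT}$ for the projector coefficients. Multiplied by $\|\lambda^0f^{0\prime}\|=O_P(\sqrt{NT})$, $\|e\|=O_P(\sqrt N)$ and $\|X_k\|=O_P(\sqrt{NT})$, a term of order $e^p(\beta-\beta^0)^q$ in $\widehat e$ has norm
\begin{align*}
  O_P\big(\sqrt{NT}\,N^{-p/2}\,\|\beta-\beta^0\|^q\big).
\end{align*}
The dropped terms are:
\begin{itemize}
  \item $q\ge2$, of size $O_P(N\|\beta-\beta^0\|^2)$;
  \item $q=1$ with $p\ge2$, of size $O_P(\|\beta-\beta^0\|)$;
  \item $q=0$ with $p\ge4$, of size $O_P(N^{-1})$.
\end{itemize}
Summing the geometric tails uniformly over $\|\beta-\beta^0\|\le c_{NT}$ uses that $\alpha\to0$, as in the proof of Theorem~\ref{th:expansion}. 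This gives the stated remainder rate $N\|\beta-\beta^0\|^2+\|\beta-\beta^0\|+N^{-1}$.
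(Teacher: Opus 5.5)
Your proposal follows the paper's own proof. The paper also writes $\widehat e(\beta)=M_{\widehat\lambda}(\beta)\,[\lambda^0f^{0\prime}+e-(\beta-\beta^0)\cdot X]\,M_{\widehat f}(\beta)$, inserts the projector expansions of Lemma~\ref{lemma:expansion} (and their $N\leftrightarrow T$ analogue), and calls the collection of terms and the remainder bound straightforward. Your count of the dropped orders at size $O_P(N^{1-p/2}\|\beta-\beta^0\|^q)$, via the full series and Lemma~\ref{lemma:expansion}(iv), correctly fills in that step.

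One local correction to your bookkeeping paragraph. The products $M^{(2)}\lambda^0f^{0\prime}M_{f^0}$ and $M_{\lambda^0}\lambda^0f^{0\prime}M^{(2)}$ vanish identically, so second-order projector coefficients first enter only at order $e^3$. The terms in $\widehat e^{(1)}_e$ come instead from three sources:
\begin{itemize}
  \item $M^{(1)}_{\widehat\lambda,e}\,e\,M_{f^0}$;
  \item $M_{\lambda^0}\,e\,M^{(1)}_{\widehat f,e}$;
  \item $M^{(1)}_{\widehat\lambda,e}\,\lambda^0f^{0\prime}\,M^{(1)}_{\widehat f,e}=M_{\lambda^0}e(\lambda^0f^{0\prime})^\dagger eM_{f^0}$.
\end{itemize}
The third cancels one of the two copies of $-M_{\lambda^0}e(\lambda^0f^{0\prime})^\dagger eM_{f^0}$ produced by the first two.

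If you instead use only the truncated expansions of Lemma~\ref{lemma:ProExp}, exploit $f^{0\prime}M_{\widehat f}=f^{0\prime}(M_{\widehat f}-M_{f^0})$, and likewise for $M_{\widehat\lambda}\lambda^0$. Otherwise the crude bound $\|M^{({\rm rem})}_{\widehat\lambda}\|\,\|\lambda^0f^{0\prime}\|$ is only $O_P(N\|\beta-\beta^0\|^2+N^{1/2}\|\beta-\beta^0\|+N^{-1/2})$, which is weaker than the stated remainder.
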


\section{Details for $N^{3/4}$-Convergence Rate of $\widehat \beta_R$}
\label{app:MoreConvergenceRate}

This section extends the discussion in Section~\ref{sec:ConvergenceRate} of the main paper.
We provide the high-level Assumption~\ref{ass:HL1} under which 
$N^{3/4} \left(\widehat \beta_{R} - \beta^0\right)
                   = {\cal O}_P(1)$ 
can be shown, see Theorem~\ref{th:ConvergenceRate} below.      
Lemma~\ref{lemma:JustifyHL1} then provides the connection between
our main text assumptions and Assumption~\ref{ass:HL1}. 
The proofs are provided in Section~\ref{app:Proofs-Supp} below.
Combining Theorem~\ref{th:ConvergenceRate} and
Lemma~\ref{lemma:JustifyHL1} yields 
Theorem~\ref{th:ConvergenceRate2} in the main text.

We first note that equation \eqref{LSobjective} implies that
\begin{align}
   {\cal L}_{NT}^R(\beta)
      &= {\cal L}_{NT}^0(\beta)
         - \frac 1 {NT}  \; \sum_{r=R^0+1}^{R}
            \mu_r\left[ \left(Y - \beta \cdot X \right)'
                     \left(Y - \beta \cdot X \right) \right]
   \nonumber\\
       &= {\cal L}_{NT}^0(\beta)
      - \frac 1 {NT}  \; \sum_{r=1}^{R-R^0}
            \mu_r\left[ \widehat e'(\beta) \widehat e(\beta) \right] .
   \label{LRL0}
\end{align}
The extra term $\frac 1 {NT}  \sum_{r=R^0+1}^{R} \mu_r\left[ \left(Y - \beta \cdot X \right)'
\left(Y - \beta \cdot X \right) \right]$ is due to overfitting on the extra factors.
In the second line
of \eqref{LRL0} we used that $\widehat e'(\beta) \widehat e(\beta)$ is the residual of
$\left(Y - \beta \cdot X \right)' \left(Y - \beta \cdot X \right)$
after subtracting the first $R^0$ principal components, which implies
that the eigenvalues of these two matrices are the same,
except from the $R^0$ largest ones which are missing in
$\widehat e'(\beta) \widehat e(\beta)$.
The decomposition in equation \eqref{LRL0} together with the expansion result for  $\widehat e(\beta)$
in Lemma~\ref{lemma:ResExp} give rise to the following Lemma.

\begin{lemma}
   \label{lemma:expansion2a}
   Under Assumption~\ref{ass:SF} and \ref{ass:SN}
   and for $R>R^0$ we have
   $$ {\cal L}_{NT}^R(\beta)
             = {\cal L}_{NT}^0(\beta)
      - \frac 1 {NT}  \; \sum_{r=1}^{R-R^0}
            \mu_r\left[ A(\beta) \right]
                          + {\cal L}_{NT}^{R,{\rm rem},1}(\beta),$$
          where
          $A(\beta) = M_{f^0}\left[ e  -  \Delta \beta \cdot X \right]'
                  M_{\lambda^0}  \left[ e  - \Delta \beta \cdot X \right]  M_{f^0}$,
           with $\Delta \beta = \beta - \beta^0$,       
          and for any constant $c>0$ we have
          \begin{align*}
     \sup_{\{\beta : \sqrt{N} \left\| \beta -\beta^{0} \right\| \leq c\}} \frac{
       \left| {\cal L}_{NT}^{R,{\rm rem},1}(\beta) \right| } { \sqrt{N} + \sqrt{NT} \, \left\| \beta -\beta^{0} \right\| } = {\cal O}_{p}\left( \frac 1 {NT} \right) .
          \end{align*}
\end{lemma}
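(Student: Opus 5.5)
Starting point: equation \eqref{LRL0}, ${\cal L}^R_{NT}(\beta)={\cal L}^0_{NT}(\beta)-\frac1{NT}\sum_{r=1}^{R-R^0}\mu_r[\widehat e'(\beta)\widehat e(\beta)]$. The remainder ${\cal L}^{R,{\rm rem},1}_{NT}(\beta)$ is then exactly $\frac1{NT}\sum_{r=1}^{R-R^0}\bigl(\mu_r[A(\beta)]-\mu_r[\widehat e'(\beta)\widehat e(\beta)]\bigr)$. So the task is to bound the difference of the leading $R-R^0$ eigenvalues of $\widehat e'\widehat e$ and of $A(\beta)=D(\beta)'D(\beta)$, where $D(\beta):=M_{\lambda^0}(e-\Delta\beta\cdot X)M_{f^0}$.

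\textbf{Step 1: expand the residual.} On the set $\sqrt N\|\Delta\beta\|\le c$ we have $\|\Delta\beta\|\to0$, so Lemma~\ref{lemma:ResExp} applies. It gives $\widehat e(\beta)=D(\beta)+H(\beta)$ with
\[
H(\beta)=\widehat e^{(1)}_e+\widehat e^{(2)}_e-\sum_k\Delta\beta_k\,\widehat e^{(2)}_{X,k}+\widehat e^{({\rm rem})}(\beta).
\]
We bound each piece using Assumptions~\ref{ass:SF} and \ref{ass:SN}, namely $\|e\|={\cal O}_P(\sqrt N)$, $\|X_k\|={\cal O}_P(N)$ and $\|(\lambda^0f^{0\prime})^\dagger\|={\cal O}_P(1/N)$:
\begin{itemize}
\item $\|\widehat e^{(1)}_e\|={\cal O}_P(\|e\|^2/N)={\cal O}_P(1)$;
\item $\|\widehat e^{(2)}_e\|={\cal O}_P(N^{-1/2})$;
\item $\|\widehat e^{(2)}_{X,k}\|={\cal O}_P(\|X_k\|\|e\|/N)={\cal O}_P(\sqrt N)$, so $\|\Delta\beta\,\widehat e^{(2)}_{X,k}\|={\cal O}_P(\sqrt N\|\Delta\beta\|)$;
\item $\|\widehat e^{({\rm rem})}\|={\cal O}_P(N\|\Delta\beta\|^2+\|\Delta\beta\|+N^{-1})$.
\end{itemize}
Since $N\|\Delta\beta\|^2\le c\sqrt N\|\Delta\beta\|$ on this set, we get $\|H(\beta)\|={\cal O}_P(1+\sqrt N\|\Delta\beta\|)$ uniformly. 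We also have $\|D(\beta)\|\le\|e\|+\|\Delta\beta\|\sum_k\|X_k\|={\cal O}_P(\sqrt N+N\|\Delta\beta\|)$.

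\textbf{Step 2: eigenvalue perturbation.} Weyl's inequality for singular values gives $|\sigma_r(\widehat e)-\sigma_r(D)|\le\|H\|$. Hence
\[
|\mu_r(\widehat e'\widehat e)-\mu_r(A)|\le\|H\|\bigl(2\|D\|+\|H\|\bigr)={\cal O}_P\bigl((1+\sqrt N\|\Delta\beta\|)(\sqrt N+N\|\Delta\beta\|)\bigr).
\]
With $N\asymp T$, this product is ${\cal O}_P(\sqrt N+N\|\Delta\beta\|)$, using $\sqrt N\|\Delta\beta\|\le c$ to absorb the cross terms. Dividing by $NT$ and summing over the fixed number $R-R^0$ of eigenvalues yields $|{\cal L}^{R,{\rm rem},1}_{NT}|={\cal O}_P\bigl((\sqrt N+\sqrt{NT}\|\Delta\beta\|)/NT\bigr)$, uniformly on the set. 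That is the claim.

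\textbf{Main obstacle.} The delicate part is the uniformity in $\beta$. The Weyl step is deterministic, so uniformity comes directly from the uniform ${\cal O}_P$ bound on $\widehat e^{({\rm rem})}$ in Lemma~\ref{lemma:ResExp} together with the $\beta$-independent bounds on the expansion coefficients. One point needs checking: Lemma~\ref{lemma:ResExp} requires $\|\Delta\beta\|\le c_{NT}\to0$, and this holds here with $c_{NT}=c/\sqrt N$. It is also essential that the leading $\widehat e^{(1)}_e$ term is only ${\cal O}_P(1)$ rather than ${\cal O}_P(\sqrt N)$; this is why the resulting error is $\sqrt N$ and not $N$.
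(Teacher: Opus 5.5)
Your proof is correct and is essentially the paper's argument. Starting from \eqref{LRL0}, the paper also combines the residual expansion with Weyl's inequality, noting that everything in $\widehat e'(\beta)\widehat e(\beta)$ beyond $A(\beta)$ has spectral norm ${\cal O}_P(\sqrt N)$ on the set $\sqrt N\|\beta-\beta^0\|\le c$. The difference is only in routing: the paper goes through the finer expansion $B(\beta)+B'(\beta)$ of Lemma~\ref{th:expansion2b}, whereas you go straight from Lemma~\ref{lemma:ResExp} via Weyl's inequality for singular values. Your route is slightly more self-contained, because it uses the explicit uniform bound on $\widehat e^{({\rm rem})}$. By contrast, the remainder rate $o_P\bigl((1+\sqrt{NT}\|\Delta\beta\|)^2/NT\bigr)$ stated in Lemma~\ref{th:expansion2b} is by itself too coarse near the edge of that set, so the paper implicitly relies on the sharper bound obtained inside that lemma's proof.
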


The following high-level assumption guarantees that the 
$\beta$-dependence of $\frac 1 {NT} \sum_{r=1}^{R-R^0}
            \mu_r\left[ A(\beta) \right]$  is small, so that apart
from a constant the approximate quadratic expansions of ${\cal L}_{NT}^R(\beta)$
and ${\cal L}_{NT}^0(\beta)$ around $\beta^0$ are identical.

\begin{HLOneAssumption}[\bf First High-Level Assumption on Matrix Spectra]
   \label{ass:HL1}
   Let $\Delta \beta = \beta - \beta^0$ and
   \begin{align*}
      d(\beta) &= \sum_{r=1}^{R-R^0}  \bigg\{  \mu_r\left[
      M_{f^0} \left( e  -  \Delta \beta \cdot X \right)'
                  M_{\lambda^0}  \left( e  -  \Delta \beta \cdot X \right)  M_{f^0} \right]
        \\ & \qquad \qquad \quad
                  -\mu_r\left[ M_{f^0} e' M_{\lambda^0} e M_{f^0} \right] -
                  \mu_r\left[ M_{f^0} \left( \Delta \beta \cdot X \right)'
                  M_{\lambda^0}  \left(  \Delta \beta \cdot X \right)  M_{f^0} \right]
        \bigg\} .
   \end{align*}
   For all constants $c>0$ we assume that
   \begin{align*}
     \sup_{\{\beta : \sqrt{N} \left\| \beta -\beta^{0} \right\| \leq c\}}
      \frac { \max[ d(\beta),0]  }
       {\sqrt{N} + N^{5/4} \| \beta-\beta^0\|
                + N^2 \| \beta-\beta^0\|^2 / \log(N) }
         = {\cal O}_P \left( 1 \right).
   \end{align*}
\end{HLOneAssumption}

Combining Lemma~\ref{lemma:expansion2a} with this high-level assumption
yields the following theorem.

\begin{theorem}
   \label{th:ConvergenceRate}
   Let $R>R^0$, let Assumptions \ref{ass:SF}, \ref{ass:SN}, \ref{ass:NC}, \ref{ass:EX}
    and
   \ref{ass:HL1} be satisfied and furthermore assume
   that $C^{(1)} = {\cal O}_P(N^{1/4})$.
   In the limit $N,T \rightarrow \infty$ with
   $N/T \rightarrow \kappa^2$, $0<\kappa<\infty$, we then have
 $N^{3/4} \left(\widehat \beta_{R} - \beta^0\right)
                   = {\cal O}_P(1)$.
\end{theorem}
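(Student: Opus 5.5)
The plan is to compare the profile objectives ${\cal L}^R_{NT}$ and ${\cal L}^0_{NT}$ via Lemma~\ref{lemma:expansion2a}, and to use Assumption~\ref{ass:HL1} to show that the overfitting term can only help $\beta$ near $\beta^0$ by a bounded amount. Throughout, Theorem~\ref{th:consistency} lets us restrict attention to $\sqrt N\|\beta-\beta^0\|\le c$ for a large constant $c$.

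\textbf{Step 1: reduce the overfitting term to $d(\beta)$.} On that set, Lemma~\ref{lemma:expansion2a} gives
\begin{align*}
{\cal L}^R_{NT}(\beta)={\cal L}^0_{NT}(\beta)-\frac1{NT}\sum_{r=1}^{R-R^0}\mu_r[A(\beta)]+{\cal O}_P\Big(\frac{\sqrt N+\sqrt{NT}\|\Delta\beta\|}{NT}\Big).
\end{align*}
By the definition of $d(\beta)$,
\begin{align*}
\sum_{r\le R-R^0}\mu_r[A(\beta)] = \sum_{r\le R-R^0}\mu_r[M_{f^0}e'M_{\lambda^0}eM_{f^0}] + \sum_{r\le R-R^0}\mu_r[M_{f^0}(\Delta\beta\cdot X)'M_{\lambda^0}(\Delta\beta\cdot X)M_{f^0}] + d(\beta).
\end{align*}
The first sum does not depend on $\beta$. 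The second is at most $(R-R^0)\|\Delta\beta\cdot X\|^2\le C\,NT\|\Delta\beta\|^2$ by Assumption~\ref{ass:SN}. We need a sharper handle on it, however, since it has the same order as the $W$ term.

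\textbf{Step 2: control the quadratic term.} Set $B=M_{\lambda^0}(\Delta\beta\cdot X)M_{f^0}$. Then $\mathrm{Tr}(B'B)-\sum_{r\le R-R^0}\mu_r(B'B)=\sum_{r>R-R^0}\mu_r(B'B)$. Since $M_{\lambda^0},M_{f^0}$ remove rank $2R^0$, eigenvalue interlacing gives $\sum_{r>R-R^0}\mu_r(B'B)\ge \sum_{r>R+R^0}\mu_r[(\Delta\beta\cdot X)'(\Delta\beta\cdot X)]\ge b\,NT\|\Delta\beta\|^2$, using Assumption~\ref{ass:NC}. Combining this with Theorem~\ref{th:expansion} (note that $NT\,\Delta\beta'W\Delta\beta=\mathrm{Tr}(B'B)$) yields, up to a $\beta$-independent constant,
\begin{align*}
{\cal L}^R_{NT}(\beta)-\mathrm{const}\ \ge\ b\|\Delta\beta\|^2-\frac{2}{\sqrt{NT}}\|\Delta\beta\|\,\|C^{(1)}+C^{(2)}\|-\frac{\max[d(\beta),0]}{NT}-\text{rem}.
\end{align*}
Here the remainder is $o_P\big((1+\sqrt{NT}\|\Delta\beta\|)^2/NT\big)+{\cal O}_P\big((\sqrt N+N\|\Delta\beta\|)/N^2\big)$.

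\textbf{Step 3: an upper bound at $\beta^0$.} At $\beta=\beta^0$ the same expressions give ${\cal L}^R_{NT}(\beta^0)=\mathrm{const}+{\cal O}_P(N^{-3/2})$; the only contribution is the Lemma~\ref{lemma:expansion2a} remainder, since $d(\beta^0)=0$.

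\textbf{Step 4: combine.} Use ${\cal L}^R_{NT}(\widehat\beta_R)\le{\cal L}^R_{NT}(\beta^0)$. Insert $C^{(1)}={\cal O}_P(N^{1/4})$, $C^{(2)}={\cal O}_P(1)$, and the bound on $\max[d,0]$ from Assumption~\ref{ass:HL1}. Writing $x=\|\widehat\beta_R-\beta^0\|$, this gives
\begin{align*}
b x^2\le {\cal O}_P\Big(\frac{N^{1/4}x}{N}+\frac{N^{-1/2}}{N}+\frac{N^{5/4}x}{N^2}+\frac{x^2}{\log N}+\frac{1}{N^{3/2}}\Big).
\end{align*}
The $x^2/\log N$ term is absorbed into the left-hand side. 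The remaining terms force $x^2\lesssim N^{-3/4}x+N^{-3/2}$, hence $x={\cal O}_P(N^{-3/4})$.

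The main obstacle is Step 2. The eigenvalue subtraction from the overfitted factors removes exactly the largest $R-R^0$ eigenvalues of the $\Delta\beta\cdot X$ part. I need the curvature to survive through the interlacing bound rather than through $W$ itself, which means bookkeeping the $W$ expansion of ${\cal L}^0$ carefully against this subtraction so that the $\beta$-linear cross terms are exactly those absorbed into $d(\beta)$. I would first verify this identity cleanly at the level of traces, and only then apply Assumption~\ref{ass:NC}.
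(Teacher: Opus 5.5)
Your proposal is correct and follows the paper's proof essentially step for step. You restrict to a $\sqrt{N}$-neighbourhood via Theorem~\ref{th:consistency}, compare ${\cal L}^R_{NT}(\widehat\beta_R)\le{\cal L}^R_{NT}(\beta^0)$ using Lemma~\ref{lemma:expansion2a}, Assumption~\ref{ass:HL1} and Theorem~\ref{th:expansion}, and get the curvature $b\|\Delta\beta\|^2$ from Assumption~\ref{ass:NC} after removing the $R-R^0$ largest eigenvalues of the projected regressor term. Your interlacing argument makes explicit the inequality the paper simply asserts, and the $o_P(x^2)$ part of the ${\cal L}^{0,{\rm rem}}_{NT}$ remainder, which you dropped from your final display, is absorbed into the left-hand side just like the $x^2/\log N$ term.
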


The theorem follows from the inequality
${\cal L}^R_{NT}(\widehat \beta_{R}) \leq {\cal L}^R_{NT}(\beta^0)$
by applying Lemma~\ref{lemma:expansion2a}, Assumption~\ref{ass:HL1},
and our expansion of ${\cal L}^0_{NT}(\beta)$. The detailed proof is given below.

\subsection{Justification of Assumption~\ref{ass:HL1}}

We first present two technical Lemmas, which are used to show Lemma~\ref{lemma:JustifyHL1} below.

\begin{lemma}
   \label{lemma:help}
   Let $g$ be an $N \times Q$ matrix
   and $h$ be a $T \times Q$ matrix such that
   $g' g = h' h = \mathbbm{1}_Q$.
   Let $U$ be an $N \times T$ matrix and $C$ a $Q \times Q$ matrix.
   Assume that ${\rm rank}[(U' g, h)] = 2Q$. Let\footnote{Note that  $ \Delta_{\max}=0$
   if $R \geq Q$,
   and that $ \Delta_{\max} \leq \mu_1( g' U U' g ) - \mu_Q( g' U U' g ) $
   for $R<Q$. }
   \begin{align*}
        \Delta_{\max} &=
    \max_{r \in \{1,2,\ldots,\min(R,Q)\}} \left[ \mu_r( g' U U' g ) -  \mu_{r+Q-\min(Q,R)}( g' U U' g )  \right].
   \end{align*}
   We then have
   \begin{align*}
      & \sum_{r=1}^R
      \mu_r \left[ \left( U + g C  h' \right)' \left( U + g C  h' \right) \right]
    \nonumber \\ & \qquad
        \leq  \sum_{r=1}^{R}
      \mu_r \left( U'U + \| g' U U' g \| \, P_{\left( M_{U' g} h \right)}
         +  \Delta_{\max}  P_{(U' g)}   \right)
    \nonumber \\ & \qquad \qquad \qquad
     + \sum_{r=1}^{\min(Q,R)}
      \mu_r \left(  C C' + g' U h C' + C h' U' g \right) .
   \end{align*}
\end{lemma}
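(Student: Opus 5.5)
The plan is to work with the Ky Fan variational characterization. For any $N\times T$ matrix $B$,
\begin{align*}
  \sum_{r=1}^R \mu_r(B'B) = \max_{\{W \in \mathbbm{R}^{T\times R},\, W'W = \mathbbm{1}_R\}} \|B W\|_{HS}^2 ,
\end{align*}
and, for symmetric matrices, the top-$R$ eigenvalue sum is subadditive: $\sum_{r\le R}\mu_r(S_1+S_2)\le \sum_{r\le R}\mu_r(S_1)+\sum_{r\le R}\mu_r(S_2)$. I will apply this with $B = U + gCh'$. The first step is to split $B$ along $g$, as $B = M_g U + g\,(g'U + C h')$. Then
\begin{align*}
  B'B = U'M_gU + (g'U + Ch')'(g'U+Ch'),
\end{align*}
where the second summand has rank at most $Q$. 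Writing $D := g'U$ (a $Q\times T$ matrix) and using $U'U = U'M_gU + D'D$, I will isolate the part of $B'B$ in which $C$ enters:
\begin{align*}
  B'B = U'U + hC'D + D'Ch' + hC'Ch' .
\end{align*}

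The second step is to separate the $C$-dependence into a $Q$-dimensional block that yields the last sum in the lemma. Decompose $D' = P_h D' + M_h D'$. The part $h\,(C'C + C'Dh + h'D'C)\,h'$ lives on ${\rm span}(h)$, so its nonzero eigenvalues coincide with those of the $Q\times Q$ matrix $C'C + C'F + F'C$, where $F := Dh = g'Uh$. I will then relate these to the eigenvalues of $CC' + FC' + CF'$, which is the matrix in the statement; I expect to need a transpose/rotation identity for this. Since the block has rank at most $Q$, only $\min(Q,R)$ of its eigenvalues can contribute, and subadditivity gives the final sum $\sum_{r=1}^{\min(Q,R)}\mu_r(\cdot)$.

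The remaining piece is $U'U + M_hD'Ch' + hC'DM_h$. Its cross terms live on ${\rm span}(M_h D') = {\rm span}(M_{U'g}h\text{-related directions})$ together with ${\rm span}(U'g)$. For these I will not use subadditivity blindly, because that would leave terms that grow with $C$. Instead I will work inside the variational problem. For a fixed optimal $W$, write $W = P_{(U'g,h)}W + M_{(U'g,h)}W$; the rank condition ${\rm rank}[(U'g,h)]=2Q$ guarantees that $h$ and $U'g$ span a $2Q$-dimensional space, so these projectors are well defined. I will then complete the square in $C$: the mixed term $2\,{\rm Tr}(W'hC'DW)$ should be absorbed by $\|CW\|^2$ plus a term bounded by $\|D\|^2\|P_{M_{U'g}h}W\|^2 = \|g'UU'g\|\,\|P_{M_{U'g}h}W\|^2$. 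Any reordering among the $Q$ directions of ${\rm span}(U'g)$ should cost at most the eigenvalue spread $\Delta_{\max}$ of $g'UU'g$, paid on $P_{(U'g)}$. When $R\ge Q$ all of these directions can be captured at once, which is why $\Delta_{\max}=0$ in that case.

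The main obstacle is this last step: showing that the $C$-dependent cross terms can be traded, with no leftover growth in $C$, for exactly $\|g'UU'g\|P_{(M_{U'g}h)}+\Delta_{\max}P_{(U'g)}$ plus the $Q\times Q$ eigenvalue sum. My first attempt will be the explicit $2Q$-dimensional reduction. Restrict attention to ${\rm span}(U'g,h)$, use an orthonormal basis of the form $(\,U'g\,(g'UU'g)^{-1/2},\ \text{normalized } M_{U'g}h\,)$, and write $B'B$ restricted to this span as a $2Q\times 2Q$ block matrix. I will then bound its top-$R$ sum by the block-diagonal majorant, using the interlacing and Weyl inequalities.
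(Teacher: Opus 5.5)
Your first step, $B'B=U'M_gU+(g'U+Ch')'(g'U+Ch')$ with a rank-$Q$ second summand, is exactly where the paper starts. After that, the proposal leaves open the two points that carry the proof.

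First, splitting $D'=U'g$ along $P_h$ and $M_h$ produces the $h$-side block $C'C+C'F+F'C=(C+F)'(C+F)-F'F$, with $F=g'Uh$. The lemma instead contains the $g$-side matrix $CC'+FC'+CF'=(C+F)(C+F)'-FF'$. These two matrices have the same trace but in general different spectra, so no transpose or rotation identity links them, and for $R<Q$ their top partial sums differ. For example, with $Q=2$, $F=e_1e_2'$ and $C+F={\rm diag}(2,1)$, the two spectra are $\{4,0\}$ and $\{3,1\}$.

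Second, the leftover matrix $U'U+M_hU'g\,Ch'+hC'g'UM_h$ has largest eigenvalue at least $\|M_hU'gC\|$. Since $M_hU'g$ has rank $Q$, this grows linearly in $C$, so it can never be dominated by the $C$-free matrix $U'U+\|g'UU'g\|P_{(M_{U'g}h)}+\Delta_{\max}P_{(U'g)}$. You see this and defer everything to ``completing the square'' and a ``block-diagonal majorant''. That deferred step is the entire content of the lemma, and it is not carried out. Completing the square in $C$ merely returns $\|BW\|_{HS}^2=\|M_gUW\|_{HS}^2+\|(g'U+Ch')W\|_{HS}^2$, which is your first step again. Nothing in the sketch explains where $\|g'UU'g\|P_{(M_{U'g}h)}$ and $\Delta_{\max}P_{(U'g)}$ are actually needed.

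The missing idea is to never expand along $h$ and to keep the rank-$Q$ piece intact. Write $B'B=A_1+A_2(C)$ with $A_1=U'U+\|g'UU'g\|P_{(M_{U'g}h)}+\Delta_{\max}P_{(U'g)}$ and $A_2(C)=(U'g+hC')(g'U+Ch')-Y$, where $Y=U'gg'U+\|g'UU'g\|P_{(M_{U'g}h)}+\Delta_{\max}P_{(U'g)}$. Everything in $A_2$ lives on the $2Q$-dimensional ${\rm span}(h,U'g)$; this is where the rank condition is used. The argument then runs in three Ky Fan steps:
\begin{itemize}
\item[(1)] One step separates $A_1$ from $A_2$.
\item[(2)] Compress $A_2$ to ${\rm span}(h,U'g)$. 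A second step bounds the sum of its $\min(R,2Q)$ largest eigenvalues by the sum of the $\min(R,Q)$ largest eigenvalues of the rank-$Q$ term, minus the sum of the $\min(R,2Q)$ smallest eigenvalues of $Y$. The nonzero spectrum of the rank-$Q$ term equals that of $(g'U+Ch')(g'U+Ch')'=g'UU'g+CC'+g'UhC'+Ch'U'g$, so the $g$-side matrix appears automatically. On that span, $Y$ has eigenvalues $\|g'UU'g\|$ ($Q$ times) and $\mu_r(g'UU'g)+\Delta_{\max}$ for $r\le Q$. Hence its $\min(R,2Q)$ smallest eigenvalues sum to at least $\sum_{r\le\min(R,Q)}\mu_r(g'UU'g)$. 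This is the real role of the two correction terms, and the precise version of your $\Delta_{\max}$ heuristic.
\item[(3)] A last step gives $\sum_{r\le\min(Q,R)}\mu_r(CC'+g'UhC'+Ch'U'g)$.
\end{itemize}

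Be aware, however, that $A_2$ also has $T-2Q$ zero eigenvalues. The paper's identity $\sum_{r\le R}\mu_r(A_2)=\sum_{r\le\min(R,2Q)}\mu_r(\widetilde h'A_2\widetilde h)$ overlooks them, and the statement as written can fail when the last sum is negative. Take $N=2$, $T=3$, $Q=R=1$, $g=e_1\in\mathbb{R}^2$, $h=e_1\in\mathbb{R}^3$, $U=\bigl(\begin{smallmatrix}1&1&0\\0&0&s\end{smallmatrix}\bigr)$, $C=-1$ and $s^2>2$. Then the left side is $s^2$ and the right side is $s^2-1$. So the last term must be modified before any proof, yours or the paper's, can go through.
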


\begin{lemma}
    \label{lemma:LowRankAdd}
    Let $e$ be an $N \times T$ matrix, whose columns $e_t$, $t = 1,\ldots T$, are distributed as
    $e_t \sim \, iid \, {\cal N}(0,\Sigma)$, where $\Sigma$ is a symmetric
    positive definite non-random $N \times N$ matrix
    with eigenvalues $\mu_1(\Sigma)$, \ldots, $\mu_N(\Sigma)$. Let $A$ be a symmetric positive definite
    non-random $T \times T$ matrix with ${\rm rank}(A)=Q$. Let $n$ be the number of eigenvalues of $\Sigma$
    that is larger or equal than $\|A\|/T$, i.e. $n \leq N$ is the largest integer such that
     $\mu_n(\Sigma) \geq \|A\|/T$. Consider an asymptotic where $N,T,n \rightarrow \infty$ jointly,
     while $Q$ and $R$ are constant positive integers.  We then have
     \begin{align*}
         \sum_{r=1}^R \mu_r \left( e' e + A \right)
         -          \sum_{r=1}^R \mu_r \left( e' e \right)
         =  {\cal O}_P \left(   \sqrt{ (N+T)T/n }  \right)   .
     \end{align*}
\end{lemma}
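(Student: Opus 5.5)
We need to bound $\sum_{r\le R}\mu_r(e'e+A)-\sum_{r\le R}\mu_r(e'e)$, which is nonnegative since $A\ge 0$. The plan is to exploit that $A$ has low rank $Q$ and that $e'e$ has spread-out spectral mass in the directions of $A$ thanks to the many large eigenvalues of $\Sigma$.

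\textbf{Reduction to fixed directions.} Write $A=\sum_{q=1}^Q a_q h_q h_q'$ with orthonormal $h_q\in\mathbb{R}^T$ and $0<a_q\le\|A\|$. Let $h=(h_1,\ldots,h_Q)$ and $h_\perp$ an orthonormal complement.

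\textbf{Key variational bound.} By Ky Fan, $\sum_{r\le R}\mu_r(e'e+A)=\max_{V'V=\mathbbm{1}_R}{\rm Tr}[V'(e'e+A)V]$. Decompose $V=P_hV+M_hV$. The aim is to show that adding $A$ can only raise the top-$R$ sum by roughly the amount gained by rotating a bounded number of directions toward ${\rm span}(h)$. The cost of such a rotation is controlled by how much $e'e$ already loads on $h$ compared with its top eigenvalues.

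Rotational invariance of $e_t\sim\mathcal N(0,\Sigma)$ in $t$ makes $eh$ and $eh_\perp$ independent. The columns of $eh$ are iid $\mathcal N(0,\Sigma)$, so $h'e'eh$ is a $Q\times Q$ Wishart-type matrix with diagonal entries of order ${\rm Tr}\Sigma\ge n\|A\|/T$. So the natural comparison is:
\begin{itemize}
\item[(a)] use a Schur-complement / $2\times 2$ block argument, as in Lemma~\ref{lemma:help}, to bound the increment by $\sum_{r\le Q}\mu_r$ of a correction $\approx A^{1/2}(\ldots)A^{1/2}$ involving $(h'e'eh)^{-1}$ and the cross term $h'e'eh_\perp$;
\item[(b)] equivalently, bound the increment by ${\rm Tr}[A(\mathbbm{1}+ \ldots)]$ minus the loss in $e'e$. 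The quantity $\|e h\|^2$ is large, at least of order $n\|A\|/T$ beyond $\|e\|^2$ contributions.
\end{itemize}
Concretely, for a candidate unit vector $v=\alpha u+\beta w$ with $u\in{\rm span}(h)$ and $w\perp h$, the gain from $A$ is at most $|\alpha|^2\|A\|$. This must compete against $\mu_1(e'e)\approx(\sqrt N+\sqrt T)^2\|\Sigma\|$-type quantities. Optimizing the $2\times2$ problem gives an increment of order $\|A\|\cdot\|e'eh\|/(\ldots)$, i.e. $\sqrt{\|A\|\,\|e h\|^2\,\|e\|^2}/\|eh\|^2$ scale. With $\|eh\|^2\gtrsim n\|A\|/T\cdot T$ and $\|e\|^2={\cal O}_P((N+T)\mu_1(\Sigma))$, I then need to normalize. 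I will rescale so that $\|A\|/T$ is comparable to $\mu_n(\Sigma)$, which is what yields $\sqrt{(N+T)T/n}$.

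\textbf{Probabilistic inputs.} I will condition on $e h_\perp$ and use that $eh=\Sigma^{1/2}Z$ with $Z$ an $N\times Q$ standard Gaussian matrix. This gives $\mu_Q(h'e'eh)\ge c\sum_{i\le n}\mu_i(\Sigma)\ge c\,n\|A\|/T$ wpa1, by concentration of Gaussian quadratic forms when $n\to\infty$. It also gives $\|h'e'eh_\perp\|={\cal O}_P(\|e\|\cdot\|\Sigma^{1/2}Z\|)$. To handle the $R$ top eigenvectors of $e'e$ possibly aligning with $h$, I will use the same independence: eigenvectors of $h_\perp'e'eh_\perp$ are independent of $Z$.

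\textbf{Main obstacle.} The hardest step is the deterministic perturbation inequality turning the low-rank addition into a bound of the form
\[
\textstyle\sum_{r\le R}\mu_r(e'e+A)-\sum_{r\le R}\mu_r(e'e)\le C\,\|A\|\,\|e\|\,\|eh\|^{-1}.
\]
Getting the right power, rather than the trivial bound $Q\|A\|$, requires the Schur-complement argument. I would first try writing $e'e+A$ in the basis $(h,h_\perp)$, applying Lemma~\ref{lemma:help}-style bounds with $U=e$, and using that the $h$-block has smallest eigenvalue of order $n\|A\|/T$. Combined with $\|e\|^2={\cal O}_P((N+T)\|\Sigma\|)$ and the scaling $\|A\|/T\le\mu_n(\Sigma)$, this would give the stated rate.
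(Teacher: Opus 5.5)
There is a genuine gap. The inequality you single out as the main step,
$\sum_{r\le R}\mu_r(e'e+A)-\sum_{r\le R}\mu_r(e'e)\le C\|A\|\,\|e\|\,\|eh\|^{-1}$,
is never better than the trivial bound $\min(R,Q)\|A\|$ from Ky Fan's inequality, because $\|eh\|\le\|e\|\,\|h\|=\|e\|$. In the regime where the lemma is used, $\|A\|=T\rho$ with $\rho\le\mu_n(\Sigma)={\cal O}(1)$, this gives only ${\cal O}(T)$. The target is ${\cal O}_P(\sqrt{(N+T)T/n})$, i.e.\ ${\cal O}_P(\sqrt N)$ when $n\asymp N\asymp T$.

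Your arithmetic reaches the target rate only through $\|eh\|^2\gtrsim n\|A\|$, and this is off by a factor $T$. The $Q$ columns of $eh$ are iid ${\cal N}(0,\Sigma)$, so what actually holds is $\mu_Q(h'e'eh)\gtrsim n\mu_n(\Sigma)\ge n\|A\|/T$, as you yourself state under the probabilistic inputs. For example, with $\Sigma=\mathbbm{1}_N$ and $A=T\,hh'$ one has $n=N$ and $n\|A\|=NT$, while $\|eh\|^2\le\|e\|^2={\cal O}_P(N+T)$. With the correct scaling your bound is of order $\|\Sigma\|\,T\sqrt{(N+T)/n}$, a factor $\sqrt T$ too large. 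In your $T$-space framing this lower bound on $\|eh\|$ is the only channel through which $n$ enters, and it is too weak. What is actually needed is control of how much the top-$R$ eigenvectors load on the directions added by the perturbation.

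The paper gets that control as follows.
\begin{itemize}
\item Bound $A\le T\rho P_h$ with $\rho=\|A\|/T$. By the invariance $e\mapsto eO$, the spectrum of $e'e+T\rho P_h$ does not depend on $h$.
\item So $h$ may be taken to span the rows of an independent $Q\times T$ matrix $u$ with iid ${\cal N}(0,\rho)$ entries, at cost $\|T\rho P_h-u'u\|={\cal O}_P(\sqrt T)$.
\item Then $e'e+u'u=U'U$ with the stacked matrix $U=(e',u')'$. Let $E=(e',0)'$ and let $v$ be the top-$R$ eigenvectors of $UU'$. Since $\sum_{r\le R}\mu_r(e'e)=\sum_{r\le R}\mu_r(EE')\ge{\rm Tr}(v'EE'v)$, the increment is at most $2R\|e\|\|u\|\|v_2\|+R\|u\|^2\|v_2\|^2$, where $v_2$ is the block of $v$ on the $Q$ added coordinates.
\item The factor $n$ enters only through the delocalization bound $\|v_2\|={\cal O}_P(n^{-1/2})$. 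Because the added rows have variance $\rho\le\mu_n(\Sigma)$, a monotonicity and symmetry reduction places the added coordinates inside an $(n+Q)$-dimensional eigenspace of the column covariance of $U$. Invariance under $O(n+Q)$ then spreads the top eigenvectors evenly over that eigenspace.
\end{itemize}
Your Schur-complement sketch has no counterpart to this eigenvector-delocalization step. Without it, the low-rank structure of $A$ gives nothing beyond $\min(R,Q)\|A\|$.
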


The following Lemma  connects Lemma~\ref{th:ConvergenceRate} to the main text.

\begin{lemma}
    \label{lemma:JustifyHL1}
    Let $R>R^0$ and let Assumptions~\ref{ass:SF} hold.
    Let either Assumption~\ref{ass:DX-1} or~\ref{ass:DX-2}  be satisfied.
    Consider
     $N,T \rightarrow \infty$ with
   $N/T \rightarrow \kappa^2$, $0<\kappa<\infty$.
   Then Assumptions~\ref{ass:SN} and \ref{ass:HL1} are satisfied.
   If, in addition, Assumption~\ref{ass:EX} holds, then we have $C^{(1)} = {\cal O}_P(N^{1/4})$.
\end{lemma}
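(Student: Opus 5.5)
The task is to prove Lemma~\ref{lemma:JustifyHL1}, which has three parts: Assumption~\ref{ass:SN}, Assumption~\ref{ass:HL1}, and $C^{(1)}={\cal O}_P(N^{1/4})$. I take them in that order, treating the cases of Assumption~\ref{ass:DX-1} and Assumption~\ref{ass:DX-2} separately only where they differ.

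\textbf{Assumption~\ref{ass:SN}.} Part (i) is immediate: $\|X_k\|\le\|\overline X_k\|+\|\widetilde X_k\|={\cal O}_P(\sqrt{NT})+{\cal O}_P(N^{3/4})$.

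For part (ii), under Assumption~\ref{ass:DX-2} there is nothing to prove, since it is imposed directly. Under Assumption~\ref{ass:DX-1} I would write $\|e\|\le\|\Sigma\|^{1/2}\|u\|$. Here $\|u\|={\cal O}_P(\sqrt{\max(N,T)})$ for an iid Gaussian matrix, by Latala or Geman, which gives the claim.

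\textbf{$C^{(1)}={\cal O}_P(N^{1/4})$.} Expand $\operatorname{Tr}(M_{\lambda^0}X_kM_{f^0}e')$ into $\operatorname{Tr}(X_ke')$ minus the projected terms. Assumption~\ref{ass:EX} gives $\operatorname{Tr}(X_ke')/\sqrt{NT}={\cal O}_P(1)$.

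The projected terms split into two pieces:
\begin{itemize}
\item The $\widetilde X_k$ part is bounded by the rank inequality: it is at most $2R^0\|\widetilde X_k\|\|e\|/\sqrt{NT}={\cal O}_P(N^{3/4}N^{1/2}/N)={\cal O}_P(N^{1/4})$.
\item The $\overline X_k$ part, e.g. $\operatorname{Tr}(P_{\lambda^0}\overline X_ke')$, uses the low rank of $\overline X_k$ together with the independence of $e$ from $\lambda^0,f^0,\overline X_k$. Each such term has the form $a'eb$ with $a,b$ independent of $u$. This has conditional variance ${\cal O}(\|a\|^2\|b\|^2\|\Sigma\|)$, and hence is ${\cal O}_P(\sqrt{NT})$ after normalisation, i.e.\ ${\cal O}_P(1)$.
\end{itemize}
Under Assumption~\ref{ass:DX-2} the $\overline X_k$ part vanishes from $M_{\lambda^0}X_kM_{f^0}$ altogether.

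\textbf{Assumption~\ref{ass:HL1}, the main step.} Write $E=M_{\lambda^0}eM_{f^0}$ and $Z=M_{\lambda^0}(\Delta\beta\cdot X)M_{f^0}$, so that $\Delta\beta\cdot X$ contributes $\Delta\beta\cdot\overline X+\Delta\beta\cdot\widetilde X$. I would first strip the $\widetilde X$ part using Weyl-type bounds on sums of top eigenvalues. For $B=E-Z$ one has $\sum_{r\le m}\mu_r(B'B)^{1/2}$-type inequalities, so that
\[
\sum_{r\le m}\mu_r[(E-\widetilde Z-\overline Z)'(\cdot)]
\le \sum_{r\le m}\mu_r[(E-\overline Z)'(\cdot)] + {\cal O}\bigl(\|\widetilde Z\|\,\|E-\overline Z\| + \|\widetilde Z\|^2\bigr),
\]
where $\|\widetilde Z\|\lesssim N^{3/4}\|\Delta\beta\|$ and $\|E\|\lesssim N^{1/2}$. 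The cross term $N^{3/4}\|\Delta\beta\|\cdot N\|\Delta\beta\|$ has to be absorbed. The choice is between two routes:
\begin{itemize}
\item bound it by $N^{5/4}\|\Delta\beta\|$, using $N^{1/2}\|\Delta\beta\|\le c$;
\item subtract $\mu_r(Z'Z)$, comparing with $\mu_r(\overline Z'\overline Z)$ via the same inequality.
\end{itemize}
With the second route, the error terms reduce to $N^{5/4}\|\Delta\beta\|+N^{3/2}\|\Delta\beta\|^2$, and the last is $o(N^2\|\Delta\beta\|^2/\log N)$.

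Under Assumption~\ref{ass:DX-2}, $\overline Z=0$ and the proof ends there.

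Under Assumption~\ref{ass:DX-1}, $\overline Z=g\,C\,h'$ is low rank with $g$ independent of $u$. I would then apply Lemma~\ref{lemma:help} with $U=E$. This bounds the top eigenvalues of $(E+gCh')'(E+gCh')$ by two pieces:
\begin{itemize}
\item those of $E'E+\|g'EE'g\|P+\Delta_{\max}P$;
\item those of $CC'+g'EhC'+Ch'E'g$.
\end{itemize}
The second piece is at most $\sum\mu_r(\overline Z'\overline Z)+{\cal O}(\|g'Eh\|\,\|C\|)$, with $\|g'Eh\|={\cal O}_P(1)$ by Gaussianity and $\|C\|\lesssim N\|\Delta\beta\|$. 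Its excess is therefore ${\cal O}_P(N\|\Delta\beta\|)$, which is fine.

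For the first piece, $\|g'EE'g\|\approx T\|g'\Sigma g\|$, a rank-$2Q$ perturbation of size ${\cal O}(N)$. Lemma~\ref{lemma:LowRankAdd} then gives excess ${\cal O}_P(\sqrt{(N+T)T/n})={\cal O}_P(\sqrt N)$, using $\mu_n(\Sigma)\ge\|g'\Sigma g\|$ and $1/n={\cal O}_P(1/N)$. The $\Delta_{\max}$ term is $0$ if $R\ge Q$. Otherwise the condition $g'\Sigma g=\|g'\Sigma g\|\mathbbm 1+{\cal O}_P(N^{-1/2})$ gives $\Delta_{\max}={\cal O}_P(\sqrt N)$, after concentrating $g'ee'g/T$ around $g'\Sigma g$ at rate $T^{-1/2}$.

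\textbf{Expected obstacle.} The main difficulty I expect is justifying Lemma~\ref{lemma:LowRankAdd}'s hypotheses after the projections $M_{\lambda^0}$, $M_{f^0}$. To handle this I would condition on $\lambda^0,f^0,g$ and use the Gaussian structure of $e$ to absorb the projections as a fixed-rank loss. The other delicate point is the uniformity over $\beta$ in the sup. I would address it by making every bound deterministic, given random norms that do not depend on $\beta$.
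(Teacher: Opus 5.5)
\paragraph{Main gap: Lemma~\ref{lemma:LowRankAdd} is applied to a perturbation that depends on $e$.}
Your plan follows the paper's proof essentially step for step: strip $\widetilde X_k$ by Weyl, apply Lemma~\ref{lemma:help} with $U=E=M_{\lambda^0}eM_{f^0}$, bound the $C$-block by $\|g'Eh\|\,\|C\|$, and close with Lemma~\ref{lemma:LowRankAdd}. The last step, however, is not justified as you state it.

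Lemma~\ref{lemma:LowRankAdd} requires the added matrix $A$ to be non-random, i.e.\ independent of $e$. Its proof rests on the distribution of the spectrum of $e'e+T\rho P_h$ not depending on $h$. That is a rotational-invariance argument, and it breaks down if $h$ is a function of $e$.

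The perturbation produced by Lemma~\ref{lemma:help} is $\|g'EE'g\|\,P_{(M_{E'g}h)}+\Delta_{\max}P_{(E'g)}$. Both its coefficient and its projectors are functions of $E$. So it cannot simply be fed into Lemma~\ref{lemma:LowRankAdd} as ``a rank-$2Q$ perturbation of size ${\cal O}(N)$''. The conditioning on $\lambda^0,f^0,g$ in your ``expected obstacle'' deals with $M_{\lambda^0},M_{f^0}$, not with this.

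You first have to reduce to a deterministic $A$, which is what the paper's auxiliary statements (ii)--(iv) do:
\begin{itemize}
\item Remove $\Delta_{\max}P_{(E'g)}$ by Weyl, at cost $(R-R^0)\Delta_{\max}={\cal O}_P(\sqrt N)$.
\item Replace $\|g'EE'g\|$ by $T\|g'\Sigma g\|$, at cost ${\cal O}_P(\sqrt N)$.
\item Show $\|P_{(M_{E'g}h)}-P_h\|={\cal O}_P(N^{-1/2})$. This follows from $M_{E'g}h=h-E'g(g'EE'g)^{-1}g'Eh$ with $\|E'g\|={\cal O}_P(\sqrt T)$, $\|(g'EE'g)^{-1}\|={\cal O}_P(1/T)$ (for nondegenerate $g'\Sigma g$) and $\|g'Eh\|={\cal O}_P(1)$. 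Swapping $P_{(M_{E'g}h)}$ for $P_h$ then costs ${\cal O}(N)\cdot{\cal O}_P(N^{-1/2})={\cal O}_P(\sqrt N)$.
\end{itemize}
Only then does Lemma~\ref{lemma:LowRankAdd} apply, with $A=T\|g'\Sigma g\|P_h$ and $n$ replaced by $n-R^0$.

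A minor point: your second route does not actually eliminate the $N^{7/4}\|\Delta\beta\|^2$ cross term, because comparing $\mu_r(Z'Z)$ with $\mu_r(\overline Z'\overline Z)$ produces one of the same order. None is needed, though, since that term is already $o(N^2\|\Delta\beta\|^2/\log N)$.

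\paragraph{Second gap: $C^{(1)}$ under Assumption~\ref{ass:DX-2}.}
This hole is shared by the paper's one-line argument for this case. That $M_{\lambda^0}\overline X_kM_{f^0}=0$ does not make the $\overline X_k$-terms vanish from your expansion. It makes the projected $\overline X_k$-terms add up to ${\rm Tr}(\overline X_ke')$, so that $\sqrt{NT}\,C^{(1)}_k={\rm Tr}(X_ke')-{\rm Tr}(\overline X_ke')+{\cal O}_P(N^{5/4})$.

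Assumption~\ref{ass:EX} controls only ${\rm Tr}(X_ke')$. Assumption~\ref{ass:DX-2} imposes no independence between $\overline X_k$ and $e$, so ${\rm Tr}(\overline X_ke')$ can be of order $N^{3/2}$. For example, take iid normal $e$ and $R^0=1$:
\begin{itemize}
\item Set $\overline X_k=\lambda^0h_k'$ with $h_k=\sqrt T\,e'\lambda^0/\|e'\lambda^0\|$, so ${\rm Tr}(\overline X_ke')=\sqrt T\|e'\lambda^0\|\asymp N^{3/2}$.
\item Set $\widetilde X_k=-\alpha\,\mathcal U\mathcal V'$, where $e=\mathcal U D\mathcal V'$ is a singular value decomposition and $\alpha={\rm Tr}(\overline X_ke')/{\rm Tr}(D)={\cal O}_P(1)$. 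This makes ${\rm Tr}(X_ke')=0$ and $\|\widetilde X_k\|={\cal O}_P(1)$.
\end{itemize}
All hypotheses then hold, yet $C^{(1)}_k$ is of exact order $\sqrt N$. So this part of the lemma needs an extra condition, e.g.\ ${\rm Tr}(\overline X_ke')={\cal O}_P(N^{5/4})$. That holds, for instance, when $\overline X_k$ is independent of $e$ and $e$ is weakly correlated, as in the AR($p$) example. Your conditional-variance argument then goes through.
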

 
Combining Theorem~\ref{th:ConvergenceRate} and Lemma~\ref{lemma:JustifyHL1}
we obtain Theorem~\ref{th:ConvergenceRate2} in the main text.

\section{Details for Asymptotic Equivalence of $\widehat \beta_{R^0}$ and $\widehat \beta_R$}
\label{app:MoreEquivalence}

This section extends the discussion of Section~\ref{sec:Equivalence} in the main paper.
By applying the expansion of $\widehat e(\beta)$ in equation~\eqref{LRL0}
to the expression for ${\cal L}_{NT}^R(\beta)$ one obtains the following.
\begin{lemma}
   \label{th:expansion2b}
   Under Assumption~\ref{ass:SF} and \ref{ass:SN}
   and for $R>R^0$ we have
   $$ {\cal L}_{NT}^R(\beta)
             = {\cal L}_{NT}^0(\beta)
      - \frac 1 {NT}  \; \sum_{r=1}^{R-R^0}
            \mu_r\left[ B(\beta) + B'(\beta) \right]
                          + {\cal L}_{NT}^{R,{\rm rem}}(\beta),$$
          where
          \begin{align*}
             B(\beta) &= \ft 1 2 M_{f^0}\left[ e  -  (\beta-\beta^0) \cdot X \right]'
                  M_{\lambda^0}  \left[ e  -  (\beta-\beta^0) \cdot X \right]  M_{f^0}
             \nonumber \\ & \quad
             - M_{f^0}  e'  M_{\lambda^0}
             e  M_{f^0}  e'
        \lambda^0(\lambda^{0\prime}\lambda^0)^{-1}(f^{0\prime}f^0)^{-1}  f^{0\prime}
             \nonumber \\ & \quad
               + M_{f^0}  \left[ (\beta-\beta^0) \cdot X -e \right]'  M_{\lambda^0}   e
     f^0 (f^{0\prime}f^0)^{-1}  (\lambda^{0\prime}\lambda^0)^{-1}\lambda^{0\prime}e M_{f^0}
             \nonumber \\ & \quad
               + M_{f^0}  e'  M_{\lambda^0}   \left[ (\beta-\beta^0) \cdot X \right]
     f^0 (f^{0\prime}f^0)^{-1}  (\lambda^{0\prime}\lambda^0)^{-1}\lambda^{0\prime}e M_{f^0}
             \nonumber \\ & \quad
               + M_{f^0}  e'  M_{\lambda^0}   e
     f^0 (f^{0\prime}f^0)^{-1}  (\lambda^{0\prime}\lambda^0)^{-1}\lambda^{0\prime}
     \left[ (\beta-\beta^0) \cdot X \right] M_{f^0}
      \nonumber \\ & \quad
        + B^{(eeee)}
        + M_{f^0} B^{(\rm rem,1)}(\beta) P_{f^0}
        + P_{f^0} B^{(\rm rem,2)} P_{f^0} ,
          \end{align*}
          and
          \begin{align*}
             B^{(eeee)} &=
      - M_{f^0}  e' M_{\lambda^0} e M_{f^0}  e'
      \lambda^0  (\lambda^{0\prime}\lambda^0)^{-1}  (f^{0\prime}f^0)^{-1}  (\lambda^{0\prime}\lambda^0)^{-1}  \lambda^{0\prime}
          e  M_{f^0}
   \nonumber \\ & \quad
       + M_{f^0}  e' M_{\lambda^0}  e
      f^0  (f^{0\prime}f^0)^{-1}  (\lambda^{0\prime}\lambda^0)^{-1} \lambda^{0\prime} e
         f^0  (f^{0\prime}f^0)^{-1}  (\lambda^{0\prime}\lambda^0)^{-1}  \lambda^{0\prime}
               e  M_{f^0}
       \nonumber \\ & \quad
          - \ft 1 2 M_{f^0}  e' M_{\lambda^0}  e
      f^0  (f^{0\prime}f^0)^{-1}  (\lambda^{0\prime}\lambda^0)^{-1}  (f^{0\prime}f^0)^{-1}  f^{0\prime}
          e'  M_{\lambda^0} e M_{f^0}
       \nonumber \\ & \quad
         + \ft 1 2
          M_{f^0}  e'
      \lambda^0  (\lambda^{0\prime}\lambda^0)^{-1}  (f^{0\prime}f^0)^{-1}  f^{0\prime}
               e'
         M_{\lambda^0}  e
      f^0  (f^{0\prime}f^0)^{-1}  (\lambda^{0\prime}\lambda^0)^{-1}  \lambda^{0\prime}
               e  M_{f^0}  \; .
          \end{align*}
          Here, $B^{(\rm rem,1)}(\beta)$
          and $B^{(\rm rem,2)}$ are $T \times T$ matrices,
          $B^{(\rm rem,2)}$ is independent of $\beta$
          and satisfies
          $\| B^{(\rm rem,2)}\| = {\cal O}_P(1)$,
          and for any constant $c>0$
          \begin{align*}
     \sup_{\{\beta : \sqrt{N} \left\| \beta -\beta^{0} \right\| \leq c\}} \frac{
       \| B^{(\rm rem,1)}(\beta) \|  } { 1 + \sqrt{NT} \, \left\| \beta -\beta^{0} \right\|  } &= {\cal O}_{p}\left( 1 \right) ,
       \nonumber \\
              \sup_{\{\beta : \sqrt{N} \left\| \beta -\beta^{0} \right\| \leq c\}} \frac{
        \left| {\cal L}_{NT}^{R,{\rm rem}}(\beta) \right| }
        { (1 + \sqrt{NT} \, \left\| \beta -\beta^{0} \right\|)^2 } &= o_{p}\left( \frac 1 {NT} \right)  .
          \end{align*}
\end{lemma}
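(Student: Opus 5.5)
The plan starts from decomposition \eqref{LRL0}: ${\cal L}_{NT}^R(\beta) = {\cal L}_{NT}^0(\beta) - \frac 1 {NT}\sum_{r=1}^{R-R^0}\mu_r[\widehat e'(\beta)\widehat e(\beta)]$. Into this I substitute the residual expansion of Lemma~\ref{lemma:ResExp}. The condition $\sqrt N\|\beta-\beta^0\|\le c$ makes Theorem~\ref{th:expansion} and Lemma~\ref{lemma:ResExp} applicable uniformly, since $c/\sqrt N\to0$. Writing $\widehat e(\beta)=M_{\lambda^0}(e-\Delta\beta\cdot X)M_{f^0}+D(\beta)$, with $D$ collecting $\widehat e^{(1)}_e,\widehat e^{(2)}_e,\widehat e^{(2)}_{X,k}$ and the remainder, I form $\widehat e'\widehat e$. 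I then sort its terms by their $T$-space projections, $M_{f^0}\cdot M_{f^0}$, $M_{f^0}\cdot P_{f^0}$ and $P_{f^0}\cdot P_{f^0}$, using the identities $M_{f^0}(\lambda^0f^{0\prime})^\dagger{}'=0$, $(f^0\lambda^{0\prime})^\dagger{}' $ has range in span$(f^0)$, and similar ones.

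Next I identify the explicit terms. The leading quadratic piece is $A(\beta)=2\cdot\frac12 M_{f^0}(e-\Delta\beta\cdot X)'M_{\lambda^0}(e-\Delta\beta\cdot X)M_{f^0}$. The cross terms of this piece with $\widehat e^{(1)}_e$ and $\widehat e^{(2)}_{X,k}$ that survive in the $M_{f^0}\cdot M_{f^0}$ block should produce exactly the displayed third-order terms in $B(\beta)$. These terms carry $e'M_{\lambda^0}eM_{f^0}e'(f^0\lambda^{0\prime})^\dagger$ and the $e f^0(f^{0\prime}f^0)^{-1}(\lambda^{0\prime}\lambda^0)^{-1}\lambda^{0\prime}e$ sandwiches. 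They come in pairs, which is why the statement is written as $B+B'$ with the factor $\frac12$. The fourth-order $e$ terms, namely $M_{f^0}\cdot M_{f^0}$ products of $\widehat e^{(1)}_e$ with itself and of $M_{\lambda^0}eM_{f^0}$ with $\widehat e^{(2)}_e$, give $B^{(eeee)}$. The $M_{f^0}\cdot P_{f^0}$ blocks are collected into $M_{f^0}B^{({\rm rem},1)}P_{f^0}$. I bound them by $\|e\|\,\|\widehat e^{(1)}\|$-type products: $\|(\lambda^0f^{0\prime})^\dagger\|={\cal O}_P(1/\sqrt{NT})$ and $\|e\|={\cal O}_P(\sqrt N)$, which gives ${\cal O}_P(1+\sqrt{NT}\|\Delta\beta\|)$. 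The $\beta$-independent $P_{f^0}\cdot P_{f^0}$ block gives $B^{({\rm rem},2)}={\cal O}_P(1)$.

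Finally I pass from eigenvalues of $\widehat e'\widehat e$ to eigenvalues of $B+B'$. Every term I drop has operator norm $o_P(1)$ times $(1+\sqrt{NT}\|\Delta\beta\|)^2/\ldots$ on the relevant scale. Because the sum of the top $R-R^0$ eigenvalues is Lipschitz in operator norm, with constant $R-R^0$ (Weyl/Ky Fan), this produces the remainder ${\cal L}^{R,{\rm rem}}$. I also need to check that the remaining contributions to $\widehat e'\widehat e$ (orders $\|e\|^5/(NT)^{3/2}$, $\|\Delta\beta\|^2\|e\|\|X\|^2/\ldots$, the remainder of Lemma~\ref{lemma:ResExp}, and so on) are $o_P(1)(1+\sqrt{NT}\|\Delta\beta\|)^2$. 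A typical example is $\|e\|^5/(NT)^{3/2}\cdot$ const $={\cal O}_P(N^{-1/2})$, and $\|M_{\lambda^0}(e-\Delta\beta X)M_{f^0}\|\cdot\|\widehat e^{({\rm rem})}\|={\cal O}_P((\sqrt N+\sqrt{NT}\|\Delta\beta\|)(N\|\Delta\beta\|^2+\|\Delta\beta\|+N^{-1}))$. This is $o_P((1+\sqrt{NT}\|\Delta\beta\|)^2)$ on the set $\sqrt N\|\Delta\beta\|\le c$. The main obstacle is the bookkeeping: I must keep all terms that are ${\cal O}_P(1)$ (rather than $o_P(1)$) in operator norm within $B$, and verify that the terms left out are genuinely smaller. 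I expect the delicate cases to be the cross terms linear in $\Delta\beta$ and quadratic in $e$, which are ${\cal O}_P(\sqrt{NT}\|\Delta\beta\|)$ and must therefore be retained explicitly; these are the three lines of $B$ involving $(\beta-\beta^0)\cdot X$.
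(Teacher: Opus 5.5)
Your plan matches the paper's proof: substitute the residual expansion of Lemma~\ref{lemma:ResExp} into \eqref{LRL0}, write $\widehat e'(\beta)\widehat e(\beta)=B(\beta)+B'(\beta)+T^{(\rm rem)}(\beta)$ with $\|T^{(\rm rem)}(\beta)\|={\cal O}_P(N^{-1/2})+\|\beta-\beta^0\|\,{\cal O}_P(N^{1/2})+\|\beta-\beta^0\|^2\,{\cal O}_P(N^{3/2})$ uniformly on $\sqrt N\|\beta-\beta^0\|\le c$, and then apply Weyl's inequality.

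One bookkeeping correction concerns the second line of $B(\beta)$, namely $-M_{f^0}e'M_{\lambda^0}eM_{f^0}e'\lambda^0(\lambda^{0\prime}\lambda^0)^{-1}(f^{0\prime}f^0)^{-1}f^{0\prime}$. This is the product of $M_{\lambda^0}eM_{f^0}$ with the first piece of $\widehat e^{(1)}_e$. It lies in the $M_{f^0}\cdot P_{f^0}$ block, not the $M_{f^0}\cdot M_{f^0}$ block, and it is of order ${\cal O}_P(\sqrt N)$. It therefore has to be kept explicitly and cannot go into $M_{f^0}B^{(\rm rem,1)}P_{f^0}$, since it would break the bound ${\cal O}_P(1+\sqrt{NT}\|\beta-\beta^0\|)$ at $\beta=\beta^0$. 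Only two kinds of $M_{f^0}\cdot P_{f^0}$ pieces belong in $B^{(\rm rem,1)}$: the ${\cal O}_P(N\|\beta-\beta^0\|)$ terms linear in $\beta-\beta^0$, and the ${\cal O}_P(1)$ terms of fourth order in $e$. A bound of the form $\|e\|\,\|\widehat e^{(1)}_e\|$ would only give ${\cal O}_P(\sqrt N)$ for them.
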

Here, the remainder term ${\cal L}_{NT}^{R,{\rm rem}}(\beta)$ stems from terms in
$\widehat e'(\beta) \widehat e(\beta)$ whose spectral norm is smaller than
 $o_P(1)$ within a $\sqrt{N}$ shrinking
neighborhood of $\beta$  after
dividing by  $\left( 1 + \sqrt{NT} \, \left\| \beta -\beta^{0} \right\| \right)^2$.
Using Weyl's inequality those terms can be separated from the eigenvalues
$ \mu_r\left[ \widehat e'(\beta) \widehat e(\beta) \right]$.
The expression for $B(\beta)$ looks complicated, in particular the
terms in $B^{(eeee)}$. Note however, that $B^{(eeee)}$ is $\beta$-independent
and satisfies $\| B^{(eeee)} \| = {\cal O}_P(1)$ under our assumptions,
so that it is relatively easy to deal with these terms.
Note furthermore that the structure of $B(\beta)$
is closely related to the expansion of ${\cal L}_{NT}^{0}(\beta)$,
since by definition we have ${\cal L}_{NT}^{0}(\beta) =
(NT)^{-1} {\rm Tr}(\widehat e'(\beta) \widehat e(\beta))$, which
can be approximated by $(NT)^{-1} {\rm Tr}(B(\beta)+B'(\beta))$.
Plugging the definition of $B(\beta)$
into $(NT)^{-1} {\rm Tr}(B(\beta)+B'(\beta))$ one indeed
recovers the terms of the approximated Hessian and score
provided by Theorem~\ref{th:expansion}, which is a convenient
consistency check. We do not give explicit formulas for
$B^{(\rm rem,1)}(\beta)$ and $B^{(\rm rem,2)}$, because those terms enter
$B(\beta)$ projected by $P_{f^0}$, which makes them orthogonal
to the leading term in $B(\beta)+B'(\beta)$, so that they can only have limited influence
on the eigenvalues of $B(\beta)+B'(\beta)$. The bounds on
the norms of $B^{(\rm rem,1)}(\beta)$ and $B^{(\rm rem,2)}$
provided in the lemma  are sufficient for all conclusions on the properties of
$\mu_r\left[ B(\beta) + B'(\beta) \right]$ below. The proof of the
lemma  can be found in the section~\ref{app:Proofs-Supp} below. The lemma motivates
the following high-level assumption.

\begin{HLTwoAssumption}[\bf Second High-Level Assumption on Matrix Spectra]
   \label{ass:HL2}
   For all constants $c>0$
   \begin{align*}
      \sup_{\{\beta : N^{3/4} \left\| \beta -\beta^{0} \right\| \leq c\}}
      \frac { \left| \sum_{r=1}^{R-R^0}  \left\{  \mu_r\left[ B(\beta) + B'(\beta) \right] -\mu_r\left[ B(\beta^0) + B'(\beta^0)\right]
        \right\} \right| }
       {(1+\sqrt{NT} \| \beta-\beta^0\|)^2}
          = o_P(1),
   \end{align*}
   where $B(\beta)$ was defined in Lemma~\ref{th:expansion2b}.
\end{HLTwoAssumption}

Combining Lemma~\ref{lemma:expansion2a},
Assumption~\ref{ass:HL2}, and Theorem~\ref{th:expansion}, we find that
the profile objective function for $R>R^0$ can be written as
\begin{align*}
      {\cal L}_{NT}^{R}(\beta) &= {\cal L}_{NT}^{R}(\beta^0)
                           -  \frac 2 {\sqrt{NT}}   \left(\beta-\beta^0 \right)'
                           \left( C^{(1)} + C^{(2)} \right)
                    + \left(\beta-\beta^0 \right)'  W  \left(\beta-\beta^0 \right)
                          + {\cal L}_{NT}^{R,{\rm rem,2}}(\beta)  ,
\end{align*}
with a remainder term that satisfies for all constants $c>0$
\begin{align*}
     \sup_{\{\beta : N^{3/4} \left\| \beta -\beta^{0} \right\| \leq c\}} \frac{
        \left| {\cal L}_{NT}^{R,{\rm rem,2}}(\beta) \right|  } { \left( 1 + \sqrt{NT} \, \left\| \beta -\beta^{0} \right\| \right)^2 } = o_{p}\left( \frac 1 {NT} \right) .
\end{align*}
This result, together with our $N^{3/4}$-consistency result for $\widehat \beta_{R}$, gives
rise to the following corollary.
\begin{corollary}
   \label{cor:LimitRgen2}
   Let $R>R^0$, let Assumptions \ref{ass:SF}, \ref{ass:SN}, \ref{ass:NC}, \ref{ass:EX}, \ref{ass:HL1}
   and \ref{ass:HL2} be satisfied and furthermore assume
   that $C^{(1)} = {\cal O}_P(1)$.
   In the limit $N,T \rightarrow \infty$ with
   $N/T \rightarrow \kappa^2$, $0<\kappa<\infty$, we then have
   \begin{align*}
       \sqrt{NT}\left(\widehat \beta_{R} - \beta^0\right)
                   = W^{-1} \left( C^{(1)} + C^{(2)} \right) + o_P(1)
                   = {\cal O}_P(1).
   \end{align*}
\end{corollary}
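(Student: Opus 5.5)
We prove Corollary~\ref{cor:LimitRgen2} the same way as Corollary~\ref{cor:LimitR0}. The difference is that the quadratic approximation of ${\cal L}^R_{NT}$ is only valid in an $N^{3/4}$-shrinking neighborhood, so we first localize $\widehat\beta_R$ there.

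\textbf{Step 1 (localization).} Theorem~\ref{th:ConvergenceRate} applies: its assumptions are among ours, and $C^{(1)}={\cal O}_P(1)$ implies $C^{(1)}={\cal O}_P(N^{1/4})$. It gives $N^{3/4}\|\widehat\beta_R-\beta^0\|={\cal O}_P(1)$. Hence for any $\epsilon>0$ there is a constant $c$ such that, with probability at least $1-\epsilon$, $\widehat\beta_R$ lies in $B_c=\{\beta: N^{3/4}\|\beta-\beta^0\|\le c\}$.

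\textbf{Step 2 (quadratic expansion for $R>R^0$).} On $B_c$ we combine three ingredients:
\begin{itemize}
\item Lemma~\ref{th:expansion2b}, which writes ${\cal L}^R_{NT}(\beta)={\cal L}^0_{NT}(\beta)-\frac1{NT}\sum_{r\le R-R^0}\mu_r[B(\beta)+B'(\beta)]+{\cal L}^{R,\rm rem}_{NT}(\beta)$;
\item Assumption~\ref{ass:HL2}, which allows us to replace $\sum_r\mu_r[B(\beta)+B'(\beta)]$ by its value at $\beta^0$ at cost $o_P((1+\sqrt{NT}\|\beta-\beta^0\|)^2)$ uniformly on $B_c$;
\item Theorem~\ref{th:expansion} for ${\cal L}^0_{NT}$, applicable with $c_{NT}=cN^{-3/4}\to0$.
\end{itemize}
The $\beta$-independent pieces are absorbed into ${\cal L}^R_{NT}(\beta^0)$, which is obtained by evaluating the same identity at $\beta^0$. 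This yields the display preceding the corollary,
\[
{\cal L}^R_{NT}(\beta)={\cal L}^R_{NT}(\beta^0)-\tfrac{2}{\sqrt{NT}}(\beta-\beta^0)'(C^{(1)}+C^{(2)})+(\beta-\beta^0)'W(\beta-\beta^0)+{\cal L}^{R,\rm rem,2}_{NT}(\beta),
\]
with $\sup_{B_c}|{\cal L}^{R,\rm rem,2}_{NT}|/(1+\sqrt{NT}\|\beta-\beta^0\|)^2=o_P(1/NT)$. The only care needed is that the three remainder normalizations are compatible. They are: on $B_c$ we have $\sqrt N\|\beta-\beta^0\|\le cN^{-1/4}$, so the sup-sets in Lemma~\ref{th:expansion2b} contain $B_c$, and all remainders are measured against $(1+\sqrt{NT}\|\beta-\beta^0\|)^2/NT$.

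\textbf{Step 3 (completing the square).} Let $\gamma=W^{-1}(C^{(1)}+C^{(2)})/\sqrt{NT}$. As in the proof of Corollary~\ref{cor:LimitR0}, $W^{-1}\le \mathbbm{1}_K/b$ wpa1 by Assumption~\ref{ass:NC}, and $C^{(2)}={\cal O}_P(1)$, so $\gamma={\cal O}_P((NT)^{-1/2})$. In particular $\beta^0+\gamma\in B_c$ wpa1. Comparing ${\cal L}^R_{NT}(\widehat\beta_R)\le{\cal L}^R_{NT}(\beta^0+\gamma)$ and completing the square gives
\[
(\widehat\beta_R-\beta^0-\gamma)'W(\widehat\beta_R-\beta^0-\gamma)\le o_P\!\left(\tfrac1{NT}\right)\big[1+(1+\sqrt{NT}\|\widehat\beta_R-\beta^0\|)^2\big].
\]
Therefore $\sqrt{NT}\|\widehat\beta_R-\beta^0-\gamma\|\le o_P(1)+o_P(\sqrt{NT}\|\widehat\beta_R-\beta^0\|)$.

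\textbf{Step 4 (bootstrapping the rate).} From Step 3, $\sqrt{NT}\|\widehat\beta_R-\beta^0\|\le {\cal O}_P(1)+o_P(\sqrt{NT}\|\widehat\beta_R-\beta^0\|)$. Absorbing the last term into the left side gives $\sqrt{NT}\|\widehat\beta_R-\beta^0\|={\cal O}_P(1)$. Feeding this back into Step 3 gives $\sqrt{NT}(\widehat\beta_R-\beta^0)=W^{-1}(C^{(1)}+C^{(2)})+o_P(1)={\cal O}_P(1)$, which is the claim.

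I expect the main obstacle to be Step 2, specifically checking that the remainder bounds in Lemma~\ref{th:expansion2b} and Theorem~\ref{th:expansion} hold uniformly on the $N^{3/4}$ ball at the correct scale. The substantive difficulty is already packaged into Assumptions~\ref{ass:HL1} and~\ref{ass:HL2}.
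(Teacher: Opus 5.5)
Your proposal is correct and is essentially the paper's proof. The paper likewise localizes $\widehat\beta_R$ in the $N^{3/4}$-ball via Theorem~\ref{th:ConvergenceRate}, then uses Lemma~\ref{th:expansion2b}, Assumption~\ref{ass:HL2} and Theorem~\ref{th:expansion} to reduce ${\cal L}^R_{NT}(\widehat\beta_R)\le{\cal L}^R_{NT}(\beta^0+\gamma)$ to the quadratic comparison from the $R=R^0$ case, and finishes exactly as in Corollary~\ref{cor:LimitR0}. Your write-up is, if anything, more explicit than the paper's: you check that $\beta^0+\gamma$ lies in the ball, match the remainder normalizations, and (implicitly) use that Assumption~\ref{ass:NC} with $R>R^0$ still gives $W\ge b\,\mathbbm{1}_K$, since $\sum_{r\ge 2R^0+1}\mu_r\ge\sum_{r\ge R+R^0+1}\mu_r$.
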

The proof of Corollary~\ref{cor:LimitRgen2} is analogous to that
of Corollary~\ref{cor:LimitR0}. The combination of both corollaries shows
that our main result holds under high-level assumptions,
i.e. the limiting distributions of $\widehat \beta_{R}$
and $\widehat \beta_{R^0}$ are indeed identical.

\subsection{Justification of Assumption~\ref{ass:HL2}}

The following is a technical lemma, which is crucially used in the proof of
Lemma~\ref{lemma:JustifyHL2} below.

\begin{lemma}
   \label{Lemma:EVbound1}
   Let A and B be symmetric $n\times n$ matrices, and let $A$ be positive semi-definite. Let
   $\mu_1(A) \geq \mu_2(A) \geq \ldots \geq \mu_n(A) \geq 0$ be the sorted eigenvalues
   of $A$, and let $\nu_1$, $\nu_2, \ldots , \nu_n$ be the corresponding eigenvectors
   that are orthogonal and normalized such that $\|\nu_i\| = 1$ for $i=1,\ldots,n$.
   Let $b=\max_{i,j=1,\ldots,n} |\nu_i' B \nu_j|$.
   Let $r$ and $q$ be positive integers with $r<q\leq n$,
   and let $\sum_{i=q}^n b \, (\mu_r(A)-\mu_i(A))^{-1}  \leq 1$ be satisfied.
   Then we have
   \begin{align*}
      \left| \mu_r(A+B) -  \mu_r(A) \right|
          \, &\leq \, \frac{(q-1) \, b}{1 - \sum_{i=q}^n \frac b {\mu_r(A)-\mu_i(A)}}
   \end{align*}
\end{lemma}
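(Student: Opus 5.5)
We will prove the two one-sided bounds $\mu_r(A+B)\ge \mu_r(A)-(q-1)b$ and $\mu_r(A+B)\le \mu_r(A)+(q-1)b/(1-c)$ separately. Here we abbreviate $c:=\sum_{i=q}^n b\,(\mu_r(A)-\mu_i(A))^{-1}\le 1$. Both bounds come from the Courant--Fischer min-max characterization, written in the orthonormal eigenbasis $\nu_1,\ldots,\nu_n$ of $A$. Since $1-c\in(0,1]$, both bounds are dominated by the right-hand side of the lemma. The hypothesis only makes sense if $\mu_r(A)>\mu_i(A)$ for all $i\ge q$, which we take as implicit. If $c=1$ the claim is trivial, so we may assume $c<1$.

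In the eigenbasis, any unit vector is $x=\sum_i x_i\nu_i$, and then $x'Ax=\sum_i\mu_i(A)x_i^2$. Since every entry $\nu_i'B\nu_j$ is bounded by $b$, we have $|x'Bx|\le b\,(\sum_i|x_i|)^2$.

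\emph{Lower bound.} Take the $r$-dimensional test space ${\rm span}(\nu_1,\ldots,\nu_r)$. For a unit $x$ in it, $x'Ax\ge\mu_r(A)$. By Cauchy--Schwarz $(\sum_{i\le r}|x_i|)^2\le r$, so $x'Bx\ge -rb\ge-(q-1)b$, using $r<q$. Min-max then gives $\mu_r(A+B)\ge\mu_r(A)-(q-1)b$.

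\emph{Upper bound.} Use $\mu_r(A+B)\le\max\{x'(A+B)x:\ \|x\|=1,\ x\perp\nu_1,\ldots,\nu_{r-1}\}$. Fix such an $x$ and set
\[
s=\sum_{i=r}^{q-1}|x_i|,\qquad t=\sum_{i=q}^n|x_i|,\qquad D=\sum_{i=q}^n(\mu_r(A)-\mu_i(A))x_i^2\ge 0 .
\]
Using $\|x\|=1$ we get $x'Ax\le\mu_r(A)-D$. Cauchy--Schwarz with weights $b/(\mu_r(A)-\mu_i(A))$ gives $b\,t^2\le c\,D$. Also $s^2\le q-r\le q-1$. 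Hence
\[
x'(A+B)x-\mu_r(A)\le b(s+t)^2-D\le b s^2+2bst-\tfrac{1-c}{c}\,b\,t^2 .
\]
Maximizing the right-hand side over $t\ge0$ yields at most $b s^2+\tfrac{c}{1-c}b s^2=\tfrac{b s^2}{1-c}\le\tfrac{(q-1)b}{1-c}$.

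The only delicate step is this upper bound. There we have to trade the possibly large mass of $x$ on the eigenvectors $\nu_q,\ldots,\nu_n$ (whose eigenvalues are far below $\mu_r(A)$) against the cross term $2bst$. The weighted Cauchy--Schwarz inequality, with weights given by the inverse eigenvalue gaps, followed by completing the square in $t$, is exactly what produces the factor $(1-c)^{-1}$. Combining the two one-sided bounds proves the lemma.
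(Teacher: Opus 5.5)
Your proof is correct. The lower bound matches the paper's: both use the max--min characterization over ${\rm span}(\nu_1,\ldots,\nu_r)$ to get $\mu_r(A)-rb$. The upper bound, however, follows a different route.

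The paper also restricts to $x\perp\nu_1,\ldots,\nu_{r-1}$. It then replaces the compressed matrix $\widetilde A+\widetilde B$ by the larger $\widetilde A+\bar B$, where $\bar B$ has every entry equal to $b$, so the comparison matrix is diagonal plus a rank-one matrix. Its largest eigenvalue $\widetilde\mu>\mu_r(A)$ solves the secular equation $1=\sum_{i=r}^n b/(\widetilde\mu-\mu_i(A))$. The paper splits this sum at $i=q$ and bounds $\widetilde\mu-\mu_i(A)$ from below by $\widetilde\mu-\mu_r(A)$ for $i<q$ and by $\mu_r(A)-\mu_i(A)$ for $i\ge q$. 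This gives the claimed bound on $\widetilde\mu-\mu_r(A)$.

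You instead stay with the quadratic form itself. You absorb the mass on $\nu_q,\ldots,\nu_n$ through the weighted Cauchy--Schwarz inequality $bt^2\le cD$ and then complete the square in $t$.

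The paper's route exhibits an explicit worst-case comparison matrix and reads the bound off its characteristic equation. The cost is invoking the rank-one determinant identity and checking that $\widetilde\mu$ is not an eigenvalue of $\widetilde A$. Your route needs only Courant--Fischer and Cauchy--Schwarz, and it shows directly where the factor $(1-c)^{-1}$ comes from.

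One small loose end: dividing by $c$ requires $c>0$. Since $q\le n$, $c=0$ forces $b=0$ and hence $B=0$. That case is trivial, and you should set it aside together with the case $c=1$.
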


The following Lemma provides conditions under which Assumption~\ref{ass:HL2}
is satisfied. It crucially connects the current section with the main text.

\begin{lemma}
   \label{lemma:JustifyHL2}
   Let Assumptions~\ref{ass:SF}, \ref{ass:SN} and \ref{ass:EV} hold,
   let $R>R^0$ and consider
   a limit $N,T \rightarrow \infty$ with $N/T \rightarrow \kappa^2$,
   $0<\kappa<\infty$.    Then, for all constants $c>0$ and $r=1,\ldots,R-R^0$ we have
   \begin{align*}
      \sup_{\{\beta : N^{3/4} \left\| \beta -\beta^{0} \right\| \leq c\}}
     \left| \mu_r\left( B(\beta) + B'(\beta) \right) - \rho_r
         \right|
          = o_P(1),
   \end{align*}
   which implies that Assumption~\ref{ass:HL2} is satisified. 
\end{lemma}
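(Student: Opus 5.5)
The plan is to compare $B(\beta)+B'(\beta)$ with the unperturbed matrix $A_0 := M_{f^0} e' M_{\lambda^0} e M_{f^0}$. Its eigenvalues are exactly the $\rho_r$ of Assumption~\ref{ass:EV}, with eigenvectors $w_r$ on the range of $M_{f^0}$, and it is completed by zero eigenvalues on the span of $f^0$. Write
\[
B(\beta)+B'(\beta) = A_0 + D(\beta).
\]
The perturbation $D(\beta)$ collects the following pieces: the cross and quadratic terms in $\Delta\beta\cdot X$ coming from the first line of $B$; the terms that are cubic in $e$ and those of mixed $e$--$X$ type; $B^{(eeee)}$ and its transpose; and the remainder blocks $M_{f^0}B^{({\rm rem},1)}P_{f^0}$ and $P_{f^0}B^{({\rm rem},2)}P_{f^0}$, together with their transposes. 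I will then apply Lemma~\ref{Lemma:EVbound1} with $A=A_0$, $B=D(\beta)$, with $r\le R-R^0$, and with $q=q_{NT}$ taken from Assumption~\ref{ass:EV}.

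Lemma~\ref{Lemma:EVbound1} requires two inputs.
\begin{itemize}
\item[(a)] The condition $\sum_{i\ge q} b/(\rho_r-\rho_i)\le 1$. By Assumption~\ref{ass:EV}(ii), and since $\rho_r\ge\rho_{R-R^0}$, this sum is ${\cal O}_P(q_{NT}\, b)$. It therefore suffices to show $q_{NT}\, b = o_P(1)$, uniformly over $N^{3/4}\|\Delta\beta\|\le c$.
\item[(b)] The conclusion $|\mu_r(A_0+D)-\rho_r| \le (q-1)b/(1-o_P(1)) = o_P(1)$. This follows from the same bound $q_{NT}\, b=o_P(1)$.
\end{itemize}
The zero eigenvalues of $A_0$ on ${\rm span}(f^0)$ have to be included in the eigenbasis $\nu_i$. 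This is harmless, because $\rho_r-0\ge \rho_{R-R^0}>cN$ by Assumption~\ref{ass:EV}(i), so these indices contribute only ${\cal O}(R^0 b/N)$ to the sum in (a).

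The whole proof therefore reduces to bounding
\[
b=\max_{i,j}|\nu_i' D(\beta)\nu_j|
\]
term by term, where the $\nu_i$ range over the $w_r$ and an orthonormal basis of ${\rm span}(f^0)$.
\begin{itemize}
\item The leading cross term is $\Delta\beta_k\,(w_i' e'M_{\lambda^0}X_kM_{f^0}w_j)=\Delta\beta_k\sqrt{\rho_i}\,v_i'X_kw_j$. Since $\sqrt{\rho_i}={\cal O}_P(\sqrt N)$ and $\max|v_i'X_kw_j| = o_P(N^{1/4}q_{NT}^{-1})$ by Assumption~\ref{ass:EV}(iii), this term is of order $N^{-3/4}\sqrt N\, N^{1/4} q^{-1}=o(q^{-1})$.
\item The quadratic term in $\Delta\beta$ is bounded by $\|\Delta\beta\|^2\|X\|^2 = {\cal O}(N^{-3/2}N^2)=O(\sqrt N)$, which is too crude. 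Instead I use $\|\Delta\beta\|^2\max_i\|M_{\lambda^0}X_k M_{f^0}w_i\|^2$, expanded in the basis $v_s$. I expect to need the entrywise bound $|v_s'X_kw_i|$, and for that the crude operator bound must be replaced by a Frobenius-type argument. This is a delicate step.
\item The terms involving $e f^0(\cdot)^{-1}\lambda^{0\prime}e$ are handled by factoring through $\|w_i'e'P_{\lambda^0}\|$ and $\|v_i' eP_{f^0}\|$, each of which is $o_P(N^{1/4}q^{-1})$. Combined with $\sqrt{\rho}={\cal O}(\sqrt N)$ and the normalizations $(\lambda^{0\prime}\lambda^0)^{-1}(f^{0\prime}f^0)^{-1}={\cal O}(N^{-2})$, these terms are $o(q^{-1})$ or smaller.
\item $B^{(eeee)}$ and $B^{({\rm rem},2)}$ are ${\cal O}_P(1)$ in norm, but they must still be $o(q^{-1})$ in matrix elements. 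For $B^{(eeee)}$ I again factor each term through $w_i'e'P_{\lambda^0}$ or $P_{f^0}e\,w_j$-type vectors. The $P_{f^0}$-sandwiched blocks only couple to the $f^0$-directions, whose eigenvalue gap is of order $N$; this suggests using a Schur-complement / Weyl argument for those blocks separately rather than putting them into $b$.
\item The terms with $X_kP_{f^0}$ use the bound $\|v_r'X_kP_{f^0}\|=o_P(Nq^{-1})$, multiplied by $\|\Delta\beta\|\cdot\|P_{\lambda^0}e\|/N$-type factors.
\end{itemize}

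The main obstacle is to obtain $q_{NT}\, b=o_P(1)$ \emph{uniformly} in $\beta$, and in particular to control the $\|\Delta\beta\|^2$ term and the $P_{f^0}$-block remainders. Operator-norm bounds are too weak there, so Assumption~\ref{ass:EV}(iii) has to be exploited coordinatewise. If the direct bound on the $f^0$-directions fails, I would first project them out: eigenvalues near $\rho_r\asymp N$ are unaffected to $o_P(1)$ by blocks of size ${\cal O}_P(1+\sqrt{NT}\|\Delta\beta\|)={\cal O}_P(N^{1/4})$ coupling to a subspace with eigenvalue gap of order $N$, since the second-order shift is then ${\cal O}(N^{1/2}/N)$. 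Uniformity in $\beta$ comes for free, because every bound is linear or quadratic in $\Delta\beta$ with $\beta$-free random coefficients.

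Finally, $\mu_r(B(\beta)+B'(\beta))=\rho_r+o_P(1)$ uniformly, so the differences appearing in Assumption~\ref{ass:HL2} are $o_P(1)$. Since $(1+\sqrt{NT}\|\Delta\beta\|)^2\ge1$, this gives Assumption~\ref{ass:HL2}.
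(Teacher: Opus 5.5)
The gap is in how you handle the coupling to ${\rm span}(f^0)$. Your bounds on the $M_{f^0}$--$M_{f^0}$ block agree with the paper: the cross term via $\sqrt{\rho_i}\,v_i'X_kw_j$, the quadratic term via $\|M_{\lambda^0}X_kw_r\|^2\le N\max_s|v_s'X_kw_r|^2$, the $ef^0(\cdot)\lambda^{0\prime}e$ terms, and $B^{(eeee)}$. So does your use of Lemma~\ref{Lemma:EVbound1} with $q=q_{NT}$ and Assumption~\ref{ass:EV}(ii).

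You treat only $M_{f^0}B^{({\rm rem},1)}P_{f^0}$ and $P_{f^0}B^{({\rm rem},2)}P_{f^0}$ as coupling to $f^0$. The second line of $B(\beta)$, $Z_0:=-M_{f^0}e'M_{\lambda^0}eM_{f^0}e'\lambda^0(\lambda^{0\prime}\lambda^0)^{-1}(f^{0\prime}f^0)^{-1}f^{0\prime}$, is also a mixed $M_{f^0}$--$P_{f^0}$ block. It ends in $f^{0\prime}$, so $w_r'Z_0w_s=0$. Its entries between $w_r$ and a unit vector $\tilde f\in{\rm span}(f^0)$ are $-\rho_r\,w_r'e'\lambda^0(\lambda^{0\prime}\lambda^0)^{-1}(f^{0\prime}f^0)^{-1}f^{0\prime}\tilde f$. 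This carries a factor $\rho_r$ rather than $\sqrt{\rho_r}$ and only one projected small factor, so your third bullet does not apply. For the leading $r$ the entry is of the order of $\|w_r'e'P_{\lambda^0}\|$, which is not $o(q_{NT}^{-1})$, so $Z_0$ cannot go into $b$.

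Your fallback does not rescue it either. $\|Z_0\|\le\|e\|^2\|e'P_{\lambda^0}\|\,{\cal O}_P(N^{-1})={\cal O}_P(\sqrt N)$, and this is the true order in the iid normal case. The second-order shift bounded by $\|Z_0\|^2/{\rm gap}$ is therefore only ${\cal O}_P(1)$, not $o_P(1)$. Your $\|Z\|^2/{\rm gap}$ argument works only for the ${\cal O}_P(N^{1/4})$ block $B^{({\rm rem},1)}$.

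The missing idea is that the correction $Z_0Z_0'/\mu$ has norm ${\cal O}_P(1)$ but small entries in the $w$-basis. Those entries are $\rho_r\rho_s\,w_r'e'\lambda^0(\lambda^{0\prime}\lambda^0)^{-1}(f^{0\prime}f^0)^{-1}(\lambda^{0\prime}\lambda^0)^{-1}\lambda^{0\prime}e\,w_s/\mu$. This is ${\cal O}_P(N^{-1})\max_r\|w_r'e'P_{\lambda^0}\|^2=o_P(N^{-1/2}q_{NT}^{-2})$. So the correction must be fed into Lemma~\ref{Lemma:EVbound1}, not bounded in operator norm.

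The paper does this by completing squares rather than through a Schur complement that is nonlinear in $\mu$. For both mixed terms $M_{f^0}ZP_{f^0}$ (namely $Z_0$ and $B^{({\rm rem},1)}$) it adds or subtracts the positive semi-definite matrix $(\sqrt{4/(aN)}\,M_{f^0}ZP_{f^0}\mp\sqrt{aN/4}\,P_{f^0})(\cdots)'$, with signs chosen so the mixed term cancels. This gives block-diagonal $C^\pm(\beta)=C_1^\pm(\beta)+C_2^\pm$ with $\mu_r(C^-)\le\mu_r(B+B')\le\mu_r(C^+)$. In $C_1^\pm$ each mixed term is replaced by $\pm\frac{4}{aN}M_{f^0}ZP_{f^0}Z'M_{f^0}$, whose $w$-entries are small as above. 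Lemma~\ref{Lemma:EVbound1} is then applied to $C_1^\pm$ with $b$ maximized over the $w_r$ only. Finally, $\|C_2^\pm\|=aN/2+{\cal O}_P(1)<\rho_{R-R^0}$ by Assumption~\ref{ass:EV}(i), so the top $R-R^0$ eigenvalues of $C^\pm$ come from $C_1^\pm$.

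With this device, or an equivalent Schur-complement argument that measures the correction in the $w$-basis, your proof goes through.
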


\subsection{Sufficiency of Low-Level Assumptions in Main Text}
\label{ass:SufficiencyLL}

The following Lemma summarizes some properties of the singular value vectors $v_r$ and $w_r$
of $M_{f^0} e M_{\lambda^0}$ for the case where $e_{it}$ is $iid$ normally distributed. Those properties
are used in the proof of the main text  Lemma~\ref{lemma:JustifyEV} below.

\begin{lemma}
    \label{lemma:iidEV}
    Let Assumption~\ref{ass:LL} hold 
    and let $v_r$ and $w_r$ be defined as in Assumption~\ref{ass:EV}. Then the following holds.
    
    \begin{itemize}
        \item[(i)]
      Let $\widetilde v$ be an $N$-vector with $iid {\cal N}(0,1)$
    entries; let $\widetilde w$ be a $T$-vector, independent of $\widetilde v$, also 
    with $iid {\cal N}(0,1)$ entries; and let $\widetilde v$ and $\widetilde w$ be independent
    of $\lambda^0$, $f^0$, $\overline X_k$, and $\widetilde X^{\rm str}_k$ and $e P_{f^0}$. Then,  
    for all $r,s=1,\ldots,\min(N,T)-R^0$ we have
    \begin{align*}
   \left( \begin{array}{c} v_r \\ w_s \end{array} \right)
     \operatorname*{=}_d
       \left( \begin{array}{c}  \| M_{\lambda^0} \widetilde v \|^{-1} M_{\lambda^0} \widetilde v
         \\ 
        \| M_{f^0} \widetilde w \|^{-1}  M_{f^0} \widetilde w  \end{array} \right) ,
    \end{align*}
    where $\operatorname*{=}_d$ refers to equally distributed. Furthermore, the 
    squares of $\| M_{\lambda^0} \widetilde v \|^{-1}$
and $\| M_{f^0} \widetilde w\|^{-1}$
have inverse chi-square distributions with $N-R^0$ and $T-R^0$ degrees
of freedom, respectively, which implies that 
for every $\xi>0$ there exists
a constant $c>0$ such that we have
\begin{align*}
 \mathbbm{E}\left( \sqrt{N} \|  M_{\lambda^0} \widetilde v \|^{-1}  \right)^{\xi}  &< c,
     &
     \mathbbm{E}\left(  \sqrt{T} \| M_{f^0} \widetilde w \|^{-1}   \right)^{\xi}   &< c,
\end{align*}
for all $N  > 4\xi + R^0$ and $T > 4\xi + R^0$.

       \item[(ii)]
         There exists $\varepsilon \in [0,1/12)$ such that as $N,T$ become large we have
         \begin{align*}
                    \max_{r,s,\tau} \left| \sum_{t=\tau+1}^{T}  w_{r,t} w_{s,t-\tau} \right| = {\cal O}_P( T^{-1/2+\varepsilon} ) ,
          \end{align*}
         where $r,s=1,\ldots,\min(N,T)-R^0$ and $\tau=1,2,\ldots,T-1$.   
       
       \item[(iii)] The matrices $P_{\lambda^0} e P_{f^0}$,  $P_{\lambda^0} e M_{f^0}$, $M_{\lambda^0} e P_{f^0}$
         and $M_{\lambda^0} e M_{f^0}$ are all mutually independent, and
         their entries have uniformly bounded moments of arbitrary order.
       
   \end{itemize}    
\end{lemma}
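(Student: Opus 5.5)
The proof rests on the rotational invariance of iid normal matrices. I prove part (iii) first, since parts (i) and (ii) use it.

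\emph{Part (iii).} Condition on $\lambda^0$ and $f^0$; this is harmless because $e$ is independent of them by Assumption~\ref{ass:LL}(ii).
\begin{itemize}
\item Choose orthogonal matrices $O_N=(U_{\lambda},U_{\lambda}^\perp)$ and $O_T=(U_f,U_f^\perp)$. The columns of $U_\lambda$ span $\lambda^0$ and those of $U_f$ span $f^0$, so that $P_{\lambda^0}=U_\lambda U_\lambda'$, $M_{\lambda^0}=U^\perp_\lambda U^{\perp\prime}_\lambda$, and similarly for $f^0$.
\item The rotated matrix $\widetilde e=O_N' e O_T$ again has iid ${\cal N}(0,\sigma^2)$ entries.
\item The four matrices $P_{\lambda^0} e P_{f^0}$, $P_{\lambda^0} e M_{f^0}$, $M_{\lambda^0} e P_{f^0}$ and $M_{\lambda^0} e M_{f^0}$ are fixed linear images of the four disjoint blocks of $\widetilde e$. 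Hence they are mutually independent.
\item Each of their entries is a linear combination of the block entries with coefficients given by a projector. It is therefore Gaussian with variance at most $\sigma^2$, so all its moments are bounded uniformly.
\end{itemize}

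\emph{Part (i).} Write $M_{\lambda^0} e M_{f^0}=U^\perp_\lambda \bar e\, U^{\perp\prime}_f$, where $\bar e$ is an $(N-R^0)\times(T-R^0)$ matrix with iid normal entries.
\begin{itemize}
\item For independent Haar orthogonal $O_1,O_2$, independent of $\bar e$, we have $O_1\bar e O_2 \operatorname*{=}_d \bar e$.
\item The singular vectors of $O_1\bar e O_2$ are $O_1 a_r$ and $O_2' b_s$, where $a_r,b_s$ are the singular vectors of $\bar e$. The sign ambiguity is removed by attaching an independent random sign.
\item Conditionally on $\bar e$, the pair $(O_1a_r,O_2'b_s)$ consists of two independent uniform vectors on the respective unit spheres. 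Such a vector has the law of $\widetilde v/\|\widetilde v\|$ with $\widetilde v$ Gaussian.
\item Mapping back through $U^\perp_\lambda$ and $U^\perp_f$ gives $M_{\lambda^0}\widetilde v/\|M_{\lambda^0}\widetilde v\|$ and $M_{f^0}\widetilde w/\|M_{f^0}\widetilde w\|$.
\item Independence from $\overline X_k$ and $\widetilde X^{\rm str}_k$ follows from Assumption~\ref{ass:LL}(ii). Independence from $eP_{f^0}$ follows from part (iii), because $(v_r,w_s)$ is a function of $M_{\lambda^0}eM_{f^0}$ alone.
\item Since $M_{\lambda^0}$ is a rank-$(N-R^0)$ projector, $\|M_{\lambda^0}\widetilde v\|^2\sim\chi^2_{N-R^0}$. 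For $m=N-R^0$ the exact formula
\[
\mathbbm E\big[(N/\chi^2_m)^{\xi/2}\big]=(N/2)^{\xi/2}\,\Gamma\big(\tfrac{m-\xi}{2}\big)/\Gamma\big(\tfrac m2\big)
\]
is bounded uniformly once $m>4\xi$, by Stirling's formula. The same argument applies to $\widetilde w$.
\end{itemize}

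\emph{Part (ii)} is the main obstacle, for two reasons. The bound must hold uniformly over $O(T^3)$ triples $(r,s,\tau)$, so exponential tail bounds are needed rather than moment bounds. Moreover, for $r\ne s$ the vectors $w_r$ and $w_s$ are orthogonal, so they are not independent.
\begin{itemize}
\item \emph{Joint law of $(w_r,w_s)$.} By the same Haar argument as in part (i), $(w_r,w_s)$ is a pair of columns of a Haar frame in ${\rm span}(M_{f^0})$. I would represent it by Gram--Schmidt applied to $M_{f^0}g_1$ and $M_{f^0}g_2$, with $g_1,g_2$ independent standard Gaussian $T$-vectors.
\item \emph{Reduction to quadratic forms.} With $L_\tau$ the $\tau$-shift matrix, $\sum_t w_{r,t}w_{s,t-\tau}=w_r'L_\tau w_s$, and $\|L_\tau\|\le1$. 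The target is thus expressed through the Gaussian bilinear forms $g_i'M_{f^0}L_\tau M_{f^0}g_j$ and through $\|M_{f^0}g_i\|^2$. For $r=s$ only $g_1$ is needed.
\item \emph{Hanson--Wright bound.} The matrix $A=M_{f^0}L_\tau M_{f^0}$ satisfies $\|A\|\le1$ and $\|A\|_{HS}\le\sqrt T$. Its trace is $|{\rm tr}A|=|{\rm tr}(L_\tau P_{f^0})|\le R^0$, using that $L_\tau$ has zero trace. Hanson--Wright, applied directly to $g_1'Ag_2$ and after polarization to $g_1'Ag_1$, gives
\[
P\big(|g_i'Ag_j-\delta_{ij}{\rm tr}A|>T^{1/2+\varepsilon}\big)\le 2\exp(-cT^{2\varepsilon}).
\]
\item \emph{Normalizations and Gram--Schmidt correction.} The same inequality shows $\|M_{f^0}g_i\|^2=T(1+o(1))$ with exponentially small failure probability. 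The Gram--Schmidt correction coefficient $g_1'M_{f^0}g_2/\|M_{f^0}g_1\|^2$ is $O(T^{-1/2+\varepsilon})$ with the same tail bound.
\item \emph{Union bound.} Dividing by the normalizations gives $|w_r'L_\tau w_s|\le CT^{-1/2+\varepsilon}$ outside an event of probability at most $\exp(-cT^{2\varepsilon})$. A union bound over the $O(T^3)$ triples still gives total failure probability $o(1)$. Any fixed $\varepsilon\in(0,1/12)$ therefore works.
\end{itemize}
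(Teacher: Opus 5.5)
Your proposal is correct and follows essentially the paper's route: rotational invariance of the iid Gaussian $e$ gives (iii) and the Haar representation in (i), and in (ii) the Gram--Schmidt representation of $(w_r,w_s)$ reduces $w_r'L_\tau w_s$ to Gaussian quadratic and bilinear forms in $M_{f^0}L_\tau M_{f^0}$, finished by a union bound. The one difference is that you use Hanson--Wright exponential tails, whereas the paper uses bounded moments of arbitrary order $\xi$ plus Markov's inequality with $\varepsilon$ of order $1/\xi$. Your remark that moment bounds would not suffice is therefore inaccurate, since arbitrary-order moments handle the polynomially many triples just as well, but this does not affect your argument.
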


The proof of Lemma~\ref{lemma:iidEV} is given in section~\ref{ass:ProofsInterEqu} below.

\begin{proof}[\bf \underline{Proof of Lemma~\ref{lemma:JustifyEV} (Justification of Main Text High-Level Assumptions)}]
   
      \# First, we show that Assumptions~\ref{ass:SN}, \ref{ass:EX} and \ref{ass:DX-1} 
  are satisfied, and that $C^{(1)} = {\cal O}_P(1)$:  
  
  Since $e_{it}$ is iid ${\cal N}(0,\sigma^2)$ we have $\| e \| = {\cal O}_P(\sqrt{N})$
  as $N,T$ grow at the same rate, see e.g. Geman~\cite*{Geman1980}.
  This also implies that
  $\| \widetilde X_k^{\rm weak} \| \leq \sum_{\tau=1}^\infty | \gamma_\tau | \|e\| = {\cal O}_P(\sqrt{N})$.  
  Assumption~\ref{ass:LL} therefore implies that
  Assumption~\ref{ass:DX-1} holds with $\widetilde X_k = \widetilde X_k^{\rm str} +  \widetilde X_k^{\rm weak}$
  and $\Sigma = \sigma^2 \mathbbm{1}$. Note that for this $\Sigma$ we have
  $g' \Sigma g = \sigma^2 \mathbbm{1}_Q = \|g' \Sigma g \| \mathbbm{1}_Q$
  and
    $\mu_n(\Sigma) = \sigma^2 = \|g' \Sigma g \|$ for all $n$.
  Assumption~\ref{ass:DX-1} also implies that Assumption~\ref{ass:SN} holds, as also
  noted in Lemma~\ref{lemma:JustifyHL1}.
  
   Since we assume that $\mathbbm{E} \left| X_{k,it}    \right|^{2}$ is uniformly bounded
   we have    $\mathbbm{E} \frac 1 {NT} {\rm Tr}( X_k'   X_k ) 
   = \frac 1 {NT} \sum_{i,t} \mathbbm{E}  X_{k,it}^{2} = {\cal O}(1)$
   and therefore ${\rm Tr}( X_k'   X_k )  = {\cal O}_P(NT)$.   
   We also have
   $\mathbbm{E}\left[ {\rm Tr}(  X_k e')^2 | X_k  \right]
      = \sigma^2 {\rm Tr}( X_k'   X_k )  =  {\cal O}_P(NT)$, and therefore
    $\frac 1 {\sqrt{NT}} {\rm Tr}(X_k \, e^{\prime} )  = {\cal O}_P(1)$,
    i.e. Assumption~\ref{ass:EX} holds. 
    By replacing $X_k$ with $M_{\lambda^0} X_k M_{f^0}$ in the previous argument
    we also find that $C^{(1)} = {\cal O}_P(1)$.

  \# Assumption \ref{ass:EV}$(i)$ holds for any $c < c_{\max} = \lim_{N,T \rightarrow \infty}
   \left( \sqrt{N} + \sqrt{T} \right)^2 / N$, because 
   from Theorem~1 in Soshnikov~\cite*{Soshnikov2002} we know that
   $\rho_{R-R^0}/N - c_{\max} = {\cal O}_P( N^{-2/3} )$. Some more details are given below.
        
  \# We now show that Assumption \ref{ass:EV}$(ii)$ holds with $q_{NT}= \log(N) N^{1/6}$.
Without loss of generality, we set $\sigma=1$ in this part of the proof.
We want to show that  $q_{NT}= \log(N) N^{1/6}$ also satisfies
\begin{align*}
\frac 1 {q_{NT} (T-R^0)} \, \sum_{r=q_{NT}}^{Q} \frac 1 {\mu_{R-R^0} - \mu_r}
              &= {\cal O}_P(1),
\end{align*}
where $\mu_r \equiv \rho_r/(T-R^0)$. Note that it is not important whether the sum
runs to $Q=N-R^0$ or $Q=T-R^0$, since the contributions of small
eigenvalues between $r=N-R^0$ and $r=T-R^0$ are of order ${\cal O}_P(1)$ anyways. Without loss of generality let
$\lim_{N,T \rightarrow \infty} N/T = \kappa^2 \leq 1$
in the rest of this proof (the proof for $\kappa \geq 1$
is analogous, since all arguments are symmetric under interchange of $N$ and $T$).
Let
$\mu_{NT} = \left[(N-R^0)^{1/2} + (T-R^0)^{1/2}\right]^2$,
$\sigma_{NT} = \left[(N-R^0)^{1/2} + (T-R^0)^{1/2}\right]
                   \left[(N-R^0)^{-1/2} + (T-R^0)^{-1/2}\right]^{1/3}$,
$\overline x = \lim_{N,T \rightarrow \infty} \mu_{NT}/(T-R^0)=(1+\kappa)^2$, and
$\underline x = (1-\kappa)^2$.
From Theorem~1 in Soshnikov~\cite*{Soshnikov2002} we know that
 the joint distribution of
$\sigma_{NT}^{-1} (\rho_1-\mu_{NT},\rho_2-\mu_{NT},\ldots,\rho_{R^0+1}-\mu_{NT})$ converges to the
Tracy-Widom law, i.e. to the limiting distribution of the first $R^0+1$ eigenvalues
of the Gaussian Orthogonal Ensemble.
Note that $\sigma_{NT}$ is of order $N^{1/3}$,
and that the Tracy-Widom law is a continuous distribution, so that the result of
Soshnikov implies that
\begin{align}
    \overline x - \mu_{R-R^0} &= {\cal O}_P(N^{-2/3}) \; ,
    &
   (\mu_{R-R^0} - \mu_{R-R^0+1})^{-1} = {\cal O}_P\left( N^{2/3} \right)\; .
   \label{FewLargestEV}
\end{align}
The empirical distribution of the $\mu_r$ is defined as
$F_{NT}(x) = Q^{-1} \sum_{r=1}^Q 1(\mu_r \leq x)$, where
$1(.)$ is the indicator function. This empirical
distribution converges to the Marchenko-Pastur
limiting spectral distribution $F_{\rm LSD}(x)$, which has domain
$[\underline x,\overline x]$, and whose
density
$f_{\rm LSD}(x) = dF_{\rm LSD}(x)/dx$ is given by
\begin{align}
   f_{\rm LSD}(x) &= \frac 1 {2 \pi \kappa^2 x}
                     \sqrt{ (\overline x - x) (x- \underline x) } \; .
\end{align}
An upper bound for $f_{\rm LSD}(x)$ is given by
$\frac 1 {2 \pi \kappa^2 \underline x} \sqrt{ (\overline x - x)
                        (\overline x - \underline x) }$,
and by integrating that upper bound we obtain
\begin{align}
   1-F_{\rm LSD}(x) \, &\leq  \, a \, (\overline x - x)^{3/2} \; ,
    &
   a &= \frac 2 {3 \pi \kappa^{3/2} \underline x} \; .
\end{align}
From Theorem~1.2 in G\"otze and Tikhomirov (2007) we know that
\begin{align}
   \sup_{x} \left| F_{NT}(x) - F_{LSD}(x) \right| &= {\cal O}_P(N^{-1/2}) \; .
\end{align}
Let $c_{1,NT} = \left\lceil 2 N^{1/2+\epsilon} \right\rceil$ and
$c_{2,NT} = \left\lceil 2 N^{3/4} \right\rceil$,
where $\left\lceil a \right\rceil$ is the smallest integer larger or equal
to $a$. Plugging in $x=\mu_{c_{1,NT}}$ into the result of
G\"otze and Tikhomirov, and using $F_{NT}(\mu_r) = 1 - (r-1)/N$, we find
\begin{align}
   a \left(\overline x - \mu_{c_{1,NT}} \right)^{3/2}
    &\geq 1-F_{\rm LSD}(\mu_{c_{1,NT}})
    = \frac{ c_{1,NT}-1 } N + {\cal O}_P(N^{-1/2})
  \nonumber \\
    &\geq N^{-1/2+\epsilon} , \qquad \text{wpa1.}
\end{align}
Using this and \eqref{FewLargestEV} we obtain
$(\mu_{R-R^0} - \mu_{c_1})^{-1} = {\cal O}_P\left( N^{1/3-2/3 \epsilon} \right)$.
Analogously one can show that
$(\mu_{R-R^0} - \mu_{c_2})^{-1} = {\cal O}_P\left( N^{1/6} \right)$.
In the following we just write $q$, $c_1$ and $c_2$
for $q_{NT}$, $c_{1,NT}$ and $c_{2,NT}$.
Combining the above results we find
\begin{align*}
   \frac 1 {q\, n} \, \sum_{r=q}^{Q} \frac 1 {\mu_{R-R^0} - \mu_r}
              &=
  \frac 1 {q \, n} \, \sum_{r=q}^{c_1-1} \frac 1 {\mu_{R-R^0} - \mu_r}
 +\frac 1 {q \, n} \, \sum_{r=c_1}^{c_2-1} \frac 1 {\mu_{R-R^0} - \mu_r}
 +\frac 1 {q \, n} \, \sum_{r=c_2}^{Q} \frac 1 {\mu_{R-R^0} - \mu_r}
   \nonumber \\
  &\leq   \frac {c_1} {q n (\mu_{R-R^0} - \mu_{R-R^0+1})}
        + \frac {c_2} {q n (\mu_{R-R^0} - \mu_{c_1})}
        + \frac {Q} {q n (\mu_{R-R^0} - \mu_{c_2})}
   \nonumber \\
  &= {\cal O}_P(1) + {\cal O}_P(N^{-1/12-5/3 \epsilon}) + {\cal O}_P(N^{-\epsilon})
   = {\cal O}_P(1) \; .
\end{align*}
This is what we wanted to show.

   \# We now show that Assumption \ref{ass:EV}$(iii)$ holds with $q_{NT}= \log(N) N^{1/6}$.
   Define
\begin{align}
      d^{(1)}_{NT} &= \max_{r,s,k} |v_r'  X_k  w_s| , &
      d^{(2)}_{NT} &= \max_{r} \| v_r'  e  P_{f^0} \| , &
      d^{(3)}_{NT} &= \max_{r} \| w_r'  e'  P_{\lambda^0} \| ,
    \nonumber \\
      d^{(4)}_{NT} &= N^{-3/4} \max_{r} \| v_r'  X_k  P_{f^0} \| , &
      d^{(5)}_{NT} &= N^{-3/4} \max_{r} \| w_r'  X'_k  P_{\lambda^0} \| .
      \label{DefDNT}
\end{align}
Furthermore, define
$d_{NT} = \max\left( 1, \; d_{NT}^{(1)}, \;
                           d_{NT}^{(2)}, \;
                           d_{NT}^{(3)}, \;
                           d_{NT}^{(4)}, \;
                           d_{NT}^{(5)} \right)$.
Then, Assumption~\ref{ass:EV}$(iii)$ can be summarized as $d_{NT} q_{NT} = o_P(N^{1/4})$,
i.e. given our choice of $q_{NT}$ we need to show that $d_{NT} = o_P(N^{1/12}/\log(N))$.
 
We decompose $X_k = X^{(A)}_k + X^{(B)}_k$, where
 $X^{(A)}_k = \overline X_k  + \widetilde X_k^{(str)}
        +   \sum_{\tau=1}^{t-1} \gamma_\tau [e P_{f^0}]_{i,t-\tau}$ 
and   $X^{(B)}_k =  \sum_{\tau=1}^{t-1} \gamma_\tau [e M_{f^0}]_{i,t-\tau}$.
Note that $X^{(A)}_k$ contains the strictly exogenous part of the regressor $X_k$,
but also contains the part of $\widetilde X_{k,it}^{\rm weak}$, which is independent
of $M_{\lambda^0} e M_{f^0}$, i.e. $X^{(A)}_k$ is independent of $\rho_r$, $v_r$ and $w_r$.
$X^{(B)}_k$ is the part of 
the weakly exogenous part $\widetilde X_{k,it}^{\rm weak}$
that is not independent of $\rho_r$, $v_r$ and $w_r$.
 Following the decomposition $X_k = X^{(A)}_k + X^{(B)}_k$ we also introduce the
corresponding decomposition of $d^{(1)}_{NT}= d^{(A,1)}_{NT} + d^{(B,1)}_{NT}$, where
$d^{(A,1)}_{NT} = \max_{r,s,k} |v_r'  X^{(A)}_k  w_s|$
and $d^{(B,1)}_{NT} = \max_{r,s,k} |v_r'  \widetilde X_k^{(B)}  w_s|$,
and analogously we define $d^{(A,4)}_{NT}$, $d^{(B,4)}_{NT}$, $d^{(A,5)}_{NT}$, 
and $d^{(B,5)}_{NT}$. Note that $d^{(1)}_{NT} \leq d^{(A,1)}_{NT} + d^{(B,1)}_{NT}$
and analogously for $d^{(4)}_{NT}$ and $d^{(5)}_{NT}$.

Using Lemma~\ref{lemma:iidEV} and Holder's inequality we have for sufficiently large $N,T$
\begin{align*}
    & \mathbbm{E}\left[ \left|  v_r'  X^{(A)}_k  w_s 
     \right|^{25}  \Bigg|  X^{(A)}_k \right]
   \nonumber \\
     &= \mathbbm{E}\left[ \left| \frac{\widetilde v' M_{\lambda^0}  X^{(A)}_k  M_{f^0} \widetilde w} 
         {  \| M_{\lambda^0} \widetilde v \|   \| M_{f^0} \widetilde w \|   }
     \right|^{25}  \Bigg|  X^{(A)}_k \right]
   \nonumber \\
      &=   \mathbbm{E}\left[ \left(
        \sqrt{N} \| M_{\lambda^0} \widetilde v \|^{-1}  \sqrt{T} \| M_{f^0} \widetilde w \|^{-1}
        \left| \frac 1 {\sqrt{NT}}
            \sum_{i,t}   \widetilde v_i       \widetilde w_t  [M_{\lambda^0}  X^{(A)}_k  M_{f^0}]_{it}
         \right|   
     \right)^{25} \Bigg|  X^{(A)}_k \right]
   \nonumber \\
     &\leq  \left\{  \mathbbm{E}\left( \sqrt{N} \|  M_{\lambda^0} \widetilde v \|^{-1}  \right)^{\xi} \right\}^{25/\xi}
                \left\{ \mathbbm{E}\left(  \sqrt{T} \| M_{f^0} \widetilde w \|^{-1}   \right)^{\xi}  \right\}^{25/\xi}
      \nonumber \\ & \qquad     \qquad       
                \left\{  \mathbbm{E}\left[  
        \left( \frac 1 {\sqrt{NT}}
            \sum_{i,t}   \widetilde v_i       \widetilde w_t  [M_{\lambda^0}  X^{(A)}_k  M_{f^0}]_{it}
         \right)^{26} \Bigg|  X^{(A)}_k \right]  \right\}^{25/26} 
    \nonumber \\
      &\leq C    \left[ \frac 1 {NT} \sum_{it} \left( [M_{\lambda^0}  X^{(A)}_k  M_{f^0}]_{it} \right)^2 \right]^{13} ,
\end{align*}
where $\xi$ satisfied $2/\xi + 1/26 = 1/25$, and $C$ is a global constant.
Here, as everywhere else in the paper and supplementary material, we implicitly also condition on $\lambda^0$ and $f^0$.
Since we assume that 
$\mathbbm{E} \left|(M_{\lambda^0} X_k M_{f^0})_{it}
                               \right|^{26}$
                     and therefore also          
$\mathbbm{E} \left|(M_{\lambda^0} X^{(A)}_k M_{f^0})_{it}
                               \right|^{26}$
is uniformly bounded we thus obtain that
$\mathbbm{E}\left[ \left|  v_r'  X^{(A)}_k  w_s   \right|^{25}   \right]$
is also uniformly bounded. We thus conclude that
\begin{align}
     \mathbbm{E} \left( \max_{r,s} |v_r'  X^{(A)}_k  w_s| \right)^{25}
     &= \mathbbm{E} \left( \max_{r,s} |v_r'  X^{(A)}_k  w_s|^{25} \right)
      \leq \mathbbm{E} \left( \sum_{r,s} |v_r'  X^{(A)}_k  w_s|^{25} \right) 
      = {\cal O}\left( N^2 \right) ,
\end{align}
which implies that $d^{(A,1)}_{NT} = {\cal O}_P\left( N^{2/25} \right) = o_P(N^{1/12}/\log(N))$.

We have
\begin{align}
   d^{(A,4)}_{NT} = N^{-3/4} \max_{r,k} \|v_r' X^{(A)}_k P_{f^0}\|
    &\leq  N^{-3/4} \max_{r,k} \|v_r' X^{(A)}_k\|
  \nonumber \\
    &\leq  N^{-3/4} \sqrt{T} \max_{r,t,k} |v_r'  X^{(A)}_{k,\cdot t} |
  \nonumber \\
    &\leq N^{-3/4} \sqrt{T} \max_{r,t,k} \sum_{i=1}^N v_{r,i} X^{(A)}_{k,it},
\end{align}
where $t=1,\ldots,T$, and we applied the inequality
$\|z\| \leq \sqrt{T} \max_t z_t$, which holds for
all $T$-vectors $z$. The remaining treatment of $d^{(4)}_{NT}$
is analogous to that of $d^{(1)}_{NT}$. Using Lemma~\ref{lemma:iidEV} and the assumption that
$(M_{\lambda^0} X_k)_{it}$ and thus also
$(M_{\lambda^0} X^{(A)}_k)_{it}$ has uniformly bounded
$8$'th moment one can show
that $\mathbbm{E}\left[ \left|  \sum_{i=1}^N v_{r,i} X_{k,it}   \right|^{7}   \right]$
is also uniformly bounded, which then implies
$d^{(A,4)}_{NT} = {\cal O}_P\left( N^{-3/4} \sqrt{T}  N^{2/7} \right) 
= {\cal O}_P\left( N^{1/28} \right) = o_P(N^{1/12}/\log(N))$.
Analogously one obtains $d^{(A,5)}_{NT}  = o_P(N^{1/12}/\log(N))$.

Since $[M_{\lambda^0} \widetilde X_k^{(B)}]_{it}
 = \sum_{\tau=1}^{t-1} \gamma_\tau [M_{\lambda^0} e M_{f^0}]_{i,t-\tau} = 
 \sum_{r=1}^{Q} \sqrt{\rho_r} \, v_{r,i}  \sum_{\tau=1}^{t-1} \gamma_\tau w'_{r,t-\tau} $
we find
\begin{align}
    d^{(B,1)}_{NT} &= \max_{r,s,k} |v_r'  \widetilde X_k^{(B)}  w_s|
    \nonumber \\ 
       &= \max_{r,s,k} |v_r'  M_{\lambda^0} \widetilde X_k^{(B)}    w_s|
    \nonumber \\ 
       &= \max_{r,s,k} \left|    \sqrt{\rho_r} 
       \sum_{t=1}^T  \sum_{\tau=1}^{t-1} \gamma_\tau w'_{r,t-\tau}    w_{s,t} 
           \right|
    \nonumber \\ 
       &\leq    \sqrt{\rho_1}  \max_{r,s,k} \left| 
       \sum_{t=1}^T  \sum_{\tau=1}^{t-1} \gamma_\tau w'_{r,t-\tau}    w_{s,t} 
           \right|
    \nonumber \\ 
       &\leq   \| e \|  \max_{r,s,k} 
         \left( \sum_{\tau=1}^{T-1} | \gamma_\tau | \right)
         \left(   \max_{r,s,\tau} \left| \sum_{t=\tau+1}^{T}  w_{r,t} w_{s,t-\tau} \right| \right)
         = {\cal O}_P\left( N^{\epsilon} \right)  = o_P(N^{1/12}/\log(N)),
\end{align}
where we used that $v'_r v_r = 1$ and $v'_r  v_s = 0$ for $r \neq s$,
and we also employed Lemma~\ref{lemma:iidEV} in the last step, which guarantees that $\epsilon<1/12$.
We thus have shown that $d^{(1)}_{NT}= o_P(N^{1/12}/\log(N))$.

We have $\| X^{(B)}_k \| \leq  \sum_{\tau=1}^{t-1} |\gamma_\tau| \| e \| = {\cal O}_P(\sqrt{N})$
and therefore
\begin{align}
    d^{(B,4)}_{NT} &= N^{-3/4} \max_{r} \| v_r'  X_k^{(B)}  P_{f^0} \|
           \leq N^{-3/4} \| X_k^{(B)} \|  = {\cal O}_P\left( N^{-1/4} \right),
\end{align}
and therefore $ d^{(4)}_{NT}= o_P(N^{1/12}/\log(N))$.
Analogously we obtain $d^{(B,5)}_{NT}  = {\cal O}_P\left( N^{-1/4} \right)$
and thus $ d^{(5)}_{NT}= o_P(N^{1/12}/\log(N))$.

Let $\widetilde f$ be an $N \times R^0$ matrix such that $P_{f^0}=P_{\widetilde f}$,
i.e. the column spaces of $f^0$ and $\widetilde f$ are identical, and
$\widetilde f' \widetilde f =\mathbbm{1}_{R^0}$. Then we have
$\|v_r' e P_{f^0}\| = \|v_r' e \widetilde f'\|$. Note that $e \widetilde f'$
is an $N\times R^0$ matrix with iid normal entries,
independently distributed of $v_r$ for all $r=1,\ldots,Q$.
Together with the distributional characterization of $v_r$ in 
Lemma~\ref{lemma:iidEV} it is then easy to show that
$\max_{r} \|v_r' e P_{f^0}\| = {\cal O}_P(N^\delta)$
for any $\delta>0$, and the same is true for
$\max_{r} \|w_r' e' P_{\lambda^0}\|$, i.e.
we have $d^{(2)}_{NT} = o_P(N^{1/12}/\log(N))$
and $d^{(3)}_{NT} = o_P(N^{1/12}/\log(N))$.
We have thus shown that Assumption \ref{ass:EV}$(iii)$ holds.
\end{proof}

\section{Estimated Factors, Loadings, Variance, and Bias \\ (Proof of Theorem~\ref{th:Estimators})}
\label{app:Estimators}

The goal of this section if to prove Theorem~\ref{th:Estimators} in the main text.
The Lemmas~\ref{lemma:EVEC} and~\ref{lemma:HelpConsistencyBias} are intermediate results that are used in the
proof of the main theorem below.

\begin{lemma}
   \label{lemma:EVEC}
   Let $A$, $B$, $\nu_i$, $i=1,\ldots,n$, $b$ be defined as in Lemma~\ref{Lemma:EVbound1}
   (but we do not require the assumption of Lemma~\ref{Lemma:EVbound1} here). Assume that
   for all $r \in \{1,2,\ldots,R+1\}$ we have
   $ \left| \mu_r(A+B) -  \mu_r(A) \right| \leq c_1$
   and  for all $r \in \{1,2,\ldots,R\}$ we have $\mu_r(A) -  \mu_{r+1}(A)   \geq c_2$
   for positive constants $c_1$ and $c_2$.\footnote{
   The inequality  $\left| \mu_r(A+B) -  \mu_r(A) \right| \leq c_1$ can be justified by 
   applying Lemma~\ref{Lemma:EVbound1}. 
   }
   Furthermore, let $\widetilde \nu_i$,  $i=1,\ldots,n$ be the eigenvector of $A+B$ corresponding to the 
   eigenvalue $\mu_i(A+B)$, normalized such that $\|\widetilde \nu_i\| = 1$. Then for  $r \in \{1,2,\ldots,R\}$  we have
   \begin{align*}
       \| \widetilde \nu_r - \nu_r \|^2 \leq   \frac{2 (4^r - 1)(b +  c_1) }  {3c_2} .
   \end{align*}
\end{lemma}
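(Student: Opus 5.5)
The plan is to argue by induction on $r$, expanding the perturbed eigenvector in the orthonormal eigenbasis of $A$. Write $\widetilde \nu_r = \sum_{i=1}^n a_i \nu_i$ with $\sum_i a_i^2 = 1$, and fix the sign of $\widetilde \nu_r$ (which is only determined up to sign) so that $a_r \geq 0$. Then
\[
\| \widetilde \nu_r - \nu_r \|^2 = 2 (1-a_r) \leq 2(1-a_r^2) = 2 \sum_{i<r} a_i^2 + 2 \sum_{i>r} a_i^2 ,
\]
so it suffices to bound the ``mass'' of $\widetilde \nu_r$ on the eigenvectors above and below $\nu_r$ separately. The target constant $x_r := 2(4^r-1)/3$ satisfies $x_1 = 2$ and $x_r = 4 x_{r-1} + 2$. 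This is the recursion I aim to reproduce, with all bounds expressed in units of $(b+c_1)/c_2$.

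\emph{Mass below $\nu_r$.} I will use the eigen-equation tested against $\nu_r$. Since $\nu_r'(A+B)\widetilde \nu_r = \mu_r(A+B) a_r$ and $\nu_r' A \widetilde \nu_r = \mu_r(A) a_r$, we get $|\nu_r' B \widetilde \nu_r| = |\mu_r(A+B)-\mu_r(A)|\, a_r \leq c_1$. I then combine this with the Rayleigh quotient $\mu_r(A+B) = \widetilde \nu_r'(A+B)\widetilde \nu_r = \sum_i \mu_i(A) a_i^2 + \widetilde\nu_r' B \widetilde \nu_r$. Splitting the sum at $r$ and using the gap $\mu_r(A)-\mu_{r+1}(A) \geq c_2$ gives
\[
c_2 \sum_{i>r} a_i^2 \leq \mu_r(A) - \mu_r(A+B) + \sum_{i<r} (\mu_i(A)-\mu_r(A)) a_i^2 + \widetilde\nu_r' B \widetilde \nu_r .
\]
Here $\widetilde\nu_r' B \widetilde\nu_r$ has to be controlled by the entrywise quantity $b$. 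I plan to write it as $\nu_r' B \nu_r$ plus cross terms involving $\widetilde\nu_r - \nu_r$, and to use $|\nu_r' B \nu_r| \leq b$ together with the bound on $|\nu_r' B \widetilde \nu_r|$ above.

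\emph{Mass above $\nu_r$ (inductive step).} By orthogonality of the eigenvectors of $A+B$, for $i<r$ we have $a_i = \nu_i' \widetilde \nu_r = (\nu_i - \widetilde \nu_i)'\widetilde \nu_r$, hence $a_i^2 \leq \| \nu_i - \widetilde \nu_i\|^2$. The induction hypothesis therefore gives $\sum_{i<r} a_i^2 \leq \sum_{i<r} x_i (b+c_1)/c_2$. Since $\sum_{i<r} x_i \leq x_{r-1}\cdot\frac{4}{3}$-type geometric bounds hold, this term accounts for the factor-$4$ growth. The base case $r=1$ has no such term and directly yields $x_1 = 2$.

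\emph{Main obstacle.} I expect the hardest step to be the middle one. There the quadratic form $\widetilde\nu_r' B \widetilde \nu_r$ and the weighted term $\sum_{i<r}(\mu_i(A)-\mu_r(A))a_i^2$ must be bounded purely by $b$, $c_1$ and $c_2$, without picking up the overall scale $\mu_1(A)$. The first thing I will try is to test the eigen-equation against each $\nu_i$, $i<r$, which gives $(\mu_i(A)-\mu_r(A+B)) a_i = -\nu_i' B \widetilde\nu_r$. This lets me replace $(\mu_i-\mu_r) a_i^2$ by $|a_i|\,|\nu_i'B\widetilde\nu_r|$ plus a $c_1 a_i^2$ correction. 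I then need to bound $|\nu_i' B\widetilde \nu_r|$ by $b$ plus contributions already controlled inductively. Collecting constants should then produce the recursion $x_r = 4x_{r-1}+2$.
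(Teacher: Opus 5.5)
There is a genuine gap, and it sits in the step you flagged as the main obstacle. Your outer reduction is correct: with $a_r\ge 0$ you have $\|\widetilde\nu_r-\nu_r\|^2=2(1-a_r)\le 2(1-a_r^2)$, and $a_i^2\le\|\nu_i-\widetilde\nu_i\|^2$ holds for $i<r$.

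The problem is the ``mass below $\nu_r$'' step. The identity $\mu_r(A+B)=\sum_i\mu_i(A)a_i^2+\widetilde\nu_r'B\widetilde\nu_r$ holds for every symmetric $B$. So all the information must come from an upper bound on $\widetilde\nu_r'B\widetilde\nu_r$, and $b$ does not give one. It controls only the entries $\nu_i'B\nu_j$, so the entrywise bound alone yields $|\widetilde\nu_r'B\widetilde\nu_r|\le b(\sum_i|a_i|)^2\le nb$.

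In your decomposition $\widetilde\nu_r'B\widetilde\nu_r=2\nu_r'B\widetilde\nu_r-\nu_r'B\nu_r+(\widetilde\nu_r-\nu_r)'B(\widetilde\nu_r-\nu_r)$, the last term needs $\|B\|$, not $b$. Avoiding $\|B\|$ is the whole point of the lemma: in the proof of Theorem~\ref{th:Estimators} it is applied with $b={\cal O}_P(1)$ and $c_2=N^{1/3}/(\log N)^2$, while the available bound on $\|B\|$ is only ${\cal O}_P(\sqrt N)$.

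Your fallback does not help either. The relation $\nu_i'B\widetilde\nu_r=(\mu_r(A+B)-\mu_i(A))a_i$ is just the eigen-equation again. The quantity $\nu_i'B\widetilde\nu_r=\sum_j a_j\,\nu_i'B\nu_j$ is a full row sum that $b$ does not control. Substituting these relations back merely reproduces the Rayleigh identity.

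Moreover, $\sum_{i<r}(\mu_i(A)-\mu_r(A))a_i^2$ enters with the unfavourable sign and carries the scale $\mu_1(A)-\mu_r(A)$, which nothing in the hypotheses bounds. The induction hypothesis only turns it into $(\mu_1(A)-\mu_r(A))(b+c_1)/c_2$. Even for $r=1$ your inequality reads $c_2\sum_{i>1}a_i^2\le\mu_1(A)-\widetilde\nu_1'A\widetilde\nu_1=\sum_{i>1}(\mu_1(A)-\mu_i(A))a_i^2$, which is trivially true and gives nothing.

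The missing idea is to reverse the roles of the two bases and to work with the whole top-$q$ block instead of a single vector. Test $A+B$ with the \emph{unperturbed} eigenvectors, so that only the diagonal entries $\nu_r'B\nu_r$ (each bounded by $b$) appear. Then expand them in the perturbed basis via $\omega_{ri}=\nu_r'\widetilde\nu_i$:
\[
\sum_{r=1}^q\mu_r(A)-qb\;\le\;\sum_{r=1}^q\nu_r'(A+B)\nu_r\;=\;\sum_{r=1}^q\sum_{i=1}^n\omega_{ri}^2\,\mu_i(A+B).
\]
Bound $\mu_i(A+B)\le\mu_{q+1}(A+B)$ for $i>q$, then use $|\mu_i(A+B)-\mu_i(A)|\le c_1$ for $i\le q+1$. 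This gives $\sum_{i=1}^q(\mu_i(A)-\mu_{q+1}(A))(1-\sum_{r=1}^q\omega_{ri}^2)\le q(b+c_1)$.

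Both factors here are nonnegative, since the columns of the orthogonal matrix $(\omega_{ri})$ have unit norm. So large spectral gaps only help, which is exactly what your single-vector argument cannot achieve. The gap assumption then yields $q-\sum_{r,i\le q}\omega_{ri}^2\le q(b+c_1)/c_2$.

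Since $\omega_{rr}=a_r$, recursion in $q$ gives $1-a_r^2\le\frac{4^r-1}{3}(b+c_1)/c_2$. The off-diagonal terms from earlier steps are handled by $\omega_{ri}^2\le 1-\omega_{rr}^2$ and $\omega_{ri}^2\le 1-\omega_{ii}^2$. Your reduction $2(1-a_r)\le 2(1-a_r^2)$ then finishes the proof, and is slightly more careful than the equality $\|\widetilde\nu_r-\nu_r\|^2=2(1-\omega_{rr}^2)$ written in the paper.
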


\begin{proof}[\bf Proof of Lemma~\ref{lemma:EVEC}]
     Since $A$ and $B$ are symmetric $\{\nu_i\}$ and $\{\widetilde \nu_i\}$ are orthogonal bases of $\mathbbm{R}^n$.
     With $\omega_{ri} := \nu_r' \widetilde  \nu_i \in [-1,1]$ we thus have
     $\nu_r = \sum_{i=1}^n  \omega_{ri}  \widetilde  \nu_i$.
     Note that  $\sum_{i=1}^n \omega_{ri}^2 = 1$,
     and also $\sum_{r=1}^n \omega_{ri}^2 = 1$.
     Let $q \in \{1,\ldots,R\}$. We have
     \begin{align*}
            \left|  \sum_{r=1}^q  \nu_r' (A+B)  \nu_r  -  \sum_{r=1}^q \mu_r(A) \right| 
            = \left| \sum_{r=1}^q  \nu_r' B  \nu_r \right| \leq q b ,
     \end{align*}
     and 
     \begin{align*}
        \sum_{r=1}^q  \nu_r' (A+B)  \nu_r = 
        \sum_{r=1}^q \sum_{i,j=1}^n   \omega_{ri}  \omega_{rj}    \widetilde \nu_i' (A+B)  \widetilde \nu_j
         = \sum_{r=1}^q \sum_{i=1}^n   \omega_{ri}^2 \mu_i(A+B) .
     \end{align*}
     Therefore
     \begin{align*} 
          \sum_{r=1}^q \sum_{i=1}^n   \omega_{ri}^2 \mu_i(A+B) - \sum_{r=1}^q \mu_r(A)    \geq   - qb   .
     \end{align*}
     Using $\mu_{q+1}(A+B) \geq \mu_i(A+B)$ for $q+1 \leq i$ we obtain
     \begin{align*} 
          \sum_{r=1}^q \sum_{i=1}^q   \omega_{ri}^2 \mu_i(A+B) 
          +\mu_{q+1}(A+B)  \sum_{r=1}^q \sum_{i=q+1}^n   \omega_{ri}^2   - \sum_{r=1}^q \mu_r(A)    \geq   - qb   ,
     \end{align*}
     and  using $| \mu_i(A+B) - \mu_i(A) | \leq c_1$ for $i \in \{1,\ldots,q+1\}$ 
     and $\sum_{i=1}^n  \omega_{ri}^2  = 1$ we find
     \begin{align*} 
          \sum_{r=1}^q \sum_{i=1}^q   \omega_{ri}^2 \mu_i(A) 
          +\mu_{q+1}(A)  \sum_{r=1}^q 
          \underbrace{ \sum_{i=q+1}^n   \omega_{ri}^2 }_{= 1 - \sum_{i=1}^q \omega_{ri}^2}   - 
          \underbrace{ \sum_{r=1}^q \mu_r(A) }_{=\sum_{i=1}^q \mu_i(A) }    \geq  
           - qb  - q c_1 .
     \end{align*}
     The last inequality can be rewritten as
     \begin{align*}
       \sum_{i=1}^q
        \left[ \mu_i(A) - \mu_{q+1}(A)  \right]  \left[  1 - \sum_{r=1}^q  \omega_{ri}^2     \right]
           \leq   q (b + c_1) .
     \end{align*}
     Using $\mu_i(A) - \mu_{q+1}(A) \geq c_2$ for $i \in \{1,\ldots,q\}$ we obtain
     \begin{align}
           q - \sum_{i=1}^q \sum_{r=1}^q   \omega_{ri}^2   \leq 
           \frac{ q (b +  c_1)  } {c_2}.
         \label{ResOmegaBound}  
     \end{align}
     For $q=1$ we find $1- \omega_{11}^2 \leq \frac{b+c_1} {c_2}$, which also implies that
     $\omega_{12}^2 \leq 1- \omega_{11}^2 \leq \frac{b+c_1} {c_2}$
     and $\omega_{21}^2 \leq 1- \omega_{11}^2 \leq \frac{b+c_1} {c_2}$. Using these results and
     \eqref{ResOmegaBound} for $q=2$ gives
     $1- \omega_{22}^2 \leq   \frac{ 2 (b +  c_1)  } {c_2} 
      +   \left(  1- \omega_{11}^2  \right) + \omega_{12}^2 + \omega_{21}^2 \leq  \frac{5 (b +  c_1)  } {c_2}$. 
      By continuing to apply \eqref{ResOmegaBound} recursively for increasing $q$ one obtains
      for $r \in \{1,\ldots,R\}$ that
      \begin{align*}
           1- \omega_{rr}^2 \leq  \frac{(4^r - 1)} 3 \frac{(b +  c_1)  } {c_2} ,
      \end{align*}
    and therefore
     \begin{align*}
          \| \widetilde \nu_r - \nu_r \|^2 
          =  ( \widetilde \nu_r - \nu_r )' (\widetilde \nu_r - \nu_r)
          = 2 \left( 1 -   \omega_{rr}^2 \right) \leq
           \frac{2 (4^r - 1)(b +  c_1) }  {3c_2}.
     \end{align*}
\end{proof}

It is convenient to introduce some additional notation.
We decompose $\widehat F_R = ( \widehat F^0_{R}, \widehat F_{R}^{\rm red})$,
     where the $T \times R^0$ matrix $\widehat F^0_{R}$ contains the eigenvectors corresponding to the
     $R^0$ largest eigenvalues of the $T\times T$ matrx $\left( Y - \widehat \beta_R \cdot X\right)' \left( Y - \widehat \beta_R \cdot X\right)$,
     while the $T \times (R-R^0)$ matrix $\widehat F_{R}^{\rm red}$ contains the eigenvalues corresponding to the
     next $R-R^0$ largest eigenvalues of this $T\times T$ matrix. Note that
     $\widehat F^0_{R}=\widehat f(\widehat \beta_R)$. By applying Lemma~\ref{lemma:ProExp}
     we find that $P_{\widehat F^0_{R}} = \mathbbm{1}_T - M_{\widehat F^0_{R}}$
     satisfies $\left\| P_{\widehat F^0_{R}} - P_{f^0} \right\| = {\cal O}_P(1/\sqrt{N}) $ under our assumptions.
     See also Bai~\cite*{Bai2009}
     and Moon and Weidner~\cite*{MoonWeidner2013} for further details on the estimated factors for $R=R^0$.
     The new difficulty in this section is to work out the asymptotic behavior of the redundantly estimated factors
     $\widehat F_{R}^{\rm red}$.     
     Note that $\widehat F_{R}^{\rm red}$ are the leading principal components 
     (eigenvalues corresponding to $R-R^0$ largest eigenvalues)
     of $\widehat e'( \widehat \beta_R ) \widehat e( \widehat \beta_R )$,
     with $\widehat e(\beta)$ defined at the beginning of Section~\ref{app:expansion1}
     (residuals after subtracting only $R^0$ principal component).
     
   Ananlogous to the decomposition $\widehat F_R = ( \widehat F^0_{R}, \widehat F_{R}^{\rm red})$
   we also introduce the decomposition $\widehat \Lambda_R = ( \widehat \Lambda^0_{R}, \widehat \Lambda_{R}^{\rm red})$  for the estimated factor loadings.

   Following Moon and Weidner~\cite*{MoonWeidner2013} we define the truncation kernel
   function $\Gamma: \mathbb{R} \rightarrow \mathbb{R}$ by $\Gamma(x)=1$ for $|x|\leq 1$, and $\Gamma(x)=0$
otherwise. Furthermore, for a $T\times T$ matrix $B$ with elements $B_{ts}$ we define
the right-sided  Kernel truncation of $B$ as the $T \times T$ matrix $B^{\rm truncR}$
          with elements $B^{\rm truncR}_{ts} = \Gamma\left( \frac{s-t} M \right) B_{ts}$ for $t<s$,
          and $B^{\rm truncR}_{ts}=0$ otherwise.
       Note that    $B^{\rm truncR}$ depends on the bandwidth parameter $M$, but this dependence is suppressed 
       in the notation.  
        With this definition we have
\begin{align*}
    \widehat B_{R,k} &=   
     \sum_{t=1}^T \sum_{s=t+1}^{t+M} P_{\widehat F_R,t s}  
     \left[ \frac 1 N  \sum_{i=1}^N  \widehat e_{R,it}  X_{k,i s} \right]
     = \frac 1 N {\rm Tr}\left[ P_{\widehat F_R} \left(  \widehat e_R' X_k \right)^{\rm truncR} \right] 
\end{align*}

\begin{lemma}
     \label{lemma:HelpConsistencyBias}
    Under the assumptions of Theorem~\ref{th:Estimators} we have
    \begin{itemize}
        \item[(i)] $N^{-1} \left\| \mathbbm{E}(e' X_k) -   \left(  \widehat e_R' X_k \right)^{\rm truncR} \right\| = o_P(1)$,
        \item[(ii)] $N^{-1} \left\| \mathbbm{E}(e' X_k) \right\| = {\cal O}(1)$,
        \item[(iii)] $N^{-1} {\rm Tr} \left[ P_{f^{\rm red}}  \mathbbm{E}(e' X_k)  \right] = o_P(1)$ .
     \end{itemize}
\end{lemma}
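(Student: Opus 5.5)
For part (ii) I would compute $\mathbbm{E}(e'X_k)$ explicitly. Under Assumption~\ref{ass:LL}, $\overline X_k$ and $\widetilde X^{\rm str}_k$ are independent of the mean-zero matrix $e$, so only $\widetilde X_k^{\rm weak}$ contributes. Since the $e_{it}$ are iid ${\cal N}(0,\sigma^2)$, we get $[\mathbbm{E}(e'X_k)]_{ts} = \sum_i \mathbbm{E}(e_{it}\widetilde X^{\rm weak}_{k,is}) = N\sigma^2\gamma_{s-t}$ for $s>t$, and zero otherwise. Hence $G_k := N^{-1}\mathbbm{E}(e'X_k)$ is a strictly upper-triangular Toeplitz matrix $\sigma^2\sum_{\tau\ge1}\gamma_\tau L_\tau$, where $L_\tau$ is the $\tau$-th shift matrix with $\|L_\tau\|\le1$. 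Therefore $\|G_k\|\le\sigma^2\sum_\tau|\gamma_\tau|={\cal O}(1)$.

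For part (iii), I would use that $f^{\rm red}$ spans the leading $R-R^0$ right singular vectors $w_1,\ldots,w_{R-R^0}$ of $M_{\lambda^0}eM_{f^0}$ (notation of Assumption~\ref{ass:EV}). Then
$$N^{-1}{\rm Tr}[P_{f^{\rm red}}\mathbbm{E}(e'X_k)]=\sum_{r=1}^{R-R^0}w_r'G_kw_r=\sigma^2\sum_{r=1}^{R-R^0}\sum_{\tau\ge1}\gamma_\tau\sum_{t}w_{r,t}w_{r,t+\tau}.$$
This is bounded by $(R-R^0)\,\sigma^2\sum_\tau|\gamma_\tau|$ times $\max_{r,\tau}\big|\sum_t w_{r,t}w_{r,t+\tau}\big|$. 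By Lemma~\ref{lemma:iidEV}(ii) the latter maximum is ${\cal O}_P(T^{-1/2+\varepsilon})=o_P(1)$.

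For part (i), I would split
$$(\widehat e_R'X_k)^{\rm truncR}-\mathbbm{E}(e'X_k)=\big[(e'X_k)^{\rm truncR}-\mathbbm{E}(e'X_k)^{\rm truncR}\big]-\big[\mathbbm{E}(e'X_k)-\mathbbm{E}(e'X_k)^{\rm truncR}\big]+\big[(\widehat e_R-e)'X_k\big]^{\rm truncR}.$$
The basic tool is that a truncated matrix is supported on a band of width $M$, so $\|B^{\rm truncR}\|\le M\max_{t<s\le t+M}|B_{ts}|$.

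The second bracket is the Toeplitz tail. Its norm is at most $N\sigma^2\sum_{\tau>M}|\gamma_\tau|=o(N)$ since $M\to\infty$.

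For the first bracket, each entry $N^{-1}[e'X_k]_{ts}-\sigma^2\gamma_{s-t}$ is an average over $i$ of terms independent across $i$ with bounded moments. I would use Gaussianity of $e$ together with the independence of the strictly exogenous parts to get a uniform bound of order ${\cal O}_P(\sqrt{\log N/N})$ over the ${\cal O}(TM)$ band entries. Multiplying by $M$ gives $o_P(1)$ under the stated bandwidth rate.

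The third bracket is the main obstacle. I would write $\widehat e_R=M_{\widehat\Lambda_R}(Y-\widehat\beta_R\cdot X)$, so that
$$\widehat e_R-e=-P_{\widehat\Lambda_R}e+M_{\widehat\Lambda_R}\lambda^0f^{0\prime}-M_{\widehat\Lambda_R}\big[(\widehat\beta_R-\beta^0)\cdot X\big].$$
I would then bound each entry $N^{-1}|(\widehat e_R-e)_{\cdot t}'X_{k,\cdot s}|$ by $N^{-1}\|(\widehat e_R-e)_{\cdot t}\|\,\|X_{k,\cdot s}\|$, using $\max_s\|X_{k,\cdot s}\|={\cal O}_P(\sqrt N\log N)$ from assumption (3).

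The three pieces of $\widehat e_R-e$ are handled as follows. The first gives $\max_t\|P_{\widehat\Lambda_R}e_{\cdot t}\|$. The second gives $\|M_{\widehat\Lambda_R}\lambda^0\|\max_t\|f^0_t\|$ with $\|f^0_t\|$ bounded. The third gives $\|\widehat\beta_R-\beta^0\|\max_t\|X_{\cdot t}\|$, which is ${\cal O}_P(\log N/\sqrt N)$ by Theorem~\ref{th:MAIN}. The delicate part is obtaining explicit rates for $\|P_{\widehat\Lambda_R}-P_{[\lambda^0,\lambda^{\rm red}]}\|$, beyond the $o_P(1)$ of part (i) of Theorem~\ref{th:Estimators}. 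For this I would combine Lemma~\ref{lemma:ProExp} (the $P_{\widehat\lambda}$ part, rate $N^{-1/2}$) with Lemma~\ref{lemma:EVEC} applied to $A=M_{\lambda^0}eM_{f^0}e'M_{\lambda^0}$. Here the gap $c_2\gtrsim N^{1/3}$ comes from Tracy--Widom as in the proof of Lemma~\ref{lemma:JustifyEV}, and $b,c_1$ come from the size of the perturbation in Lemma~\ref{th:expansion2b}. This should yield a rate like $N^{-1/6}$ up to logarithms, which is exactly what the condition $M(\log T)^2T^{-1/6}\to0$ is calibrated to absorb.

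Finally, for $\max_t\|P_{\widehat\Lambda_R}e_{\cdot t}\|$ I would replace $P_{\widehat\Lambda_R}$ by $P_{[\lambda^0,\lambda^{\rm red}]}$, and then use the independence structure of Lemma~\ref{lemma:iidEV}(iii) together with Gaussian maxima to obtain an ${\cal O}_P(\log N)$ bound. Collecting the three pieces and multiplying by $M$ yields $o_P(1)$.
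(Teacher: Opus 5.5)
For parts (ii) and (iii) your argument is the paper's. The paper bounds the Toeplitz matrix by $\sqrt{\|\cdot\|_1\|\cdot\|_\infty}$ rather than by shift matrices. In (iii) it appeals to the distributional characterization of $w_r$, of which Lemma~\ref{lemma:iidEV}(ii) is the relevant consequence.

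For (i) your route is correct but genuinely different.

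\textbf{The paper's route.} It does not redo the analysis for the true factors. It cites Lemma S.10.5(i) of Moon and Weidner (2013) for the $R^0$-factor residuals $\widehat e(\widehat\beta_R)$. That lemma already covers the fluctuation of $N^{-1}e'X_k$ over the band and the truncation tail. The paper then treats only the redundant piece $\widehat e(\widehat\beta_R)P_{\widehat F^{\rm red}_R}$, which is what separates $\widehat e_R$ from $\widehat e(\widehat\beta_R)$. It compares this piece with $M_{\lambda^0}eM_{f^0}P_{f^{\rm red}}=\sum_r\sqrt{\rho_r}\,v_rw_r'$. The difference is bounded in spectral norm by ${\cal O}_P(N^{1/3}\log N)$, using Lemma~\ref{lemma:ResExp} and $\|P_{\widehat F^{\rm red}_R}-P_{f^{\rm red}}\|={\cal O}_P(N^{-1/6}\log N)$. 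The leading piece is controlled by $\max_t|w_{r,t}|={\cal O}_P(N^{-3/8})$.

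\textbf{Your route.} You prove (i) from scratch using the one-sided form $\widehat e_R=M_{\widehat\Lambda_R}(Y-\widehat\beta_R\cdot X)$. This handles true and redundant factors through a single projector and avoids terms such as $eP_{\widehat F_R}$. The cost is that you must establish the banded concentration of $N^{-1}e'X_k$ yourself. For the strictly exogenous parts, condition on them and use column norms ${\cal O}_P(\sqrt N\log N)$. This follows from assumption (3), since $\max_s\|\widetilde X^{\rm weak}_{k,\cdot s}\|={\cal O}_P(\sqrt N)$. You also need the $\Lambda$-side version of the projector rate.

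The decisive ingredients are the same in both arguments: the $N^{-1/6}\log N$ rate from Lemma~\ref{lemma:EVEC} with the Tracy--Widom gap, and delocalization of the singular vectors. Your calibration against $M(\log T)^2T^{-1/6}\to0$ matches the paper's.

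\textbf{One step to state more carefully.} You cannot bound $\max_t\|P_{\lambda^{\rm red}}e_{\cdot t}\|$ by Gaussian maxima directly, because $v_r$ depends on $e$. Instead:
\begin{itemize}
\item write $v_r'e_{\cdot t}=\sqrt{\rho_r}\,w_{r,t}+(v_r'M_{\lambda^0}eP_{f^0})_t$;
\item for the second term, use the independence in Lemma~\ref{lemma:iidEV}(iii);
\item for $\max_t|w_{r,t}|$, use the Haar representation in Lemma~\ref{lemma:iidEV}(i).
\end{itemize}
This is exactly the delocalization fact the paper relies on.
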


\begin{proof}[\bf Proof of Lemma~\ref{lemma:HelpConsistencyBias}]
   \# \underline{Part (i):}
   For $R=R^0$ statement (i) is identical to Lemma~S.10.5(i) in  Moon and Weidner~\cite*{MoonWeidner2013},
   and the proof there also applies to $R>R^0$, with only one additional issue left to work out: Namely, we have
   $\widehat e_R = \widehat e(\widehat \beta_R) - \widehat \Lambda^{{\rm red} \prime}_R \widehat F^{\rm red}_R $,
   i.e. we have to account for the fact that $R-R^0$ redundant principal components are subtracted from 
   the residuals $\widehat e(\widehat \beta_R)$ that were introduced in Section~\ref{app:expansion1} based on
   the correct number of factors $R^0$.
   The fact that $\widehat \beta_R$ instead of
   $\widehat \beta_{R^0}$ is used to define the residuals makes no difference in the proof, since both are
   $\sqrt{NT}$ consistent under our assumptions.   
   In addition to the proof of Lemma~S.10.5 already provided in Moon and Weidner~\cite*{MoonWeidner2013}
   we therefore also need to show that
   \begin{align*}
        N^{-1} \left\|  \left(  \widehat F^{{\rm red} \prime}_R \widehat \Lambda^{\rm red}_R X_k \right)^{\rm truncR}
                   \right\| = o_P(1).
   \end{align*}  
   We have 
   \begin{align*}
         \widehat \Lambda^{{\rm red} \prime}_R \widehat F^{{\rm red}}_R 
         &= \left[\widehat e(\widehat \beta_R) \right] P_{\widehat F^{\rm red}_R} ,
       &
         \lambda^{{\rm red} \prime} f^{{\rm red}}   
         = M_{\lambda^0} e M_{f^0} P_{f^{\rm red}} 
         = M_{\lambda^0} e P_{f^{\rm red}} .
   \end{align*}
  The current lemma is only used for the proof of the last part of Theorem~\ref{th:Estimators}
  (i.e. $\widehat B_R =  B + o_P(1)$).
   The proof of Theorem~\ref{th:Estimators} below starts by showing
   $\left\|  P_{\widehat F_{R}^{\rm red}} - P_{f^{\rm red}}  \right\|  = {\cal O}_P( N^{-1/6} \log N  )$,
   and we will already make use of this result here.
   Applying Lemma~\ref{lemma:ResExp} we furthermore find that
   $\left\| \widehat e(\widehat \beta_R) - M_{\lambda^0} e M_{f^0} \right\| = {\cal O}_P(1)$. Using this we find that
   \begin{align*}
         \left\|   \widehat \Lambda^{{\rm red} \prime}_R \widehat F^{{\rm red}}_R 
             -  \lambda^{{\rm red} \prime} f^{{\rm red}}  \right\|
         &= \left\|     \left[\widehat e(\widehat \beta_R) \right] P_{\widehat F^{\rm red}_R}
                       - M_{\lambda^0} e M_{f^0} P_{f^{\rm red}}  \right\|
      \\
        &= \left\| \left(    \left[\widehat e(\widehat \beta_R) \right] - M_{\lambda^0} e M_{f^0} \right)
          P_{\widehat F^{\rm red}_R}
               + M_{\lambda^0} e M_{f^0} \left( P_{\widehat F^{\rm red}_R} - P_{f^{\rm red}} \right)
            \right\|        
     \\
       & \leq \left\|   \widehat e(\widehat \beta_R) - M_{\lambda^0} e M_{f^0} \right\|
              +  \left\|   e   \right\|      
              \left\|  P_{\widehat F^{\rm red}_R} - P_{f^{\rm red}} \right\|  
    \\
      &= {\cal O}_P(1) + {\cal O}_P(N^{1/2 - 1/6} \log N)
         =    {\cal O}_P(N^{1/3} \log N)          .        
   \end{align*}
   Let $C=  \widehat \Lambda^{{\rm red} \prime}_R \widehat F^{{\rm red}}_R 
             -  \lambda^{{\rm red} \prime} f^{{\rm red}}$.
    We have shown $\|C\|= {\cal O}_P(N^{1/3} \log N)  $.                    
   For $t=1,\ldots,T$ let
  $C_t$ and $X_{k,t}$ be the $t$'th column of the $N \times T$ matrices $C$ and $X_k$,
  so that $C_t' X_{k,s}$ is the element at position $(t,s)$ in the $T \times T$ matrix $C' X_k $.
  Remember also that we assume $\max_t \left\| X_{k,t} \right\|= {\cal O}_P( \log N \sqrt{N})$.
  Using Lemma S.8.3 in Moon and Weidner~\cite*{MoonWeidner2013}
  we have
   \begin{align*}
        N^{-1} \left\|  \left( C' X_k \right)^{\rm truncR}
                   \right\| 
         &\leq \frac M N \max_{t,s} \left| C_t' X_{k,s} \right|      
           \leq \frac M N \max_{t,s} \left\|  C_t \right\| \left\|    X_{k,s} \right\|
           \leq \frac M N \left\| C \right\| \max_t \left\| X_{k,t} \right\|
       \\
        &=  {\cal O}_P\left( \frac{M  N^{1/3+1/2} (\log N)^2 } {N} \right)     
         =  {\cal O}_P\left( M N^{-1/6} (\log N)^2 \right)  = o_P(1).
   \end{align*}  
   We have
   \begin{align*}
        N^{-1} \left\|  \left(  \widehat F^{{\rm red} \prime}_R \widehat \Lambda^{\rm red}_R X_k \right)^{\rm truncR}
                   \right\| 
         &\leq  
          N^{-1} \left\|  \left(  f^{{\rm red} \prime} \lambda^{\rm red} X_k \right)^{\rm truncR}
                   \right\| 
                   +
        N^{-1} \left\|  \left( C' X_k \right)^{\rm truncR}
                   \right\| 
        \\ &=N^{-1} \left\|  \left(  f^{{\rm red} \prime} \lambda^{\rm red} X_k \right)^{\rm truncR}
                   \right\|  + o_P(1).  
   \end{align*}  
   Thus, what is left to show is that
   $N^{-1} \left\|  \left(  f^{{\rm red} \prime} \lambda^{\rm red} X_k \right)^{\rm truncR} \right\|= o_P(1)$.
   In the notation of Assumption~\ref{ass:EV} we have 
   $f^{{\rm red} \prime} \lambda^{\rm red} = \sum_{r=1}^{R-R^0} \sqrt{\rho_r} w_r v'_r$.
   We have $\sqrt{\rho_r} \leq \|e\| = {\cal O}_P(\sqrt{N})$.
   The distribution of the unit vectors $w_r$ and $w_r$ is characterized by Lemma~\ref{lemma:iidEV},
   from which it is easy to show that
    $  \max_{t} \left| w_{r,t} \right | = {\cal O}_P(N^{-1/2+1/8})$.
   Again using Lemma S.8.3 in Moon and Weidner~\cite*{MoonWeidner2013} we thus find
   \begin{align*}
        N^{-1} \left\|  \left( f^{{\rm red} \prime} \lambda^{\rm red} X_k \right)^{\rm truncR}
                   \right\| 
       &=           N^{-1} \left\|  \left(  \sum_{r=1}^{R-R^0} \sqrt{\rho_r} w_r v'_r X_k \right)^{\rm truncR}
                   \right\|        
        \leq     \sum_{r=1}^{R-R^0} \frac{ \sqrt{\rho_r}  } N \left\|  \left(   w_r v'_r X_k \right)^{\rm truncR}
                   \right\|        
     \\              
         &\leq   \sum_{r=1}^{R-R^0} \frac{ M \sqrt{\rho_r}  } N 
         \left( \max_{t} \left| w_{r,t} \right |  \right)
         \left( \max_t \left| v'_r X_{k,t} \right| \right)
   \\
      &\leq    \sum_{r=1}^{R-R^0} \frac{ M \sqrt{\rho_r}  } N 
         \left( \max_{t} \left| w_{r,t} \right |  \right)
         \left( \max_t  \left\| X_{k,t}  \right\| \right) = o_P(1).
   \end{align*}     
   
   \# \underline{Part (ii):}   
   We have $N^{-1} \mathbbm{E}(e_t' X_{k,s}) = N^{-1} \mathbbm{E}(e_t' \widetilde X^{\rm weak}_{k,s})
    = \sigma^2 \gamma_{s-t}$ for $s>t$, and $=0$ otherwise. 
   Therefore
   $N^{-1} \left\| \mathbbm{E}(e' X_k) \right\| \leq 
   \sqrt{N^{-1} \left\| \mathbbm{E}(e' X_k) \right\|_{1} N^{-1}  \left\| \mathbbm{E}(e' X_k) \right\|_{\infty} }
   \leq \sigma^2 \sum_{t=1}^\infty | \gamma_t | = {\cal O}(1)$.
   
   \# \underline{Part (iii):} 
    In the notation of Assumption~\ref{ass:EV}  we have
    $P_{f^{\rm red}} = P_{(w_1,\ldots,w_{R-R^0})} = \sum_{r=1}^{R-R^0} w_r w_r'$.
    We thus have
    \begin{align*}
        N^{-1} {\rm Tr} \left[ P_{f^{\rm red}}  \mathbbm{E}(e' X_k)  \right] 
        &=  \sum_{r=1}^{R-R^0} \sum_{t=1}^T \sum_{s=t+1}^T  \gamma_{s-t}  w_{r,s} w_{r,t} .
    \end{align*}
    Again, using the distributional characterization of $w_r$ in Lemma~\ref{lemma:iidEV}
    it is easy to show that this term is $o_P(1)$.
\end{proof}

\begin{proof}[\bf Proof of Theorem~\ref{th:Estimators}
(Consistency of Bias and Variance Estimators)]~\\
   \# \underline{Consistency for factors:} 
          We want to apply Lemma~\ref{lemma:EVEC} with
     $A = M_{f^0} e' M_{\lambda^0} e M_{f^0}$
     and $A+B = \widehat e'( \widehat \beta_R ) \widehat e( \widehat \beta_R )$,
     i.e. in the notation of Lemma~\ref{lemma:EVEC} we have
     $f^{\rm red} = (\nu_1,\nu_2,\ldots,\nu_{R-R^0}) H_1$
     and $\widehat F_{R}^{\rm red} = (\widetilde \nu_1,\widetilde \nu_2,\ldots,\widetilde \nu_{R-R^0}) H_2$,
     for some invertible $(R-R^0)\times (R-R^0)$ matrices $H_1$ and $H_2$.
     Note that $\mu_r(A)=\rho_r$ by definition of $\rho_r$ in the main text,
     and by applying Lemma~\ref{lemma:ResExp} and the definition of $B(\beta)$ (which is different from $B$ here)
     in Lemma~\ref{th:expansion2b} we find
     $\mu_r(A+B) = \mu_r(B(\beta) + B'(\beta)) + {\cal O}_P(1)$,
     so that by also applying Lemma~\ref{lemma:JustifyHL2} we find $| \mu_r(A+B)- \mu_r(A) | = {\cal O}_P(1)$.
     Note that Lemma~\ref{lemma:JustifyHL2} only states this for
     $r=1,\ldots,R-R^0$, but it can also be shown for $r=R-R^0+1$ by following the proof in Lemma~\ref{lemma:JustifyHL2},
     which is what we require here, since we want to apply Lemma~\ref{lemma:EVEC} with the $R$ in the Lemma
     equal to $R-R^0$ here. The assumption   $ \left| \mu_r(A+B) -  \mu_r(A) \right| \leq c_1$
     is therefore satisfied with $c_1 = {\cal O}_P(1)$.     
     Following the steps in the proof of  Lemma~\ref{th:expansion2b}
     and Lemma~\ref{lemma:JustifyHL2} we also find that $b=\max_{i,j=1,\ldots,n} |\nu_i' B \nu_j| = {\cal O}_P(1)$
     here. From Johnstone~\cite*{Johnstone2001} and Soshnikov~\cite*{Soshnikov2002} 
     it is known that $(\mu_1(A),\ldots,\mu_{R-R^0+1}(A))$ properly shifted and rescaled (by $N^{1/3}$)
      are jointly asymptotically distributed according to the Tracy-Widom law, which is a continuous distribution,
      from which we can conclude that $\mu_r(A) -  \mu_{r+1}(A)   \geq c_2$ holds for $c_2 = o_P(N^{1/3})$,
      e.g. $c_2 = N^{1/3} / \log(N)^2$.
      By applying Lemma~\ref{lemma:EVEC} we thus obtain
      $  \| \widetilde \nu_r - \nu_r \|^2  = {\cal O}_P( N^{-1/3} (\log N)^2  )$ for $r \in \{1,\ldots,R-R^0\}$.
      We thus also have
      \begin{align*}
          \left\|  P_{\widehat F_{R}^{\rm red}} - P_{f^{\rm red}}  \right\|  = 
          \left\|  P_{(\nu_1,\ldots,\nu_{R-R^0})} - P_{(\widetilde \nu_1,\ldots,\widetilde \nu_{R-R^0})}  \right\|  
          = {\cal O}_P( N^{-1/6} \log N  ) .
      \end{align*}
      Together with the above result $\left\| P_{\widehat F^0_{R}} - P_{f^0} \right\| = {\cal O}_P(1/\sqrt{N}) $ 
      we thus have
      \begin{align*}
          \left\| P_{\widehat F_R} - P_{[f^0,f^{\rm red}]} \right\| = {\cal O}_P( N^{-1/6} \log N  ) = o_P(1),
      \end{align*}
      which is the first statement of Theorem~\ref{th:Estimators}. 
     Note that       $f^0$ and $f^{\rm red}$ are orthogonal (i.e. we have $f^{0 \prime} f^{\rm red} = 0$),
     so that $P_{[f^0,f^{\rm red}]} = P_{f^0} + P_{f^{\rm red}}$.
      
    \# \underline{Consistency for factors loadings:} The problem is symmetric under exchange of $N \leftrightarrow T$,
    so analogous to the proof for the factors one also finds $\left\| P_{\widehat \Lambda_R} - P_{[\lambda^0,\lambda^{\rm red}]} \right\| = o_P(1)$.
    
    \# \underline{Consistency of $\widehat \sigma_R^2$:} 
    Using the definition of  ${\cal L}^R_{NT}(\beta) $, Lemma~\ref{lemma:expansion2a}, 
    Theorem~\ref{th:expansion}, the definition of ${\cal L}^0_{NT}( \beta^0)$,
    and the WLLN we obtain
    \begin{align*}
   \frac 1 {NT} \sum_{i=1}^N \sum_{t=1}^T 
       \left( \widehat e_{R,it}  \right)^2  
  &= {\cal L}^R_{NT}( \widehat \beta_R)  
         = {\cal L}^0_{NT}( \widehat \beta_R) + {\cal O}_P\left( \frac{\|e\|^2}{NT} + \frac{\sqrt{N}} {NT}  \right) 
       =  {\cal L}^0_{NT}( \beta^0) + {\cal O}_P(1/N)
     \\
       &= \frac 1 {NT} {\rm Tr}( M_{\widehat \lambda} e   M_{\widehat f} e' ) + {\cal O}_P(1/N)
       =  \frac 1 {NT} {\rm Tr}( ee' )  + {\cal O}_P\left( \frac{\|e\|^2}{NT} + 1/N \right)  
    \\
      &= \frac 1 {NT} \sum_{i=1}^N \sum_{t=1}^T e_{it}^2 + {\cal O}_P(1/N)  
        = \frac 1 {NT} \sum_{i=1}^N \sum_{t=1}^T \mathbbm{E}(e_{it}^2) + o_P(1)  
     \\   
      & = \sigma^2 + o_P(1). 
    \end{align*}
    We thus also have $\widehat \sigma_R^2 = \sigma^2 + o_P(1)$. 
    
    \# \underline{Consistency of $\widehat W_R$:} Using 
    that $M_{\widehat F_R} - M_{[f^0,f^{\rm red}]} = P_{[f^0,f^{\rm red}]} - P_{\widehat F_R}$
    and 
    $M_{\widehat \Lambda_R} - M_{[\lambda^0,\lambda^{\rm red}]}
      =P_{[\lambda^0,\lambda^{\rm red}]} -  P_{\widehat \Lambda_R}$
     are low rank matrices (rank $\leq 2R$) and satisfy the spectral norm bounds     
     $\left\| M_{\widehat F_R} - M_{[f^0,f^{\rm red}]} \right\|  
    = o_P(1)$
    and
$\left\| M_{\widehat \Lambda_R} - M_{[\lambda^0,\lambda^{\rm red}]} \right\| = o_P(1)$ we obtain
    \begin{align*}
        \widehat W_{R,k_1 k_2}  &= \frac 1 {NT}
      {\rm Tr}\left( M_{\widehat \Lambda_R} X_{k_1} M_{\widehat F_R} X_{k_2}'  \right)
    \\  
        &= \frac 1 {NT} {\rm Tr}\left( M_{[\lambda^0,\lambda^{\rm red}]} X_{k_1} M_{[f^0,f^{\rm red}]} X_{k_2}'  \right)
           + o_P\left( \frac{( \| X_{k_1} \| +  \| X_{k_2} \| )^2} {NT} \right)
    \\
       &=    \frac 1 {NT} {\rm Tr}\left( M_{[\lambda^0,\lambda^{\rm red}]} X_{k_1} M_{[f^0,f^{\rm red}]} X_{k_2}'  \right)
          + o_P(1)
    \\
       &=    \underbrace{ \frac 1 {NT} {\rm Tr}\left( M_{\lambda^0} X_{k_1} M_{f^0} X_{k_2}'  \right)  }_{=W_{k_1 k_2}}      
           -  \frac 1 {NT} {\rm Tr}\left( P_{\lambda^{\rm red}} X_{k_1} M_{f^0} X_{k_2}'  \right) 
        \\ & \qquad \qquad     
           -  \frac 1 {NT} {\rm Tr}\left( M_{\lambda^0} X_{k_1} P_{f^{\rm red}}  X_{k_2}'  \right)   
           +  \frac 1 {NT} {\rm Tr}\left( P_{\lambda^{\rm red}} X_{k_1} P_{f^{\rm red}} X_{k_2}'  \right)   ,
    \end{align*}
    where in the last step we used that 
    $M_{[\lambda^0,\lambda^{\rm red}]} = M_{\lambda^0} - P_{\lambda^{\rm red}}$
    and $M_{[f^0,f^{\rm red}]} = M_{f^0} -   P_{f^{\rm red}}$.
    Remember that $P_{f^{\rm red}} = P_{(\nu_1,\nu_2,\ldots,\nu_{R-R^0})}$ in the notation used here,
    and $P_{f^{\rm red}} = P_{(w_1,w_2,\ldots,w_{R-R^0})}$ in the notation of Assumption~\ref{ass:EV}.
    Using the characterization of the distribution of $w_s$ given in Lemma~\ref{lemma:iidEV},
    and the methods used in the proof of Lemma~\ref{lemma:JustifyHL2} we obtain
    $\left\| X_{k} P_{f^{\rm red}} \right\| = o_P(\sqrt{NT})$,
    and analogously $\left\| P_{\lambda^{\rm red}}  X_{k} \right\| = o_P(\sqrt{NT})$.
    We thus obtain
    $$\left| \frac 1 {NT} {\rm Tr}\left( P_{\lambda^{\rm red}} X_{k_1} M_{f^0} X_{k_2}'  \right) 
    \right| \leq \frac{R}{NT} \left\| P_{\lambda^{\rm red}} X_{k_1}  \right\| \|X_{k_2}\| = o_P(1),$$
    and analogously we find $ \frac 1 {NT} {\rm Tr}\left( M_{\lambda^0} X_{k_1} P_{f^{\rm red}}  X_{k_2}'  \right)   = o_P(1)$
    and $ \frac 1 {NT} {\rm Tr}\left( P_{\lambda^{\rm red}} X_{k_1} P_{f^{\rm red}} X_{k_2}'  \right) = o_P(1)$.
    Combining this with the above result for 
    $\widehat W_{R,k_1 k_2} $ gives $ \widehat W_R = W + o_P(1)$.
   
   \# \underline{Consistency $\widehat B_R$:}
      Remember that $B_k = \frac 1 N {\rm Tr}[ P_{f^0} \mathbbm{E}(e'  X_k)]$
      and 
    $\widehat B_{R,k}  =\frac 1 N {\rm Tr}\left[ P_{\widehat F_R} \left(  \widehat e_R' X_k \right)^{\rm truncR} \right]$.   
   Using Lemma~\ref{lemma:HelpConsistencyBias} we find
   \begin{align*}
        \widehat B_{R,k}
        &=  \frac 1 N {\rm Tr}\left[ P_{\widehat F_R} \mathbbm{E}(e' X_k)  \right]  + o_P(1)
        && \text{(using part (i) of the lemma)}
      \\
        &=   \frac 1 N {\rm Tr}\left[ P_{[f^0,f^{\rm red}]} \mathbbm{E}(e' X_k)  \right]  + o_P(1)  
        && \text{(using part (ii) and $\left\| P_{\widehat F_R} - P_{[f^0,f^{\rm red}]} \right\|  = o_P(1)$)}
      \\
       &=  B_k +  \frac 1 N {\rm Tr}\left[ P_{f^{\rm red}} \mathbbm{E}(e' X_k)  \right]  + o_P(1) 
       && \text{(using $P_{[f^0,f^{\rm red}]} = P_{f^0} + P_{f^{\rm red}}$)}
     \\
       &=   B_k + o_P(1) ,
        && \text{(using part (iii))} 
   \end{align*}
   which is the desired consistency result for $\widehat B_R$.
\end{proof}

\section{Proofs for Intermediate Results}
\label{app:Proofs-Supp}
\subsection{Proofs for Expansions of ${\cal L}_{NT}^0(\beta)$,
$M_{\widehat \lambda}(\beta)$, $M_{\widehat f}(\beta)$ and $\widehat e(\beta)$
}

\begin{proof}[\bf Proof of Lemma \ref{lemma:expansion}]$\phantom{1}$

   \begin{itemize}
      \item[(i,ii)]
      We apply perturbation theory in Kato \cite*{Kato}. The unperturbed operator is
      ${\cal T}^{(0)}=\lambda^0 f^{0 \prime} f^0 \lambda^{0\prime}$,
      the perturbed operator is
      ${\cal T}={\cal T}^{(0)}+{\cal T}^{(1)}+{\cal T}^{(2)}$
      (\textit{i.e.} the parameter $\kappa$ that appears in Kato is set to 1),
      where ${\cal T}^{(1)} = \sum_{k=0}^K \epsilon_k X_k f^{0} \lambda^{0\prime} +
                       \lambda^0 f^{0\prime} \sum_{k=0}^K \epsilon_k X'_k$,
      and ${\cal T}^{(2)} = \sum_{k_1=0}^K \sum_{k_2=0}^K \epsilon_{k_1} \epsilon_{k_2}
                   X_{k_1} X'_{k_2}$.
      The matrices ${\cal T}$ and ${\cal T}^{0}$ are real and symmetric (which implies that they are normal operators),
      and positive semi-definite.
      We know that ${\cal T}^{(0)}$ has an eigenvalue $0$ with multiplicity $N-R^0$,
      and the separating distance of this eigenvalue is $d = N T d_{\min}^2(\lambda^0,f^0)$.
      The bound \eqref{exp_bound} guarantees that
      \begin{align}
         \| {\cal T}^{(1)}+{\cal T}^{(2)} \| &\leq \frac {NT} 2 \, d_{\min}^2(\lambda^0,f^0) \; .
      \end{align}
      By Weyl's inequality we therefore find that the $N-R^0$ smallest eigenvalues of
      ${\cal T}$ (also counting
      multiplicity) are all smaller than $\frac{NT} 2 d_{\min}^2(\lambda^0,f^0)$, and they ``originate''
      from the zero-eigenvalue of ${\cal T}^{(0)}$, with the power series expansion for ${\cal L}^0_{NT}(\beta)$
      given in (2.22) and (2.18) at p.77/78 of Kato, and the expansion of $M_{\widehat \lambda}$
      given in (2.3) and (2.12) at p.75,76 of Kato. We still need to justify the convergence radius of
      this series. Since we set the complex parameter $\kappa$ in Kato to 1, we need to show that the convergence
      radius ($r_0$ in Kato's notation) is at least 1.
      The condition (3.7) in Kato p.89 reads
      $\| {\cal T}^{(n)} \| \leq a c^{n-1}$, $n=1,2,\ldots$, and it is satisfied for
      $a=2 \sqrt{N T} d_{\max}(\lambda^0,f^0) \sum_{k=0}^K |\epsilon_k| \|X_k\|$
      and $c=\sum_{k=0}^K |\epsilon_k| \|X_k\| / \sqrt{N T} / 2 / d_{\max}(\lambda^0,f^0)$.
      According to equation (3.51) in Kato p.95, we therefore find that the power series for
      ${\cal L}^0_{NT}(\beta)$ and $M_{\widehat \lambda}$ are convergent ($r_0 \geq 1$ in his notation) if
      $1 \leq \left( \frac{2 a} {d} + c \right)^{-1}$,
      and this becomes exactly our condition \eqref{exp_bound}.

      When ${\cal L}^0_{NT}\left( \beta \right)$ is approximated up to order $G \in \mathbb{N}$, Kato's equation
      (3.6) at p.89 gives the following bound on the remainder
      \begin{align}
        \bigg| {\cal L}^0_{NT}\left( \beta \right) - \frac 1 {NT} \, \sum_{g=2}^G \, \sum_{k_1=0}^K \,
                                       \ldots \, \sum_{k_g=0}^K \,
                                     \epsilon_{k_1} \, \ldots \, \epsilon_{k_g} \,
                       L^{(g)} &\left(\lambda^0,\, f^0,\, X_{k_1},\, X_{k_2}, \ldots ,X_{k_g}\right) \bigg|
        \nonumber \\ &
          \leq  \frac{(N-R^0) \gamma^{G+1} \, d_{\min}^2(\lambda^0,f^0) }
                     { 4 (1-\gamma) } \; ,
        \label{bound_rem_1}
     \end{align}
     where
     \begin{align}
        \gamma \, &= \, \frac{ \sum_{k=1}^{K}\left| \beta^0_{k} - \beta_{k} \right|
            \frac{\| X_{k} \|} {\sqrt{NT}}  + \frac{\|e\|} {\sqrt{NT}} }
                     {r_0(\lambda^0,f^0)}   \,< \, 1 \; .
     \end{align}
     This bound again shows convergence of the series expansion, since
     $\gamma^{G+1} \rightarrow 0$ as $G\rightarrow \infty$. Unfortunately, for our purposes
     this is not a good bound since it still involves the factor $N-R^0$ (in Kato this factor is hidden since
     his $\widehat \lambda(\kappa)$ is the average of the eigenvalues, not the sum),
     but as we show below this can be avoided.
  \item[(iii,iv)]
     We have $\|S^{(m)}\| = \left( NT d_{\min}^2(\lambda^0,f^0) \right)^{-m}$,
     $\| {\cal T}^{(1)}_{k} \| \leq 2 \sqrt{N T} d_{\max}(\lambda^0,f^0) \|X_k\|$,
     and $\| {\cal T}^{(2)}_{k_1k_2} \| \leq \|X_{k_1}\| \|X_{k_2} \|$.
     Therefore
     \begin{align}
        &\left\| S^{(m_1)} \, {\cal T}^{(\nu_1)}_{k_1 \ldots} \, S^{(m_2)} \, \ldots
         \, S^{(m_P)} \, {\cal T}^{(\nu_P)}_{\ldots k_g} \, S^{(m_{p+1})} \right\|
        \nonumber \\ & \qquad
         \leq \left( NT d_{\min}^2(\lambda^0,f^0) \right)^{-\sum m_j}
              \left( 2 \sqrt{N T} d_{\max}(\lambda^0,f^0) \right)^{2 p-\sum \nu_j}
              \|X_{k_1}\| \|X_{k_2} \| \ldots \| X_{k_g} \| \; .
       \label{BoundSTS}
     \end{align}
     We have
     \begin{align}
        \sum_{\begin{minipage}{3cm}\center\scriptsize
                            $\nu_1+\ldots+\nu_P=g$ \\
                            $2 \geq \nu_j \geq 1$ \end{minipage}} 1
        &\leq 2^p \; ,
        \nonumber \\
        \sum_{\begin{minipage}{3cm}\center\scriptsize
                            $m_1+\ldots+m_{p+1}=p-1$\\$m_j \geq 0$\end{minipage}} 1
         & \leq
        \sum_{\begin{minipage}{3cm}\center\scriptsize
                            $m_1+\ldots+m_{p+1}=p$\\$m_j \geq 0$\end{minipage}} 1
         = \frac{(2p)!} {(p!)^2} \leq 4^p        \; .
        \label{CombSumBound}
     \end{align}
     Using this we find\footnote{The sum over p only starts from $\lceil g/2 \rceil$, the smallest integer
     larger or equal $g/2$, because $\nu_1+\ldots+\nu_P=g$ can not be satisfied for smaller $p$, since
     $\nu_j \leq 2$.}
     \begin{align}
       & \left\| M^{(g)} \left(\lambda^0,\, f^0,\, X_{k_1},\, X_{k_2}, \ldots,X_{k_g}\right) \right\|
       \nonumber \\ & \qquad \quad
         \leq \left( 2 \sqrt{N T} d_{\max}(\lambda^0,f^0) \right)^{-g}
              \|X_{k_1}\| \|X_{k_2} \| \ldots \| X_{k_g} \|
     \sum_{p=\lceil g/2 \rceil}^g
      \left( \frac{ 32 \, d_{\max}^2(\lambda^0,f^0) } {d_{\min}^2(\lambda^0,f^0)} \right)^{p}
       \nonumber \\ & \qquad \quad
         \leq \frac{g} {2} \,
             \left(\frac{16 \; d_{\max}(\lambda^0,f^0)} {d_{\min}^2(\lambda^0,f^0)} \right)^{g}
               \frac{\|X_{k_1}\|} {\sqrt{NT}} \,
               \frac{\|X_{k_2}\|} {\sqrt{NT}} \, \ldots \,
               \frac{\|X_{k_g}\|} {\sqrt{NT}} \; .
        \label{BoundMq}
     \end{align}
     For $g\geq 3$ there always appears at least one factor $S^{(m)}$, $m\geq 1$, inside the trace of the terms
     that contribute to $L^{(g)}$, and we have ${\rm rank}(S^{(m)})=R^0$ for $m\geq 1$.
     Using ${\rm Tr}(A)\leq {\rm rank}(A) \|A\|$, and the equations  \eqref{BoundSTS} and
     \eqref{CombSumBound}, we therefore find\footnote{The calculation for the bound of $L^{(g)}$ is almost
     identical to the one for $M^{(g)}$. But now there appears an additional factor $R^0$ from the rank,
     and since $\sum m_j = p-1$ (not $p$ as before), there is also an additional factor $NT d_{\min}^2(\lambda^0,f^0)$.}
     for $g \geq 3$
     \begin{align}
        & \frac {1} {NT}
        \left| L^{(g)}\left(\lambda^0,\, f^0,\, X_{k_1},\, X_{k_2}, \ldots ,X_{k_g}\right) \right|
        \nonumber \\ & \qquad
         \leq R^0 \, d_{\min}^2(\lambda^0,f^0) \,
              \left(2 \sqrt{NT} d_{\max}(\lambda^0,f^0) \right)^{-g}
           \nonumber \\ & \qquad\qquad\qquad\qquad
              \|X_{k_1}\| \|X_{k_2} \| \ldots \| X_{k_g} \|
              \sum_{p=\lceil g/2 \rceil}^g
                         \left(\frac{32 \; d_{\max}^2(\lambda^0,f^0)} {d_{\min}^2(\lambda^0,f^0)} \right)^{p}
        \nonumber \\ & \qquad
         \leq \frac{R^0 \, g \, d_{\min}^2(\lambda^0,f^0)} {2} \,
             \left(\frac{16 \; d_{\max}(\lambda^0,f^0)} {d_{\min}^2(\lambda^0,f^0)} \right)^{g}
               \frac{\|X_{k_1}\|} {\sqrt{NT}} \,
               \frac{\|X_{k_2}\|} {\sqrt{NT}} \, \ldots \,
               \frac{\|X_{k_g}\|} {\sqrt{NT}} \; .
     \end{align}
     This implies for $g \geq 3$
     \begin{align}
        \frac{1} {NT} & \left| \sum_{k_1=0}^K \,
                      \sum_{k_2=0}^K \, \ldots
                      \sum_{k_g=0}^K \,
                      \epsilon_{k_1} \, \epsilon_{k_2} \, \ldots \, \epsilon_{k_g} \,
                    L^{(g)}\left(\lambda^0,\, f^0,\, X_{k_1},\, X_{k_2}, \ldots ,X_{k_g}\right) \right|
          \nonumber \\ & \qquad \qquad\qquad
           \leq \frac{R^0 \, g \, d_{\min}^2(\lambda^0,f^0)} {2} \,
             \left(\frac{16 \; d_{\max}(\lambda^0,f^0)} {d_{\min}^2(\lambda^0,f^0)} \right)^{g}
             \left( \sum_{k=0}^K
               \frac{\|\epsilon_k \, X_{k}\|} {\sqrt{NT}}  \right)^g .
     \end{align}
     Therefore for $G \geq 2$ we have
     \begin{align}
        & \bigg| {\cal L}^0_{NT}\left( \beta \right) - \frac 1 {NT} \, \sum_{g=2}^G \, \sum_{k_1=0}^K \,
                                       \ldots \, \sum_{k_g=0}^K \,
                                     \epsilon_{k_1} \, \ldots \, \epsilon_{k_g} \,
                       L^{(g)} \left(\lambda^0,\, f^0,\, X_{k_1},\, X_{k_2}, \ldots ,X_{k_g}\right) \bigg|
        \nonumber \\ & \qquad
         = \frac{1} {NT} \, \sum_{g=G+1}^\infty \, \sum_{k_1=0}^K \,
                      \sum_{k_2=0}^K \, \ldots
                      \sum_{k_g=0}^K \,
                      \epsilon_{k_1} \, \epsilon_{k_2} \, \ldots \, \epsilon_{k_g} \,
                    L^{(g)}\left(\lambda^0,\, f^0,\, X_{k_1},\, X_{k_2}, \ldots ,X_{k_g}\right)
        \nonumber \\ & \qquad
        \leq \sum_{g=G+1}^\infty
            \frac{R^0 \, g \, \alpha^g \, d_{\min}^2(\lambda^0,f^0)} {2} \,
        \nonumber \\ & \qquad
        \leq  \frac{R^0 \, (G+1)\, \alpha^{G+1} \, d_{\min}^2(\lambda^0,f^0)} {2 (1-\alpha)^2} \; ,
        \label{BoundLNT}
     \end{align}
     where
     \begin{align}
        \alpha \, &= \frac{16 \; d_{\max}(\lambda^0,f^0)} {d_{\min}^2(\lambda^0,f^0)}
                     \sum_{k=0}^K \frac{\|\epsilon_k \, X_{k}\|} {\sqrt{NT}}
           \nonumber \\
                  &= \frac{16 \; d_{\max}(\lambda^0,f^0)} {d_{\min}^2(\lambda^0,f^0)}
                      \, \left( \sum_{k=1}^{K}\left| \beta^0_{k} - \beta_{k} \right|
            \frac{\| X_{k} \|} {\sqrt{NT}}  + \frac{\|e\|} {\sqrt{NT}} \right) \; < 1 \; .
     \end{align}
     Using the same argument we can start from equation \eqref{BoundMq} to obtain the bound%
     for the remainder of the series expansion for $M_{\widehat\lambda} \left( \beta \right)$.

     Note that compared to the bound \eqref{bound_rem_1} on the remainder, the new bound \eqref{BoundLNT}
     only shows convergence of the power series within the
     the smaller convergence radius
     $\frac{d_{\min}^2(\lambda^0,f^0)} {16 \; d_{\max}(\lambda^0,f^0)}  < r_0(\lambda^0,f^0)$.
     However, the factor $N-R^0$ does not appear in this new bound, which is crucial for our approximations.
  \end{itemize}
  \vspace{-15pt}
\end{proof}

\begin{proof}[\bf Proof of Lemma~\ref{lemma:ProExp}]
   The general expansion of $M_{\widehat \lambda}(\beta)$ is given in
   Lemma \ref{lemma:expansion}. The present Lemma just makes this
   expansion explicit for the first few orders.
   The bound on the remainder $M_{\widehat \lambda}^{({\rm rem})}(\beta)$
   is obtained from the bound \eqref{BoundMq} by the same logic as in the
   proof of Theorem~\ref{th:expansion}.
   The analogous result for $M_{\widehat f}(\beta)$ is obtained by applying the symmetry
   $N \leftrightarrow T$, $\lambda \leftrightarrow f$, $e \leftrightarrow e'$,
   $X_k \leftrightarrow X_k'$.
\end{proof}

\begin{proof}[\bf Proof of Lemma~\ref{lemma:ResExp}]
   The general expansion of $M_{\widehat \lambda}(\beta)$ is given in
   Lemma \ref{lemma:expansion}, and the analogous
   expansion for $M_{\widehat f}(\beta)$ is obtained by applying the symmetry
   $N \leftrightarrow T$, $\lambda \leftrightarrow f$, $e \leftrightarrow e'$,
   $X_k \leftrightarrow X_k'$. Lemma~\ref{lemma:ProExp} above
   provides a more explicit version of these projector expansions.
   For the residuals $\widehat e(\beta)$ we have
   \begin{align}
     \widehat e(\beta) &= M_{\widehat \lambda}(\beta) \, \left( Y - \beta \cdot X \right)
         \, M_{\widehat f}(\beta)
        = M_{\widehat \lambda}(\beta) \, \left[ e -
         \left( \beta - \beta^0 \right) \cdot X
                                 +  \lambda^0 f^{0\prime} \right]
          \, M_{\widehat f}(\beta) \; ,
   \end{align}
   and plugging in the expansions of $M_{\widehat \lambda}(\beta)$
   and $M_{\widehat f}(\beta)$ it is straightforward
   to derive the expansion of $\widehat e(\beta)$ from this,
   including the bound on the remainder.
\end{proof}

\subsection{Proofs for $N^{3/4}$ Convergence Rate Result}

\begin{proof}[\bf Proof of Lemma~\ref{lemma:expansion2a}]
   The result  follows from
   Lemma~\ref{th:expansion2b}
   by applying Weyl's inequality,
   because the terms in $B(\beta)+B'(\beta)$ in addition to $A(\beta)$
   all have a spectral norm of order ${\cal O}_P(\sqrt{N})$
   for $\sqrt{N} \|\beta - \beta^0 \| \leq c$. 
\end{proof}

\begin{proof}[\bf Proof of Theorem~\ref{th:ConvergenceRate}]
   From Theorem \ref{th:consistency} we know
   that $\sqrt{N} ( \widehat \beta_R - \beta_0 )={\cal O}_P(1)$,
   so that the bounds in Lemma~\ref{lemma:expansion2a}
   and Assumption~\ref{ass:HL1} are applicable.
   Since $\widehat \beta_R$ minimizes ${\cal L}^R_{NT}(\beta)$
   it must in particular satisfy
   ${\cal L}^R_{NT}(\widehat \beta_R) \leq {\cal L}^R_{NT}(\beta^0)$.
   Applying this, Lemma~\ref{lemma:expansion2a}, and  Assumption~\ref{ass:HL1}
   we obtain
   \begin{align}
        0 &\geq {\cal L}^R_{NT}(\widehat \beta_R) - {\cal L}^R_{NT}(\beta^0)
     \nonumber \\
          &=    {\cal L}^0_{NT}(\widehat \beta_R) - {\cal L}^0_{NT}(\beta^0)
             - \frac 1 {NT}  \; \sum_{r=1}^{R-R^0}
            \left[ \mu_r\left( A(\widehat \beta_R)  \right) -  \mu_r\left( A(\beta^0)  \right)  \right]
       \nonumber \\ & \qquad
                   + \frac 1 {NT} {\cal O}_P\left[ \sqrt{N}+ \sqrt{NT} \| \widehat \beta_R-\beta^0\|
                \right]
     \nonumber \\
          &\geq    {\cal L}^0_{NT}(\widehat \beta_R) - {\cal L}^0_{NT}(\beta^0)
             - \frac 1 {NT}  \; \sum_{r=1}^{R-R^0}
               \mu_r\left[ M_{f^0} \left( \Delta \beta \cdot X \right)'
                  M_{\lambda^0}  \left(  \Delta \beta \cdot X \right)  M_{f^0} \right]
      \nonumber \\ & \qquad
             +\frac 1 {NT} {\cal O}_P\left[ \sqrt{N}+ N^{5/4} \| \widehat \beta_R-\beta^0\|
                + N^2 \| \widehat \beta_R-\beta^0\| / \log(N)
                     \right] .
   \end{align}
   Applying Theorem \ref{th:expansion} then gives
   \begin{align}
     & \left(\widehat \beta_R-\beta^0 \right)'  W  \left(\widehat \beta_R-\beta^0 \right)
       -  \frac 2 {\sqrt{NT}}   \left(\widehat \beta_R-\beta^0 \right)'
                           \left( C^{(1)} + C^{(2)} \right)
    \nonumber \\
     & \quad  \leq \frac 1 {NT}
     \bigg\{ \sum_{r=1}^{R-R^0}   \mu_r\left[ M_{f^0} \left( \Delta \beta \cdot X \right)'
                  M_{\lambda^0}  \left(  \Delta \beta \cdot X \right)  M_{f^0} \right]
  \nonumber \\ & \qquad      \qquad   \quad
         + {\cal O}_P\left[ \sqrt{N}+ N^{5/4} \| \widehat \beta_R-\beta^0\|
                + N^2 \| \widehat \beta_R-\beta^0\| / \log(N)
                     \right]  \bigg\}.
   \end{align}
   Our assumptions guarantee $C^{(2)} = {\cal O}_P(1)$,
   and we explicitly assume $C^{(1)} = {\cal O}_P(N^{1/4})$.
   Furthermore, Assumption~\ref{ass:NC} guarantees that
   \begin{align}
      (\Delta \beta)'  W  (\Delta \beta)
       - \frac 1 {NT}\sum_{r=1}^{R-R^0}
               \mu_r\left[ M_{f^0} \left( \Delta \beta \cdot X \right)'
                  M_{\lambda^0}  \left(  \Delta \beta \cdot X \right)  M_{f^0} \right]
                  \geq b \| \Delta \beta \|^2 ,
   \end{align}
   which we apply for $\Delta \beta =  \widehat \beta_R-\beta^0  $.   
   Thus, we obtain
   \begin{align}
      b \left( N^{3/4} \| \widehat \beta_R-\beta^0 \| \right)^2
        \leq {\cal O}_P\left( 1 \right)
              + {\cal O}_P\left( N^{3/4} \| \widehat \beta_R-\beta^0\| \right)
              + o_P\left[ \left( N^{3/4} \| \widehat \beta_R-\beta^0 \| \right)^2
                       \right],
   \end{align}
   from which we can conclude that $N^{3/4} \| \widehat \beta_R-\beta^0 \| = {\cal O}_P(1)$,
   which proves the lemma.
\end{proof}

\begin{proof}[\bf Proof of Lemma~\ref{lemma:help}]
    Note that $P_g = g g'$ and $P_h = h h'$. We decompose
   \begin{align}
      \left( U + g C  h' \right)' \left( U + g C  h' \right)
       &= A_1+A_2(C) \; ,
   \end{align}
   where  
   \begin{align}
      A_1 &\equiv U' U +  \| g' U U' g \| \, P_{\left( M_{U' g} h \right)} +  \Delta_{\max}  P_{(U' g)}  \; ,
    \nonumber \\
      A_2(C) &\equiv \left( U + g C  h' \right)' P_g \left( U + g C  h' \right)
             - U' P_g U  -  \| g' U U' g \| \, P_{\left( M_{U' g} h \right)} - \Delta_{\max}  P_{(U' g)}   .
   \end{align}
   By Weyl's  inequality we then have
   \begin{align}
       \sum_{r=1}^R
       \mu_r \left[ \left( U + g C  h' \right)' \left( U + g C  h' \right) \right]
        \leq
         \sum_{r=1}^R \mu_r(A_1)
        +\sum_{r=1}^R \mu_r\left[ A_2(C)) \right] .
     \label{p_inequ_1}
   \end{align}
   We have $A_2(C) = P_{(h, \, U' g)} A_2(C) P_{(h, \, U' g)}$,
   i.e. $A_2(C)$ has $T-2 Q$ zero-eigenvalues
   and only $2 Q$ non-zero eigenvalues.  Let
   $\widetilde h =  (h, U' g) [(h, U' g)'(h, U' g)]^{-1/2}$, which is a $T \times 2Q$ matrix that satisfies
   $\widetilde h' \widetilde h = \mathbbm{1}_{2Q}$ and $\widetilde h \widetilde h' = P_{(h, \, U' g)}$. We then
   have
   \begin{align}
         \sum_{r=1}^R \mu_r\left[ A_2(C)) \right]
         &=  \sum_{r=1}^{\min(R,2Q)} \mu_r\left[ \widetilde h' A_2(C)) \widetilde h\right] ,
     \label{p_inequ_2}
   \end{align}
   and
   \begin{align}
      &\sum_{r=1}^{\min(R,2Q)} \mu_r\left[ \widetilde h' A_2(C)) \widetilde h\right]
     \nonumber \\
        &\leq   \sum_{r=1}^{\min(R,2Q)} \mu_r\left[  \widetilde h'  \left( U + g C  h' \right)' P_g \left( U + g C  h' \right)
         \widetilde h \right]
           \nonumber \\ &\qquad
               +  \sum_{r=1}^{\min(R,2Q)}
                \mu_r\left[  \widetilde h'  \left( - U' P_g U  - \| g' U U' g \| \, P_{\left( M_{U' g} h \right)}
                -  \Delta_{\max}  P_{(U' g)}  \right)
                 \widetilde h  \right]
      \nonumber \\
         &=   \sum_{r=1}^{\min(R,Q)} \mu_r\left[ g' \left( U + g C  h' \right)\left( U + g C  h' \right)' g \right]
           \nonumber \\ &\qquad
               -  \sum_{r=2Q-\min(R,2Q)+1}^{2Q}
                \mu_r\left[  \widetilde h'  \left( U' P_g U  + \| g' U U' g \| \, P_{\left( M_{U' g} h \right)}
                +  \Delta_{\max}  P_{(U' g)}  \right)
                 \widetilde h  \right] .
   \end{align}
   Here, in the first step we again used Weyl's  inequality,
   and in the second step we used that
   the $Q$ non-zero eigenvalues of
   $\widetilde h' \left( U + g C  h' \right)' g g' \left( U + g C  h' \right) \widetilde h$ are identical
   to the eigenvalues of $g' \left( U + g C  h' \right)\left( U + g C  h' \right)' g$,
   and that the eigenvalues of a matrix
     are equal to minus the eigenvalues of the negative of the matrix (but interchanging
     the ordering of the eigenvalues).

   The eigenvalues of     $ \widetilde h'  \left( U' P_g U  + \| g' U U' g \| \, P_{\left( M_{U' g} h \right)}
                +  \Delta_{\max}  P_{(U' g)}  \right)
                 \widetilde h $
   are given by $Q$ eigenvalues equal to
    $ \| g' U U' g \| $ (stemming from         $ \| g' U U' g \| \, P_{\left( M_{U' g} h \right)}$),
    while the remaining $Q$ eigenvalues
    are given by  $\mu_r(U' P_g U) + \Delta_{\max}$, $r=1,\ldots,Q$,
    and satisfy
    $\mu_{r+R-\min(Q,R)}(U' P_g U) + \Delta_{\max} \geq \mu_{r} (U' P_g U)$,
    for $r \in \{1,2,\ldots,\min(R,Q)\}$ (by the definition of $\Delta_{\max}$).
    Therefore we have
    \begin{align}
         \sum_{r=2Q-\min(R,2Q)+1}^{2Q}
                \mu_r\left[  \widetilde h'  \left( U' P_g U  + \| g' U U' g \| \, P_{\left( M_{U' g} h \right)}
                +  \Delta_{\max}  P_{(U' g)}  \right)
                 \widetilde h  \right]
           \geq    \sum_{r=1}^{\min(R,Q)}  \mu_{r} (U' P_g U)   .
    \end{align}
    We can thus conclude that
     \begin{align}
      &\sum_{r=1}^{\min(R,2Q)} \mu_r\left[ \widetilde h' A_2(C)) \widetilde h\right]
        \nonumber \\
         &\leq   \sum_{r=1}^{\min(R,Q)} \mu_r\left[ g' \left( U + g C  h' \right)\left( U + g C  h' \right)' g \right]
                         -   \sum_{r=1}^{\min(R,Q)}  \mu_{r} (g' U U' g)
      \nonumber \\
        & \leq
        \sum_{r=1}^{\min(Q,R)} \mu_r\left( C C' + g' U h C' + C h' U' g \right) .
      \label{p_inequ_3}
   \end{align}
   Combining \eqref{p_inequ_1}, \eqref{p_inequ_2} and \eqref{p_inequ_3} gives the statement
   of the lemma.
\end{proof}

\begin{proof}[\bf Proof of Lemma~\ref{lemma:LowRankAdd}]
     Let $h$ be a $T \times Q$ matrix whose span equals the span of $A$, i.e. $P_h A= A$,
     and that satisfies $h' h = \mathbbm{1}_Q$, and let $\rho = \|A\|/T$.
     Then $A \leq T  \rho P_h  $, which implies
      $\sum_{r=1}^R \mu_r \left( e' e + A \right) \leq \sum_{r=1}^R \mu_r \left( e' e +  T  \rho P_h \right)$.

     The distribution of $e$ is  invariant under orthogonal  transformations $e \mapsto e O$, where
     $O$ is an arbitrary orthogonal $T \times T$ matrix, i.e.~$O O' = \mathbbm{1}_T$. The distribution of the
     eigenvalues of $e' e + T  \rho P_h$ therefore does not depend on $h$ at all, but only on $\rho$ and $\Sigma$.
     We can therefore choose $h$ arbitrarily, even as a random matrix (but independent from $e$).
     Let $u$ be a $Q \times T$ matrix that is independent of $e$,
     and whose columns $u_t$, $t = 1,\ldots T$, are distributed as
     $u_t \sim  iid  {\cal N}(0,\rho \mathbbm{1}_Q)$.
     We choose $h$ such that the span of $h$ equals the span of $u'$, i.e. $u P_h = P_h$.
     Since we consider an asymptotic where $Q$ is finite, while $T \rightarrow \infty$
     it is easy to verify that $\| T  \rho P_h - u' u \| = {\cal O}_P(\sqrt{T})$, which implies
     $\sum_{r=1}^R \mu_r \left( e' e +  T  \rho P_h \right)
      = \sum_{r=1}^R \mu_r \left( e' e + u' u\right) +  {\cal O}_P(\sqrt{T})$.

     Let $U = (e', u')'$ and $E = (e' , 0_{T \times Q} )'$,
     which are $(N+Q) \times T$ matrices. The non-zero eigenvalues of
     the $T \times T$ matrices $U' U = e' e + u' u$
     and $E' E = e' e$
      are equal
     to the non-zero eigenvalues of the $(N+Q) \times (N+Q)$ matrices $U U'$
     and $E E'$, respectively. Let $v$ be the
     $(N+Q) \times R$ matrix whose columns equal to the normalized eigenvectors that correspond
     to the $R$ largest eigenvalues of $U U'$. We then have
     \begin{align}
          \sum_{r=1}^R \mu_r \left( e' e + u' u\right)
          &=  \sum_{r=1}^R \mu_r \left( U U' \right) = {\rm Tr}\left( v' U U' v \right) ,
        \nonumber \\
          \sum_{r=1}^R \mu_r \left( e' e \right)
          &=  \sum_{r=1}^R \mu_r \left( E E' \right)
          \geq {\rm Tr}\left( v' E E' v \right) ,
     \end{align}
     where the last inequality follows from the maximization property of the eigenvalues of $E E'$.
     Decompose $v= (v'_1, v'_2)'$ into the $N \times R$ matrix  $v_1$ and the $Q \times R$ matrix $v_2$.
     We then have
     \begin{align}
          \sum_{r=1}^R \mu_r \left( e' e + u' u\right)
          -   \sum_{r=1}^R \mu_r \left( e' e \right)
          &\leq  {\rm Tr}\left( v' U U' v \right) - {\rm Tr}\left( v' E E' v \right)
        \nonumber \\
           &={\rm Tr}\left[ v' \left( \begin{array}{cc} 0_{N \times N} & e u' \\ u e' & u u' \end{array} \right) v \right]
        \nonumber \\
           &=  2 {\rm Tr}\left( v_1'  e u' v_2 \right)
                + {\rm Tr}\left( v_2'  u u'  v_2 \right)
        \nonumber \\
           &\leq  2 R \| v_1'  e u' v_2 \| + R \| v_2'  u u'  v_2  \|
        \nonumber \\
           &\leq  2 R \| e\| \|u\| \|v_2 \| + R \| u \|^2   \| v_2  \|^2 ,
   \end{align}
    where we used that for any square matrix $B$ we have ${\rm Tr}(B) \leq {\rm rank}(B) \|B\|$,
    and also that $\| v_1 \| \leq 1$.
    We have $\|e\| = {\cal O}_P(\sqrt{\max(N,T)}) = {\cal O}_P(\sqrt{N+T})$, $\|u\| = {\cal O}_P(\sqrt{T})$
    and, as will be shown below,
    $\|v_2\| = {\cal O}_P(1/\sqrt{n})$. Therefore
    \begin{align}
          \sum_{r=1}^R \mu_r \left( e' e + u' u\right)
          -   \sum_{r=1}^R \mu_r \left( e' e \right)
           = {\cal O}_P(   \sqrt{ (N+T)T/n }  ) .
     \end{align}
     Combining the above results we find
     \begin{align}
         \sum_{r=1}^R \mu_r \left( e' e + A \right)
         &  \leq \sum_{r=1}^R \mu_r \left( e' e +  T  \rho P_h \right)
      \nonumber \\
          & \leq   \sum_{r=1}^R \mu_r \left( e' e + u' u\right) +  {\cal O}_P(\sqrt{T})
      \nonumber \\
         & \leq          \sum_{r=1}^R \mu_r \left( e' e \right)
         +  {\cal O}_P \left(   \sqrt{ (N+T)T/n }  \right)  +  {\cal O}_P(\sqrt{T})
      \nonumber \\
         & \leq          \sum_{r=1}^R \mu_r \left( e' e \right)
         +  {\cal O}_P \left(   \sqrt{ (N+T)T/n }  \right)   ,
     \end{align}
     where in the last step we used that $N/n \geq 1$. The last statement is what we wanted to show.
     However, we still have to justify that $\|v_2\| = {\cal O}_P(1/\sqrt{n})$. For this we first note that increasing
     the eigenvalues of $\Sigma$ can only decrease $\| v_2 \|$. Without loss of generality we can therefore
     consider the case where all the $n$ eigenvalues of $\Sigma$ that are smaller than $\rho$ are increased to
     be exactly equal to $\rho$. In that case the distribution of $U$ is symmetric under left-multiplication
     with orthogonal $O(n+Q)$ matrices, which only act on the the $(n+Q)$-dimensional eigenspace of
     the $(N+Q) \times (N+Q)$ covariance matrix of $U$ corresponding to eigenvector $\rho$.  Since the distribution of
     $U$ has this symmetry, the same needs to be true for the distribution of the eigenvectors $v$
     of $UU'$. Since $Q$ is finite, while $n \rightarrow \infty$
     this implies that $\|v_2\| = {\cal O}_P(1/\sqrt{n})$.
\end{proof}

\begin{proof}[\bf Proof of Lemma~\ref{lemma:JustifyHL1}, Part 1] Here, we consider the case where
Assumption~\ref{ass:DX-1} holds, and show that Lemma~\ref{lemma:JustifyHL1} holds in that case.

\#  We want to show that $C^{(1)} = {\cal O}_P(N^{1/4})$.
    By definition of $C^{(1)}$ and Assumption~\ref{ass:EX} we have
    \begin{align}
       C^{(1)}_k &=
         \frac 1 {\sqrt{NT}} \, {\rm Tr}( M_{\lambda^0} \, X_k \,
                  M_{f^0} \, e^{\prime} )
       \nonumber \\
              &=   \frac 1 {\sqrt{NT}}  {\rm Tr}(   X_k   e^{\prime} )
              -    \frac 1 {\sqrt{NT}} {\rm Tr}( P_{\lambda^0} \, X_k \, e^{\prime} )
              +   \frac 1 {\sqrt{NT}} \, {\rm Tr}( P_{\lambda^0} \, X_k \,
                  P_{f^0} \, e^{\prime} )
       \nonumber \\
             &= {\cal O}_P(1)  -    \frac 1 {\sqrt{NT}} {\rm Tr}( P_{\lambda^0} \, X_k \, e^{\prime} )
              +   \frac 1 {\sqrt{NT}} \, {\rm Tr}( P_{\lambda^0} \, X_k \,
                  P_{f^0} \, e^{\prime} ) .
     \end{align}
     Since $\| \widetilde X_k \| = {\cal O}_P( N^{3/4})$ we have
     \begin{align}
           \left|  \frac 1 {\sqrt{NT}} {\rm Tr}( P_{\lambda^0} \,  \widetilde X_k \, e^{\prime} ) \right|
           \leq  \frac R {\sqrt{NT}}  \|  \widetilde X_k \| \|e\| = {\cal O}_P(N^{1/4}),
     \end{align}
     i.e. $ \frac 1 {\sqrt{NT}} {\rm Tr}( P_{\lambda^0} \, \widetilde X_k \, e^{\prime} )  = {\cal O}_P(N^{1/4})$.
     Analogously we obtain $  \frac 1 {\sqrt{NT}} \, {\rm Tr}( P_{\lambda^0} \, \widetilde X_k \,
                  P_{f^0} \, e^{\prime} ) =  {\cal O}_P(N^{1/4})$.
      Regarding the  $\overline X_k$  contribution to $C^{(1)}_k$, consider  $e= \Sigma^{1/2} u$, i.e. case $(a)$ of Assumption~\ref{ass:DX-1} (the proof for case $(b)$
      is analogous).
     Using our assumptions on the
     distribution of $e$ and $\overline X_k$
     we have $\mathbbm{E}\left[ {\rm Tr}(P_{\lambda^0} \, \overline X_k e')^2 | X_k, \lambda^0, \Sigma \right]
      = {\rm Tr}(   \overline X_k' P_{\lambda^0} \Sigma P_{\lambda^0}  \overline X_k )
       \leq {\rm rank}( \overline X_k) \|  \overline X_k \|^2 \|\Sigma\| =  {\cal O}_P(NT)$, and therefore
    $\frac 1 {\sqrt{NT}} {\rm Tr}( P_{\lambda^0} \, \overline X_k \, e^{\prime} )  = {\cal O}_P(1)$.
     Analogously we find
     $  \frac 1 {\sqrt{NT}} \, {\rm Tr}( P_{\lambda^0} \, \overline X_k \,
                  P_{f^0} \, e^{\prime} ) = {\cal O}_P(1)$.
    Combining the above results gives $C^{(1)} = {\cal O}_P(N^{1/4})$. 

  \# We want to show that Assumption~\ref{ass:SN} holds. We have
     $\|X_k\| \leq \| \overline X_k \| +  \| \widetilde X_k \|
     = {\cal O}_P(\sqrt{NT}) + {\cal O}_P(N^{3/4}) = {\cal O}_P(\sqrt{NT})$,
   i.e. Assumption~\ref{ass:SN}$(i)$ is satisfied.
  In the following we assume that $e= \Sigma^{1/2} u$, i.e. case $(a)$ of Assumption~\ref{ass:DX-1}.
The proof for case $(b)$ follows by symmetry of the problem ($N \leftrightarrow T$).
 We have $\| e \| = \| \Sigma \|^{1/2} \| u \| = {\cal O}_P(1) \| u\|$, since we assume that
   $\| \Sigma \| = {\cal O}_P(1)$. Thus, we are left to show $ \| u \| = {\cal O}_P(\sqrt{\max(N,T)})$.
   Lemma~\ref{lemma:JustifyHL1} assumes $N/T \rightarrow \kappa^2$, but it turns out that this 
   assumption is not necessary to show $ \| u \| = {\cal O}_P(\sqrt{\max(N,T)})$, i.e. for the moment
   consider an arbitrary limit $N,T \rightarrow \infty$.
      By assumption, the errors $u_{it}$ are iid ${\cal N}(0,1)$.
      Since an arbitrary limit $N,T \rightarrow \infty$ is not considered very often
      in Random Matrix Theory,
      we define the $\max(N,T) \times \max(N,T)$
     matrix $u^{\rm big}$, which contains $u$ as a submatrix, and whose remaining
     elements are also iid ${\cal N}(0,1)$ and independent of $u$. We then have
     $\|u\| \leq \|u^{\rm big}\| = {\cal O}_P(\sqrt{\max(N,T)})$, where the
     last step is due to Geman~\cite*{Geman1980}.

\# Finally, we show that Assumption~\ref{ass:HL1} holds. 
Consider case (a) of Assumption~\ref{ass:DX-1}$(ii)$ in the following.
Using the decomposition  $X_k = \overline X_k + \widetilde X_k$ we have
\begin{align}
     & \sum_{r=1}^{R-R^0}  \left\{  \mu_r\left[
      M_{f^0} \left( e  -  \Delta \beta \cdot X \right)'
                  M_{\lambda^0}  \left( e  -  \Delta \beta \cdot X \right)  M_{f^0} \right]
         -   \mu_r\left[ M_{f^0} \left(  \Delta \beta \cdot X \right)'
                  M_{\lambda^0}  \left(   \Delta \beta \cdot X \right)  M_{f^0} \right]       \right\}  
   \nonumber \\
     &=    \sum_{r=1}^{R-R^0}   \left\{ \mu_r\left[
      M_{f^0} \left( e  -  \Delta \beta \cdot \overline X \right)'
                  M_{\lambda^0}  \left( e  -  \Delta \beta \cdot \overline X \right)  M_{f^0} \right]
          -   \mu_r\left[ M_{f^0} \left(  \Delta \beta \cdot \overline X \right)'
                  M_{\lambda^0}  \left(   \Delta \beta \cdot \overline X \right)  M_{f^0} \right]       \right\}           
         \nonumber \\ & \qquad     \qquad \qquad      
             + {\cal O}_P( \|e\|  \|\widetilde X_k\|  \|     \Delta \beta \| )
             + {\cal O}_P(  \|\widetilde X_k\|   \| X_k\|   \|   \Delta \beta \|^2 )
   \nonumber \\
     &=    \sum_{r=1}^{R-R^0}   \left\{ \mu_r\left[
      M_{f^0} \left( e  -  \Delta \beta \cdot \overline X \right)'
                  M_{\lambda^0}  \left( e  -  \Delta \beta \cdot \overline X \right)  M_{f^0} \right]
          -   \mu_r\left[ M_{f^0} \left(  \Delta \beta \cdot \overline X \right)'
                  M_{\lambda^0}  \left(   \Delta \beta \cdot \overline X \right)  M_{f^0} \right]       \right\}           
         \nonumber \\ & \qquad     \qquad \qquad      
             + {\cal O}_P( N^{5/4}  \|     \Delta \beta \| )
             + {\cal O}_P(  N^{7/4}    \|   \Delta \beta \|^2 ).
    \label{HL1derive}         
\end{align}
We now apply Lemma~\ref{lemma:help} with
$U=  M_{\lambda^0} e M_{f^0}$ and
 $g C h' =  - M_{\lambda^0} (\Delta \beta \cdot \overline X) M_{f^0} $,
 where $g$ and $h$ are define in Assumption~\ref{ass:DX-1},
 and $C = g' (\Delta \beta \cdot \overline X) h$.
We obtain
\begin{align}
      & \sum_{r=1}^{R-R^0}  \bigg\{  \mu_r\left[
      M_{f^0} \left( e  -  \Delta \beta \cdot \overline X \right)'
                  M_{\lambda^0}  \left( e  -  \Delta \beta \cdot \overline X \right)  M_{f^0} \right]
    \nonumber \\ & \qquad
        \leq  \sum_{r=1}^{R-R^0}
      \mu_r \left(  M_{f^0} e' M_{\lambda^0} e M_{f^0}+
      \| g' e M_{f^0} e' g \| \, P_{\left( M_{[M_{f^0} e' g]} h \right)}
         +  \Delta_{\max}  P_{(M_{f^0} e' g)}   \right)
    \nonumber \\ & \qquad \qquad \qquad
     + \sum_{r=1}^{\min(Q,R-R^0)}
      \mu_r\left[ M_{f^0} \left( \Delta \beta \cdot \overline X \right)'
                  M_{\lambda^0}  \left(  \Delta \beta \cdot \overline X \right)  M_{f^0} \right]
     + {\cal O}_P\left( \| g'  e   h \| \| \overline X\| \| \Delta \beta\| \right)
    \nonumber \\ & \qquad
        \leq  \sum_{r=1}^{R-R^0}
      \mu_r \left(  M_{f^0} e' M_{\lambda^0} e M_{f^0}+
      \| g' e M_{f^0} e' g \| \, P_h
         \right)
    \nonumber \\ & \qquad \qquad \qquad
     + \sum_{r=1}^{R-R^0}
      \mu_r\left[ M_{f^0} \left( \Delta \beta \cdot \overline X \right)'
                  M_{\lambda^0}  \left(  \Delta \beta \cdot \overline X \right)  M_{f^0} \right]
     + {\cal O}_P\left( \sqrt{NT} \| \Delta \beta\| \right)
     + {\cal O}_P(\sqrt{N})
    \nonumber \\ & \qquad
        \leq  \sum_{r=1}^{R-R^0}
      \mu_r \left(  M_{f^0} e' M_{\lambda^0} e M_{f^0}+
      T \| g' \Sigma g \| \, P_h
         \right)
    \nonumber \\ & \qquad \qquad \qquad
     + \sum_{r=1}^{R-R^0}
      \mu_r\left[ M_{f^0} \left( \Delta \beta \cdot \overline X \right)'
                  M_{\lambda^0}  \left(  \Delta \beta \cdot \overline X \right)  M_{f^0} \right]
     + {\cal O}_P\left( \sqrt{NT} \| \Delta \beta\| \right)
     + {\cal O}_P(\sqrt{N}) ,
     \label{HL1lemmaInterm}
\end{align}
where we used that under our assumptions we have 
\begin{itemize}
   \item[(i)]  $\displaystyle \| g'   e   h \| ={\cal O}_P(1)$,
 
   \item[(ii)] $\displaystyle g' e M_{f^0} e' g = T g' \Sigma g +  {\cal O}_P(\sqrt{N})$,
 
    \item[(iii)]  $\displaystyle \Delta_{\max} \equiv    \max_{r \in \{1,2,\ldots,\min(R,Q)\}} \left[ \mu_r( g'  e M_{f^0} e'  g ) -  \mu_{r+Q-\min(Q,R)}( g'  e M_{f^0} e'  g )  \right] = {\cal O}_P(\sqrt{N})$,
 
    \item[(iv)]  $\displaystyle \left\| P_{\left( M_{[M_{f^0} e' g]} h \right)}
     - P_h \right\|  = {\cal O}_P( N^{-1/2} )$.
\end{itemize}

Statement (i) above holds, because $g'   e   h = g' \Sigma^{1/2} u h$ is a projection of $u$
to a $Q \times Q$ submatrix, with $g' \Sigma^{1/2}$ and $h$ independent of $u$, and
$\| g' \Sigma^{1/2} \| = {\cal O}_P(1)$ and $\|h\|=1$. 

Statement (ii) holds, because we can calculate
the expectation and variance of $g' e M_{f^0} e' g = g' \Sigma^{1/2} uu' \Sigma^{1/2} g$
conditional on $\Sigma^{1/2} g$ to show that
$g' \Sigma^{1/2} uu' \Sigma^{1/2} g =  g' \Sigma^{1/2} \mathbbm{E}(uu') \Sigma^{1/2} g   + {\cal O}_P(\sqrt{N})$,
with $\mathbbm{E}(uu') = T \mathbbm{1}_N$.

Statement (iii) holds, because assume that either $R \geq Q$, in which case $\Delta_{\max}=0$,
or we assume  $g' \Sigma g =  \|g' \Sigma g \| \mathbbm{1}_{Q}  + {\cal O}_P(N^{-1/2})$,
so that $g' e M_{f^0} e' g = T \|g' \Sigma g \| \mathbbm{1}_{Q} +  {\cal O}_P(\sqrt{N}) $,
where $T \|g' \Sigma g \| \mathbbm{1}_{Q}$ gives no contribution to $\Delta_{\max}$.
 
We now apply Lemma~\ref{lemma:LowRankAdd} with ``$e$'' in the Lemma
equal to $M_{\lambda^0} e M_{f^0}$,
``$\Sigma$'' in the Lemma equal to $M_{\lambda^0} \Sigma M_{\lambda^0}$, and
$A =  T \| g' \Sigma g \| \, P_h$.
We have $\mu_{n-R^0}(M_{\lambda^0} \Sigma M_{\lambda^0}) \geq 
\mu_{n}( \Sigma)$
and $g' \Sigma g = g' M_{\lambda^0} \Sigma M_{\lambda^0} g$.
We therefore choose ``$n$'' in the Lemma equal to $n-R^0$
when applying Lemma~\ref{lemma:LowRankAdd},
and our assumption $\mu_n(\Sigma) \geq \| g' \Sigma g\|$ with
$1/n= {\cal O}_P(1/N)$ is now used.
When employing Lemma~\ref{lemma:LowRankAdd} here
we also use that rotational invariance of $e' M_{\lambda^0} e = u' \Sigma^{1/2} M_{\lambda^0} \Sigma^{1/2} u$
allows us to treat $M_{f^0} e' M_{\lambda^0} e M_{f^0}$ as an $(N-R^0) \times (N-R^0)$ matrix,
which requires that $u$ is $iid$ normally distributed.  
By Lemma~\ref{lemma:LowRankAdd} we then have
\begin{align}
    & \sum_{r=1}^{R-R^0}
      \mu_r \left(  M_{f^0} e' M_{\lambda^0} e M_{f^0}+
      T \| g' \Sigma g \| \, P_h
         \right)
   \nonumber \\    
      &=   \sum_{r=1}^{R-R^0}
      \mu_r \left(  M_{f^0} e' M_{\lambda^0} e M_{f^0}          \right)
       + {\cal O}_P \left( \sqrt{(N+T-2R^0) (T-R^0)/ (n-R^0)} \right)
   \nonumber \\    
      &=   \sum_{r=1}^{R-R^0}
      \mu_r \left(  M_{f^0} e' M_{\lambda^0} e M_{f^0}          \right)
       + {\cal O}_P(\sqrt{N}).
\end{align}
Combining this with \eqref{HL1derive} and \eqref{HL1lemmaInterm} gives Assumption~\ref{ass:HL1}.
\end{proof}

\begin{proof}[\bf Proof of Lemma~\ref{lemma:JustifyHL1}, Part 2] Here, we consider the case where
Assumption~\ref{ass:DX-2} holds, and show that Lemma~\ref{lemma:JustifyHL1} holds in that case.

    Using the assumption $M_{\lambda^0} \overline X_k M_{f^0} = 0$  
     simplifies the calculation in \eqref{HL1derive}, namely
\begin{align}
     & \sum_{r=1}^{R-R^0}    \mu_r\left[
      M_{f^0} \left( e  -  \Delta \beta \cdot X \right)'
                  M_{\lambda^0}  \left( e  -  \Delta \beta \cdot X \right)  M_{f^0} \right]
   \nonumber \\
     &=    \sum_{r=1}^{R-R^0}    \mu_r\left[
      M_{f^0} \left( e  -  \Delta \beta \cdot \overline X \right)'
                  M_{\lambda^0}  \left( e  -  \Delta \beta \cdot \overline X \right)  M_{f^0} \right]
             + {\cal O}_P( \|e\|  \|\widetilde X_k\|  \|     \Delta \beta \| )
             + {\cal O}_P(  \|\widetilde X_k\|^2  \|   \Delta \beta \|^2 )
   \nonumber \\
     &=    \sum_{r=1}^{R-R^0}   \mu_r\left[
      M_{f^0} e'
                  M_{\lambda^0} e  M_{f^0} \right]
             + {\cal O}_P( N^{5/4}  \|     \Delta \beta \| )
             + {\cal O}_P(  N^{3/2}    \|   \Delta \beta \|^2 ),
\end{align}   
and analogously we obtain
$\sum_{r=1}^{R-R^0}    \mu_r\left[
      M_{f^0} \left(   \Delta \beta \cdot X \right)'
                  M_{\lambda^0}  \left(   \Delta \beta \cdot X \right)  M_{f^0} \right]
  = {\cal O}_P(  N^{3/2}    \|   \Delta \beta \|^2 )$.
We therefore have
\begin{align}
    d(\beta) =  {\cal O}_P( N^{5/4}  \|     \Delta \beta \| )
             + {\cal O}_P(  N^{3/2}    \|   \Delta \beta \|^2 ) ,
\end{align}  
which implies that Assumption~\ref{ass:HL1} holds. The result for $C^{(1)}$ follows
because with $M_{\lambda^0} \overline X_k M_{f^0} = 0$   we find
\begin{align}
       C^{(1)}_k &=
         \frac 1 {\sqrt{NT}} \, {\rm Tr}( M_{\lambda^0} \, X_k \,
                  M_{f^0} \, e^{\prime} )
       \nonumber \\
              &=   \frac 1 {\sqrt{NT}}  {\rm Tr}(   X_k   e^{\prime} )
                    + {\cal O}_P( \|e\|  \|\widetilde X_k\| / \sqrt{NT}  )
       \nonumber \\
             &={\cal O}_P(1)  + {\cal O}_P(N^{1/4})    .
     \end{align}
Finally, Assumption~\ref{ass:SN} holds obviously under Assumption~\ref{ass:DX-2}.     
\end{proof}

\subsection{Proofs for Details on Asymptotic Equivalence}
\label{ass:ProofsInterEqu}

\begin{proof}[\bf Proof of Lemma~\ref{th:expansion2b}]
      Applying the expansion of $\widehat e(\beta)$
   in Lemma \ref{lemma:ResExp} together with
   $\| M_{\lambda^0} e M_{f^0} \| = {\cal O}_P(\sqrt{N})$,
   $\| \widehat e_e^{(1)} \| = {\cal O}_P(1)$,
   $\| \widehat e_e^{(2)} \| = {\cal O}_P(N^{-1/2})$,
   $\| \widehat e_k^{(1)} \| = {\cal O}_P(N)$
   $\| \widehat e_k^{(2)} \| = {\cal O}_P(\sqrt{N})$
   and the bound on $\|\widehat e^{(\rm rem)} \|$ given in the Lemma
   we obtain
   \begin{align}
      \widehat e'(\beta) \widehat e(\beta) &=
          B(\beta) + B'(\beta) + T^{(\rm rem)}(\beta) \; ,
   \end{align}
   where the terms $B^{(\rm rem,1)}(\beta)$
   and $B^{(\rm rem,2)}$ in $B(\beta)$ are given by
   \begin{align}
      B^{(\rm rem,1)}(\beta) &=  M_{f^0}  [(\beta-\beta^0 \cdot X)]'  M_{\lambda^0}
             e  M_{f^0}  e'
        \lambda^0(\lambda^{0\prime}\lambda^0)^{-1}(f^{0\prime}f^0)^{-1}  f^{0\prime}
            \nonumber \\ & \quad
                  + M_{f^0}  e'  M_{\lambda^0}
             [(\beta-\beta^0 \cdot X)]  M_{f^0}  e'
        \lambda^0(\lambda^{0\prime}\lambda^0)^{-1}(f^{0\prime}f^0)^{-1}  f^{0\prime}
            \nonumber \\ & \quad
                  + M_{f^0}  e'  M_{\lambda^0}
             e  M_{f^0}  [(\beta-\beta^0 \cdot X)]'
        \lambda^0(\lambda^{0\prime}\lambda^0)^{-1}(f^{0\prime}f^0)^{-1}  f^{0\prime}
            \nonumber \\ & \quad
                     +  M_{f^0}
                         \left( M_{f^0} e' M_{\lambda^0} \widehat e^{(2)}_e
                              + \widehat e^{(1) \prime}_e \widehat e^{(2)}_e
                              + \widehat e^{(2) \prime}_e M_{\lambda^0} e' M_{f^0}
                         \right) P_{f^0} \; ,
     \nonumber \\
      B^{(\rm rem,2)} &= \ft 1 2 P_{f^0}
                         \left( M_{f^0} e' M_{\lambda^0} \widehat e^{(2)}_e
                              + \widehat e^{(1) \prime}_e \widehat e^{(2)}_e
                              + \widehat e^{(2) \prime}_e M_{\lambda^0} e' M_{f^0}
                         \right) P_{f^0}
                \nonumber \\
                       &=
        f^0(f^{0\prime}f^0)^{-1}(\lambda^{0\prime}\lambda^0)^{-1}\lambda^{0\prime}
                       e  M_{f^0}  e'
                       M_{\lambda^0}  e  M_{f^0}  e'
   \lambda^0(\lambda^{0\prime}\lambda^0)^{-1}(f^{0\prime}f^0)^{-1}f^{0\prime}  ,
   \end{align}
   and for $\sqrt{N} \|\beta - \beta^0 \| \leq c$
   (which implies $\|\widehat e(\beta) \| = {\cal O}_P(\sqrt{N})$) we have
   \begin{align}
      \| T^{(\rm rem)}(\beta) \| = {\cal O}_P(N^{-1/2})
                                  + \| \beta-\beta^0\| {\cal O}_P(N^{1/2})
                                   + \| \beta-\beta^0\|^2 {\cal O}_P(N^{3/2}) \; .
   \end{align}
   which holds uniformly over $\beta$.
   Note also that
   \begin{align}
      B^{(eeee)}+B^{(eeee) \prime}
         &= M_{f^0}
                         \left( M_{f^0} e' M_{\lambda^0} \widehat e^{(2)}_e
                              + \widehat e^{(1) \prime}_e \widehat e^{(2)}_e
                              + \widehat e^{(2) \prime}_e M_{\lambda^0} e' M_{f^0}
                         \right) M_{f^0}.
   \end{align}
   Thus, we have $\|B^{(\rm rem,2)}\| = {\cal O}_P(1)$,
   and for $\sqrt{N} \|\beta - \beta^0 \| \leq c$
   we have
   $\|B^{(\rm rem,1)}(\beta)\| = {\cal O}_P(1) + \|\beta-\beta^0\| {\cal O}_P(N)$,
   and by Weyl's inequality
   \begin{align}
      \mu_r \left[ \widehat e'(\beta) \widehat e(\beta) \right]
      &= \mu_r \left[ B(\beta) + B'(\beta) \right]
         + o_P\left[ \left( 1 + \|\beta-\beta^0\| \right)^2 \right] \; ,
   \end{align}
   again uniformly over $\beta$. This proves the lemma.
\end{proof}

\begin{proof}[\bf Proof of Corollary \ref{cor:LimitRgen2}]
   From Theorem~\ref{th:ConvergenceRate} we know that 
   $N^{3/4} \| \widehat \beta_R-\beta^0 \| = {\cal O}_P(1)$,
   so that the bound in Assumption~\ref{ass:HL2} becomes applicable.
   Let
   $\gamma \equiv W^{-1} \left( C^{(1)} + C^{(2)} \right)/\sqrt{NT}
    = {\cal O}_P(1/\sqrt{NT})$,
   as in the proof proof of Corollary \ref{cor:LimitR0}.
   Since $\widehat \beta_R$ minimizes ${\cal L}^R_{NT}(\beta)$
   it must in particular satisfy
   ${\cal L}_{NT}^{R}(\widehat \beta_{R}) \leq
    {\cal L}_{NT}^{R}\left(\beta^0+\gamma\right)$.
   Using
   Lemma~\ref{th:expansion2b}
   and Assumption~\ref{ass:HL2} it follows that
   \begin{align}
      {\cal L}_{NT}^{0}(\widehat \beta_{R}) \leq
    {\cal L}_{NT}^{0}\left(\beta^0+\gamma\right)
    + \frac 1 {NT} \;
       o_P\left[ \left( 1 + \sqrt{NT} \|\widehat \beta_{R}-\beta^0 \|^2 \right)^2 \right] .
   \end{align}
   The rest of the proof is analogous to the proof of Corrollary~\ref{cor:LimitR0}.
\end{proof}

\begin{proof}[\bf Proof of Lemma~\ref{Lemma:EVbound1}]
   For the eigenvalues of $A+B$  we have
   \begin{align}
      \mu_r(A+B) &= \min_{\Gamma}\,\max_{\left\{\gamma: \, \|\gamma\|=1, \, P_\Gamma \gamma = 0 \right\}}
                            \, \gamma' (A+B) \gamma \; ,
   \end{align}
   where $\Gamma$
   is a $n\times (r-1)$ matrix with full rank $r-1$,
   and $\gamma$ is a $n\times 1$ vector. In the following
   we only consider
   those $\gamma$ that lie in the span of the first $r$ eigenvectors $A$, i.e.
   $\gamma=\sum_{i=1}^r c_i \nu_i$. The condition $\|\gamma\|=1$ implies $\sum_{i=1}^r c_i^2 = 1$.
   The column space of $\Gamma$ is $(r-1)$-dimensional. Therefore, for a given $\gamma=\sum_{i=1}^r c_i \nu_i$
   there always exists a $\Gamma$ such that the conditions $\|\gamma\|=1$ and $P_\Gamma \gamma = 0$
   {\it uniquely} determine $\gamma$ up to the sign.
   We thus have
   \begin{align}
      \mu_r(A+B) &\geq \min_{\Gamma} \,
             \max_{\left\{\gamma: \, \gamma=\sum_{i=1}^r c_i \nu_i , \, \|\gamma\|=1, \, P_\Gamma \gamma = 0 \right\}}
                     \, \gamma' (A+B) \gamma
         \nonumber \\
         &= \min_{\left\{\gamma: \, \gamma=\sum_{i=1}^r c_i \nu_i , \, \|\gamma\|=1 \right\}} \, \gamma' (A+B) \gamma
         \nonumber \\
         &\geq  \min_{\left\{(c_1,\ldots,c_r): \, \sum_{i=1}^r c_i^2 = 1 \right\}}
                 \left[ \sum_{i=1}^r \, c_i^2 \, \mu_i(A) - b \left(\sum_{i=1}^r |c_i| \right)^2 \right]
         \nonumber \\
         &\geq  \mu_r(A) - r \, b
         \nonumber \\
         &\geq \mu_{r}(A) -   \frac{(q-1) \, b}{1 - \sum_{i=q}^n \frac b {\mu_r(A)-\mu_i(A)}} \; ,
   \end{align}
   where we used that $q-1 \geq r$ and that the additional fraction we multiplied with is larger than one.
   This is the lower bound for $\mu_r(A+B)$ that we wanted to show. We now want to derive the upper bound.
   Let $\widetilde A$, $\widetilde B$ and $\bar B$ be $(n-r+1)\times(n-r+1)$ matrices defined by
   $\widetilde A_{ij}=\nu_{i+r-1}' A \nu_{j+r-1}$, $\widetilde B_{ij}=\nu_{i+r-1}' B \nu_{j+r-1}$,
   and $\bar B_{ij}=b$, where $i,j=1,\ldots,n-r+1$.
   We can choose $\Gamma=(\nu_1,\nu_2,\ldots,\nu_{r-1})$ in the above minimization problem, in which case
   $\gamma$ is restricted to the span of $\nu_r, \, \nu_{r+1}, \ldots, \nu_n$. Therefore
   \begin{align}
      \mu_r(A+B) &\leq \max_{\{ \widetilde \gamma : \, \|\widetilde \gamma\|=1 \}} \,
                      \widetilde \gamma' (\widetilde A+\widetilde B) \widetilde \gamma
             \nonumber \\
                 &= \mu_1( \widetilde A+\widetilde B ) \; ,
   \end{align}
   where $\widetilde \gamma$ is a $(n-r+1)$-dimensional vector, whose components are denoted
   $\widetilde \gamma_i$, $i=1,\ldots,n-r+1$, in the following. Note that $\widetilde A$ is a diagonal
   matrix with entries $\mu_{i+r-1}(A)$, $i=1,\ldots,n-r+1$. Therefore
   \begin{align}
      \mu_r(A+B) &\leq \max_{\{ \widetilde \gamma : \, \|\widetilde \gamma\|=1 \}} \,
                      \left[    \sum_{i=1}^{n+r-1} (\widetilde \gamma_i)^2 \mu_{i+r-1}(A)
                       + \sum_{i,j=1}^{n+r-1} \widetilde \gamma_i \, \widetilde \gamma_j \, \widetilde B_{ij} \right]
      \nonumber \\
                 &\leq \max_{\{ \widetilde \gamma : \, \|\widetilde \gamma\|=1 \}} \,
                     \left[   \sum_{i=1}^{n+r-1} (\widetilde \gamma_i)^2 \mu_{i+r-1}(A)
                       + b \sum_{i,j=1}^{n+r-1} |\widetilde \gamma_i| \, |\widetilde \gamma_j| \right]
      \nonumber \\
                 &= \max_{\{ \widetilde \gamma : \, \|\widetilde \gamma\|=1 \}} \,
                     \left[    \sum_{i=1}^{n+r-1} (\widetilde \gamma_i)^2 \mu_{i+r-1}(A)
                       + \sum_{i,j=1}^{n+r-1} \widetilde \gamma_i \, \widetilde \gamma_j \, \bar B_{ij} \right]
      \nonumber \\
                 &= \mu_1(\widetilde A + \bar B) \; .
   \end{align}
   In the last maximization problem the maximum is always attained at a point with
   $\widetilde \gamma_i\geq 0$, which is why we could omit the absolute values around $\widetilde \gamma_i$.

   The eigenvalue $\widetilde \mu \equiv \mu_1(\widetilde A + \bar B)$
   is a solution of the characteristic polynomial of $\widetilde A + \bar B$
   which can be written as
   \begin{align}
      1 &= \sum_{i=r}^n \frac b {\widetilde \mu-\mu_i(A)} \; ,
   \end{align}
   where $\mu_i(A)=\mu_{i-r+1}(\widetilde A)$ are the eigenvalues of $\widetilde A$. In addition we have
   $\widetilde \mu = \mu_1(\widetilde A+\bar B) > \mu_1(\widetilde A) = \mu_r(A)$, because $\bar B$ is positive semi-definite
   (which gives $\geq$) and the eigenvectors of $\widetilde A$ do not agree with those of $\bar B$ (which gives $\neq$).
   From the characteristic polynomial we therefore find
   \begin{align}
      1 &= \sum_{i=r}^{q-1} \frac b {\widetilde \mu-\mu_i(A)} + \sum_{i=q}^n \frac b {\widetilde \mu-\mu_i(A)}
     \nonumber \\
        &\leq \frac {b(q-1)} {\widetilde \mu-\mu_r(A)} + \sum_{i=q}^n \frac b {\mu_r(A)-\mu_i(A)} .
   \end{align}
   Since we assume $1 \geq \sum_{i=q}^n \frac b {\mu_r(A)-\mu_i(A)}$,
   this gives an upper bound on $\widetilde \mu$, and since $\mu_r(A+B)\leq \widetilde \mu$ the same
   bound holds for $\mu_r(A+B)$, namely
   \begin{align}
      \mu_r(A+B)  &\leq \mu_r(A) + \frac{(q-1) \, b}{1 - \sum_{i=q}^n \frac b {\mu_r(A)-\mu_i(A)}} \; .
   \end{align}
   This is what we wanted to show.
\end{proof}
                  
\begin{proof}[\bf Proof of Lemma \ref{lemma:JustifyHL2}]
    Consider the case $T \leq N$, so that for $Q=\min(N,T)-R^0$ defined in Assumption~\ref{ass:EV}
    we have $Q=T-R^0$. If $N < T$, then we interchange the role of $N$ and $T$ in the following proof.\footnote{%
       We consider limits $N,T \rightarrow \infty$ with $N/T \rightarrow \kappa^2$. For $\kappa^2>1$
       we have $T<N$ holding asymptotically, while for $\kappa^2<1$ we have $T>N$ holding asymptotically and the role
       of $N$ and $T$ in the proof needs to be interchanged. For $\kappa^2=1$ there is a subtlety, because
       neither $T \leq N$ nor $N \leq T$ needs to hold asymptotically (the ordering of $N$ and $T$ can
       change arbitrarily often while $N$ and $T$ grow). We could rule out this
       subtlety by only considering asymptotic sequences that satisfy either always $T \leq N$ or always
       $N \leq T$, which would not diminish the practical implications of our results in any way. The
       proof can also be adjusted to jointly consider the cases
       $T \leq N$ and $N \leq T$ in the asymptotic, which is not complicated, but cumbersome.
       }
       
   Define
   \begin{align}
      C^\pm(\beta) &= B(\beta) + B'(\beta)
              \pm
            \left( \sqrt{ \frac 4 {a N}} M_{f^0} B^{(\rm rem,1)}(\beta) P_{f^0} \mp
                  \sqrt{ \frac {a N} 4} P_{f^0} \right)
          \nonumber \\  & \qquad  \qquad \qquad  \qquad \qquad
           \times \left(  \sqrt{ \frac 4 {a N}} M_{f^0} B^{(\rm rem,1)}(\beta) P_{f^0} \mp
                  \sqrt{ \frac {a N} 4} P_{f^0} \right)'
          \nonumber \\  & \qquad
              \pm \left(   \sqrt{ \frac 4 {a N}} M_{f^0}  e'  M_{\lambda^0}
             e  M_{f^0}  e'
        \lambda^0(\lambda^{0\prime}\lambda^0)^{-1}(f^{0\prime}f^0)^{-1}  f^{0\prime}
        \pm \sqrt{ \frac {a N} 4} P_{f^0} \right)
          \nonumber \\  & \qquad  \qquad
              \times \left(   \sqrt{ \frac 4 {a N}} M_{f^0}  e'  M_{\lambda^0}
             e  M_{f^0}  e'
        \lambda^0(\lambda^{0\prime}\lambda^0)^{-1}(f^{0\prime}f^0)^{-1}  f^{0\prime}
        \pm \sqrt{ \frac {a N} 4} P_{f^0} \right)'.
   \end{align}
   Since  $C^+(\beta)$ [respectively $C^-(\beta)$]
   is obtained by adding [respectively subtracting] a positive definite matrix
   from $B(\beta) + B'(\beta)$, we have
   \begin{align}
          \mu_r\left( C^-(\beta) \right) \leq
        \mu_r\left( B(\beta) + B'(\beta) \right)  \leq
         \mu_r\left( C^+(\beta) \right) .
   \end{align}
   The advantage of considering $C^\pm(\beta)$  instead
   of $B(\beta) + B'(\beta)$ directly is that there are no ``mixed terms''
   in $C^\pm(\beta)$, which start with $M_{f^0}$ and end with $P_{f^0}$,
   or vice versa, i.e. we can write
   $C^\pm(\beta) = C_1^\pm(\beta) + C_2^\pm$,
   where $C_1^\pm(\beta) = M_{f^0} C_1^\pm(\beta) M_{f^0}$
   and $C_2^\pm =  P_{f^0} C_2^\pm P_{f^0}$.
   Concretely, we have
   \begin{align}
      C_1^\pm(\beta) &= A(\beta)
         \pm \frac 4 {a N} M_{f^0} B^{(\rm rem,1)}(\beta) P_{f^0}
                     B^{(\rm rem,1) \prime}(\beta)   M_{f^0}
        \nonumber \\ & \quad
         \pm \frac 4 {a N} M_{f^0}  e'  M_{\lambda^0}
             e  M_{f^0}  e'
        \lambda^0(\lambda^{0\prime}\lambda^0)^{-1}(f^{0\prime}f^0)^{-1}
           (\lambda^{0\prime}\lambda^0)^{-1}  \lambda^{0\prime} e
           M_{f^0}  e'  M_{\lambda^0}
             e  M_{f^0}
             \nonumber \\ & \quad
               + M_{f^0}  \left[ (\beta-\beta^0) \cdot X -e \right]'  M_{\lambda^0}   e
     f^0 (f^{0\prime}f^0)^{-1}  (\lambda^{0\prime}\lambda^0)^{-1}\lambda^{0\prime}e M_{f^0}
             \nonumber \\ & \quad
               + M_{f^0}  e'  M_{\lambda^0}   \left[ (\beta-\beta^0) \cdot X \right]
     f^0 (f^{0\prime}f^0)^{-1}  (\lambda^{0\prime}\lambda^0)^{-1}\lambda^{0\prime}e M_{f^0}
             \nonumber \\ & \quad
               + M_{f^0}  e'  M_{\lambda^0}   e
     f^0 (f^{0\prime}f^0)^{-1}  (\lambda^{0\prime}\lambda^0)^{-1}\lambda^{0\prime}
     \left[ (\beta-\beta^0) \cdot X \right] M_{f^0}
      \nonumber \\ & \quad
            + \text{the last three lines transposed}
            + B^{(eeee)} + B^{(eeee) \prime},
     \nonumber \\
      C_2^\pm &=
            P_{f^0} B^{(\rm rem,2)} P_{f^0}
           + P_{f^0} B^{(\rm rem,2) \prime} P_{f^0}
           \pm \frac{a N} 2 P_{f^0} .
   \end{align}
   In the rest of the proof we always
   assume that  $N^{3/4} \left\| \beta -\beta^{0} \right\| \leq c$.
   We apply Lemma \ref{Lemma:EVbound1}
    to $C_1^\pm(\beta)$, with the $A$ in the lemma equal to the leading term
    $M_{f^0}  e'  M_{\lambda^0} e  M_{f^0}$, the $B$ in the lemma
    equal to the remainder of $C_1^\pm(\beta)$, and $q=q_{NT}$.
    Assumption~\ref{ass:EV} introduces $\rho_r$ and $w_r$ as the eigenvalues
    and corresponding eigenvectors of $M_{f^0}  e'  M_{\lambda^0} e  M_{f^0}$,
    where $r=1,\ldots,Q$ with $Q=\min(N,T)-R^0=T-R^0$.    
    If we can show that
    \begin{align}
      \sum_{r=q_{NT}}^{T-R^0} \frac {b_{NT}} {\rho_{R-R^0}-\rho_r}
        = o_P(1) ,
    \end{align}
    then Lemma \ref{Lemma:EVbound1} becomes applicable asymptotically, and
    for $r=1,\ldots,R-R^0$ we have wpa1
    \begin{align}
      \left| \mu_r\left( C_1^\pm(\beta) \right)
       -  \rho_r \right|
          \, &\leq \, \frac{(q_{NT}-1) \, b_{NT}}
                {1 - \sum_{s=q_{NT}}^{T-R^0} \frac {b_{NT}} {\rho_{r}-\rho_s}}
          \, \leq \, \frac{q_{NT} \, b_{NT}}
                {1 - \sum_{s=q_{NT}}^{T-R^0} \frac {b_{NT}} {\rho_{R-R^0}-\rho_s}} \; ,
    \end{align}
    where
    \begin{align}
       b_{NT} &= \max_{r,s}  \left| w_r'  
                 \left(  C_1^\pm(\beta) -
                        M_{f^0}  e'  M_{\lambda^0} e  M_{f^0} \right) w_s \right| .
    \end{align}
    We now check how the different terms in
    $C_1^\pm(\beta) -  M_{f^0}  e'  M_{\lambda^0} e  M_{f^0}$
    contribute to $b_{NT}$.
    Using the definition of $d_{NT}$ in equation \eqref{DefDNT} 
    we have
    \begin{align*}
      \max_{r,s}
       \left| w_r' M_{f^0}  e'  M_{\lambda^0} [(\beta-\beta^0) \cdot X]  M_{f^0} w_s
       \right| &\leq K \|e\| \|\beta-\beta^0\| \max_{k,r,s} \| v_r' X_k w_s \|
     \nonumber \\
          &\leq d_{NT} {\cal O}_P(N^{-1/4}) ,
     \nonumber \\
       \max_{r,s} \left| w_r' M_{f^0}  [(\beta-\beta^0) \cdot X]'  M_{\lambda^0} [(\beta-\beta^0) \cdot X]  M_{f^0} w_s
       \right| &\leq K^2 \left\| \beta -\beta^{0} \right\|^2
                   \max_{k,r} \| M_{\lambda^0} X_k w_r \|^2
         \nonumber \\
           &\leq K^2 N \left\| \beta -\beta^{0} \right\|^2
                   \max_{k,r,s} \| v_r' X_k w_s \|^2
         \nonumber \\
           &\leq d^2_{NT} {\cal O}_P(N^{-1/2}) ,
      \nonumber \\
        \max_{r,s}   \left|  w_r' \frac 4 {a N} M_{f^0} B^{(\rm rem,1)}(\beta) P_{f^0}
                     B^{(\rm rem,1) \prime}(\beta)   M_{f^0} w_s \right|
          &\leq \frac 4 {a N} \|B^{(\rm rem,1)}(\beta)\|^2  = {\cal O}_P(N^{-1/2})  ,
    \end{align*}
    \begin{align*}
         &\max_{r,s} \left| w_r' \frac 4 {a N} M_{f^0}  e'  M_{\lambda^0}
             e  M_{f^0}  e'
        \lambda^0(\lambda^{0\prime}\lambda^0)^{-1}(f^{0\prime}f^0)^{-1}
           (\lambda^{0\prime}\lambda^0)^{-1}  \lambda^{0\prime} e
           M_{f^0}  e'  M_{\lambda^0}
             e  M_{f^0} w_s \right|
       \nonumber \\
          &\qquad \leq   \frac 4 {a N} \|e\|^4
                        \left\| \lambda^0(\lambda^{0\prime}\lambda^0)^{-1}(f^{0\prime}f^0)^{-1}
           (\lambda^{0\prime}\lambda^0)^{-1}  \lambda^{0\prime} \right\|
                    \max_{r} \| w_r' e' P_{\lambda^0}\|^2
             \leq d_{NT}  {\cal O}_P(N^{-1})  ,
      \nonumber \\
        & \max_{r,s}
          \left| w_r' M_{f^0}  e'  M_{\lambda^0}   e
     f^0 (f^{0\prime}f^0)^{-1}  (\lambda^{0\prime}\lambda^0)^{-1}\lambda^{0\prime}
              e M_{f^0} w_s \right|
       \nonumber \\
          &\qquad \leq \|e\|
          \left\|  f^0 (f^{0\prime}f^0)^{-1}
                  (\lambda^{0\prime}\lambda^0)^{-1}\lambda^{0\prime} \right\|
           \max_s \| v_s' e P_{f^0} \|  \max_r \| w_r' e' P_{\lambda^0} \|
              \leq d^2_{NT} {\cal O}_P(N^{-1/2}) ,
      \nonumber \\
        & \max_{r,s}
           \left| w_r' M_{f^0}  \left[ (\beta-\beta^0) \cdot X \right]'  M_{\lambda^0}   e
     f^0 (f^{0\prime}f^0)^{-1}  (\lambda^{0\prime}\lambda^0)^{-1}\lambda^{0\prime}e M_{f^0} w_s
        \right|
     \nonumber \\
          &\qquad
          =\max_{r,s}
           \left| w_r'  \left[ (\beta-\beta^0) \cdot X \right]'
           \left( \sum_q v_q' v_q \right)   e
     f^0 (f^{0\prime}f^0)^{-1}  (\lambda^{0\prime}\lambda^0)^{-1}\lambda^{0\prime}e w_s
        \right|
     \nonumber \\
          &\qquad
        \leq K \|\beta-\beta^0\| N  \max_{r,s,k}  | v_s' X_k w_r |
            \max_r \| v_r' e P_{f^0} \|  \max_r \| w_r' e' P_{\lambda^0} \|
         \left\|  f^0 (f^{0\prime}f^0)^{-1}
                  (\lambda^{0\prime}\lambda^0)^{-1}\lambda^{0\prime} \right\|
         \nonumber \\
          &\qquad   \leq d^3_{NT} {\cal O}_P(N^{-3/4}),
     \nonumber \\
         &\max_{r,s}
           \left| w_r' M_{f^0}  e'  M_{\lambda^0}   \left[ (\beta-\beta^0) \cdot X \right]
     f^0 (f^{0\prime}f^0)^{-1}  (\lambda^{0\prime}\lambda^0)^{-1}\lambda^{0\prime}e M_{f^0}
          w_s \right|
     \nonumber \\ & \qquad
        \leq K   \|e\| \|\beta-\beta^0\|
            \left\|  f^0 (f^{0\prime}f^0)^{-1}
                  (\lambda^{0\prime}\lambda^0)^{-1}\lambda^{0\prime} \right\|
            \max_{r,k} \| v_r' X_k P_{f^0} \|
            \max_r \| w_r' e' P_{\lambda^0}   \|
         \nonumber \\
          &\qquad   \leq d^2_{NT} {\cal O}_P(N^{-1/2}),
             \nonumber \\ &
         \max_{r,s}
           \left| w_r'  M_{f^0}  e'  M_{\lambda^0}   e
     f^0 (f^{0\prime}f^0)^{-1}  (\lambda^{0\prime}\lambda^0)^{-1}\lambda^{0\prime}
     \left[ (\beta-\beta^0) \cdot X \right] M_{f^0} w_s \right|
     \nonumber \\ & \qquad
        \leq K   \|e\| \|\beta-\beta^0\|
            \left\|  f^0 (f^{0\prime}f^0)^{-1}
                  (\lambda^{0\prime}\lambda^0)^{-1}\lambda^{0\prime} \right\|
            \max_{r,k} \| v_r' e P_{f^0} \|
            \max_r \| w_r' X_k' P_{\lambda^0}   \|
         \nonumber \\
          &\qquad   \leq d^2_{NT} {\cal O}_P(N^{-1/2}).
 \end{align*}
 and analogously one can check that
 \begin{align}
    \max_{r,s}  \left| w_r'  B^{(eeee)} w_s \right|
       \leq d^2_{NT} {\cal O}_P(N^{-1})
           +d^3_{NT} {\cal O}_P(N^{-3/2}) .
 \end{align}
 All in all, we thus have
 \begin{align}
    b_{NT} &\leq
          {\cal O}_P(N^{-1/2})
          +  d_{NT} {\cal O}_P(N^{-1/4})
          + d^2_{NT} {\cal O}_P(N^{-1/2})
           +d^3_{NT} {\cal O}_P(N^{-3/4})
      \nonumber \\
          &\leq  d_{NT} {\cal O}_P(N^{-1/4}) \; ,
 \end{align}
 where in the last step we used that by assumption $d_{NT}\geq 1$ and
 $d_{NT} = o_P(N^{1/4})$. Therefore
  \begin{align}
      \sum_{r=q_{NT}}^{T-R^0} \frac {b_{NT}} {\rho_{R-R^0}-\rho_r}
      = q_{NT} d_{NT} {\cal O}_P(N^{-1/4})
        \frac 1 {q_{NT}}
        \leq \sum_{r=q_{NT}}^{T-R^0} \frac {1} {\rho_{R-R^0}-\rho_r}
      = o_P(1),
  \end{align}
  so that  Lemma \ref{Lemma:EVbound1} is indeed applicable asymptotically,
  and we find
    \begin{align}
      \left| \mu_r\left( C_1^\pm(\beta) \right)
       -  \rho_r \right|
          \, \leq \, \frac{q_{NT} \, b_{NT}}
                {1 - o_P(1)}
                    \, \leq \, q_{NT} \,  d_{NT} \, {\cal O}_P(N^{-1/4})
                    = o_P(1)  \; .
    \end{align}
    For $t=1,\ldots,R-R^0$ we thus have
   \begin{align}
      \mu_r \left(  C_1^\pm(\beta) \right)
       =  \rho_r
          + o_P(1)
       \geq \rho_{R-R^0}  + o_P(1)
       \geq \|C_2^\pm \| , \quad \text{wpa1},
   \end{align}
   where the last step follows because $\|C_2^\pm\|=a N/2 + {\cal O}_P(1)$
   and we assumed $\rho_{R-R^0} > a N$, wpa1.
   Since $C^\pm(\beta)$
   is block-diagonal with blocks $C_1^\pm(\beta)$
   and $C_2^\pm$ (in the basis defined by $f^0$),
   and $\mu_r \left(  C_1^\pm(\beta) \right)  \geq \|C_2^\pm \|$,
   it must be the case that wpa1 the largest $R-R^0$ eigenvalues of $C^\pm(\beta)$
   are those of $C_1^\pm(\beta)$. Thus,
    \begin{align}
      \left| \mu_r\left( C^\pm(\beta) \right)
       -  \rho_r \right|
                    = o_P(1)  \; ,
    \end{align}
    and also
    \begin{align}
      \left| \mu_r\left( B(\beta) + B'(\beta) \right)
       -  \rho_r \right|
                    = o_P(1)  \; ,
    \end{align}
    which holds uniformly over all $\beta$
    with $N^{3/4} \left\| \beta -\beta^{0} \right\| \leq c$.
    This concludes the proof.
\end{proof}

\begin{proof}[\bf Proof of Lemma~\ref{lemma:iidEV}]
    \# Part $(i)$.
Since $e$ has $iid {\cal N}(0,\sigma^2)$ entries, independent of $\lambda^0$ and $f^0$,
rotational invariance dictates that the distribution
of $v_r$ and $w_r$ is given by the Haar measure on the unit sphere
of dimension $N-R^0$ and $T-R^0$, respectively,
and the the lemma just provides a concrete representation of this. 
The bounds on 
$ \mathbbm{E}\left( \sqrt{N} \|  M_{\lambda^0} \widetilde v \|^{-1}  \right)^{\xi}$  and
$ \mathbbm{E}\left(  \sqrt{T} \| M_{f^0} \widetilde w \|^{-1}   \right)^{\xi}$ follow,
because the inverse chi-square distribution with dof $\nu$
possesses all moments smaller than $\nu/2$.

  \# Part $(ii)$.  
   Using part $(i)$ of the lemma we have
   \begin{align}
   w_r   \operatorname*{=}_d
       \| M_{f^0} \widetilde w \|^{-1}  M_{f^0} \widetilde w  
        =   \left\| \frac{ M_{f^0} \widetilde w } {\sqrt{T}} \right\|^{-1}
        \left(  \frac{\widetilde w} {\sqrt{T}}   -   \frac{P_{f^0} \widetilde w} {\sqrt{T}}  \right)  ,
   \end{align}
   where $\widetilde w$ be a $T$-vector with $iid {\cal N}(0,1)$ entries.
   It is also useful to define the time shift operator $L: \mathbbm{R}^T \rightarrow \mathbbm{R}^T$,
   which satisfies $(L w_r)_t = w_{r,t-1}$, and therefore $(L^\tau w_r)_t = w_{r,t-\tau}$. We then have   
   \begin{align}
       \sum_{t=\tau+1}^{T}  w_{r,t} w_{r,t-\tau}
       &= w_r' L^\tau w_r
     \nonumber \\  
       & \operatorname*{=}_d 
        \left\| \frac{ M_{f^0} \widetilde w } {\sqrt{T}} \right\|^{-2}
        \frac 1 T
      \left( \widetilde w' L^\tau \widetilde w 
              - \widetilde w' L^\tau P_{f^0} \widetilde w 
             - \widetilde w' P_{f^0} L^\tau \widetilde w 
             + \widetilde w' P_{f^0} L^\tau P_{f^0} \widetilde w  \right) .               
   \end{align}
   Given the distribution of $\widetilde w$ it is easy to show that
   $\left| \frac {\widetilde w' L^\tau \widetilde w} {\sqrt{T}} \right|$
   has arbitrary high bounded moments as $T$ becomes large, i.e.
   we have
   $\mathbbm{E} \left| \frac {\widetilde w' L^\tau \widetilde w} {\sqrt{T}} \right|^\xi = {\cal O}(1)$
   for any $\tau \geq 1$ and any $\xi>0$. Furthermore, using that $\| L \| =1$ we can bound
   \begin{align}
        \left| \widetilde w' L^\tau P_{f^0} \widetilde w \right|
        &\leq \| \widetilde w \| \| P_{f^0} \widetilde w\| 
        \nonumber \\
         \left| \widetilde w' P_{f^0} L^\tau \widetilde w \right|
        &\leq \| \widetilde w \| \| P_{f^0} \widetilde w\|
        \nonumber \\
         \left| \widetilde w' P_{f^0} L^\tau P_{f^0} \widetilde w \right|
         &\leq \| P_{f^0} \widetilde w\|^2 ,
   \end{align}
   where $\displaystyle \| \widetilde w \|^2 \operatorname*{=}_d \chi^2(T)$
   and $\displaystyle \| P_{f^0} \widetilde w\|^2 \operatorname*{=}_d \chi^2(R^0)$.
   Note that the rhs of the inequalities in the last display do not depend on $\tau$, i.e.
   the bounds are uniform over $\tau$.   
   The $\chi$-square distribution with $R^0$ degrees of freedom does not depend on $T$
   and has finite moments of all orders. Since $ \| \widetilde w \|^2$ is $\chi^2(T)$ distributed we find that
   $\frac 1 {\sqrt{T}} \| \widetilde w \|$ has arbitrarily high uniformly bounded moments
   as $T$ becomes large. Combining these results we obtain that all moments of
   $\left| \frac 1 {\sqrt{T}}
      \left( \widetilde w' L^\tau \widetilde w 
              - \widetilde w' L^\tau P_{f^0} \widetilde w 
             - \widetilde w' P_{f^0} L^\tau \widetilde w 
             + \widetilde w' P_{f^0} L^\tau P_{f^0} \widetilde w  \right) \right|$
    are uniformly bounded as $T$ becomes large.    
    Part $(i)$ of the lemma shows that the same is true for
   $ \left\| \frac{ M_{f^0} \widetilde w } {\sqrt{T}} \right\|^{-2}$.
   Using  Holder's inequality we thus find that for all $\xi>0$ we have
   \begin{align}
        \mathbbm{E}\left|  \sum_{t=\tau+1}^{T}  w_{r,t} w_{r,t-\tau}  \right|^\xi
        &=  \mathbbm{E}\left| \| M_{f^0} \widetilde w \|^{-2}  \widetilde w' M_{f^0} L^\tau M_{f^0} \widetilde w  \right|^\xi
          = {\cal O}(1/\sqrt{T}) ,
   \end{align}
   uniformly over $r$ and $\tau$.
   From this we obtain
  $\max_{r,\tau} \left| \sum_{t=\tau+1}^{T}  w_{r,t} w_{r,t-\tau} \right| = {\cal O}_P( T^{-1/2+\varepsilon} )$
  for any $\varepsilon>0$ (namely $\varepsilon=2/\xi$).
  This is the statement of the lemma for the special case where $r=s$.
  
  What is left to show is that
  $\max_{r \neq s} \max_{\tau} \left| \sum_{t=\tau+1}^{T}  w_{r,t} w_{s,t-\tau} \right| = {\cal O}_P( T^{-1/2+\varepsilon} )$, for $\varepsilon \in [0,1/12)$. 
   Let $\widetilde w^a$ and $\widetilde w^b$ be two $T$-vector with $iid {\cal N}(0,1)$ entries, independent of each other,
  and independent of $f^0$. Then we have for any $r,s = 1,\ldots,Q$ with $r \neq s$ that
  \begin{align}
   \left( \begin{array}{c} w_r \\ w_s \end{array} \right)
     \operatorname*{=}_d
       \left( \begin{array}{c}  \| M_{f^0} \widetilde w^a \|^{-1}  M_{f^0} \widetilde w^a
         \\ 
        \| M_{f^0} M_{w^a} \widetilde w^b \|^{-1}  M_{f^0} M_{w^a} \widetilde w^b  \end{array} \right) .
  \end{align}
  Note that this representation of the joint distribution accounts for the constraint $w_r' w_s = 0$,
  in addition to $\| w_r \| = \|w_s \| =1$ and the invariance under the orthogonal group $O(T-R^0)$.
  Using this representation of the joint distribution of $w_r$ and $w_s$ the proof is now analogous
  to the case $r=s$. The result can be shown for any $\varepsilon>0$.

  \# Part $(iii)$. This again follows since $e$ has $iid {\cal N}(0,\sigma^2)$ entries and from the resulting
  rotational invariance of $e$ wrt to orthogonal $O(N)$ and $O(T)$ rotations from the left and right, respectively.
\end{proof}

\section{Additional Monte Carlo Simulations}
\label{app:MC}
\subsection{``Empirical Monte Carlo''}

The static model in the empirical illustration reads
\begin{equation*}
Y_{it}= \sum_{k=1}^{8} \beta _{k}  X_{k,it}
+\alpha_{i}+ \gamma_i \, t + \delta_i \, t^{2} + \mu_t +\lambda _{i}^{\prime }f_{t}+e_{it}.
\end{equation*}
As described in the main text, 
estimates for $\beta$, $\lambda_i$ and  $f_t$ are obtained by applying
the LS estimation procedure with $R=4$ to
$\widetilde Y_{it}= \sum_{k=1}^{8} \beta _{k}  \widetilde X_{k,it}
  +\widetilde \lambda _{i}^{\prime } \widetilde f_{t} + \widetilde e_{it}$, 
where  
$\widetilde Y = M_{1_N} Y M_{(1_T, {\bf t}, {\bf t}^2)}$
and $\widetilde X_k = M_{1_N} X_k M_{(1_T, {\bf t}, {\bf t}^2)}$
are the outcome variable and regressors after projecting out
$\alpha_{i}$, $\gamma_i$, $\delta_i$ and $\mu_t$. We then construct
the bias corrected estimator $\widehat \beta_R^{\rm BC}$, as reported
in the $R=4$ column of Table~\ref{tab:EstimateBetaNoLag}.
We afterwards estimate $\alpha_{i}$, $\gamma_i$, $\delta_i$ and $\mu_t$
by applying least squares with outcome variable given by the residuals
$Y_{it} - \sum_{k=1}^{8} \widehat \beta^{\rm BC}_{R,k}  X_{k,it}
- \widehat \Lambda _{R,i}^{\prime } \widehat F_{R,t}$, obtaining
 $\widehat \alpha_{R,i}$, $\widehat \gamma_{R,i}$, $\widehat \delta_{R,i}$ and $\widehat \mu_{R,t}$.

For the simulation we generate $e_{it}$ according to the $MA(1)$ process
\begin{align*}
   e^s_{it} &= 0.1 (v_{it} +v_{i,t-1} )  ,
\end{align*}
where $v_{it} \, \sim \, iid \, t(5)$, i.e.~$v_{it}$ has a Student's t-distribution  with 5 degrees of freedom. The factor $0.1$ in the formula
for $e^s_{it}$ was chosen to reproduce standard deviations for 
$\widehat \beta^{\rm BC}_k$ in the simulation that are close to the estimated standard errors
in the actual application.

We set $R^0=4$ and generate the simulated outcome variable as
\begin{equation*}
Y^s_{it}= \sum_{k=1}^{8} \widehat \beta^{\rm BC}_{R^0,k}  X_{k,it}
+\widehat \alpha_{R^0,i}+ \widehat \gamma_{R^0,i} \, t + \widehat \delta_{R^0,i} \, t^{2} + \widehat \mu_{R^0,t} +\widehat  \Lambda _{R^0,i}^{\prime }
\widehat F_{R^0,t}+e^s_{it}.
\end{equation*}
The sample size is $N=48$ and $T=33$, as in the real data.
We generate 10.000 Monte Carlo samples in this way and for each
sample apply the 
same bias corrected estimator for $\beta$ that was
reported in Table~\ref{tab:EstimateBetaNoLag} for the real data.

Table~\ref{tab:MC-Empirical} reports the finite sample bias and standard deviation of
$\widehat \beta^{\rm BC}_{R}$ for $R \in \{0,\ldots,9\}$. 
We also report the empirical size of a size $5 \%$ $t$-test for whether 
each coefficient is equal to its true value. The standard error estimator
used for the $t$-test allows for heteroscedasticity and serial correlation.

We find that the
bias corrected estimates for $\beta_k$, $k=1,\ldots,8$, 
are essentially unbiased when $R \geq R^0$ factors are used in the estimation, but for $R<R^0$ the coefficient estimates are often biased.
For $\beta_k$, $k \geq 3$, there are only small changes in the
 standard deviation of the estimator between $R=4$ and $R=9$,
  but for $\beta_k$, $k=1,2$, we observe standard deviation inflation
of up to $25 \%$ between $R=4$ and $R=9$.

For $k \geq 5$ the empirical sizes of the $t$-test are quite accurate,
but for $k \leq 4$ the finite sample $t$-test overrejects
the null even for $R=R^0=4$. 

Given the relatively small sample size 
the difference between $R=9$ and $R^0=4$ is relatively large, and some
finite sample inefficiency and size distortions are not too surprising.

\subsection{Dynamic Model}

Here, we consider an AR(1) panel model with two factors ($R^0=2$)
and the following data generating process (DGP):
\begin{align}
   Y_{it} &= \beta^0 Y_{i,t-1} + \sum_{r=1}^2 \lambda_{ir} f_{tr} + e_{it} ,
   &
   f_{tr} = 0.5   \, f_{t-1,r} + \frac {\varepsilon_{tr}} {\sqrt{1-0.5^2}}   .
   \label{DGP-Dynamic}
\end{align}
The random variables
$\lambda_{ir}$, $\varepsilon_{tr}$ and $e_{it}$
are mutually independent; with $\lambda_{ir} \, \sim iid \, {\cal N}(1,1)$;  and
$\varepsilon_{tr}$ and $e_{it} \, \sim iid \, {\cal N}(0,1)$.
The AR(1) processes for $Y_{it}$ and $f_{tr}$ are initiated with 100 time periods before the actual estimation sample starts,
so that the initial conditions roughly correspond to the long-run static distribution.
We choose $\beta^0 \in \{0.2,0.5,0.8\}$, and
use $10,000$ repetitions in our simulation.
The true number of factors is chosen to be $R^0=2$.
For each draw of $Y$ and $X$
we compute the LS estimator $\widehat \beta_R$ according to equation \eqref{estimator}
for different values of $R$.

Table~\ref{tab:MC-Dynamic-1} reports bias and standard deviation of the estimator $\widehat \beta_R$
for $N=300$ and different combinations of $R$, $T$ and $\beta^0$.
Table~\ref{tab:MC-Dynamic-2} reports various quantiles of
the distribution of $\sqrt{NT}( \widehat \beta_R - \beta^0 )$
for $N=300$ and different combinations of $R$, $T$ and $\beta^0$.
Table~\ref{tab:MC-Dynamic-3} reports the size of a t-test with nominal size equal to $5 \%$ for $R \geq R^0$. 
We use the results in Bai~\cite*{Bai2009} and Moon and Weidner~\cite*{MoonWeidner2013}
to correct for the leading $1/N$ (not actually present in our DGP)
and $1/T$ (present in our DGP) biases in $\widehat \beta_R$ before calculating the t-test statistics, allowing for 
predetermined regressors and heteroscedsticity in both panel dimensions when estimating the bias and standard
deviation of $\widehat \beta_R$. 

\section{Comments Regarding Numerical Calculation of $\widehat \beta_R$}
\label{app:Numerics}

Different iteration schemes can be used to implement the LS estimator defined
 in \eqref{estimator} numerically:
 
 \begin{itemize}
    \item[(1)] Ahn, Lee and Schmidt \cite*{AhnLeeSchmidt2001} use
 an iteration scheme were the following steps are repeated until convergence:
 (a) for fixed $\widetilde \beta$ find $\widetilde F$ and $\widetilde \Lambda$
that minimize the LS objective function in \eqref{estimator} via principal component analysis (but $\widetilde \Lambda$
need not actually be computed);
 (b) for fixed $\widetilde F$ find $\widetilde \beta$ and $\widetilde \Lambda$
that minimize the LS objective function in \eqref{estimator} (but $\widetilde \Lambda$
need not actually be computed, because $\widetilde \beta$ can  be obtained by regressing
 $Y$ on $X_k M_{\widetilde F}$).
 
    \item[(2)]  Alternatively, Bai~\cite*{Bai2009} proposes the following iteration steps:
(a)  for fixed $\widetilde \beta$ find $\widetilde F$ and $\widetilde \Lambda$ 
that minimize the LS objective function in \eqref{estimator} via principal component analysis;
(b) for fixed $\widetilde F$ and $\widetilde \Lambda$ find $\widetilde \beta$
that minimizes the LS objective function in \eqref{estimator} by running a regression
of $(Y - \widetilde \Lambda   \widetilde F')$ on $X_k$. 

   \item[(3)] Another iteration scheme, which we have used in our implementation, and we have not found discussed
previously in the literature, is the following: (a)   for fixed $\widetilde \beta$ find $\widetilde F$ and
$\widetilde \Lambda$ that minimize the LS objective function in \eqref{estimator} via principal component analysis;
(b) for fixed $\widetilde F$ and $\widetilde \Lambda$ find $\widetilde \beta$ 
that minimizes the alternative objective function $\|M_{\widetilde \Lambda} (Y - \beta \cdot X) \allowbreak M_{\widetilde F}\|_{HS}^2$  by running a regression
of $Y$ on $M_{\widetilde \Lambda} X_k M_{\widetilde F}$.

 \end{itemize}
All three iteration schemes have the same step~(a), i.e. differ from each other only in step~(b). 
Each step of the iteration schemes~(1) and (2) minimizes the LS objective function, i.e. those schemes
guarantee that the sum of squared residuals is non-increasing in each step.
In contrast, step~(b) in scheme~(3) minimizes an alternative objective function, 
i.e. it is possible that the LS objective function in \eqref{estimator} is actually increasing during that step.
However, this step  can nevertheless be justified, namely one can show that
close to any (local) minimum the profile objective function ${\cal L}_{NT}^R(\beta)$
is well approximated by the alternative objective function
  $\frac 1 {NT} \|M_{\widetilde \Lambda} (Y - \beta \cdot X) M_{\widetilde F}\|^2_{HS}$,
  i.e. step~(b) in scheme~(3) is minimizing an approximation of ${\cal L}_{NT}^R(\beta)$.\footnote{%
  Step~(b) in scheme (1) and (2) can be equivalently described as minimizing
  the objective functions  
  $\frac 1 {NT} \| (Y - \beta \cdot X) M_{\widetilde F}\|^2_{HS}$
  and $\frac 1 {NT} \| (Y - \beta \cdot X - \widetilde \Lambda   \widetilde F')  \|^2_{HS}$, respectively,
  which are also approximations of ${\cal L}_{NT}^R(\beta)$. However, those approximations
  are less precise than the approximation in step~(b) of scheme (3). Namely, close to the minimizer
  $\widehat \beta_R$ of ${\cal L}_{NT}^R(\beta)$ we have
   ${\cal L}_{NT}^R(\beta) = \frac 1 {NT} \|M_{\widetilde \Lambda} (Y - \beta \cdot X) M_{\widetilde F}\|^2_{HS}
      + {\cal O}_P( \| \beta - \widehat \beta_R \|^3 )$,
    while the other two approximations have remainders of order 
    $\| \beta - \widehat \beta_R \|^2$.
  }
  
  Bai~\cite*{Bai2009} points out that the iteration scheme (2) is somewhat more robust towards 
  the choice of starting value for $\beta$, which was confirmed in our simulations exercises, both compared
  to scheme (1) and to scheme (3). However, once close to a (local) minimum of the LS objective function
  we found the convergence rate of  scheme (3) to be significantly faster than the conference
  rates of scheme (1) and (2). 
  Scheme (1) performed between scheme (2) and (3) in terms of both robustness and speed.  
  Each iteration scheme therefore has its relative advantages and disadvantages.
  We use scheme (3) for our final implementation, because the LS objective (and the profile
  objective function ${\cal L}_{NT}^R(\beta)$) can have multiple local minima, so that multiple optimization runs
  with different starting values are usually necessary anyways to achieve confidence that the global minimum
  was actually found. By using scheme (3) we minimize the time required for each optimization run, which enables
  us to try out more starting values within the same amount of total CPU time. Combining different iteration schemes
  (e.g. starting with scheme (2) and switching to scheme (3) once close to a minimum) might also be a good idea,
  which we have not explored, however.

\section{Verifying the Assumptions in Bai~(2009) for Example in Section~\ref{sec:AsymptoticSummary} }
\label{app:CheckBai}

Throughout this section we only consider the particular DGP in the example
of Section~\ref{sec:AsymptoticSummary} in the main text.
For this DGP it is easy to see that the OLS estimator
$\widehat \beta_0$ (the LS-estimator with $R=R^0=0$) is $\sqrt{NT}$-consistent, while the example shows that 
$\widehat \beta_1$ (the LS-estimator with $R=1$) is only $\sqrt{N}$-consistent. 
In the following we show that the regularity conditions imposed in Bai~\cite*{Bai2009}
are also satisfied for this DGP. This is interesting to verify, since then example
also shows that we need stronger Assumptions than those imposed in Bai~\cite*{Bai2009} 
in order to derive our results for $\widehat \beta_R$ for $R>R^0$.

\subsection*{Verifying Assumptions A, B, D, E}

\begin{itemize}

\item Since $R^0=0$ we find that
Assumption~A in Bai~\cite*{Bai2009} becomes $\frac 1 {NT} \sum_{i,t} X_{it}^2 >0$, which is satisfied. The
assumption would also be satisfied for $R^0>0$, since the component $\widetilde X$ makes the 
regressors $X$ a ``high-rank regressor''.

\item Bai's Assumption~B is trivially satisfied for $R^0=0$.

\item Assumption~D in Bai~\cite*{Bai2009} requires strict exogeneity of the regressors in the sense that 
$X$ and $e$ are independent, which is also satisfied.\footnote{
    One could also consider $\lambda_x$ and $f_x$ as random, but independent of $e$ and $\widetilde X$.
    In that case $X$ and $e$ are still strictly exogenous in the sense of mean-independence, i.e.
    we have $\mathbbm{E}( e|X) = 0$, but $e$ and $X$ are not fully independent.
    However, our Corollary~\ref{cor:LimitR0} in the main text
    (see also Moon and Weidner~\cite*{MoonWeidner2013}) shows that 
    the asymptotic distribution of $\widehat \beta_{R^0}$ 
    can be derived under the weaker exogeneity assumption
    $\mathbbm{E}( e_{it} X_{it} )=0$. Full independence of $e$ and $X$ is therefore only
    assumed for convenience in Bai~\cite*{Bai2009}, and his results on $\widehat \beta_{R^0}$  remain
    unchanged when only imposing $\mathbbm{E}( e|X) = 0$ instead. }

\item Finally, Assumption~E  in Bai~\cite*{Bai2009} becomes $\frac 1 {\sqrt{NT}} \sum_{it} X_{it} e_{it} 
   \rightarrow_d {\cal N}(0,D_Z)$,
where $D_Z = \lim_{N,T \rightarrow \infty} {\rm Var}\left[ \frac 1 {\sqrt{NT}} \sum_{it} X_{it} e_{it}  \right]$.
This is also satisfied, since $X_{it} e_{it}$ is independent across $i$ and over $t$, and has
bounded variance.

\end{itemize}

\subsection*{Verifying Assumption C}

A more difficult task is to verify
Assumption C in Bai~\cite*{Bai2009}, which contains regularity conditions for $e_{it}$.\footnote{
Essentially, Assumption C requires that $e_{it}$ is mean zero and weakly correlated across $i$ and $t$.
Thus, it plays the same role as our high-level 
Assumption~\ref{ass:SN}$(ii)$, which is easy to check since  
$\|e\| \leq \left\| \mathbbm{1}_{N} + c \, \frac{\lambda _{x}\lambda _{x}^{\prime }}{N} \right\|
  \| u \| \left\| \mathbbm{1}_{T}+c \, \frac{f_{x}f_{x}^{\prime }}{T}\right\| $ 
  and we have
  $\left\| \mathbbm{1}_{N} + c \, \frac{\lambda _{x}\lambda _{x}^{\prime }}{N} \right\| 
  \leq \left\| \mathbbm{1}_{N} \right\| + c \left\| \frac{\lambda _{x}\lambda _{x}^{\prime }}{N} \right\|  = {\cal O}(1)$,
  and 
  $ \left\| \mathbbm{1}_{T}+c \, \frac{f_{x}f_{x}^{\prime }}{T}\right\| 
  \leq  \left\| \mathbbm{1}_{T} \right\| +c \left\| \frac{f_{x}f_{x}^{\prime }}{T}\right\|  =  {\cal O}(1)$,
  and $\| u \| = {\cal O}_P( \sqrt{\min(N,T)})$ --- see Appendix~\ref{app:SpectralNorm} in the main text
  regarding the last statement.}
In our notation the assumption reads

\begin{itemize}
\item[(i)] $\mathbbm{E}\left( e_{it}\right) =0$ and $\mathbbm{E}\left( e_{it}^{8}\right) \leq M,$

\item[(ii)] Let $\mathbbm{E}\left( e_{it}e_{js}\right) =\sigma _{ij,ts}.$ Then, $\frac{1%
}{N}\sum_{i,j=1}^{N}\sup_{t,s}\left\vert \sigma _{ij,ts}\right\vert \leq M,$ 
$\frac{1}{T}\sum_{t,s=1}^{T}\sup_{i,j}\left\vert \sigma _{ij,ts}\right\vert
\leq M,$ and $\frac{1}{NT}\sum_{i,j=1}^{N}\sum_{t,s=1}^{T}\left\vert \sigma
_{ij,ts}\right\vert \leq M.$ Also, the largest eigenvalue of $\mathbbm{E}\left(
e_{i}e_{i}^{\prime }\right) $ is bounded uniformly in $i$ and $T.$

\item[(iii)] For every $\left( t,s\right) ,$ $\mathbbm{E}\left\vert \frac{1}{N^{1/2}}%
\sum_{i=1}^{N}\left[ e_{is}e_{it}-\mathbbm{E}\left( e_{is}e_{it}\right) \right]
\right\vert ^{4}\leq M.$

\item[(iv)] Moreover, 
\begin{eqnarray*}
\frac{1}{NT^{2}}\sum_{t,s,u,v}\sum_{i,j}\left\vert {\rm Cov}\left(
e_{it}e_{is},e_{ju}e_{jv}\right) \right\vert &\leq &M \\
\frac{1}{N^{2}T}\sum_{t,s}\sum_{i,j,k,l}\left\vert {\rm Cov}\left(
e_{it}e_{jt},e_{ks}e_{ls}\right) \right\vert &\leq &M.
\end{eqnarray*}
\end{itemize}

\bigskip

In the following  $M$ always denotes some global constant, whose precise value may change from equation
to equation.
Furthermore, we simply write $\lambda$ and $f$ instead of $\lambda_x$ and $f_x$. We use notation $1\{ \cdot \}$ to denote the indicator function.

\bigskip

We also define $v_1 = c\frac{\lambda }{\sqrt{N}}\frac{\lambda ^{\prime }u}{\sqrt{N}}$,
$v_2 = c\frac{uf}{\sqrt{T}}\frac{f^{\prime }}{\sqrt{T}}$,
$v_3 = \left( \frac{c^{2}}{\sqrt{NT}} \lambda ^{\prime }uf\right) \frac{\lambda }{\sqrt{N}}\frac{f^{\prime }}{%
\sqrt{T}} $,
and $v=v_1+ v_2 +v_3$.  We then have
\begin{eqnarray}
e &=&\left( I+c\frac{\lambda \lambda ^{\prime }}{N}\right) u\left( I+c\frac{%
ff^{\prime }}{T}\right)  
 = u+v .
 \label{eit} 
\end{eqnarray}%
Notice that 
\begin{eqnarray*}
v_{it} &=&\frac{\lambda _{i}}{\sqrt{N}}\left( \frac{c}{\sqrt{N}}%
\sum_{h=1}^{N}\lambda _{h}u_{ht}\right) +\left( \frac{c}{\sqrt{T}}\sum_{\tau
=1}^{T}u_{i\tau }f_{\tau }\right) \frac{f_{t}}{\sqrt{T}}+\left( \frac{c^{2}}{%
\sqrt{NT}}\lambda ^{\prime }uf\right) \frac{\lambda _{i}}{\sqrt{N}}\frac{%
f_{t}}{\sqrt{T}} \\
&=&\frac{\lambda _{i}}{\sqrt{N}}\tilde{\lambda}\left( \lambda ,u_{\cdot
t}\right) +\tilde{f}\left( f,u_{i\cdot }\right) \frac{f_{t}}{\sqrt{T}}+%
\tilde{v}\left( \lambda ,f,u\right) \frac{\lambda _{i}}{\sqrt{N}}\frac{f_{t}%
}{\sqrt{T}} \\
&=&v_{1,it}+v_{2,it}+v_{3,it},
\end{eqnarray*}
where we defined $\tilde{\lambda}\left( \lambda ,u_{\cdot
t}\right)$, $\tilde{f}\left( f,u_{i\cdot }\right)$ and $\tilde{v}\left( \lambda ,f,u\right)$ implicitly.
We also define $g_{1,ij,ts}=u_{it}u_{js}$, $g_{2,ij,ts}=u_{it}v_{js}$, $g_{3,ij,ts}=v_{it}u_{js}$, and
$g_{4,ij,ts}=v_{it}v_{js}$, so that
\begin{eqnarray*}
e_{it}e_{js} &=&g_{1,ij,ts}+g_{2,ij,ts}+g_{3,ij,ts}+g_{4,ij,ts} .
\end{eqnarray*}
In the following
we discuss part $(i)$, $(ii)$, $(iii)$ and $(iv)$ of Assumption C separately:
  
\begin{small}  
  
\subsubsection*{\underline{Part $(i)$}}

This is straightforward to check.

\subsubsection*{\underline{Part $(ii)$}}

Let $\sigma _{k,ij,ts}=\mathbbm{E}\left( g_{k,ij,ts}\right) .$

\begin{enumerate}
\item $\sigma _{1,ij,ts}:$ The desired result follows since 
\begin{equation*}
\sigma _{1,ij,ts}=1\left\{ i=j,t=s\right\} .
\end{equation*}

\item $\sigma _{2,ij,ts}:$ By definition, 
\begin{equation*}
\sigma _{2,ij,ts}=\mathbbm{E}\left( u_{it}v_{js}\right) =\mathbbm{E}\left( u_{it}v_{1,js}\right)
+\mathbbm{E}\left( u_{it}v_{2,js}\right) +\mathbbm{E}\left( u_{it}v_{3,js}\right) .
\end{equation*}%
Direct calculations show that%
\begin{eqnarray*}
\mathbbm{E}\left( u_{it}v_{1,js}\right) &=&\mathbbm{E}\left[ u_{it}\frac{\lambda _{j}}{\sqrt{N}}%
\left( \frac{c}{\sqrt{N}}\sum_{h=1}^{N}\lambda _{h}u_{hs}\right) \right] =%
\frac{c}{N}\sum_{h=1}^{N}\lambda _{j}\lambda _{h}\mathbbm{E}\left( u_{hs}u_{it}\right)
\\
&=&\frac{c}{N}\sum_{h=1}^{N}\lambda _{j}\lambda _{h}1\left\{ i=h,s=t\right\}
\\
&=&\frac{c}{N}\lambda _{i}\lambda _{j}1\left\{ t=s\right\} . \\
\mathbbm{E}\left( u_{it}v_{2,js}\right) &=&\mathbbm{E}\left[ u_{it}\left( \frac{c}{\sqrt{T}}%
\sum_{\tau =1}^{T}u_{j\tau }f_{\tau }\right) \frac{f_{s}}{\sqrt{T}}\right] =%
\frac{c}{T}\sum_{\tau =1}^{T}f_{\tau }f_{s}\mathbbm{E}\left( u_{it}u_{j\tau }\right) \\
&=&\frac{c}{T}\sum_{\tau =1}^{T}f_{\tau }f_{s}1\left\{ i=j,t=\tau \right\} \\
&=&\frac{c}{T}f_{t}f_{s}1\left\{ i=j\right\} . \\
\mathbbm{E}\left( u_{it}v_{3,js}\right) &=&\mathbbm{E}\left[ u_{it}\left( \frac{c^{2}}{\sqrt{NT}}%
\sum_{h=1}^{N}\sum_{\tau =1}^{T}\lambda _{h}f_{\tau }u_{h\tau }\right) \frac{%
\lambda _{j}}{\sqrt{N}}\frac{f_{s}}{\sqrt{T}}\right] =\frac{c^{2}}{NT}%
\sum_{h=1}^{N}\sum_{\tau =1}^{T}\lambda _{h}\lambda _{j}f_{\tau
}f_{s}\mathbbm{E}\left( u_{it}u_{h\tau }\right) \\
&=&\frac{c^{2}}{NT}\sum_{h=1}^{N}\sum_{\tau =1}^{T}\lambda _{h}\lambda
_{j}f_{\tau }f_{s}1\left\{ i=h,t=\tau \right\} \\
&=&\frac{c^{2}}{NT}\lambda _{i}\lambda _{j}f_{t}f_{s}.
\end{eqnarray*}%
Combining these, we have%
\begin{equation*}
\sigma _{2,ij,ts}=\frac{c}{N}\lambda _{i}\lambda _{j}1\left\{ t=s\right\} +%
\frac{c}{T}f_{t}f_{s}1\left\{ i=j\right\} +\frac{c^{2}}{NT}\lambda
_{i}\lambda _{j}f_{t}f_{s}.
\end{equation*}%
Since $\lambda _{i}$ and $f_{t}$ are bounded, we have the desired result,%
\begin{eqnarray*}
\frac{1}{N}\sum_{i,j=1}^{N}\sup_{t,s}\left\vert \frac{c}{N}\lambda
_{i}\lambda _{j}1\left\{ t=s\right\} +\frac{c}{T}f_{t}f_{s}1\left\{
i=j\right\} +\frac{c^{2}}{NT}\lambda _{i}\lambda _{j}f_{t}f_{s}\right\vert
&\leq &M, \\
\frac{1}{T}\sum_{t,s=1}^{T}\sup_{i,j}\left\vert \frac{c}{N}\lambda
_{i}\lambda _{j}1\left\{ t=s\right\} +\frac{c}{T}f_{t}f_{s}1\left\{
i=j\right\} +\frac{c^{2}}{NT}\lambda _{i}\lambda _{j}f_{t}f_{s}\right\vert
&\leq &M, \\
\frac{1}{NT}\sum_{i,j=1}^{N}\sum_{t,s=1}^{T}\left\vert \frac{c}{N}\lambda
_{i}\lambda _{j}1\left\{ t=s\right\} +\frac{c}{T}f_{t}f_{s}1\left\{
i=j\right\} +\frac{c^{2}}{NT}\lambda _{i}\lambda _{j}f_{t}f_{s}\right\vert
&\leq &M.
\end{eqnarray*}

\item $\sigma _{3,ij,ts}:$ The result for the term $\sigma
_{3,ij,ts} $ follows similarly to the case of $\sigma _{2,ij,ts}.$

\item $g_{4,ij,ts}:$ By definition, we have 
\begin{eqnarray*}
\mathbbm{E}\left( g_{4,ij,ts}\right) 
&=&\sum_{k=1}^{3}\sum_{l=1}^{3}\mathbbm{E}\left( v_{k,it}v_{l,js}\right) \\
&=&\mathbbm{E}\left( 
\begin{array}{c}
\left[ 
\begin{array}{c}
\frac{\lambda _{i}}{\sqrt{N}}\left( \frac{c}{\sqrt{N}}\sum_{k=1}^{N}\lambda
_{k}u_{kt}\right) +\left( \frac{c}{\sqrt{T}}\sum_{p=1}^{T}f_{p}u_{ip}\right) 
\frac{f_{t}}{\sqrt{T}} \\ 
+\frac{\lambda _{i}}{\sqrt{N}}\left( \frac{c^{2}}{\sqrt{NT}}%
\sum_{k=1}^{N}\sum_{p=1}^{T}\lambda _{k}f_{p}u_{kp}\right) \frac{f_{t}}{%
\sqrt{T}}%
\end{array}%
\right] \\ 
\times \left[ 
\begin{array}{c}
\frac{\lambda _{j}}{\sqrt{N}}\left( \frac{c}{\sqrt{N}}\sum_{l=1}^{N}\lambda
_{l}u_{ls}\right) +\left( \frac{c}{\sqrt{T}}\sum_{q=1}^{T}f_{q}u_{jq}\right) 
\frac{f_{s}}{\sqrt{T}} \\ 
+\frac{\lambda _{j}}{\sqrt{N}}\left( \frac{c^{2}}{\sqrt{NT}}%
\sum_{l=1}^{N}\sum_{q=1}^{T}\lambda _{q}f_{l}u_{lq}\right) \frac{f_{s}}{%
\sqrt{T}}%
\end{array}%
\right]%
\end{array}%
\right) .
\end{eqnarray*}%
Notice that 
\begin{eqnarray*}
\mathbbm{E}\left( v_{1,it}v_{1,js}\right) &=&\frac{c^{2}}{N^{2}}\sum_{k=1}^{N}%
\sum_{l=1}^{N}\lambda _{i}\lambda _{k}\lambda _{j}\lambda _{l}\mathbbm{E}\left(
u_{kt}u_{ls}\right) \\
&=&\frac{c^{2}}{N}\lambda _{i}\lambda _{j}\left( \frac{1}{N}%
\sum_{k=1}^{N}\lambda _{k}^{2}\right) 1\left\{ t=s\right\} .
\end{eqnarray*}

\begin{eqnarray*}
\mathbbm{E}\left( v_{1,it}v_{2,js}\right) &=&\mathbbm{E}\left[ \frac{\lambda _{i}}{\sqrt{N}}%
\left( \frac{c}{\sqrt{N}}\sum_{k=1}^{N}\lambda _{k}u_{kt}\right) \left( 
\frac{c}{\sqrt{T}}\sum_{q=1}^{T}f_{q}u_{jq}\right) \frac{f_{s}}{\sqrt{T}}%
\right] \\
&=&\frac{c^{2}}{NT}\left[ \sum_{k=1}^{N}\sum_{q=1}^{T}\lambda _{i}\lambda
_{k}f_{q}f_{s}\mathbbm{E}\left( u_{kt}u_{jq}\right) \right] \\
&=&\frac{c^{2}}{NT}\lambda _{i}\lambda _{j}f_{t}f_{s},
\end{eqnarray*}%
\begin{eqnarray*}
\mathbbm{E}\left( v_{1,it}v_{3,js}\right) 
&=&\mathbbm{E}\left[ \frac{\lambda _{i}}{\sqrt{N}}\left( \frac{c}{\sqrt{N}}%
\sum_{k=1}^{N}\lambda _{k}u_{kt}\right) \frac{\lambda _{j}}{\sqrt{N}}\left( 
\frac{c^{2}}{\sqrt{NT}}\sum_{l=1}^{N}\sum_{q=1}^{T}\lambda
_{l}f_{q}u_{lq}\right) \frac{f_{s}}{\sqrt{T}}\right] \\
&=&\frac{c^{3}}{N^{2}T}\mathbbm{E}\left[ \sum_{k=1}^{N}\sum_{l=1}^{N}\sum_{q=1}^{T}%
\lambda _{i}\lambda _{k}\lambda _{j}\lambda _{l}f_{q}f_{s}\mathbbm{E}\left(
u_{kt}u_{lq}\right) \right] \\
&=&\frac{c^{3}}{NT}\lambda _{i}\lambda _{j}f_{t}f_{s}\left( \frac{1}{N}%
\sum_{k=1}^{N}\lambda _{k}^{2}\right) ,
\end{eqnarray*}%
\begin{eqnarray*}
\mathbbm{E}\left( v_{2,it}v_{2,js}\right) &=&\mathbbm{E}\left( \left( \frac{c}{\sqrt{T}}%
\sum_{p=1}^{T}f_{p}u_{ip}\right) \frac{f_{t}}{\sqrt{T}}\left( \frac{c^{2}}{%
\sqrt{T}}\sum_{q=1}^{T}f_{q}u_{jq}\right) \frac{f_{s}}{\sqrt{T}}\right) \\
&=&\frac{c^{3}}{T^{2}}\sum_{p=1}^{T}\sum_{q=1}^{T}f_{p}f_{t}f_{q}f_{s}\mathbbm{E}%
\left( u_{ip}u_{jq}\right) \\
&=&\frac{c^{3}}{T}\left( \frac{1}{T}\sum_{p=1}^{T}f_{p}^{2}\right)
f_{t}f_{s}1\left\{ i=j\right\} ,
\end{eqnarray*}%
\begin{eqnarray*}
\mathbbm{E}\left( v_{2,it}v_{3,js}\right) 
&=&\mathbbm{E}\left[ \left( \frac{c}{\sqrt{T}}\sum_{p=1}^{T}f_{p}u_{ip}\right) \frac{%
f_{t}}{\sqrt{T}}\frac{\lambda _{j}}{\sqrt{N}}\left( \frac{c^{2}}{\sqrt{NT}}%
\sum_{l=1}^{N}\sum_{q=1}^{T}\lambda _{l}f_{q}u_{lq}\right) \frac{f_{s}}{%
\sqrt{T}}\right] \\
&=&\frac{c^{3}}{NT^{2}}\sum_{p=1}^{T}\sum_{l=1}^{N}\sum_{q=1}^{T}f_{p}f_{t}%
\lambda _{j}\lambda _{l}f_{q}f_{s}\mathbbm{E}\left( u_{ip}u_{lq}\right) \\
&=&\frac{c^{3}}{NT}\lambda _{i}\lambda _{j}f_{t}f_{s}\left( \frac{1}{T}%
\sum_{p=1}^{T}f_{p}^{2}\right) ,
\end{eqnarray*}%
and%
\begin{align*}
\mathbbm{E}\left( v_{3,it}v_{3,js}\right) 
&=\mathbbm{E}\left( \frac{\lambda _{i}}{\sqrt{N}}\left( \frac{c^{2}}{\sqrt{NT}}%
\sum_{k=1}^{N}\sum_{p=1}^{T}\lambda _{k}f_{p}u_{kp}\right) \frac{f_{t}}{%
\sqrt{T}}\frac{\lambda _{j}}{\sqrt{N}}\left( \frac{c^{2}}{\sqrt{NT}}%
\sum_{l=1}^{N}\sum_{q=1}^{T}\lambda _{q}f_{l}u_{lq}\right) \frac{f_{s}}{%
\sqrt{T}}\right) \\
&=\frac{c^{4}}{N^{2}T^{2}}\sum_{k=1}^{N}\sum_{l=1}^{N}\sum_{p=1}^{T}%
\sum_{q=1}^{T}\lambda _{i}\lambda _{k}\lambda _{j}\lambda
_{q}f_{p}f_{t}f_{l}f_{s}\mathbbm{E}\left( u_{kp}u_{lq}\right) \\
&=\frac{c^{4}}{NT}\lambda _{i}\lambda _{j}f_{t}f_{s}\left( \frac{1}{N}%
\sum_{k=1}^{N}\lambda _{k}f_{k}\right) \left( \frac{1}{T}\sum_{p=1}^{T}%
\lambda _{p}f_{p}\right) .
\end{align*}%
From these and using the boundedness of $\lambda _{i}$ and $f_{t}$ we 
obtain
\begin{equation*}
\frac{1}{N}\sum_{i,j=1}^{N}\sup_{t,s}\left\vert \sigma _{4,ij,ts}\right\vert
\leq M,\text{ }\frac{1}{T}\sum_{t,s=1}^{T}\sup_{i,j}\left\vert \sigma
_{4,ij,ts}\right\vert \leq M,\text{ }\frac{1}{NT}\sum_{i,j=1}^{N}%
\sum_{t,s=1}^{T}\left\vert \sigma _{4,ij,ts}\right\vert \leq M.
\end{equation*}

Combining these, we have the desired result.
\end{enumerate}

What is left to show is the bound on the largest eigenvalue of $\Omega_i = \mathbbm{E}\left(
e_{i}e_{i}^{\prime }\right) $, which is equivalent to the spectral norm of $\Omega_i$.
The spectral norm of a symmetric matrix is bounded by the infinity norm, i.e.
we have $\mu_1(\Omega_i) = \| \Omega_i \| \leq \| \Omega_i \|_{\infty} = \max_t \sum_s |\Omega_{i,ts}|$.
For the elements $\Omega_{i,ts}$ of the matrix $\Omega_i$ we have
 $\Omega_{i,ts} =   \mathbbm{E}\left(
e_{it }e_{is} \right) =  \sigma_{ii,ts}$. We thus have
\begin{align*}
    \mu_1( \Omega_i ) &\leq \max_t \sum_s |\sigma_{ii,ts}| = {\cal O}(1) ,
\end{align*}
where the last step follows by the above results on $\sigma _{ij,ts} = \sum_{k=1}^4 \sigma _{k,ij,ts}$.

\subsubsection*{\underline{Part $(iii)$}}

Write 
\begin{eqnarray*}
\mathbbm{E}\left[ \frac{1}{N^{1/2}}\sum_{i=1}^{N}\left[ e_{is}e_{it}-\mathbbm{E}\left(
e_{is}e_{it}\right) \right] \right] ^{4} &=&\mathbbm{E}\left\{ \sum_{k=1}^{4}\left[ 
\frac{1}{N^{1/2}}\sum_{i=1}^{N}\left( g_{k,ii,ts}-\sigma _{k,ii,ts}\right) %
\right] \right\} ^{4} \\
&\leq &M\left\{ \mathbbm{E}\left[ \frac{1}{N^{1/2}}\sum_{i=1}^{N}\left(
g_{k,ii,ts}-\sigma _{k,ii,ts}\right) \right] ^{4}\right\} .
\end{eqnarray*}

\begin{enumerate}
\item $g_{1,ii,ts}:$ Since $u_{it}\sim iidN\left( 0,1\right) $ across $i$
and over $t,$ it is straightforward to see that for all $t,s,$ 
\begin{equation*}
\mathbbm{E}\left[ \frac{1}{\sqrt{N}}\sum_{i=1}^{N}\left( g_{1,ii,ts}-\sigma
_{1,ii,ts}\right) \right] ^{4}\leq M.
\end{equation*}

\item $g_{2,ii,ts}:$ Next, notice that 
\begin{equation*}
\frac{1}{\sqrt{N}}\sum_{i=1}^{N}\left( g_{2,ii,ts}-\sigma _{2,ii,ts}\right)
=\sum_{k=1}^{3}\frac{1}{\sqrt{N}}\sum_{i=1}^{N}\left( u_{it}v_{k,is}-\mathbbm{E}\left(
u_{it}v_{k,is}\right) \right) .
\end{equation*}

Due to the boundedness of $\lambda _{i}$ and $f_{t}$ and iid normality of $%
u_{it},$ we have the following.

First, 
\begin{eqnarray*}
&&\mathbbm{E}\left( \frac{1}{\sqrt{N}}\sum_{i=1}^{N}\left( u_{it}v_{1,is}-\mathbbm{E}\left(
u_{it}v_{1,is}\right) \right) \right) ^{4} \\
&=&\mathbbm{E}\left[ \frac{1}{\sqrt{N}}\sum_{i=1}^{N}\left\{ u_{it}\frac{\lambda _{j}}{%
\sqrt{N}}\left( \frac{c}{\sqrt{N}}\sum_{h=1}^{N}\lambda _{h}u_{hs}\right) -%
\frac{c}{N}\lambda _{i}\lambda _{j}1\left\{ t=s\right\} \right\} \right] ^{4}
\\
&=&\lambda _{j}^{4}c^{4}\mathbbm{E}\left[ \frac{1}{N}\sum_{i=1}^{N}\left\{
u_{it}\left( \frac{1}{\sqrt{N}}\sum_{h=1}^{N}\lambda _{h}u_{hs}\right) -%
\frac{1}{\sqrt{N}}\lambda _{i}1\left\{ t=s\right\} \right\} \right] ^{4} \\
&\leq &M,
\end{eqnarray*}%
where the last equality follows since 
\begin{equation*}
\left( \frac{1}{N}\sum_{i=1}^{N}a_{i}\right) ^{4}\leq \frac{1}{N}%
\sum_{i=1}^{N}a_{i}^{4} ,
\end{equation*}%
and $\sup_{i,t,s}\mathbbm{E}\left( \left\{ u_{it}\left( \frac{1}{\sqrt{N}}%
\sum_{h=1}^{N}\lambda _{h}u_{hs}\right) -\frac{1}{\sqrt{N}}\lambda
_{i}1\left\{ t=s\right\} \right\} ^{4}\right) \leq M.$

Second, similarly to the first case, we have%
\begin{eqnarray*}
&&\mathbbm{E}\left( \frac{1}{\sqrt{N}}\sum_{i=1}^{N}\left( u_{it}v_{2,is}-\mathbbm{E}\left(
u_{it}v_{2,is}\right) \right) \right) ^{4} \\
&=&\mathbbm{E}\left[ \frac{1}{\sqrt{N}}\sum_{i=1}^{N}\left\{ u_{it}\left( \frac{c}{%
\sqrt{T}}\sum_{\tau =1}^{T}u_{i\tau }f_{\tau }\right) \frac{f_{s}}{\sqrt{T}}-%
\frac{cf_{s}f_{t}}{T}\right\} \right] ^{4} \\
&=&f_{s}^{4}c^{4}\mathbbm{E}\left[ \frac{1}{\sqrt{NT}}\sum_{i=1}^{N}\left\{
u_{it}\left( \frac{1}{\sqrt{T}}\sum_{\tau =1}^{T}u_{i\tau }f_{\tau }\right)
f_{s}-\frac{f_{t}}{\sqrt{T}}\right\} \right] ^{4} \\
&\leq &M.
\end{eqnarray*}

Third, 
\begin{eqnarray*}
&&\mathbbm{E}\left( \frac{1}{\sqrt{N}}\sum_{i=1}^{N}\left( u_{it}v_{3,is}-\mathbbm{E}\left(
u_{it}v_{3,is}\right) \right) \right) ^{4} \\
&=&\mathbbm{E}\left( \frac{1}{\sqrt{N}}\sum_{i=1}^{N}\left\{ u_{it}\left( \frac{c^{2}}{%
\sqrt{NT}}\sum_{h=1}^{N}\sum_{\tau =1}^{T}\lambda _{h}f_{\tau }u_{h\tau
}\right) \frac{\lambda _{i}}{\sqrt{N}}\frac{f_{s}}{\sqrt{T}}-\frac{c^{2}}{NT}%
\lambda _{i}^{2}f_{t}f_{s}\right\} \right) ^{4} \\
&=&f_{s}^{4}c^{8}\mathbbm{E}\left( \frac{1}{\sqrt{NT}}\sum_{i=1}^{N}\left\{
u_{it}\left( \frac{c^{2}}{\sqrt{NT}}\sum_{h=1}^{N}\sum_{\tau =1}^{T}\lambda
_{h}f_{\tau }u_{h\tau }\right) \frac{\lambda _{i}}{\sqrt{N}}-\frac{c^{2}}{N%
\sqrt{T}}\lambda _{i}^{2}f_{t}\right\} \right) ^{4} \\
&\leq &M.
\end{eqnarray*}

Combining these, we have
\begin{equation*}
\mathbbm{E}\left( \frac{1}{\sqrt{N}}\sum_{i=1}^{N}\left( g_{2,ii,ts}-\sigma
_{2,ii,ts}\right) \right) ^{4}<M.
\end{equation*}

\item $g_{3,ii,ts}:$ Similarly, we find
\begin{equation*}
\mathbbm{E}\left( \frac{1}{\sqrt{N}}\sum_{i=1}^{N}\left( g_{3,ii,ts}-\sigma
_{3,ii,ts}\right) \right) ^{4}<M ,
\end{equation*}%
because $g_{3,ii,ts}=g_{2,ii,st}.$

\item $g_{4,ii,ts}:$ Finally, 
\begin{equation*}
\frac{1}{\sqrt{N}}\sum_{i=1}^{N}\left( g_{4,ii,ts}-\sigma _{4,ii,ts}\right)
=\sum_{k=1}^{3}\sum_{l=1}^{3}\frac{1}{\sqrt{N}}\sum_{i=1}^{N}\left(
v_{k,it}v_{l,is}-\mathbbm{E}\left( v_{k,it}v_{l,is}\right) \right) .
\end{equation*}
Notice that 
\begin{eqnarray*}
&&\frac{1}{\sqrt{N}}\sum_{i=1}^{N}\left( v_{1,it}v_{1,is}-\mathbbm{E}\left(
v_{1,it}v_{1,is}\right) \right) \\
&=&\frac{1}{\sqrt{N}}\sum_{i=1}^{N}\left( \frac{\lambda _{i}}{\sqrt{N}}%
\left( \frac{c}{\sqrt{N}}\sum_{k=1}^{N}\lambda _{k}u_{kt}\right) \frac{%
\lambda _{i}}{\sqrt{N}}\left( \frac{c}{\sqrt{N}}\sum_{l=1}^{N}\lambda
_{l}u_{ls}\right) -\frac{c^{2}}{N}\lambda _{i}^{2}\left( \frac{1}{N}%
\sum_{k=1}^{N}\lambda _{k}^{2}\right) 1\left\{ t=s\right\} \right) \\
&=&\frac{c^{2}}{\sqrt{N}}\frac{1}{N}\sum_{i=1}^{N}\left\{ \lambda
_{i}^{2}\left\{ \left( \frac{1}{\sqrt{N}}\sum_{k=1}^{N}\lambda
_{k}u_{kt}\right) \left( \frac{1}{\sqrt{N}}\sum_{l=1}^{N}\lambda
_{l}u_{ls}\right) -\left( \frac{1}{N}\sum_{k=1}^{N}\lambda _{k}^{2}\right)
1\left\{ t=s\right\} \right\} \right\} .
\end{eqnarray*}%
Notice that $\sup_{i,t,s}\mathbbm{E}\left[ \lambda _{i}^{2}\left\{ \left( \frac{1}{%
\sqrt{N}}\sum_{k=1}^{N}\lambda _{k}u_{kt}\right) \left( \frac{1}{\sqrt{N}}%
\sum_{l=1}^{N}\lambda _{l}u_{ls}\right) -\left( \frac{1}{N}%
\sum_{k=1}^{N}\lambda _{k}^{2}\right) 1\left\{ t=s\right\} \right\} \right]
^{4}\leq M.$ Therefore, we have%
\begin{eqnarray*}
&&\mathbbm{E}\left[ \frac{1}{\sqrt{N}}\sum_{i=1}^{N}\left( v_{1,it}v_{1,is}-\mathbbm{E}\left(
v_{1,it}v_{1,is}\right) \right) \right] ^{4} \\
&\leq &\left( \frac{c^{8}}{N^{2}}\right) \frac{1}{N}\sum_{i=1}^{N}\mathbbm{E}\left[
\left\{ \lambda _{i}^{2}\left\{ \left( \frac{1}{\sqrt{N}}\sum_{k=1}^{N}%
\lambda _{k}u_{kt}\right) \left( \frac{1}{\sqrt{N}}\sum_{l=1}^{N}\lambda
_{l}u_{ls}\right) -\left( \frac{1}{N}\sum_{k=1}^{N}\lambda _{k}^{2}\right)
1\left\{ t=s\right\} \right\} \right\} ^{4}\right] \\
&\leq &M.
\end{eqnarray*}%
Similarly, we can show the rest of the cases.
\end{enumerate}

\subsubsection*{\underline{Part $(iv)$}}

Without loss of
generality, we set $N=T$ here.
We show that $\frac{1}{NT^{2}}%
\sum_{t,s,u,v}\sum_{i,j}\left\vert {\rm Cov}\left(
e_{it}e_{is},e_{ju}e_{jv}\right) \right\vert \leq M.$ The other case follows
by the same fashion because the DGP is symmetric between 
$i$ and $t.$ Notice that 
\begin{equation*}
{\rm Cov}\left( e_{it}e_{is},e_{ju}e_{jv}\right) ={\rm Cov}\left(
\sum_{k=1}^{4}g_{k,ii,ts},\sum_{k=1}^{4}g_{k,jj,uv}\right)
=\sum_{k=1}^{4}\sum_{l=1}^{4}{\rm Cov}\left( g_{k,ii,ts},g_{l,jj,uv}\right) .
\end{equation*}%
Among $\left\{ {\rm Cov}\left( g_{1,ii,ts},g_{1,jj,uv}\right) \right\} $ there are
six kinds, (a) the term of $\left( u,u\right) $ and $\left( u,u\right) $ (b)
the terms of $\left( u,u\right) $ and $\left( u,v\right) $ (c) the terms of $%
\left( u,u\right) $ and $\left( v,v\right) ,$ (d) the terms of $\left(
u,v\right) $ and $\left( u,v\right) ,$ (e) the terms of $\left( u,v\right) $
and $\left( v,v\right) $, and (f) the term of $\left( v,v\right) $ and $%
\left( v,v\right) .$

In what follows we use "two pairs among $\left\{
t_{1},t_{2},t_{3},t_{4}\right\} $" to denote the sum of the three terms like 
$1\left\{ t_{1}=t_{2}\right\} 1\left\{ t_{3}=t_{4}\right\} .$

The main step in establishing the required result, $\frac{1}{NT^{2}}%
\sum_{t,s,u,v}\sum_{i,j}\left\vert {\rm Cov}\left( g_{k,ii,ts},g_{l,jj,uv}\right)
\right\vert \leq M,$ is to find an upper bound of $\left\vert {\rm Cov}\left(
g_{k,ii,ts},g_{l,jj,uv}\right) \right\vert $ in the form of $\frac{1}{%
N^{a}T^{b}}1\left\{ \text{ some pairs of indices}\right\} ,$ so that the
power of $NT^{2}N^{a}T^{b}=N^{a+b+3},$ $a+b+3,$ is larger than or equal to
the number of outstanding summations.

In the following proofs, we use the following fact multiple times:%
\begin{align}
& \mathbbm{E}\left( u_{it}u_{is}u_{ju}u_{jv}\right) -\mathbbm{E}\left( u_{it}u_{is}\right)
\mathbbm{E}\left( u_{ju}u_{jv}\right)   \notag  \\
&\quad = 1\left\{ i\neq j\right\} 1\left\{ t=s\right\} 1\left\{ u=v\right\}
+1\left\{ i=j\right\} 1\left\{ \text{two pairs among }\left\{
t,s,u,v\right\} \right\}   \notag \\
&\qquad -1\left\{ t=s\right\} 1\left\{ u=v\right\}   \notag \\
&\quad = 1\left\{ i=j\right\} 1\left\{ \text{two pairs among }\left\{
t,s,u,v\right\} \right\} .  \label{useful.fact}
\end{align}

\begin{enumerate}
\item ${\rm Cov}\left( u_{it}u_{is},u_{ju}u_{jv}\right) :$ Notice that 
\begin{eqnarray*}
{\rm Cov}\left( g_{1,ii,ts},g_{2,jj,uv}\right)  
&=&{\rm Cov}\left( u_{it}u_{is},u_{ju}u_{jv}\right) =\mathbbm{E}\left(
u_{it}u_{is}u_{ju}u_{jv}\right) -\mathbbm{E}\left( u_{it}u_{is}\right) \mathbbm{E}\left(
u_{ju}u_{jv}\right)  \\
&=&1\left\{ i=j\right\} 1\left\{ \text{two pairs among }\left\{
t,s,u,v\right\} \right\} .
\end{eqnarray*}%
This implies that out of the six summations over indices $\left(
t,s,u,v,i,j\right) ,$ only three summations matter. Therefore, we have 
\begin{equation*}
\frac{1}{NT^{2}}\sum_{t,s,u,v}\sum_{i,j}\left\vert {\rm Cov}\left(
g_{1,ii,ts},g_{1,jj,uv}\right) \right\vert \leq M.
\end{equation*}

\item ${\rm Cov}\left( u_{it}u_{is},u_{ju}v_{jv}\right) :$ Notice that 
\begin{equation*}
{\rm Cov}\left( u_{it}u_{is},u_{ju}v_{jv}\right) =\sum_{k=1}^{3}{\rm Cov}\left(
u_{it}u_{is},u_{ju}v_{k,jv}\right) .
\end{equation*}

\begin{enumerate}
\item Notice that 
\begin{eqnarray*}
{\rm Cov}\left( u_{it}u_{is},u_{ju}v_{k,jv}\right)  
&=&\mathbbm{E}\left( u_{it}u_{is}u_{ju}v_{1,jv}\right) -\mathbbm{E}\left( u_{it}u_{is}\right)
\mathbbm{E}\left( u_{ju}v_{1,jv}\right)  \\
&=&\frac{c}{N}\left( \sum_{h=1}^{N}\lambda _{j}\lambda _{h}\left\{ \mathbbm{E}\left(
u_{it}u_{is}u_{ju}u_{hv}\right) -\mathbbm{E}\left( u_{it}u_{is}\right) \mathbbm{E}\left(
u_{ju}u_{hv}\right) \right\} \right)  \\
&=&\frac{c}{N}\lambda _{j}^{2}\left\{ \mathbbm{E}\left(
u_{it}u_{is}u_{ju}u_{jv}\right) -\mathbbm{E}\left( u_{it}u_{is}\right) \mathbbm{E}\left(
u_{ju}u_{jv}\right) \right\}  \\
&\leq &\frac{c}{N}\lambda _{j}^{2}1\left\{ i=j\right\} 1\left\{ \text{two
pairs among }\left\{ t,s,u,v\right\} \right\} .
\end{eqnarray*}%
Therefore, 
\begin{align*}
& \frac{1}{NT^{2}}\sum_{t,s,u,v}\sum_{i,j}\left\vert \mathbbm{E}\left(
u_{it}u_{is}u_{ju}v_{1,jv}\right) \right\vert  \\
&\qquad \qquad \leq \frac{M}{N^{2}T^{2}}\sum_{t,s,u,v}\sum_{i,j}1\left\{ i=j\right\}
1\left\{ \text{two pairs among }\left\{ t,s,u,v\right\} \right\} \leq M.
\end{align*}

\item Similarly, we have 
\begin{eqnarray*}
{\rm Cov}\left( u_{it}u_{is},u_{ju}v_{2,jv}\right)  
&=&\mathbbm{E}\left( u_{it}u_{is}u_{ju}v_{2,jv}\right) -\mathbbm{E}\left( u_{it}u_{is}\right)
\mathbbm{E}\left( u_{ju}v_{2,jv}\right)  \\
&=&\frac{c}{T}\left( \sum_{\tau =1}^{T}\left\{ \mathbbm{E}\left(
u_{it}u_{is}u_{ju}u_{j\tau }\right) -\mathbbm{E}\left( u_{it}u_{is}\right) \mathbbm{E}\left(
u_{ju}u_{j\tau }\right) \right\} f_{\tau }f_{v}\right)  \\
&=&\frac{c}{T}\sum_{\tau =1}^{T}f_{\tau }f_{v}1\left\{ i=j\right\} 1\left\{ 
\text{two pairs among }\left\{ t,s,u,\tau \right\} \right\} ,
\end{eqnarray*}%
which leads the desired result  
\begin{equation*}
\frac{1}{NT^{2}}\sum_{t,s,u,v}\sum_{i,j}\left\vert {\rm Cov}\left(
u_{it}u_{is},u_{ju}v_{2,jv}\right) \right\vert \leq M .
\end{equation*}

\item Notice that 
\begin{align*}
{\rm Cov}\left( u_{it}u_{is}u_{ju}v_{3,jv}\right)  
&=\mathbbm{E}\left( u_{it}u_{is}u_{ju}v_{3,jv}\right) -\mathbbm{E}\left( u_{it}u_{is}\right)
\mathbbm{E}\left( u_{ju}v_{3,jv}\right)  \\
&=\frac{c^{2}}{NT}\left( \sum_{h=1}^{N}\sum_{\tau =1}^{T}\lambda
_{j}\lambda _{h}f_{\tau }f_{v}\left[ \mathbbm{E}\left( u_{it}u_{is}u_{ju}u_{h\tau
}\right) -\mathbbm{E}\left( u_{it}u_{is}\right) \mathbbm{E}\left( u_{ju}u_{h\tau }\right) \right]
\right)  \\
&=\frac{c^{2}}{NT}\left( \sum_{\tau =1}^{T}\lambda _{j}^{2}f_{\tau }f_{v}%
\left[ \mathbbm{E}\left( u_{it}u_{is}u_{ju}u_{j\tau }\right) -\mathbbm{E}\left(
u_{it}u_{is}\right) \mathbbm{E}\left( u_{ju}u_{j\tau }\right) \right] \right)  \\
&=\frac{c^{2}}{NT}\left( \sum_{\tau =1}^{T}\lambda _{j}^{2}f_{\tau
}f_{v}1\left\{ i=j\right\} 1\left\{ \text{two pairs among }\left\{
t,s,u,\tau \right\} \right\} \right) .
\end{align*}%
Therefore, 
\begin{equation*}
\frac{1}{NT^{2}}\sum_{t,s,u,v}\sum_{i,j}\left\vert {\rm Cov}\left(
u_{it}u_{is}u_{ju}v_{1,jv}\right) \right\vert \leq M.
\end{equation*}%
Combining these, we have the desired result.
\end{enumerate}

\item ${\rm Cov}\left( u_{it}u_{is},v_{ju}v_{jv}\right) :$ Notice that 
\begin{align*}
&{\rm Cov}\left( u_{it}u_{is},v_{ju}v_{jv}\right)  \\
&=\mathbbm{E}\left( u_{it}u_{is}v_{ju}v_{jv}\right) -\mathbbm{E}\left( u_{it}u_{is}\right)
\mathbbm{E}\left( v_{ju}v_{jv}\right)  \\
&=\mathbbm{E}\left( 
\begin{array}{c}
u_{it}u_{is}\left( \frac{\lambda _{j}}{\sqrt{N}}\left( \frac{1}{\sqrt{N}}%
\sum_{k=1}^{N}\lambda _{k}u_{ku}\right) +\left( \frac{1}{\sqrt{T}}%
\sum_{p=1}^{T}f_{p}u_{jp}\right) \frac{f_{u}}{\sqrt{T}}+\tilde{v}\left(
\lambda ,f,u\right) \frac{\lambda _{j}}{\sqrt{N}}\frac{f_{u}}{\sqrt{T}}%
\right)  \\ 
\times \left( \frac{\lambda _{j}}{\sqrt{N}}\left( \frac{1}{\sqrt{N}}%
\sum_{l=1}^{N}\lambda _{l}u_{lv}\right) +\left( \frac{1}{\sqrt{T}}%
\sum_{q=1}^{T}f_{q}u_{jq}\right) \frac{f_{v}}{\sqrt{T}}+\tilde{v}\left(
\lambda ,f,u\right) \frac{\lambda _{j}}{\sqrt{N}}\frac{f_{v}}{\sqrt{T}}%
\right) 
\end{array}%
\right)  \\
& \quad -1\left\{ t=s\right\} \mathbbm{E}\left( 
\begin{array}{c}
\left( \frac{\lambda _{j}}{\sqrt{N}}\left( \frac{1}{\sqrt{N}}%
\sum_{k=1}^{N}\lambda _{k}u_{ku}\right) +\left( \frac{1}{\sqrt{T}}%
\sum_{p=1}^{T}f_{p}u_{jp}\right) \frac{f_{u}}{\sqrt{T}}+\tilde{v}\left(
\lambda ,f,u\right) \frac{\lambda _{j}}{\sqrt{N}}\frac{f_{u}}{\sqrt{T}}%
\right)  \\ 
\times \left( \frac{\lambda _{j}}{\sqrt{N}}\left( \frac{1}{\sqrt{N}}%
\sum_{l=1}^{N}\lambda _{l}u_{lv}\right) +\left( \frac{1}{\sqrt{T}}%
\sum_{q=1}^{T}f_{q}u_{jq}\right) \frac{f_{v}}{\sqrt{T}}+\tilde{v}\left(
\lambda ,f,u\right) \frac{\lambda _{j}}{\sqrt{N}}\frac{f_{v}}{\sqrt{T}}%
\right) 
\end{array}%
\right) .
\end{align*}%
Here there are 9 terms in the product. 
\begin{enumerate}
\item Notice that 
\begin{align*}
{\rm Cov}\left( u_{it}u_{is},v_{1,ju}v_{1,jv}\right)  
&=\mathbbm{E}\left( u_{it}u_{is}\frac{\lambda _{j}}{\sqrt{N}}\left( \frac{1}{\sqrt{N}}%
\sum_{k=1}^{N}\lambda _{k}u_{ku}\right) \frac{\lambda _{j}}{\sqrt{N}}\left( 
\frac{1}{\sqrt{N}}\sum_{l=1}^{N}\lambda _{l}u_{lv}\right) \right)  \\
& \quad -\mathbbm{E}\left( u_{it}u_{is}\right) \mathbbm{E}\left( \frac{\lambda _{j}}{\sqrt{N}}\left( 
\frac{1}{\sqrt{N}}\sum_{k=1}^{N}\lambda _{k}u_{ku}\right) \frac{\lambda _{j}%
}{\sqrt{N}}\left( \frac{1}{\sqrt{N}}\sum_{l=1}^{N}\lambda _{l}u_{lv}\right)
\right)  \\
&=\frac{1}{N^{2}}\sum_{k=1}^{N}\sum_{l=1}^{N}\lambda _{j}^{2}\lambda
_{k}\lambda _{l}\left( \mathbbm{E}\left( u_{it}u_{is}u_{ku}u_{lv}\right) -\mathbbm{E}\left(
u_{it}u_{is}\right) \mathbbm{E}\left( u_{ku}u_{lv}\right) \right)  \\
&=\frac{1}{N^{2}}\sum_{k=1}^{N}\lambda _{j}^{2}\lambda _{k}^{2}\left(
\mathbbm{E}\left( u_{it}u_{is}u_{ku}u_{kv}\right) -\mathbbm{E}\left( u_{it}u_{is}\right) \mathbbm{E}\left(
u_{ku}u_{kv}\right) \right)  \\
&=\frac{1}{N^{2}}\sum_{k=1}^{N}\lambda _{j}^{2}\lambda _{k}^{2}\left\{
1\left\{ i=k\right\} 1\left\{ \text{two pairs among }\left\{ t,s,u,v\right\}
\right\} \right\} .
\end{align*}%
Therefore, 
\begin{equation*}
\frac{1}{NT^{2}}\sum_{t,s,u,v}\sum_{i,j}\left\vert {\rm Cov}\left(
u_{it}u_{is},v_{1,ju}v_{1,jv}\right) \right\vert \leq M.
\end{equation*}

\item Notice that 
\begin{align*}
{\rm Cov}\left( u_{it}u_{is},v_{2,ju}v_{2,jv}\right)  
&= \mathbbm{E}\left( u_{it}u_{is}\left( \frac{1}{\sqrt{T}}\sum_{p=1}^{T}f_{p}u_{jp}%
\right) \frac{f_{u}}{\sqrt{T}}\left( \frac{1}{\sqrt{T}}%
\sum_{q=1}^{T}f_{q}u_{jq}\right) \frac{f_{v}}{\sqrt{T}}\right)  \\
& \quad -\mathbbm{E}\left( u_{it}u_{is}\right) \mathbbm{E}\left( \left( \frac{1}{\sqrt{T}}%
\sum_{p=1}^{T}f_{p}u_{jp}\right) \frac{f_{u}}{\sqrt{T}}\left( \frac{1}{\sqrt{%
T}}\sum_{q=1}^{T}f_{q}u_{jq}\right) \frac{f_{v}}{\sqrt{T}}\right)  \\
&= \frac{1}{T^{2}}\sum_{p=1}^{T}\sum_{q=1}^{T}f_{p}f_{q}f_{u}f_{v}\left[
\mathbbm{E}\left( u_{it}u_{is}u_{jp}u_{jq}\right) -\mathbbm{E}\left( u_{it}u_{is}\right) \mathbbm{E}\left(
u_{jp}u_{jq}\right) \right]  \\
&= \frac{1}{T^{2}}\sum_{p=1}^{T}\sum_{q=1}^{T}f_{p}f_{q}f_{u}f_{v}\left[
1\left\{ i=j\right\} 1\left\{ \text{two pairs among }\left\{ t,s,p,q\right\}
\right\} \right] .
\end{align*}%
Therefore, 
\begin{equation*}
\frac{1}{NT^{2}}\sum_{t,s,u,v}\sum_{i,j}\left\vert {\rm Cov}\left(
u_{it}u_{is},v_{2,ju}v_{2,jv}\right) \right\vert \leq M.
\end{equation*}

\item Notice that%
\begin{align*}
&{\rm Cov}\left( u_{it}u_{is},v_{3,ju}v_{3,jv}\right)  \\
&=\mathbbm{E}\left( u_{it}u_{is}\left( \frac{1}{\sqrt{NT}}\sum_{k=1}^{N}%
\sum_{p=1}^{T}\lambda _{k}f_{p}u_{kp}\right) \frac{\lambda _{j}f_{u}}{\sqrt{%
NT}}\left( \frac{1}{\sqrt{NT}}\sum_{l=1}^{N}\sum_{q=1}^{T}\lambda
_{l}f_{q}u_{lq}\right) \frac{f_{v}\lambda _{j}}{\sqrt{NT}}\right)  \\
&\quad -\mathbbm{E}\left( u_{it}u_{is}\right) \mathbbm{E}\left( \left( \frac{1}{\sqrt{NT}}%
\sum_{k=1}^{N}\sum_{p=1}^{T}\lambda _{k}f_{p}u_{kp}\right) \frac{\lambda
_{j}f_{u}}{\sqrt{NT}}\left( \frac{1}{\sqrt{NT}}\sum_{l=1}^{N}\sum_{q=1}^{T}%
\lambda _{l}f_{q}u_{lq}\right) \frac{f_{v}\lambda _{j}}{\sqrt{NT}}\right)  \\
&= \frac{1}{N^{2}T^{2}}\sum_{k=1}^{N}\sum_{p=1}^{T}\sum_{l=1}^{N}%
\sum_{q=1}^{T}\lambda _{k}\lambda _{j}^{2}\lambda _{l}f_{p}f_{u}f_{q}f_{v}%
\left[ \mathbbm{E}\left( u_{it}u_{is}u_{kp}u_{lq}\right) -\mathbbm{E}\left( u_{it}u_{is}\right)
\mathbbm{E}\left( u_{kp}u_{lq}\right) \right]  \\
&= \frac{1}{N^{2}T^{2}}\sum_{k=1}^{N}\sum_{p=1}^{T}\sum_{q=1}^{T}\lambda
_{k}^{2}\lambda _{j}^{2}\mathbbm{E}f_{p}f_{u}f_{q}f_{v}\left[ \mathbbm{E}\left(
u_{it}u_{is}u_{kp}u_{kq}\right) -\mathbbm{E}\left( u_{it}u_{is}\right) \mathbbm{E}\left(
u_{kp}u_{kq}\right) \right]  \\
&= \frac{1}{N^{2}T^{2}}\sum_{k=1}^{N}\sum_{p=1}^{T}\sum_{q=1}^{T}\lambda
_{k}^{2}\lambda _{j}^{2}\mathbbm{E}f_{p}f_{u}f_{q}f_{v}\left[ 1\left\{ i=k\right\}
1\left\{ \text{two pairs among }\left\{ t,s,p,q\right\} \right\} \right] .
\end{align*}%
Then, 
\begin{equation*}
\frac{1}{NT^{2}}\sum_{t,s,u,v}\sum_{i,j}\left\vert {\rm Cov}\left(
u_{it}u_{is},v_{2,ju}v_{2,jv}\right) \right\vert \leq M.
\end{equation*}

\item The desired result follows similarly since 
\begin{align*}
{\rm Cov}\left( u_{it}u_{is},v_{1,ju}v_{2,jv}\right)  
&=\mathbbm{E}\left( u_{it}u_{is}\frac{\lambda _{j}}{\sqrt{N}}\left( \frac{1}{\sqrt{N}}%
\sum_{k=1}^{N}\lambda _{k}u_{ku}\right) \left( \frac{1}{\sqrt{T}}%
\sum_{q=1}^{T}f_{q}u_{jq}\right) \frac{f_{v}}{\sqrt{T}}\right)  \\
& \quad -\mathbbm{E}\left( u_{it}u_{is}\right) \mathbbm{E}\left( \frac{\lambda _{j}}{\sqrt{N}}\left( 
\frac{1}{\sqrt{N}}\sum_{k=1}^{N}\lambda _{k}u_{ku}\right) \left( \frac{1}{%
\sqrt{T}}\sum_{q=1}^{T}f_{q}u_{jq}\right) \frac{f_{v}}{\sqrt{T}}\right)  \\
&= \frac{1}{NT}\sum_{k=1}^{N}\sum_{q=1}^{T}\lambda _{j}\lambda
_{k}f_{q}f_{v}\left\{ \mathbbm{E}\left( u_{it}u_{is}u_{ku}u_{jq}\right) -\mathbbm{E}\left(
u_{it}u_{is}\right) \mathbbm{E}\left( u_{ku}u_{jq}\right) \right\}  \\
&= \frac{1}{NT}\sum_{q=1}^{T}\lambda _{j}^{2}f_{q}f_{v}\left\{ \mathbbm{E}\left(
u_{it}u_{is}u_{ju}u_{jq}\right) -\mathbbm{E}\left( u_{it}u_{is}\right) \mathbbm{E}\left(
u_{ju}u_{jq}\right) \right\}  \\
&= \frac{1}{NT}\sum_{q=1}^{T}\lambda _{j}^{2}f_{q}f_{v}\left[ 1\left\{
i=j\right\} 1\left\{ \text{two pairs among }\left\{ t,s,u,q\right\} \right\} %
\right] .
\end{align*}

\item The desired result follows since%
\begin{align*}
& {\rm Cov}\left( u_{it}u_{is},v_{1,ju}v_{3,jv}\right)  \\
&=\mathbbm{E}\left( u_{it}u_{is}\frac{\lambda _{j}}{\sqrt{N}}\left( \frac{1}{\sqrt{N}}%
\sum_{k=1}^{N}\lambda _{k}u_{ku}\right) \left( \frac{1}{\sqrt{NT}}%
\sum_{l=1}^{N}\sum_{q=1}^{T}\lambda _{l}f_{q}u_{lq}\right) \frac{\lambda
_{j}f_{v}}{\sqrt{NT}}\right)  \\
& \quad -\mathbbm{E}\left( u_{it}u_{is}\right) \mathbbm{E}\left( \frac{\lambda _{j}}{\sqrt{N}}\left( 
\frac{1}{\sqrt{N}}\sum_{k=1}^{N}\lambda _{k}u_{ku}\right) \left( \frac{1}{%
\sqrt{NT}}\sum_{l=1}^{N}\sum_{q=1}^{T}\lambda _{l}f_{q}u_{lq}\right) \frac{%
\lambda _{j}f_{v}}{\sqrt{NT}}\right)  \\
&= \frac{1}{N^{2}T}\sum_{k=1}^{N}\sum_{l=1}^{N}\sum_{q=1}^{T}\mathbbm{E}\left( \lambda
_{j}^{2}\lambda _{k}\lambda _{l}\right) \mathbbm{E}\left( f_{q}f_{v}\right) \left\{
\mathbbm{E}\left( u_{it}u_{is}u_{ku}u_{lq}\right) -\mathbbm{E}\left( u_{it}u_{is}\right) \mathbbm{E}\left(
u_{ku}u_{lq}\right) \right\}  \\
&=\text{it should be }[\text{q=v and k=l]} \\
&=\frac{1}{N^{2}T}\sum_{k=1}^{N}\mathbbm{E}\left( \lambda _{j}^{2}\lambda
_{k}^{2}\right) \mathbbm{E}\left( f_{v}^{2}\right) \left\{ \mathbbm{E}\left(
u_{it}u_{is}u_{ku}u_{kv}\right) -\mathbbm{E}\left( u_{it}u_{is}\right) \mathbbm{E}\left(
u_{ku}u_{kv}\right) \right\}  \\
&=\text{it should be that }i=k \\
&=\frac{1}{N^{2}T}\mathbbm{E}\left( \lambda _{j}^{2}\lambda _{i}^{2}\right) \mathbbm{E}\left(
f_{v}^{2}\right) \left\{ \mathbbm{E}\left( u_{it}u_{is}u_{iu}u_{iv}\right) -\mathbbm{E}\left(
u_{it}u_{is}\right) \mathbbm{E}\left( u_{iu}u_{iv}\right) \right\}  \\
&\leq \frac{1}{N^{2}T}\mathbbm{E}\left( \lambda _{j}^{2}\lambda _{i}^{2}\right)
1\left\{ \text{two pairs among }\left\{ t,s,u,v\right\} \right\} .
\end{align*}

\item Similarly, the desired result follows since 
\begin{align*}
& {\rm Cov}\left( u_{it}u_{is},v_{2,ju}v_{3,jv}\right)  \\
&=\mathbbm{E}\left( u_{it}u_{is}\left( \frac{1}{\sqrt{T}}\sum_{p=1}^{T}f_{p}u_{jp}%
\right) \frac{f_{u}}{\sqrt{T}}\left( \frac{1}{\sqrt{NT}}\sum_{l=1}^{N}%
\sum_{q=1}^{T}\lambda _{l}f_{q}u_{lq}\right) \frac{\lambda _{j}f_{v}}{\sqrt{%
NT}}\right)  \\
& \quad -\mathbbm{E}\left( u_{it}u_{is}\right) \mathbbm{E}\left( \left( \frac{1}{\sqrt{T}}%
\sum_{p=1}^{T}f_{p}u_{jp}\right) \frac{f_{u}}{\sqrt{T}}\left( \frac{1}{\sqrt{%
NT}}\sum_{l=1}^{N}\sum_{q=1}^{T}\lambda _{l}f_{q}u_{lq}\right) \frac{\lambda
_{j}f_{v}}{\sqrt{NT}}\right)  \\
&=\frac{1}{NT^{2}}\sum_{l=1}^{N}\sum_{p=1}^{T}\sum_{q=1}^{T}\lambda
_{j}\lambda _{l}f_{p}f_{q}f_{u}f_{v}\left\{ \mathbbm{E}\left(
u_{it}u_{is}u_{jp}u_{lq}\right) -\mathbbm{E}\left( u_{it}u_{is}\right) \mathbbm{E}\left(
u_{jp}u_{lq}\right) \right\}  \\
&=\frac{1}{NT^{2}}\sum_{p=1}^{T}\sum_{q=1}^{T}\lambda
_{j}^{2}f_{p}f_{q}f_{u}f_{v}\left\{ \mathbbm{E}\left( u_{it}u_{is}u_{jp}u_{jq}\right)
-\mathbbm{E}\left( u_{it}u_{is}\right) \mathbbm{E}\left( u_{jp}u_{jq}\right) \right\}  \\
&=\frac{1}{NT^{2}}\sum_{p=1}^{T}\sum_{q=1}^{T}\lambda
_{j}^{2}f_{p}f_{q}f_{u}f_{v}\left[ 1\left\{ i=j\right\} 1\left\{ \text{two
pairs among }\left\{ t,s,u,q\right\} \right\} \right]  .
\end{align*}
\end{enumerate}

\item ${\rm Cov}\left( u_{it}v_{is},u_{ju}v_{jv}\right) :$ Notice that 
\begin{align*}
&{\rm Cov}\left( u_{it}v_{is},u_{ju}v_{jv}\right)  \\
&=\mathbbm{E}\left( u_{it}v_{is}u_{ju}v_{jv}\right) -\mathbbm{E}\left( u_{it}v_{is}\right)
\mathbbm{E}\left( u_{ju}v_{jv}\right)  \\
&=\mathbbm{E}\left( 
\begin{array}{c}
u_{it}\left( \frac{\lambda _{i}}{\sqrt{N}}\left( \frac{1}{\sqrt{N}}%
\sum_{k=1}^{N}\lambda _{k}u_{ks}\right) +\left( \frac{1}{\sqrt{T}}%
\sum_{p=1}^{T}f_{p}u_{ip}\right) \frac{f_{s}}{\sqrt{T}}+\tilde{v}\left(
\lambda ,f,u\right) \frac{\lambda _{i}}{\sqrt{N}}\frac{f_{s}}{\sqrt{T}}%
\right)  \\ 
\times u_{ju}\left( \frac{\lambda _{j}}{\sqrt{N}}\left( \frac{1}{\sqrt{N}}%
\sum_{l=1}^{N}\lambda _{l}u_{lv}\right) +\left( \frac{1}{\sqrt{T}}%
\sum_{q=1}^{T}f_{q}u_{jq}\right) \frac{f_{v}}{\sqrt{T}}+\tilde{v}\left(
\lambda ,f,u\right) \frac{\lambda _{j}}{\sqrt{N}}\frac{f_{v}}{\sqrt{T}}%
\right) 
\end{array}%
\right)  \\
& \quad -\mathbbm{E}\left( u_{it}v_{is}\right) \mathbbm{E}\left( u_{ju}v_{jv}\right) .
\end{align*}

\begin{enumerate}
\item The desired result follows since%
\begin{align*}
&{\rm Cov}\left( u_{it}v_{1,is},u_{ju}v_{1,jv}\right)  \\
&=\mathbbm{E}\left( u_{it}v_{1,is}u_{ju}v_{1,jv}\right) -\mathbbm{E}\left(
u_{it}v_{1,is}\right) \mathbbm{E}\left( u_{ju}v_{1,jv}\right)  \\
&=\mathbbm{E}\left( u_{it}u_{ju}\frac{\lambda _{i}}{\sqrt{N}}\left( \frac{1}{\sqrt{N}}%
\sum_{k=1}^{N}\lambda _{k}u_{ks}\right) \frac{\lambda _{j}}{\sqrt{N}}\left( 
\frac{1}{\sqrt{N}}\sum_{l=1}^{N}\lambda _{l}u_{lv}\right) \right)  \\
&\quad -\mathbbm{E}\left( u_{it}\frac{\lambda _{i}}{\sqrt{N}}\left( \frac{1}{\sqrt{N}}%
\sum_{k=1}^{N}\lambda _{k}u_{ks}\right) \right) \mathbbm{E}\left( u_{ju}\frac{\lambda
_{j}}{\sqrt{N}}\left( \frac{1}{\sqrt{N}}\sum_{l=1}^{N}\lambda
_{l}u_{lv}\right) \right)  \\
&=\frac{1}{N^{2}}\sum_{k=1}^{N}\sum_{l=1}^{N}\lambda _{i}\lambda
_{k}\lambda _{j}\lambda _{l}\left\{ \mathbbm{E}\left( u_{it}u_{ju}u_{ks}u_{lv}\right)
-\mathbbm{E}\left( u_{it}u_{ks}\right) \mathbbm{E}\left( u_{ju}u_{lv}\right) \right\}  .
\end{align*}%
So, 
\begin{align*}
& \frac{1}{NT^{2}}\sum_{t,s,u,v}\sum_{i,j}\left\vert {\rm Cov}\left(
u_{it}v_{1,is},u_{ju}v_{1,jv}\right) \right\vert  \\
&=\frac{1}{NT^{2}}\sum_{t,s,u,v}\sum_{i,j}\frac{1}{N^{2}}%
\sum_{k=1}^{N}\sum_{l=1}^{N}\left[ \lambda _{i}\lambda _{k}\lambda
_{j}\lambda _{l}\left\{ 
\begin{array}{c}
\mathbbm{E}\left( u_{it}u_{ju}u_{ks}u_{lv}\right)  \\ 
-\mathbbm{E}\left( u_{it}u_{ks}\right) \mathbbm{E}\left( u_{ju}u_{lv}\right) 
\end{array}%
\right\} \right]  \\
&\leq \frac{M}{N^{3}T^{2}}\sum_{t,s,u,v}\sum_{i,j,k,l}1\left\{ \text{two
pairs among }\left\{ i,j,k,l\right\} \right\} 1\left\{ \text{two pairs among 
}\left\{ t,s,u,v\right\} \right\}  \\
&\leq M.
\end{align*}

\item Also, we have%
\begin{align*}
&{\rm Cov}\left( u_{it}v_{2,is},u_{ju}v_{2,jv}\right)  \\
&=\mathbbm{E}\left( u_{it}v_{2,is}u_{ju}v_{2,jv}\right) -\mathbbm{E}\left(
u_{it}v_{2,is}\right) \mathbbm{E}\left( u_{ju}v_{2,jv}\right)  \\
&=\mathbbm{E}\left( u_{it}u_{ju}\left( \frac{1}{\sqrt{T}}\sum_{p=1}^{T}f_{p}u_{ip}%
\right) \frac{f_{s}}{\sqrt{T}}\left( \frac{1}{\sqrt{T}}%
\sum_{q=1}^{T}f_{q}u_{jq}\right) \frac{f_{v}}{\sqrt{T}}\right)  \\
&\quad -\mathbbm{E}\left( u_{it}\left( \frac{1}{\sqrt{T}}\sum_{p=1}^{T}f_{p}u_{ip}\right) 
\frac{f_{s}}{\sqrt{T}}\right) \mathbbm{E}\left( u_{ju}\left( \frac{1}{\sqrt{T}}%
\sum_{q=1}^{T}f_{q}u_{jq}\right) \frac{f_{v}}{\sqrt{T}}\right)  \\
&=\frac{1}{T^{2}}\sum_{p=1}^{T}\sum_{q=1}^{T}f_{p}f_{s}f_{q}f_{v}\left\{
\mathbbm{E}\left( u_{it}u_{ip}u_{jq}u_{ju}\right) -\mathbbm{E}\left( u_{it}u_{ip}\right) \mathbbm{E}\left(
u_{ju}u_{jq}\right) \right\}  \\
&=\frac{1}{T^{2}}\sum_{p=1}^{T}\sum_{q=1}^{T}f_{p}f_{s}f_{q}f_{v}\left\{
1\left\{ i=j\right\} 1\left\{ \text{two pairs among }\left\{ t,p,q,u\right\}
\right\} \right\} .
\end{align*}%
So, 
\begin{equation*}
\frac{1}{NT^{2}}\sum_{t,s,u,v}\sum_{i,j}\left\vert {\rm Cov}\left(
u_{it}v_{2,is},u_{ju}v_{2,jv}\right) \right\vert \leq M.
\end{equation*}

\item We can show the rest of the cases analogously.
\end{enumerate}

\item ${\rm Cov}\left( u_{it}v_{is},v_{ju}v_{jv}\right) :$ There are 4 kinds, (i)
\# of $v_{3,\cdot \cdot }$ = 0, (ii) \# of $v_{3,\cdot \cdot }$ = 1, \ (iii)
\# of $v_{3,\cdot \cdot }$ = 2, and (iv) \# of $v_{3,\cdot \cdot }$ = 4.

\begin{enumerate}
\item When \# of $v_{3,\cdot \cdot }$ = 0: For example, ${\rm Cov}\left(
u_{it}v_{1,is},v_{1,ju}v_{1,jv}\right) .$ The desired result follows since 
\begin{align*}
&\frac{1}{NT^{2}}\sum_{t,s,u,v}\sum_{i,j}\left\vert \mathbbm{E}\left(
u_{it}v_{1,is}v_{1,ju}v_{2,jv}\right) -\mathbbm{E}\left( u_{it}v_{1,is}\right) \mathbbm{E}\left(
v_{1,ju}v_{2,jv}\right) \right\vert  \\
&\leq \frac{M}{NT^{2}}\sum_{t,s,u,v}\sum_{i,j}\frac{1}{N^{2}T}\sum_{i^{\ast
},j^{\ast }}\sum_{v^{\ast }}\left( \mathbbm{E}\left( u_{it}u_{i^{\ast }s}u_{j^{\ast
}u}u_{jv^{\ast }}\right) -\mathbbm{E}\left( u_{it}u_{i^{\ast }s}\right) \mathbbm{E}\left(
u_{j^{\ast }u}u_{jv^{\ast }}\right) \right)  \\
&\leq \frac{M}{N^{3}T^{3}}\sum_{t,s,u,v,v^{\ast }}\sum_{i,j,i^{\ast
},j^{\ast }}1\left\{ \text{two pairs among }\left\{ i,i^{\ast },j,j^{\ast
}\right\} \right\} 1\left\{ \text{two pairs among }\left\{ t,s,u,v^{\ast
}\right\} \right\}  \\
&\leq M.
\end{align*}

\item When \# of $v_{3,\cdot \cdot }$ = 1: For example, ${\rm Cov}\left(
u_{it}v_{3,is},v_{1,ju}v_{2,jv}\right) .$\ The desired result follows since%
\begin{align*}
&\frac{1}{NT^{2}}\sum_{t,s,u,v}\sum_{i,j}\left\vert {\rm Cov}\left(
u_{it}v_{3,is},v_{1,ju}v_{2,jv}\right) \right\vert  \\
&=\frac{1}{NT^{2}}\sum_{t,s,u,v}\sum_{i,j}\left\vert \mathbbm{E}\left(
u_{it}v_{3,is}v_{1,ju}v_{2,jv}\right) -\mathbbm{E}\left( u_{it}v_{3,is}\right) \mathbbm{E}\left(
v_{1,ju}v_{2,jv}\right) \right\vert  \\
&\leq \frac{M}{NT^{2}}\frac{1}{N^{2}T^{2}}\sum_{t,s,u,v}\sum_{i,j}\sum_{i^{%
\ast },j^{\ast }}\sum_{s^{\ast },v^{\ast }}\left\{ \mathbbm{E}\left( u_{it}u_{i^{\ast
}s^{\ast }}u_{j^{\ast }u}u_{jv^{\ast }}\right) -\mathbbm{E}\left( u_{it}u_{i^{\ast
}s^{\ast }}\right) \mathbbm{E}\left( u_{j^{\ast }u}u_{jv^{\ast }}\right) \right\}  \\
&=\frac{M}{N^{3}T^{4}}\sum_{t,s,u,v}\sum_{i,j}\sum_{i^{\ast },j^{\ast
}}\sum_{s^{\ast },v^{\ast }}1\left\{ \text{two pairs among }\left\{
i,i^{\ast },j,j^{\ast }\right\} \right\} 1\left\{ \text{two pairs among }%
\left\{ t,s^{\ast },u,v^{\ast }\right\} \right\}  \\
&\leq M.
\end{align*}

\item When \# of $v_{3,\cdot \cdot }$ = 2: For example, ${\rm Cov}\left(
u_{it}v_{3,is},v_{3,ju}v_{2,jv}\right) .$ The desired result follows since%
\begin{align*}
&\frac{1}{NT^{2}}\sum_{t,s,u,v}\sum_{i,j}\left\vert {\rm Cov}\left(
u_{it}v_{3,is},v_{3,ju}v_{2,jv}\right) \right\vert  \\
&=\frac{1}{NT^{2}}\sum_{t,s,u,v}\sum_{i,j}\left\vert \mathbbm{E}\left(
u_{it}v_{3,is}v_{1,ju}v_{2,jv}\right) -\mathbbm{E}\left( u_{it}v_{3,is}\right) \mathbbm{E}\left(
v_{3,ju}v_{2,jv}\right) \right\vert  \\
&\leq \frac{M}{NT^{2}}\sum_{t,s,u,v}\sum_{i,j}\frac{1}{N^{2}T^{3}}%
\sum_{i^{\ast },j^{\ast }}\sum_{u^{\ast },s^{\ast },v^{\ast }}\left( \mathbbm{E}\left(
u_{it}u_{i^{\ast }s^{\ast }}u_{j^{\ast }u^{\ast }}u_{jv^{\ast }}\right)
-\mathbbm{E}\left( u_{it}u_{i^{\ast }s^{\ast }}\right) \mathbbm{E}\left( u_{j^{\ast }u^{\ast
}}u_{jv^{\ast }}\right) \right)  \\
&=\frac{M}{N^{3}T^{5}}\sum_{t,s,u,v}\sum_{i,j}\sum_{i^{\ast },j^{\ast
}}\sum_{u^{\ast },s^{\ast },v^{\ast }}1\left\{ \text{two pairs among }%
\left\{ i,i^{\ast },j,j^{\ast }\right\} \right\} 1\left\{ \text{two pairs
among }\left\{ t,s^{\ast },u^{\ast },v^{\ast }\right\} \right\}  \\
&\leq M.
\end{align*}

\item For the other cases, notice that an additional $v_{3,\cdot \cdot }$
term adds an extra summation. However, it also increases the order of the
denominator by one. Therefore, the required result follows analogously.
\end{enumerate}

\item ${\rm Cov}\left( v_{it}v_{is},v_{ju}v_{jv}\right) :$ 
Follows analogously to the previous cases.
\end{enumerate}

\end{small}

\bigskip
\bigskip
\bigskip
\bigskip
\bigskip
\bigskip
\bigskip
\bigskip

\begin{table}[tbh!]
\centering
\includegraphics[width=17cm]{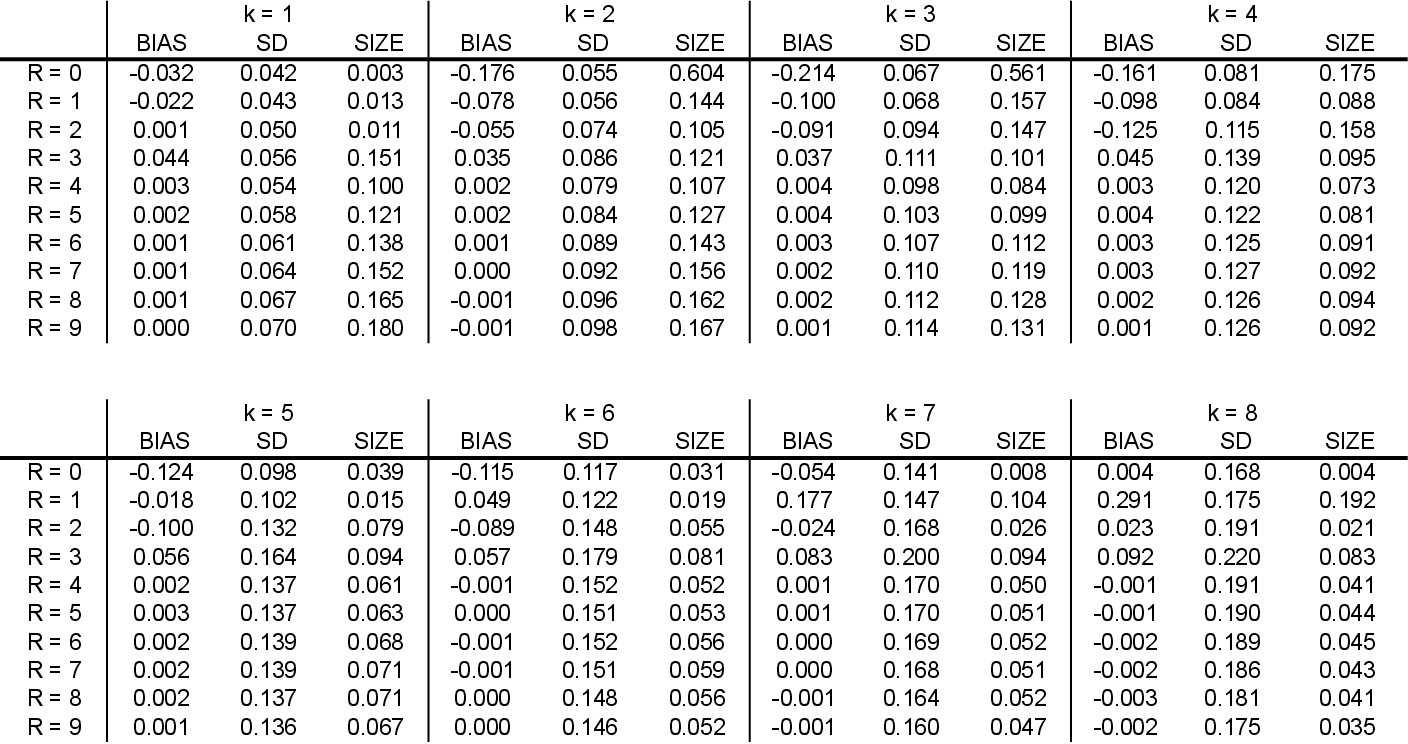}
\caption{\label{tab:MC-Empirical}\footnotesize
Results of the Empirical Monte Carlo Simulation. Bias and standard
deviation (SD) of $\widehat \beta^{\rm BC}_{R,k}$
are reported for the regressor $k=1,\ldots,8$
using $R=0,\ldots,9$ factors in the estimation procedure.
We also report the empirical size of a $5 \%$ nominal size $t$-test of the 
hypothesis $H_0: \beta_k = \beta^0_k$.  
Results are based on $10,000$ repetitions.}
\end{table}

\begin{table}[tbh!]
\centering
\includegraphics[width=14cm]{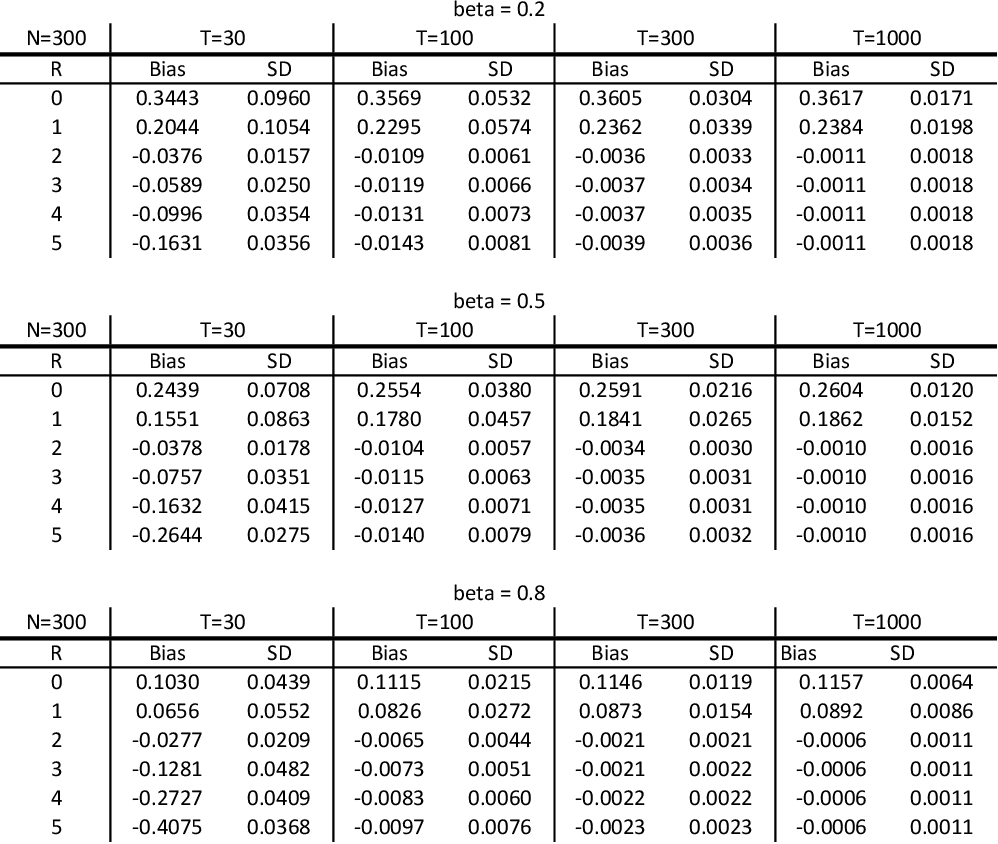}
\caption{\label{tab:MC-Dynamic-1}\footnotesize
For $N=300$ and different combinations of $T$ and true parameter $\beta^0$ we report
the bias and standard deviation of the estimator $\widehat \beta_R$, for $R=0,1,\ldots,5$, based on
simulations with $10,000$ repetition of design \eqref{DGP-Dynamic}, where the true number of
factors is $R^0=2$.}
\end{table}

\begin{table}[tbh!]
\centering
\includegraphics[width=14cm]{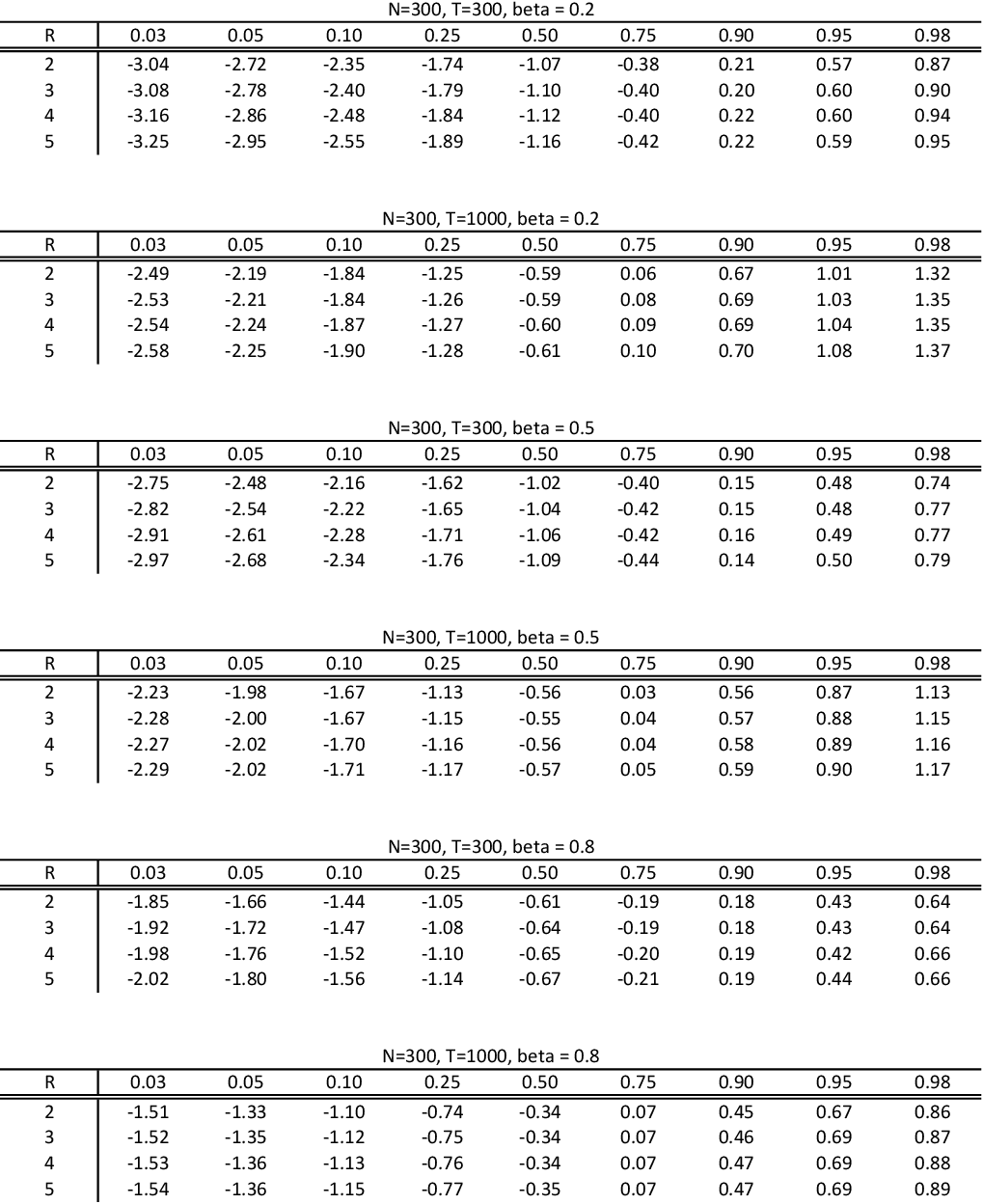}
\caption{\label{tab:MC-Dynamic-2}\footnotesize
For $N=300$ and different combinations of $T$ and true parameter $\beta^0$ we report certain
 quantiles of the distribution of $\sqrt{NT}( \widehat \beta_R - \beta^0 )$, for $R=2,3,4,5$,
 based on
simulations with $10,000$ repetition of design \eqref{DGP-Dynamic}, where the true number of
factors is $R^0=2$.
 }
\end{table}

\begin{table}[tbh!]
\centering
\includegraphics[width=14cm]{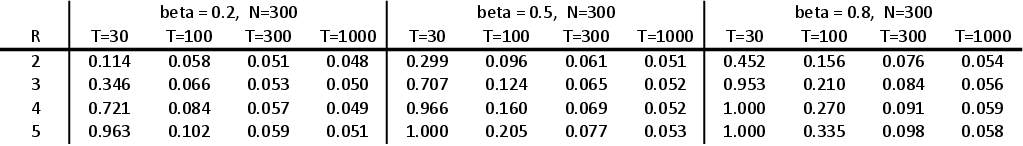}
\caption{\label{tab:MC-Dynamic-3}\footnotesize
The empirical size of a t-test with $5 \%$ nominal size is reported for
$N=300$ and different combinations of $T$, $R$ and true parameter $\beta^0$,
based on $10,000$ repetition  of design \eqref{DGP-Dynamic}.
A bias corrected estimator for $\beta$ is used to calculate the
test statistics, and we allow for predetermined regressors and heteroscedastic errors when estimating bias and standard
deviation. Results for $R=0,1$ are not reported since those have size=1 due to misspecification.}
\end{table}

\end{document}